\newcommand{\ket}[1]{\vert{ #1 }\rangle}
\newtheorem{theorem}{Theorem}
\newtheorem{proposition}{Proposition}
\newtheorem{lemma}{Lemma}
\newtheorem{corollary}{Corollary}
\theoremstyle{definition}
\newtheorem{definition}{Definition}
\theoremstyle{remark}
\newcommand{\figpath}{.}
\newcommand{\RED}[1]{{#1}}
\begin{document}

\title{Low-density parity-check representation of fault-tolerant quantum circuits}

\author{Ying Li}
\affiliation{Graduate School of China Academy of Engineering Physics, Beijing 100193, China}

\begin{abstract}
In fault-tolerant quantum computing, quantum algorithms are implemented through quantum circuits capable of error correction. These circuits are typically constructed based on specific quantum error correction codes, with consideration given to the characteristics of the underlying physical platforms. Optimising these circuits within the constraints of today's quantum computing technologies, particularly in terms of error rates, qubit counts, and network topologies, holds substantial implications for the feasibility of quantum applications in the near future. This paper presents a toolkit for designing and analysing fault-tolerant quantum circuits. We introduce a framework for representing stabiliser circuits using classical low-density parity-check (LDPC) codes. Each codeword in the representation corresponds to a quantum-mechanical equation regarding the circuit, formalising the correlations utilised in parity checks and delineating logical operations within the circuit. Consequently, the LDPC code provides a means of quantifying fault tolerance and verifying logical operations. We outline the procedure for generating LDPC codes from circuits using the Tanner graph notation, alongside proposing graph-theory tools for constructing fault-tolerant quantum circuits from classical LDPC codes. These findings offer a systematic approach to applying classical error correction techniques in optimising existing fault-tolerant protocols and developing new ones. 

\RED{As an example, we develop a resource-efficient scheme for universal fault-tolerant quantum computing on hypergraph product codes based on the LDPC representation. }
\end{abstract}

\maketitle

\section{Introduction}

Quantum error correction stands as an indispensable strategy in addressing the effects of noise inherent in quantum computers \cite{Terhal2015,Campbell2017,Roffe2019}. The development of quantum error correction codes owes much to classical error correction codes. For instance, Calderbank-Shor-Steane (CSS) codes constitute a diverse category of quantum error correction codes, encompassing the most noteworthy variants, and they are derived from classical linear codes \cite{Calderbank1996,Steane1996a,Steane1996,Bravyi1998,Freedman1998,Bacon2006,Aly2006,Bombin2006,Bombin2007}. Recently, remarkable progress has been made in quantum low-density parity-check (LDPC) codes \cite{Gottesman2014,Breuckmann2021}. In quantum LDPC codes, each parity check, the fundamental operation of error correction, involves only a small number of qubits, with each qubit implicated in just a few parity checks. Such quantum error correction codes not only offer implementation advantages but also allow for a commendable encoding rate, thereby utilising considerably fewer physical qubits compared to the surface code. A major avenue for constructing quantum LDPC codes lies in leveraging classical LDPC codes, which has yielded significant achievements, such as hypergraph product codes \cite{Tillich2014,Kovalev2012,Kovalev2013,Bravyi2013,Hastings2016,Evra2020,Breuckmann2021a,Panteleev2022,Bravyi2024}. 

In the realm of quantum computing, quantum error correction codes are realised through fault-tolerant quantum circuits \cite{Shor1997,Preskill1997,Steane1999}. These circuits are pivotal as they determine which quantum codes are feasible candidates for enabling fault-tolerant quantum computing. An effective circuit must optimise error correction capabilities while minimising the impact of noise in primitive gates, ultimately achieving the highest possible fault-tolerance threshold. Additionally, consideration must be given to the topology of the qubit network to ensure the circuit's feasibility on the given platform. For example, most superconducting systems only support short-range interactions, whereas neutral atoms and network systems excel in facilitating long-range interactions \cite{Acharya2023,Evered2023,Bluvstein2023,Xu2024,Nickerson2013,Bravyi2022}. Apart from error correction, fault-tolerant circuits must also perform certain operations on logical qubits. 

To address these challenges, several methodologies have been developed for composing fault-tolerant quantum circuits. These include transversal gates, which are a standard approach in fault-tolerant quantum computing \cite{Zeng2007,Bravyi2005,Eastin2009,Paetznick2013}, as well as protocols like cat state, error-correcting teleportation and flagged circuit, proposed to minimise errors and control their propagation \cite{Shor1997,Knill2005,Chao2018,Yamasaki2024}. Additionally, code deformation has emerged as another notable strategy \cite{Bombin2009,Bombin2011}. In the context of the surface code, effective methods for operating logical qubits include defect braiding \cite{Raussendorf2006,Raussendorf2007,Fowler2009,Fowler2012}, twist operation \cite{Bombin2010,Hastings2015,Brown2017,Yoder2017} and lattice surgery \cite{Horsman2012,Litinski2019,Vuillot2019}. Furthermore, high-fidelity encoding of magic states can be achieved through post-selection \cite{Li2015,Ye2023}. Recent advancements include protocols for quantum LDPC codes \cite{Gottesman2014,Krishna2021,Cohen2022,Xu2024,Lavasani2019,Breuckmann2024}. However, a unified framework for fault-tolerant quantum circuits remains in demand. In this regard, foliated quantum codes offer a way of composing fault-tolerant circuits through cluster states \cite{Bolt2016,Nickerson2018,Newman2020,Brown2020}, while spacetime codes are quantum codes generated from circuits \cite{Bacon2017,Gottesman2022,Delfosse2023,Fu2024}. Additionally, code deformation can be used to express fault-tolerant circuits \cite{Beverland2024}, and ZX calculus provides a rigorous graphical theory of quantum circuits \cite{Beaudrap2020,Chancellor2023,Bombin2024}. Thus, it is opportune to develop a paradigm of formalising fault-tolerant quantum circuits within the context of classical linear codes. 

\begin{figure}[tbp]
\centering
\includegraphics[width=\linewidth]{\figpath/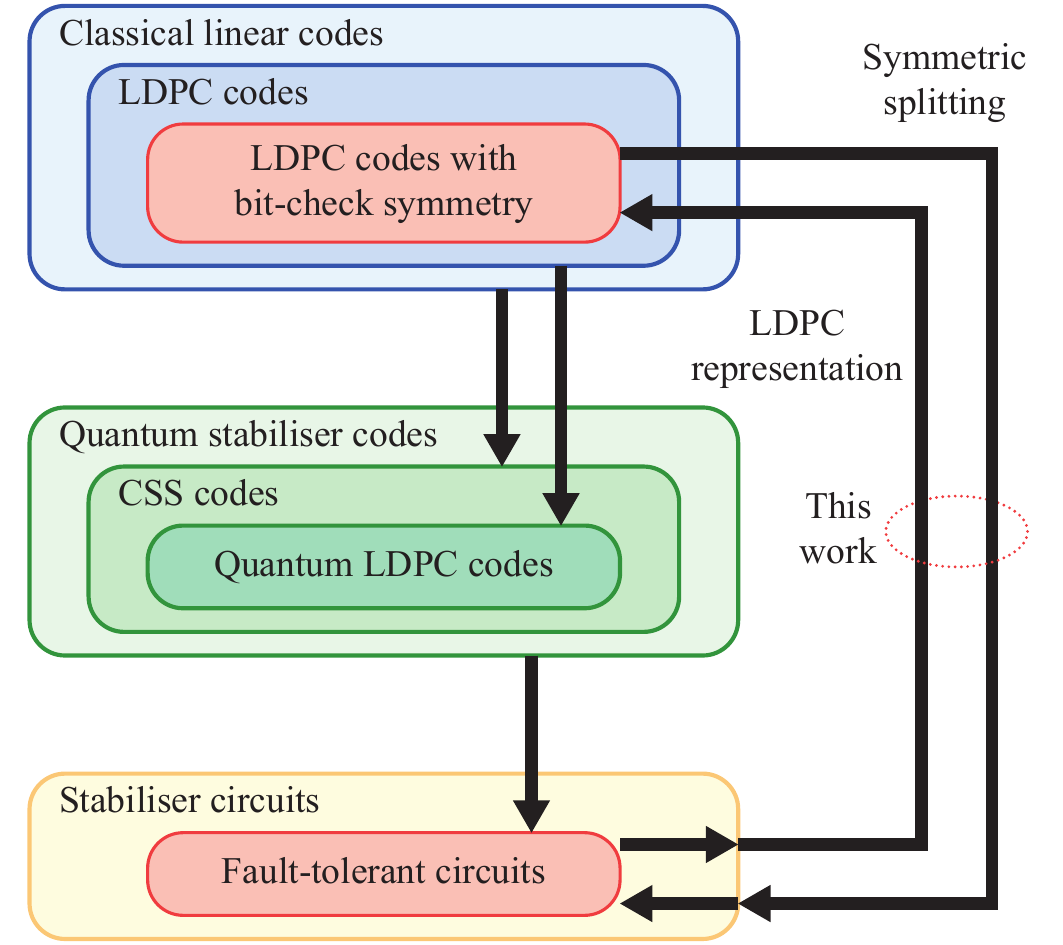}
\caption{
Role of classical error correction codes in fault-tolerant quantum computing. A Calderbank-Shor-Steane (CSS) code is constructed from two compatible classical linear codes. Many quantum low-density parity-check (LDPC) codes originate from classical LDPC codes, through methods such as hypergraph product. Fault-tolerant quantum circuits are usually designed based on specific quantum codes. This work establishes the equivalence between fault-tolerant quantum circuits and classical LDPC codes with bit-check symmetry. General stabiliser circuits can be mapped to LDPC codes with a maximum vertex degree of three on Tanner graphs, forming the LDPC representation. Conversely, general LDPC codes with symmetry can be converted into stabiliser circuits using symmetric splitting, a technique proposed to reduce the maximum vertex degree to three. The stabiliser circuit is fault-tolerant when the LDPC code demonstrates a favourable circuit code distance. 
}
\label{fig:category}
\end{figure}

This work establishes an equivalence between fault-tolerant quantum circuits and a specific category of classical LDPC codes, as depicted in Fig. \ref{fig:category}. These LDPC codes possess bit-check symmetry: Upon the deletion of certain columns from the check matrix, the resulting matrix is symmetric. All stabiliser circuits can be represented by LDPC codes with this symmetry, and conversely, LDPC codes exhibiting such symmetry can be converted into stabiliser circuits. \RED{In this equivalence, codewords in the LDPC code represent correlations in the corresponding circuit. These correlations are related to the two fundamental tasks of a fault-tolerant circuit, i.e. error correction and logical operation. With the code decomposed according to the tasks of codewords, the LDPC code characterises the fault tolerance of the corresponding circuit. To quantify fault tolerance, a measure termed circuit code distance is defined on the code and its decomposition.} If a stabiliser circuit is fault-tolerant, the LDPC code demonstrates a favourable circuit code distance, and vice versa. This circuit-code equivalence offers a universal approach for harnessing algebraic and graphical methods from classical codes in fault-tolerant quantum computing, particularly in the design and analysis of fault-tolerant quantum circuits.

\RED{\section{Summary of results}

Our results can be divided into three parts: The map from stabiliser circuits to linear codes, maps from linear codes to stabiliser circuits, and examples of the application. In the examples, we apply the LDPC representation to analyse existing circuits and develop new circuits. 

\subsection{Map from circuits to codes}

The map from stabiliser circuits to linear codes is presented in Sec. \ref{sec:representation}, which is based on the binary-vector representation of Pauli operators \cite{Gottesman1997,Calderbank1998}. Specifically, we can map an arbitrary stabiliser circuit to a binary check matrix $\mathbf{A}$. We construct the check matrix of a circuit utilising the Tanner graph notation \cite{Tanner1981}, in which check matrices are illustrated as bipartite graphs. 
Through Tanner graphs, we observe that the code of $\mathbf{A}$ is in the LDPC category. 

The linear code has a physical meaning regarding the time evolution of Pauli operators through the circuit \cite{Gottesman1998}. In quantum mechanics, the state evolves in time while operators are constant in the Schr\"{o}dinger representation, and the state is constant while operators evolve in time in the Heisenberg representation; it turns out that the two representations predict the same time evolution of the operator mean values. Here, we let the state and operators evolve in time simultaneously in a way such that operator mean values are constant. Let $\rho_0$ and $\rho_T$ be the initial and final states of the circuit, respectively. If the circuit consists of only Clifford gates, for all input Pauli operators $\sigma_{in}$, there exists an output Pauli operator $\sigma_{out}$ satisfying the equation 
\begin{eqnarray}
\mathrm{Tr}\sigma_{in}\rho_0
= \nu\mathrm{Tr}\sigma_{out}\rho_T,
\end{eqnarray}
where the factor $\nu = \pm 1$ depends on the gates and input Pauli operator. When the circuit includes the initialisation and measurement, only a subset of input Pauli operators have corresponding output Pauli operators that can satisfy the equation. Such pairs of input and output Pauli operators correspond to codewords $c\in\mathrm{ker}\mathbf{A}$ of the circuit. 

In Sec. \ref{sec:codewords}, we show that each codeword of $\mathbf{A}$ corresponds to an equation between operator mean values in the initial and final states, called the error-free codeword equation; see Theorem \ref{the:errorfree}. Therefore, each codeword represents a correlation in the circuit, specifically, a preserved quantity. Such quantities are important for error correction and logical operation. 

In the quantum error correction, we detect errors through the preserved quantities. Thinking of the case that the initial state is prepared in an eigenstate of $\sigma_{in}$ with the eigenvalue $+1$, i.e. $\mathrm{Tr}\sigma_{in}\rho_0 = 1$, we can detect errors by measuring $\sigma_{out}$. If the measurement outcome is not $\nu$ (i.e. $\nu\mathrm{Tr}\sigma_{out}\rho_T\neq 1$), we know that there are errors in the circuit. Such a mechanism is summarised in Theorem \ref{the:generalised}, in which we introduced the generalised codeword equation regarding circuits with errors. Given the set of preserved quantities for detecting errors, we can list all the corresponding codewords in a matrix $\mathbf{B}$; see Sec. \ref{sec:error_correction}. This matrix, called the error-correction check matrix, represents the error correction protocol in the circuit of $\mathbf{A}$. Since $\mathbf{B}$ is formed of codewords in $\mathrm{ker}\mathbf{A}$, it must satisfy the condition $\mathbf{A}\mathbf{B}^\mathrm{T} = 0$. 

The logical operation is also represented by preserved quantities in the circuit. When $\sigma_{in}$ and $\sigma_{out}$ are two logical operators, the corresponding equation expresses that the logical operator $\sigma_{in}$ is transformed into the logical operator $\sigma_{out}$ by the circuit. Given the set of preserved quantities for logical operators, we can list all the corresponding codewords in a matrix $\mathbf{L}$; see Sec. \ref{sec:logical_operation}. This matrix, called the logical generator matrix, represents the logical operation realised in the circuit of $\mathbf{A}$. Similar to $\mathbf{B}$, the matrix $\mathbf{L}$ must satisfy the condition $\mathbf{A}\mathbf{L}^\mathrm{T} = 0$. 

In summary, to represent a stabiliser circuit, we use a linear code, i.e. its check matrix $\mathbf{A}$. However, to represent the error correction and logical operation in the circuit, we need to decompose the code into two subspaces $\mathrm{rowsp}(\mathbf{B})\oplus \mathrm{rowsp}(\mathbf{L})\subseteq \mathrm{ker}\mathbf{A}$. The three matrices must satisfy the compatibility condition $\mathbf{A}\mathbf{B}^\mathrm{T} = \mathbf{A}\mathbf{L}^\mathrm{T} = 0$. 


For each code and a decomposition of the code, i.e. a 3-tuple $(\mathbf{A},\mathbf{B},\mathbf{L})$, we define the circuit code distance $d(\mathbf{A},\mathbf{B},\mathbf{L})$ in Sec. \ref{sec:distance}. The distance characterises the fault tolerance of the corresponding circuit: if $\mathbf{A}$ is mapped from a circuit, the distance quantifies the minimum number of bit-flip and phase-flip errors responsible for a logical error in the circuit. Because the distance is defined on the three matrices, it can be applied to any three matrices that meet the compatibility condition even if the matrix $\mathbf{A}$ is not mapped from a circuit. 

We can use the LDPC representation in two ways. First, the LDPC representation can be used to analyse a stabiliser circuit. From the matrix $\mathbf{A}$, we can work out all the preserved quantities, i.e. codewords. These codewords determine what correlations can be used for the two fundamental tasks: error correction and logical operation. Once codewords are classified according to the tasks, we can quantify the fault tolerance through $d(\mathbf{A},\mathbf{B},\mathbf{L})$ and verify the logical operation. Second, the LDPC representation can be used to develop circuits from codes. Suppose that we want a stabiliser circuit for a certain logical operation. We can construct the circuit by finding three matrices $\mathbf{A}$, $\mathbf{B}$ and $\mathbf{L}$, which need to satisfy the following requirements: i) They must meet the compatibility condition; ii) The circuit code distance $d(\mathbf{A},\mathbf{B},\mathbf{L})$ must be sufficiently large; and iii) $\mathbf{L}$ must meet the desired logical operation. The matrix $\mathbf{A}$ also needs to possess the bit-check symmetry. There is no other requirement, i.e. the matrix $\mathbf{A}$ may not be any check matrix that is mapped from a circuit. If we can find such three matrices, then we can map $\mathbf{A}$ to a stabiliser circuit. In the circuit, we can detect errors according to $\mathbf{B}$, and it realises the logical operation depicted by $\mathbf{L}$. In this way, we have a fault-tolerant circuit for accomplishing the desired logical operation. 

In addition to constructing circuits from codes, a practical approach to finding new circuits is through analysing existing circuits. The LDPC representation can reveal certain structures responsible for fault tolerance and logical correlations. Then, we can develop new circuits with the insights from the structures. We give such an example in this work. We propose a protocol of universal fault-tolerant quantum computing on hypergraph product codes, which is inspired by the structure in the LDPC representation for circuits composed of transversal gates on CSS codes. 

\subsection{Maps from codes to circuits}

\begin{figure}[tbp]
\centering
\includegraphics[width=\linewidth]{\figpath/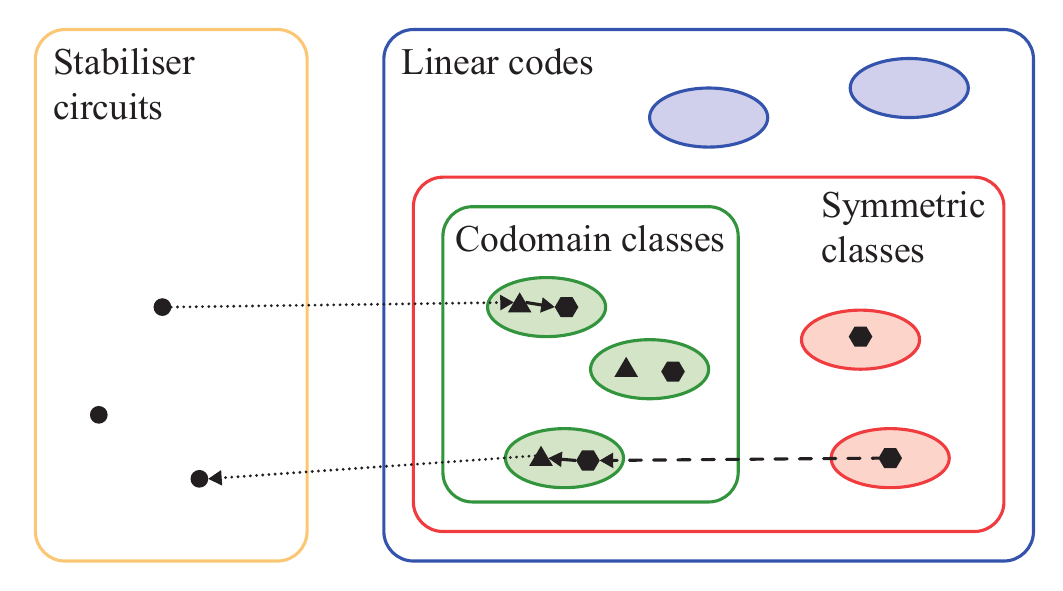}
\caption{
Relations between stabiliser circuits and linear codes. Black circles, triangles and hexagons denote stabiliser circuits, codes mapped from circuits (plain Tanner graphs) and codes with the bit-check symmetry (symmetric Tanner graphs), respectively. Blue, red and green circles are equivalence classes established by the bit splitting, i.e. two codes in the same class can be transformed into each other via the bit splitting and its inverse operation (denoted by solid arrows). In the set of symmetric classes, each class contains at least one code with the symmetry. In the set of codomain classes, each class contains at least one plain Tanner graph. We can map a stabiliser circuit to a plain Tanner graph, which can be further mapped to a symmetric Tanner graph via the bit splitting. We can also map a general code with the symmetry to a stabiliser circuit in three steps: i) transform the code to the symmetric Tanner graph in a codomain class through the symmetric splitting (denoted by the dashed arrow), ii) transform the symmetric Tanner graph to the plain Tanner graph through the inverse bit splitting, and iii) map the plain Tanner graph back to a circuit. 
}
\label{fig:relation}
\end{figure}

Before mapping linear codes to stabiliser circuits, we need to ask what codes can be mapped to circuits. The answer is related to the bit-check symmetry, which is introduced in Sec. \ref{sec:symmetry}. For a linear code with this symmetry, its check matrix is in the block matrix form 
\begin{eqnarray}
\mathbf{A} = \left(\begin{matrix}
\left.\begin{matrix}
\openone \\
0
\end{matrix}~~~\right| & \mathbf{A}_S
\end{matrix}\right)
\end{eqnarray}
up to permutations on the rows and columns; see Definition \ref{def:symmetry}. Here, $\mathbf{A}_S = \mathbf{A}_S^\mathrm{T}$ is a symmetric matrix. For all linear codes with this symmetry, we can map them to stabiliser circuits. If the code has a sufficiently large circuit code distance for a certain decomposition of the code (denoted by matrices $\mathbf{B}$ and $\mathbf{L}$), the stabiliser circuit is fault-tolerant. 

Relations between stabiliser circuits and linear codes is illustrated in Fig. \ref{fig:relation}. Linear codes are related by two operations on Tanner graphs, the bit splitting and symmetric splitting; see Definition \ref{def:bit_splitting} and Sec. \ref{sec:splitting}. An essential observation is that the bit splitting induces an equivalence relation between linear codes, i.e. in each equivalence class, linear codes can be transformed into each other through the bit splitting and its inverse operation. When we map a stabiliser circuit to a linear code, the code may not possess the bit-check symmetry; however, it is always in an equivalence class containing at least one code possessing the symmetry. 

We call the code mapped from the circuit the plain Tanner graph of the circuit, and we call the symmetric code in the same equivalence class the symmetric Tanner graph of the circuit. We call equivalence classes that contain at least one plain Tanner graph codomain classes. Then, the map from circuits to plain Tanner graphs induces a map from circuits to codomain classes. 

Given a linear code possessing the bit-check symmetry (a general symmetric Tanner graph), we can map it to a stabiliser circuit with the following procedure. First, we transform the code into the symmetric Tanner graph in a codomain class, which is achieved by the symmetric splitting. Then, we map the codomain class (i.e. the plain Tanner graph in it) back to a stabiliser circuit. The fault tolerance of the circuit is guaranteed by the properties of bit splitting and symmetric splitting. 

The property of the bit splitting is summarised in Lemma \ref{lem:bit_splitting}. Let $\mathbf{A}$ be the check matrix of the plain Tanner graph, and let $\mathbf{A}'$ be the check matrix of the symmetric Tanner graph in the same equivalence class. Then, we can transform $\mathbf{A}$ to $\mathbf{A}'$ through the bit splitting. This transformation on Tanner graphs induces a linear bijection $\phi:\mathrm{ker}\mathbf{A}\rightarrow\mathrm{ker}\mathbf{A}'$ between two codes. Therefore, the bit splitting induces a transform on 3-tuples of compatible matrices: Let $\mathbf{B}$ and $\mathbf{L}$ be matrices compatible with $\mathbf{A}$, then $(\mathbf{A},\mathbf{B},\mathbf{L})$ is mapped to $(\mathbf{A}',\mathbf{B}',\mathbf{L}')$ by the bit splitting, where $\mathbf{B}' = \phi(\mathbf{B})$ and $\mathbf{L}' = \phi(\mathbf{L})$ are compatible with $\mathbf{A}'$. The key property of the bit splitting is that the circuit code distance is preserved, i.e. $d(\mathbf{A},\mathbf{B},\mathbf{L}) = d(\mathbf{A}',\mathbf{B}',\mathbf{L}')$. 

The property of the symmetric splitting is summarised in Theorem \ref{the:symmetric_splitting}. It is similar to the bit splitting. Let $\mathbf{A}'$ and $\mathbf{A}''$ be check matrices of two symmetric Tanner graphs. Suppose $\mathbf{A}'$ is generated by applying the symmetric splitting on $\mathbf{A}''$. The symmetric splitting also induces a linear bijection $\psi:\mathrm{ker}\mathbf{A}''\rightarrow\mathrm{ker}\mathbf{A}'$ between two codes and a transform on 3-tuples of compatible matrices. Let $\mathbf{B}''$ and $\mathbf{L}''$ be matrices compatible with $\mathbf{A}''$, then $(\mathbf{A}'',\mathbf{B}'',\mathbf{L}'')$ is mapped to $(\mathbf{A}',\mathbf{B}',\mathbf{L}')$ by the symmetric splitting, where $\mathbf{B}' = \psi(\mathbf{B}'')$ and $\mathbf{L}' = \psi(\mathbf{L}'')$ are compatible with $\mathbf{A}'$. The only difference is the impact on the circuit code distance. In the symmetric splitting, the distance has the lower bound  $d(\mathbf{A}',\mathbf{B}',\mathbf{L}') \geq d(\mathbf{A}'',\mathbf{B}'',\mathbf{L}'')/\lfloor g_{max}/2\rfloor$, where $g_{max}$ is the maximum vertex degree in the Tanner graph of $\mathbf{A}''$. When $\mathbf{A}''$ is an LDPC code, $g_{max}$ is small. 

We generate fault-tolerant stabiliser circuits from linear codes in the following way. First, we find a code possessing the bit-check symmetry, which is denoted by $\mathbf{A}''$. We also need a proper decomposition of the code, which is denoted by $\mathbf{B}''$ and $\mathbf{L}''$. The circuit code distance $d(\mathbf{A}'',\mathbf{B}'',\mathbf{L}'')$ must be sufficiently large compared to $\lfloor g_{max}/2\rfloor$, i.e. we always prefer an LDPC code. Additionally, if we have a specific logical operation to implement, we can use the matrix $\mathbf{L}''$ to verify that the generated quantum circuit can achieve the desired logical operation, see Sec. \ref{sec:verification}. Second, we find a symmetric splitting that can transform $\mathbf{A}''$ to the symmetric Tanner graph in a codomain class, which is denoted by $\mathbf{A}'$. The corresponding map on codes is $\psi$. Thirdly, we find the bit splitting relating $\mathbf{A}'$ with the plain Tanner graph $\mathbf{A}$ in the codomain class. The corresponding map on codes is $\phi$. Then, we can map the plain Tanner graph $\mathbf{A}$ back to a stabiliser circuit. We can detect errors in the circuit according to $\mathbf{B} = \phi^{-1}\circ\psi(\mathbf{B}'')$, and the circuit realises the logical operation depicted by $\mathbf{L} = \phi^{-1}\circ\psi(\mathbf{L}'')$. The circuit code distance is guaranteed and has the lower bound $d(\mathbf{A},\mathbf{B},\mathbf{L}) \geq d(\mathbf{A}'',\mathbf{B}'',\mathbf{L}'')/\lfloor g_{max}/2\rfloor$. Therefore, the circuit can tolerate at least $d(\mathbf{A}'',\mathbf{B}'',\mathbf{L}'')/\lfloor g_{max}/2\rfloor$ errors. 

The map from linear codes to stabiliser circuits is not unique. With different strategies in the symmetric splitting, we may generate different stabiliser circuits. All the circuits are fault-tolerant as long as $d(\mathbf{A}'',\mathbf{B}'',\mathbf{L}'')/\lfloor g_{max}/2\rfloor$ is sufficiently large. In Sec. \ref{sec:construction}, we present a strategy of the symmetric splitting to demonstrate the existence of the map, and we also show how to choose $(\mathbf{A}'',\mathbf{B}'',\mathbf{L}'')$ to realise certain operations on logical qubits. 

\subsection{Examples of the application}

To demonstrate the application of the LDPC representation in analysing and designing fault-tolerant circuits, we present two examples. The first example applies the LDPC representation to quantum circuits composed of transversal gates on CSS codes. We will provide a concise general matrix representation for transversal circuits. The second example involves using the LDPC representation to design a fault-tolerant quantum circuit operating on hypergraph product codes. The design of this fault-tolerant quantum circuit is inspired by insights gained from the LDPC representation of transversal circuits. Based on this fault-tolerant circuit, we have developed a scheme for universal fault-tolerant quantum computing on hypergraph product codes. Compared to the scheme presented in a recent work \cite{Xu2024}, our scheme is resource-efficient. 

In the first example, we present the general form of the three matrices $(\mathbf{A},\mathbf{B},\mathbf{L})$ in the LDPC representation of transversal circuits. This general form is applicable to any CSS code and any circuit on logical qubits (logical circuit). Based on these three matrices, we analyse the fault tolerance of the circuit. 

One characteristic of the LDPC representation for transversal circuits is the appearance of a factor $\mathbf{a}$ in matrices $\mathbf{A}$ and $\mathbf{B}$. This factor is the check matrix representing the logical circuit. When we apply the general form of the LDPC representation for transversal circuits to hypergraph product codes, we find that matrices $\mathbf{A}$ and $\mathbf{B}$ are similar to the two check matrices of the $X$ and $Z$ operators in three-dimensional homological product codes \cite{Bravyi2013}. In this three-dimensional product, two dimensions correspond to the check matrices of the two linear codes in the hypergraph product code (spatial dimensions), and the other dimension corresponds to the logical-circuit check matrix $\mathbf{a}$ (time dimension). 

In the second example, we construct a new fault-tolerant circuit by moving the logical-circuit check matrix $\mathbf{a}$ from the time dimension to a spatial dimension. This fault-tolerant circuit can transfer logical qubits between two hypergraph product codes while simultaneously implementing a set of logical operations. Furthermore, we use the LDPC representation to analyse this circuit, proving its fault tolerance and verifying the logical operations. 

Since the construction and analysis of the fault-tolerant circuit use abstract matrix representations without involving specific physical or logical circuits, the results are general. Consequently, we can develop a universal fault-tolerant quantum computing scheme based on these results: 
\begin{itemize}
\item[1.] The logical qubits of the hypergraph product code form a two-dimensional array. We can simultaneously extract several rows or columns from this array and perform operations on them individually. In this way, we can realise all logical operations in which $X$ and $Z$ operators are decoupled, including initialisation, measurement, controlled-NOT gates, swap gates, etc.; 
\item[2.] We can extract a square sub-block from the array and apply the transversal Hadamard gate, thereby realising logical Hadamard gates; 
\item[3.] We can transfer states between different hypergraph product codes. Therefore, we can encode magic states on small surface codes and then transfer them to the hypergraph product code with a large code distance, thereby achieving magic state preparation. 
\end{itemize}
Through these operations, the universal quantum computing can be realised \cite{Fowler2012}. 

In the scheme proposed in Ref. \cite{Xu2024}, the basic operation involve manipulating a logical qubit using a full-size surface code. In comparison, our scheme is more resource-efficient: since it allows simultaneous operations on multiple rows, columns, or sub-blocks, our scheme is more time-efficient; and because operations are performed on rows, columns, or sub-blocks, it does not require many full-size surface codes, making it more qubit-efficient. We want to remark that, in addition to resource efficiency, the choice of a fault-tolerant quantum computing scheme also depends on factors such as logical error rates and compatibility with physical systems, which are not considered in this work. Nonetheless, the advantages in specific metrics are sufficient to demonstrate that the LDPC representation is indeed a promising and powerful tool for developing fault-tolerant circuits. 

\subsection{Comparison to spacetime codes}

An approach to representing stabiliser circuits using codes is mapping the circuits into subsystem codes, known as spacetime codes \cite{Bacon2017,Gottesman2022,Delfosse2023,Fu2024}. Through this mapping, fault-tolerant circuits can be converted into subsystem codes with near-optimal parameters. Additionally, spacetime codes serve as a tool for representing and analysing fault-tolerant circuits. 

In the framework of the LDPC representation, we use classical codes to represent stabiliser circuits. These classical codes include plain Tanner graphs and all linear codes with bit-check symmetry. While the LDPC representation is introduced by mapping stabiliser circuits into plain Tanner graphs, one of the main results in this work is establishing a general relation between stabiliser circuits and all linear codes with bit-check symmetry. Therefore, we can represent stabiliser circuits not only with plain Tanner graphs but also with linear codes with bit-check symmetry, making the LDPC representation a flexible and general method rather than just plain Tanner graphs. 

Due to the flexibility and generality of the LDPC representation, there are significant differences between it and spacetime codes. First, there is a similarity between plain Tanner graphs and spacetime codes. In the process of constructing codes from stabiliser circuits, primitive Tanner graphs introduced in Sec. \ref{sec:primitive} and gauge operators play similar roles, describing the same transformations on Pauli operators. However, general linear codes with bit-check symmetry do not have this similarity with spacetime codes. As an example, Fig. \ref{fig:ZZ}(b) shows a plain Tanner graph of a stabiliser circuit. In this graph, each point in spacetime corresponds to two bit vertices, representing the $X$ and $Z$ operators of the corresponding qubit, respectively. However, through bit splitting and symmetric splitting, we can simplify the plain Tanner graph into a symmetric Tanner graph, shown in Fig. \ref{fig:ZZsimpllification}. In this simplified graph, the numbers of $X$ and $Z$ bit vertices are different, therefore, the graph does not correspond to any subsystem code. 

Generally speaking, it is during the process of converting a linear code with bit-check symmetry into a specific quantum circuit that we assign concrete physical meaning to each bit vertex, i.e. we assign a Pauli operator in spacetime to that vertex. By selecting different paths in the process and assigning different Pauli operators to the bit vertices, a linear code can be converted into different quantum circuits. Therefore, there is no direct correspondence between these linear codes and spacetime codes. 

Compared to spacetime codes, the LDPC representation uses two specific techniques: check matrices and Tanner graphs. The use of these two techniques has led to a series of results, including: i) discovering the relation between stabiliser circuits and bit-check symmetry using the Tanner graph representation; ii) introducing two graph operations, bit splitting and symmetric splitting, allowing us to transform a general linear code with bit-check symmetry into a stabiliser circuit while ensuring the fault tolerance of the circuit; iii) discovering a product structure in transversal circuits using the check matrix representation; and iv) designing fault-tolerant quantum circuits on hypergraph product codes, and verifying the fault tolerance and logical operations using the check matrix representation. Therefore, using the techniques of check matrices and Tanner graphs makes the LDPC representation a promising and powerful tool. 
}

\section{Low-density parity-check representation of stabiliser circuits}
\label{sec:representation}

This section presents an approach for mapping stabiliser circuits to classical LDPC codes. Firstly, we introduce some notations and expound upon the basic idea of the LDPC representation. Subsequently, we provide a set of check matrices (Tanner graphs) representing primitive operations that yield general stabiliser circuits. Finally, we outline the procedure for constructing LDPC codes representing general stabiliser circuits. 

A circuit is said to be a stabiliser circuit if it consists of operations generated by the following operations \cite{Gottesman1998}: 
\begin{itemize}
\item[$\bullet$] Initialisation of a qubit in the state $\ket{0}$; 
\item[$\bullet$] Clifford gates; 
\item[$\bullet$] Measurement in the $Z$ basis. 
\end{itemize}
A projective measurement is equivalent to a measurement followed by an initialisation and an $X$ Pauli gate depending on the measurement outcome. Without loss of generality, we assume that following a measurement, the subsequent operation on the same qubit is always an initialisation. Circuits with this feature can effectively minimise errors stemming from measurements, making them prevalent in quantum error correction. 

Clifford gates transform Pauli operators into Pauli operators through conjugation, up to a phase factor. For $n$ qubits, the set of Pauli operators is $P_n = \{I,X,Y,Z\}^{\otimes n}$, where $I$, $X$, $Y$ and $Z$ are single-qubit Pauli operators. When a Pauli operator acts non-trivially on only one qubit, it is denoted by $\alpha_q = I^{\otimes(q-1)}\otimes \alpha\otimes I^{\otimes(n-q)}$, where $\alpha = X,Y,Z$. The Pauli group of $n$ qubits is $G_n = P_n\times\{\pm 1,\pm i\}$. We can define a map $\eta : G_n \rightarrow \{\pm 1,\pm i\}$ to denote the phase factor of Pauli group elements: When $g = \pm S,\pm iS$ and $S\in P_n$, the function takes values $\eta(g) = \pm 1,\pm i$, respectively. The Clifford group is the normaliser of the Pauli group. 

We can represent Pauli operators using binary vectors \cite{Gottesman1997,Calderbank1998}. Let's define a map $\sigma(\bullet,\bullet) : \mathbb{F}_2^n\times\mathbb{F}_2^n \rightarrow P_n$. For two vectors $\mathbf{x}=(x_1,x_2,\ldots,x_n)\in \mathbb{F}_2^n$ and $\mathbf{z}=(z_1,z_2,\ldots,z_n)\in \mathbb{F}_2^n$, the map reads 
\begin{eqnarray}
\sigma(\mathbf{x},\mathbf{z}) = i^{\vert \mathbf{x}\odot\mathbf{z}\vert}X^{x_1}Z^{z_1}\otimes X^{x_2}Z^{z_2}\otimes \cdots\otimes X^{x_n}Z^{z_n}.
\end{eqnarray}
Here, $\mathbf{x}\odot\mathbf{z} = (x_1z_1,x_2z_2,\ldots,x_nz_n)$, and $\vert\bullet\vert$ is the Hamming weight. The map is bijective, and $\sigma^{-1}$ denotes the inverse map. 

In the binary-vector representation of Pauli operators, conjugations by Clifford gates become binary linear maps. Let $[U]\bullet = U\bullet U^\dag$ denote the conjugation by the Clifford operator $U$. For all $(\mathbf{x}^{(0)},\mathbf{z}^{(0)})\in \mathbb{F}_2^n\times\mathbb{F}_2^n$, the superoperator $[U]$ transforms $\sigma(\mathbf{x}^{(0)},\mathbf{z}^{(0)})$ into an element in the Pauli group. In other words, there exists $(\mathbf{x}^{(1)},\mathbf{z}^{(1)})\in \mathbb{F}_2^n\times\mathbb{F}_2^n$ such that 
\begin{eqnarray}
[U]\sigma(\mathbf{x}^{(0)},\mathbf{z}^{(0)}) = \eta\left([U]\sigma(\mathbf{x}^{(0)},\mathbf{z}^{(0)})\right)\sigma(\mathbf{x}^{(1)},\mathbf{z}^{(1)}).
\label{eq:Usigma}
\end{eqnarray}
Notice that $\eta\left([U]\sigma(\mathbf{x}^{(0)},\mathbf{z}^{(0)})\right) = \pm 1$ is always a sign factor. According to Eq. (\ref{eq:Usigma}), we can define a map $M_U : \mathbb{F}_2^n\times\mathbb{F}_2^n \rightarrow \mathbb{F}_2^n\times\mathbb{F}_2^n$ to represent $[U]$, which reads
\begin{eqnarray}
M_U(\mathbf{x},\mathbf{z}) = \sigma^{-1}\left(\eta\left([U]\sigma(\mathbf{x},\mathbf{z})\right)[U]\sigma(\mathbf{x},\mathbf{z})\right).
\end{eqnarray}
This map is consistent with Eq.~(\ref{eq:Usigma}) in the sense $M_U(\mathbf{x}^{(0)},\mathbf{z}^{(0)}) = (\mathbf{x}^{(1)},\mathbf{z}^{(1)})$. We can find that $M_U$ is linear and bijective. 

Furthermore, we can represent Clifford gates using check matrices. Let $\mathbf{M}_U$ be the matrix of $M_U$, and $\mathbf{b}_t = (\mathbf{x}^{(t)},\mathbf{z}^{(t)})$. Then, linear maps are in the form $\mathbf{M}_U\mathbf{b}_0^\mathrm{T} = \mathbf{b}_1^\mathrm{T}$. They can be rewritten as equations 
\begin{eqnarray}
\mathbf{A}_U\left(\begin{matrix}
\mathbf{b}_0 & \mathbf{b}_1
\end{matrix}\right)^\mathrm{T} = 0.
\label{eq:Ab}
\end{eqnarray}
Here, 
\begin{eqnarray}
\mathbf{A}_U = \left(\begin{matrix}
\mathbf{M}_U & \openone
\end{matrix}\right)
\end{eqnarray}
is the check matrix representing the Clifford gate $U$. We can also represent the initialisation and measurement with check matrices, which will be given latter. 

\begin{proposition}
Let $\mathbf{A}_U$ be the check matrix of the Clifford gate $U$. Codewords of $\mathbf{A}_U$ describe the transformation of Pauli operators under $U$, i.e. for a vector $(\mathbf{b}_0,\mathbf{b}_1) = (\mathbf{x}^{(0)},\mathbf{z}^{(0)},\mathbf{x}^{(1)},\mathbf{z}^{(1)})$, Eq. (\ref{eq:Ab}) holds if and only if Eq. (\ref{eq:Usigma}) holds. 
\label{prop:gate}
\end{proposition}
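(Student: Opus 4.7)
The plan is to reduce each of Eq.~(\ref{eq:Ab}) and Eq.~(\ref{eq:Usigma}) to the single binary identity $M_U(\mathbf{x}^{(0)},\mathbf{z}^{(0)}) = (\mathbf{x}^{(1)},\mathbf{z}^{(1)})$ and then chain the two equivalences. No ingenuity is required: the proposition essentially asserts that the check matrix $\mathbf{A}_U$ faithfully encodes the linear map $M_U$ that has already been constructed to capture conjugation by $U$ on the binary representation of Pauli operators.

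First I would exploit the block form $\mathbf{A}_U = \left(\begin{matrix}\mathbf{M}_U & \openone\end{matrix}\right)$ over $\mathbb{F}_2$, which makes Eq.~(\ref{eq:Ab}) equivalent to $\mathbf{M}_U\mathbf{b}_0^{\mathrm{T}} = \mathbf{b}_1^{\mathrm{T}}$, and this in turn is exactly $M_U(\mathbf{x}^{(0)},\mathbf{z}^{(0)}) = (\mathbf{x}^{(1)},\mathbf{z}^{(1)})$ by construction of $\mathbf{M}_U$. Second, I would unwrap the definition of $M_U$ and apply $\sigma$ to both sides of this identity; multiplying through by the sign $\eta\left([U]\sigma(\mathbf{x}^{(0)},\mathbf{z}^{(0)})\right)$, which is its own inverse because it equals $\pm 1$, reproduces Eq.~(\ref{eq:Usigma}) verbatim. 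The converse direction simply rearranges Eq.~(\ref{eq:Usigma}) into the same formula for $M_U(\mathbf{x}^{(0)},\mathbf{z}^{(0)})$. Bijectivity of $\sigma$ ensures that each of these rearrangements loses no information, so the two characterisations of $(\mathbf{x}^{(1)},\mathbf{z}^{(1)})$ are genuinely equivalent.

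The proof contains no hard computation; the substantive content was already established when $M_U$ was introduced as a well-defined, linear, bijective map on $\mathbb{F}_2^n\times\mathbb{F}_2^n$. The one point that deserves explicit acknowledgement, and which I expect to be the only delicate step, is the assertion $\eta\left([U]\sigma(\mathbf{x},\mathbf{z})\right)\in\{\pm 1\}$ rather than $\{\pm i\}$; without this, a codeword equation over $\mathbb{F}_2$ could not faithfully capture Eq.~(\ref{eq:Usigma}). That fact is noted immediately before the proposition: any residual phase of a Clifford-conjugated Hermitian Pauli must be real because the $i^{\vert\mathbf{x}\odot\mathbf{z}\vert}$ factor built into $\sigma$ already absorbs the unavoidable $i$. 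With this observation in hand, the equivalence asserted by the proposition is immediate.
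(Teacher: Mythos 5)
Your proposal is correct and follows exactly the route the paper intends: the paper offers no separate proof of Proposition~\ref{prop:gate}, treating it as immediate from the chain $\mathbf{A}_U(\mathbf{b}_0\ \mathbf{b}_1)^\mathrm{T}=0 \Leftrightarrow \mathbf{M}_U\mathbf{b}_0^\mathrm{T}=\mathbf{b}_1^\mathrm{T} \Leftrightarrow M_U(\mathbf{x}^{(0)},\mathbf{z}^{(0)})=(\mathbf{x}^{(1)},\mathbf{z}^{(1)})$ together with the definition of $M_U$ and the bijectivity of $\sigma$, which is precisely your argument. You also correctly isolate the one nontrivial ingredient, namely that $\eta\left([U]\sigma(\mathbf{x},\mathbf{z})\right)=\pm 1$, which the paper likewise states just before the proposition.
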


\subsection{Check matrices of primitive operations}
\label{sec:primitive}

The controlled-NOT gate $\Lambda$, Hadamard gate $H$ and $\pi/4$ phase gate $S$ can generate all Clifford gates. Their linear maps are 
\begin{eqnarray}
\mathbf{M}_\Lambda = \left(\begin{matrix}
1 & 0 & 0 & 0 \\
1 & 1 & 0 & 0 \\
0 & 0 & 1 & 1 \\
0 & 0 & 0 & 1
\end{matrix}\right),
\end{eqnarray}
\begin{eqnarray}
\mathbf{M}_H = \left(\begin{matrix}
0 & 1 \\
1 & 0
\end{matrix}\right)
\end{eqnarray}
and 
\begin{eqnarray}
\mathbf{M}_S = \left(\begin{matrix}
1 & 0 \\
1 & 1
\end{matrix}\right),
\end{eqnarray}
respectively. For the completeness, the linear map of the single-qubit identity gate $I$ is 
\begin{eqnarray}
\mathbf{M}_I = \left(\begin{matrix}
1 & 0 \\
0 & 1
\end{matrix}\right).
\end{eqnarray}

\begin{figure}[tbp]
\centering
\includegraphics[width=\linewidth]{\figpath/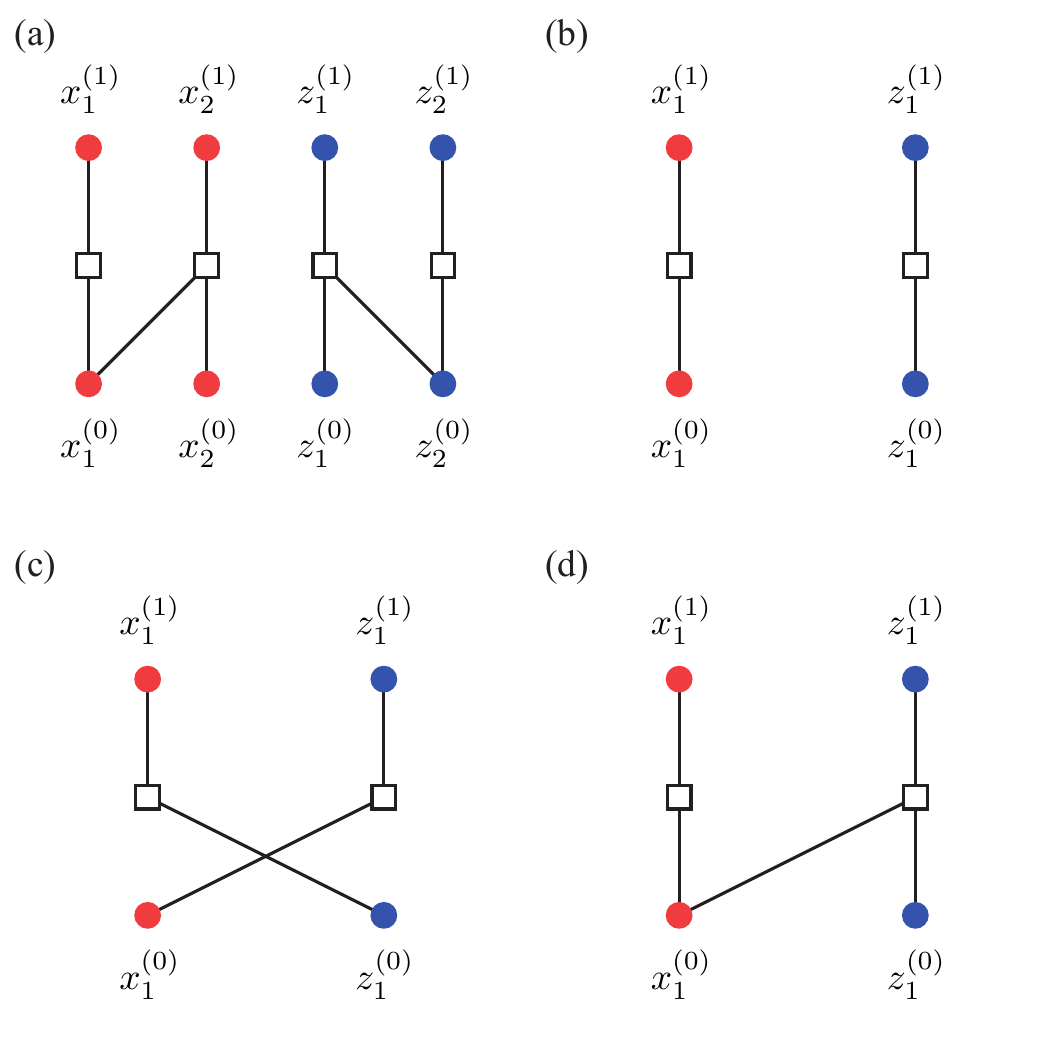}
\caption{
Tanner graphs of the (a) controlled-NOT gate, (b) identity gate, (c) Hadamard gate and (d) phase gate. Red and blue circles are bits $x_q^{(t)}$ and $z_q^{(t)}$, respectively. Squares are checks. The controlled-NOT gate is applied on qubit-1 and qubit-2, in which qubit-1 and qubit-2 are the control and target qubits, respectively. Single-qubit gates are applied on qubit-1. 
}
\label{fig:gates}
\end{figure}

A way of illustrating check matrices is using Tanner graphs \cite{Tanner1981}. A Tanner graph is a bipartite graph with a set of bit vertices and a set of check vertices. Each bit (check) corresponds to a column (row) of the check matrix. There is an edge incident on a bit and a check if and only if the corresponding matrix entry takes one. Using Tanner graphs, we can illustrate check matrices $\mathbf{A}_\Lambda$, $\mathbf{A}_I$, $\mathbf{A}_H$ and $\mathbf{A}_S$ of the four gates as shown in Fig. \ref{fig:gates}. 

\begin{figure}[tbp]
\centering
\includegraphics[width=\linewidth]{\figpath/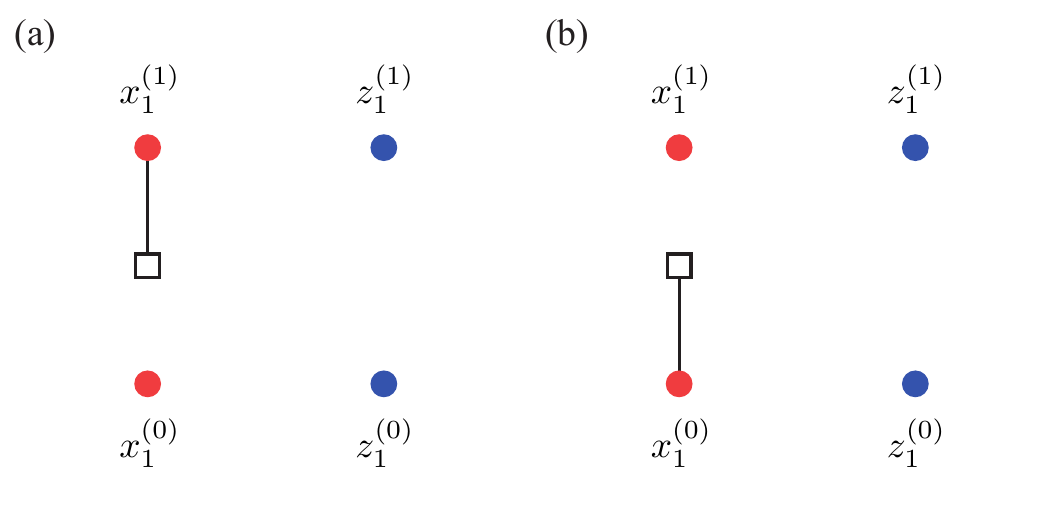}
\caption{
Tanner graphs of the (a) initialisation and (b) measurement on qubit-1. 
}
\label{fig:IandM}
\end{figure}

We can also represent the initialisation and measurement with check matrices. Let’s consider the initialisation first. In the initialisation, the input and output states of the qubit are independent. Accordingly, input and output bits are decoupled. Suppose qubit-1 is initialised in the $Z$ basis. After the initialisation, if we measure any Pauli operator involving $X_1$ as a factor, the measurement outcome is completely random. Therefore, $X_1$ is useless in quantum error correction, and only Pauli operators without $X_1$ are interesting. Accordingly, the output bit $x_1^{(1)}$ always takes the value of $0$. Due to the above reasons, the check matrix of the initialisation reads 
\begin{eqnarray}
\mathbf{A}_{ini} = \left(\begin{matrix}
0 & 0 & 1 & 0
\end{matrix}\right).
\end{eqnarray}
Similarly, the check matrix of the measurement reads 
\begin{eqnarray}
\mathbf{A}_{mea} = \left(\begin{matrix}
1 & 0 & 0 & 0
\end{matrix}\right).
\end{eqnarray}
See Fig. \ref{fig:IandM} for their Tanner graphs. By adopting check matrices for initialisation and measurement in this manner, we will find that LDPC codes representing stabiliser circuits possess intuitive and definite physical meanings, which will be discussed in Sec. \ref{sec:codewords}. 

\RED{Using a similar method, we can construct Tanner graphs to represent single-qubit measurements of $X$ and $Y$ operators, as well as multi-qubit measurements; see Appendix \ref{app:measurements}. Consequently, we can represent quantum circuits composed of a sequence of non-commuting measurements. These types of quantum circuits are used for subsystem and Floquet codes \cite{Poulin2005,Hastings2021}. }

\subsection{Check matrices of stabiliser circuits}
\label{sec:CircuitGraph}

We can construct the check matrix (draw the Tanner graph) of a stabiliser circuit in the following way. See Fig. \ref{fig:ZZ} for an example. 
\begin{itemize}
\item[1.] Consider a circuit with $n$ qubits and a depth of $T$ (the circuit consists of $T$ layers of parallel operations). Draw $2n(T+1)$ bit vertices. 
\item[] The set of bits $V_{B,all} = V_{B,0}\cup V_{B,1}\cup\cdots\cup V_{B,T}$ is the union of $T+1$ subsets, and each subset $V_{B,t} = \{\hat{x}_q^{(t)},\hat{z}_q^{(t)}\vert q = 1,2,\ldots,n\}$ has $2n$ bits. Here, we use notations $\hat{x}_q^{(t)}$ and $\hat{z}_q^{(t)}$ with the hat to denote bit vertices for clarity, and notations $x_q^{(t)},z_q^{(t)}\in \mathbb{F}_2$ without the hat are values of the bits. Bits $V_{B,t-1}$ and $V_{B,t}$ are the input and output of the layer-$t$ operations, respectively. Bits $V_{B,0}$ and $V_{B,T}$ are the input and output of the circuit, respectively. 
\item[2.] Add check vertices and edges for each primitive operation, according to Figs. \ref{fig:gates} and \ref{fig:IandM}. 
\item[3.] Remove isolated bits. 
\end{itemize}
From now on, we use $\mathbf{A}$ to denote the check matrix of a stabiliser circuit, and we use $V_B\subseteq V_{B,all}$ ($V_C$) to denote the bit (check) vertex set of the final Tanner graph. 

\begin{figure}[tbp]
\centering
\includegraphics[width=\linewidth]{\figpath/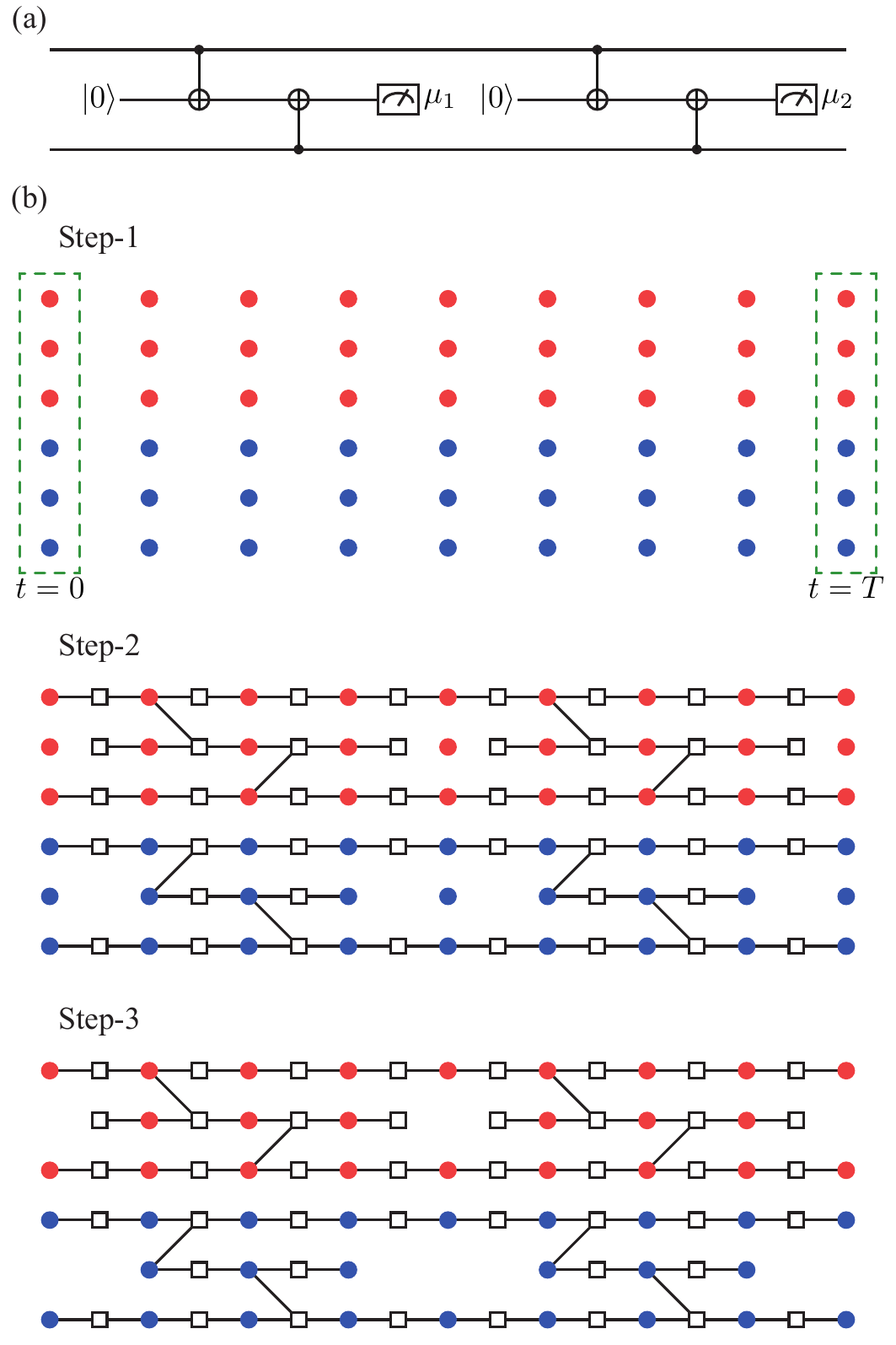}
\caption{
(a) The circuit of the repeated $Z_1Z_2$ measurement. Measurement outcomes are $\mu_1,\mu_2 = \pm 1$. This circuit is the parity-check circuit of the [[2,1,1]] code. In the [[2,1,1]] code, the stabiliser generating set is $\{Z_1Z_2\}$, and logical Pauli operators are $X_1X_2$ and $Z_1$. In the circuit, the measurement is repeated two times. 
(b) Drawing the Tanner graph. The circuit has $n = 3$ qubits and $T = 8$ layers. In step-1, an $2n\times(T+1)$ array of bits is created. In step-2, checks and edges are added. In step-3, isolated bits are removed. 
}
\label{fig:ZZ}
\end{figure}

We call the Tanner graph constructed according to the above procedure the plain Tanner graph of a stabiliser circuit. If the stabiliser circuit is generated by the controlled-NOT gate and single-qubit operations, the Tanner graph has a maximum vertex degree of three. Therefore, the corresponding linear code is an LDPC code. 

When stabiliser circuits only differ in Pauli gates, they are represented by the same LDPC code. For example, the check matrix of a single-qubit Pauli gate is the same as the identity gate. The reason is that Pauli gates only change the sign of Pauli operators. In quantum error correction, we usually neglect such a difference between two stabiliser circuits. 

\RED{{\bf Application to general circuits.} Besides stabiliser circuits, the LDPC representation can also be used to study non-stabiliser circuits. The key point here is that the LDPC representation describes transformations on quantum states, and the input quantum state can be any state. Therefore, we can generate a stabiliser sub-circuit by removing non-Clifford gates from a general quantum circuit (see Fig. \ref{fig:non-Clifford}) and then apply the LDPC representation. Usually, this sub-circuit determines the capability of error correction. Therefore, the LDPC representation can be applied to non-stabiliser circuits to study fault tolerance. 

\begin{figure}[tbp]
\centering
\includegraphics[width=\linewidth]{\figpath/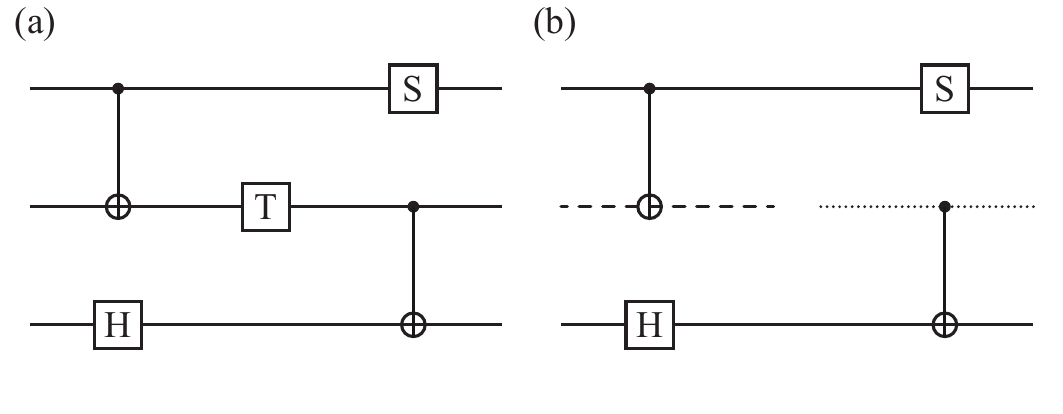}
\caption{
(a) A non-stabiliser circuit. (b) A stabiliser sub-circuit generated by deleting the non-Clifford gate $T$. The dotted and dashed lines represent two different qubits. 
}
\label{fig:non-Clifford}
\end{figure}

In fault-tolerant quantum computing, a practical approach to achieve universal quantum gates is through the preparation and distillation of magic states. The above method can be used to study the preparation of magic states. One way to prepare the magic state is to initialise a physical qubit in the magic state (a non-stabiliser operation) and then encode the state of this physical qubit into a logical qubit using a stabiliser circuit (see Ref. \cite{Li2015} for example). In this process, the encoding fidelity is mainly determined by the stabiliser circuit. The LDPC representation can be used to study this encoding process. Similarly, magic state distillation is also realised through stabiliser circuits \cite{Fowler2012}. Therefore, the LDPC representation can be broadly applied in universal fault-tolerant quantum computing. }

\section{Codewords of stabiliser circuits}
\label{sec:codewords}

We already have a representation of stabiliser circuits in the form of LDPC codes. In this section, we discuss the physical meaning of the representation. We classify codewords of a stabiliser circuit into several categories according to their physical meaning. The categories include checker, detector, emitter and propagator. They correspond to the parity check on measurement outcomes, measurement, eigenstate preparation and transformation on Pauli operators, respectively. A subset of propagators called genuine propagators is essential for logical quantum gates because they represent coherent correlations. 

For a stabiliser circuit, codewords describe the correlations established by the circuit. For a circuit with only Clifford gates, codewords describe the transformation on Pauli operators (Proposition \ref{prop:gate}). For a general stabiliser circuit, we can express the correlations with a set of equations termed codeword equations. 

Before giving the equations, let's consider two example circuits. The first example is the controlled-NOT gate. See Fig. \ref{fig:gates}(a). Bits $\hat{x}_1^{(0)},\hat{x}_2^{(0)},\hat{z}_1^{(0)},\hat{z}_2^{(0)}$ represent the Pauli operator at $t = 0$ (input to the gate), and bits $\hat{x}_1^{(1)},\hat{x}_2^{(1)},\hat{z}_1^{(1)},\hat{z}_2^{(1)}$ represent the Pauli operator at $t = 1$ (output of the gate). A codeword of the controlled-NOT gate is 
\begin{eqnarray}
&& \left(x_1^{(0)},x_2^{(0)},z_1^{(0)},z_2^{(0)},x_1^{(1)},x_2^{(1)},z_1^{(1)},z_2^{(1)}\right) \notag \\
&=& (1,0,0,0,1,1,0,0).
\end{eqnarray}
Here, $\left(x_1^{(0)},x_2^{(0)},z_1^{(0)},z_2^{(0)}\right) = (1,0,0,0)$ means that the input operator is $\sigma_{in} = X_1$, and $\left(x_1^{(1)},x_2^{(1)},z_1^{(1)},z_2^{(1)}\right) = (1,1,0,0)$ means that the output operator is $\sigma_{out} = X_1X_2$. The physical meaning of this codeword is that the controlled-NOT gate maps $X_1$ to $X_1X_2$. 

Similarly, for a general stabiliser circuit, the layer-$0$ bits represent the input Pauli operator $\sigma_{in}$, and the layer-$T$ bits represent the output Pauli operator $\sigma_{out}$. The physical meaning of a codeword is that the stabiliser circuit maps $\sigma_{in}$ to $\sigma_{out}$, up to a sign. 

\begin{figure}[tbp]
\centering
\includegraphics[width=\linewidth]{\figpath/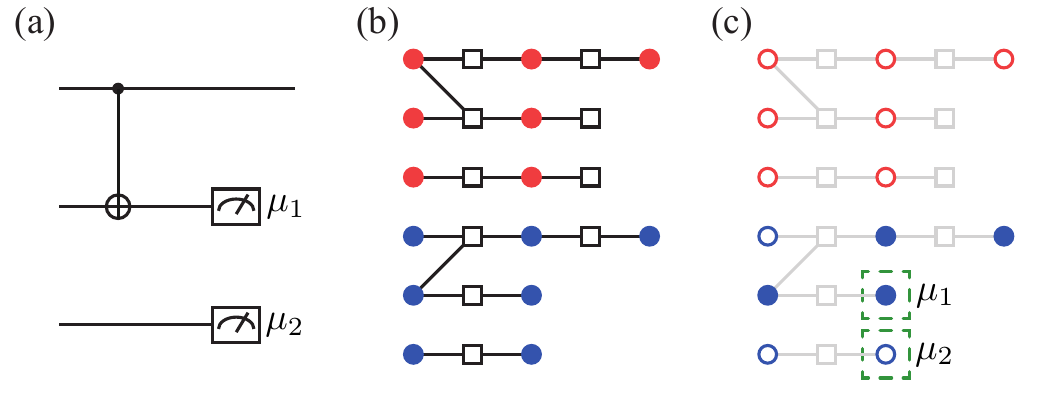}
\caption{
(a) An example stabiliser circuit with two measurements. (b) Tanner graph of the circuit. (c) A codeword. Open and closed circles denote bits taking the values of zero and one, respectively. Each bit in the green dashed box corresponds to a measurement. 
}
\label{fig:CONT_M}
\end{figure}

The second example is a circuit with measurements as shown in Fig. \ref{fig:CONT_M}(a). In this example, there is a sign factor depending on measurement outcomes. The Tanner graph of the example circuit is given in Fig. \ref{fig:CONT_M}(b), and a codeword is illustrated in Fig. \ref{fig:CONT_M}(c). According to the codeword, the input operator is $\sigma_{in} = Z_2$, and the output operator is $\sigma_{out} = Z_1$. Now, let’s suppose that the initial state is an eigenstate of $\sigma_{in}$ with the eigenvalue $+1$. In this case, the final state is an eigenstate of $\sigma_{out}$, however, the eigenvalue depends on the measurement outcome: When the outcome is $\mu_1 = \pm 1$, the eigenvalue is $\mu_1$. Therefore, the circuit maps $\sigma_{in}$ to $\mu_1\sigma_{out}$. 

In Fig. \ref{fig:CONT_M}(a), the circuit has two measurement outcomes. Only one of them is relevant to the sign factor. Notice that each measurement corresponds to a bit vertex on the Tanner graph. In the codeword, the bit of $\mu_1$ takes the value of one, then it is relevant; the bit of $\mu_2$ takes the value of zero, then it is irrelevant. In this way, only one measurement is relevant. If there are multiple relevant measurements, the sign factor is determined by the product of their outcomes. 

\subsection{Codeword equations}

To formally express the physical meaning of codewords, we need to introduce {\it layer projections} $\mathbf{P}_t : \mathbb{F}_2^{\vert V_B\vert} \rightarrow \mathbb{F}_2^{\vert V_{B,t}\vert}$. Let $v\in V_B$ be a bit on the Tanner graph and $u\in V_{B,t}$ be a bit in the layer-$t$ (notice that $u$ may not in $V_B$ if it is a removed bit). Then, matrix elements of $\mathbf{P}_t$ are 
\begin{eqnarray}
\mathbf{P}_{t;u,v} = \left\{\begin{matrix}
\delta_{u,v} & v\in V_{B,t}; \\
0, & \text{otherwise}.
\end{matrix}\right.
\end{eqnarray}

To understand the layer projection, consider a circuit with only Clifford gates. In this case, all bits are kept, i.e. $V_B = V_{B,all}$. Then, a codeword is in the form $\mathbf{c} = \{\mathbf{b}_0,\mathbf{b}_1,\ldots,\mathbf{b}_T\}\in \mathbb{F}_2^{\vert V_B\vert}$, where $\mathbf{b}_t\in \mathbb{F}_2^{\vert V_{B,t}\vert}$. The projection reads $\mathbf{P}_t\mathbf{c}^\mathrm{T} = \mathbf{b}_t^\mathrm{T}$. It is similar in the case of a general circuit. If there are removed bits, $\mathbf{c}$ does not have corresponding entries. Then, these removed bits always take the value of zero in $\mathbf{b}_t$, and other bits in $\mathbf{b}_t$ take the same values as in $\mathbf{c}$. 

With the layer projection, we can understand a codeword in the following picture. A codeword $\mathbf{c}$ is an evolution trajectory of Pauli operators in the stabiliser circuit. The input operator is $\sigma_{in}(\mathbf{c}) = \sigma(\mathbf{P}_0\mathbf{c}^\mathrm{T})$. At the time $t$, the operator becomes $\sigma(\mathbf{P}_t\mathbf{c}^\mathrm{T})$. Eventually, the output operator is $\sigma_{out}(\mathbf{c}) = \sigma(\mathbf{P}_T\mathbf{c}^\mathrm{T})$. 

Now, let's deal with measurements. Suppose a stabiliser circuit includes $n_M$ measurements (all in the $Z$ basis). We can use a subset of bits $V_M\subseteq V_B$ ($n_M = \vert V_M\vert$) to denote measurements: $\hat{z}_q^{(t-1)}\in V_M$ if a measurement is performed on qubit-$q$ at the time $t$. Let $\mu_v = \pm 1$ be the outcome of the measurement $v\in V_M$, and let $\boldsymbol{\mu}$ be an $n_M$-tuple of measurement outcomes. 

Relevant measurements determine the sign factor of the output operator. Given a codeword $\mathbf{c}$, bits $v\in V_M$ taking the value of one are relevant. Therefore, the sign factor is determined by 
\begin{eqnarray}
\mu_R(\mathbf{c},\boldsymbol{\mu}) = \prod_{v\in V_M\vert\mathbf{c}_v = 1} \mu_v.
\end{eqnarray}

Let $\rho_0$ be the initial state of the stabiliser circuit. The final state $\rho_T(\boldsymbol{\mu}) = \mathcal{M}_{\boldsymbol{\mu}}\rho_0$ depends on measurement outcomes. Here, $\rho_T(\boldsymbol{\mu})$ is unnormalised, and $\mathrm{Tr}\rho_T(\boldsymbol{\mu})$ is the probability of $\boldsymbol{\mu}$. Superoperators $\mathcal{M}_{\boldsymbol{\mu}}$ are completely positive maps, and $\sum_{\boldsymbol{\mu}} \mathcal{M}_{\boldsymbol{\mu}}$ is completely positive and trace-preserving. 

\begin{theorem}
Each codeword of a stabiliser circuit corresponds to an equation involving the initial state $\rho_0$, final state $\rho_T$, Pauli operators $\sigma_{in}$ and $\sigma_{out}$, and measurement outcomes $\boldsymbol{\mu}$. Let $\mathbf{c}\in C$ be a codeword of the stabiliser circuit. For all initial states $\rho_0$, the following equation holds: 
\begin{eqnarray}
\mathrm{Tr}\sigma_{in}(\mathbf{c})\rho_0
= \nu(\mathbf{c})\sum_{\boldsymbol{\mu}} \mu_R(\mathbf{c},\boldsymbol{\mu})\mathrm{Tr}\sigma_{out}(\mathbf{c})\rho_T(\boldsymbol{\mu}),
\label{eq:errorfree}
\end{eqnarray}
where $\nu(\mathbf{c}) = \pm 1$ is a sign factor due to Clifford gates in the circuit. 
\label{the:errorfree}
\end{theorem}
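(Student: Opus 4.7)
My plan is to prove the theorem by induction on the circuit depth, reducing to a one-layer version of the identity. First I observe that by construction of the plain Tanner graph (Sec.~\ref{sec:CircuitGraph}), the full check matrix $\mathbf{A}$ is the row-disjoint union of the primitive check matrices $\mathbf{A}_U$, $\mathbf{A}_{ini}$ and $\mathbf{A}_{mea}$ attached to each primitive operation in the circuit. Consequently a codeword $\mathbf{c}$ is characterised by the sequence of intermediate bit-vectors $\mathbf{b}_t = \mathbf{P}_t\mathbf{c}^\mathrm{T}$, and for every layer $t$ the pair $(\mathbf{b}_{t-1},\mathbf{b}_t)$ is itself a codeword of that layer's check matrix. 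Writing $\sigma_t = \sigma(\mathbf{b}_t)$, the theorem reduces to iterating a one-layer identity
\begin{equation}
\mathrm{Tr}\sigma_{t-1}\rho_{t-1} = \nu_t \sum_{\boldsymbol{\mu}_t}\mu_{R,t}(\mathbf{c},\boldsymbol{\mu}_t)\,\mathrm{Tr}\sigma_t \,\mathcal{L}_t^{\boldsymbol{\mu}_t}\rho_{t-1},
\end{equation}
where $\mathcal{L}_t^{\boldsymbol{\mu}_t}$ is the completely positive map implementing layer $t$ conditioned on its outcomes $\boldsymbol{\mu}_t$. Composing across $t=1,\ldots,T$ then recovers Eq.~(\ref{eq:errorfree}) with $\nu(\mathbf{c}) = \prod_t \nu_t$ and $\mu_R(\mathbf{c},\boldsymbol{\mu}) = \prod_t \mu_{R,t}$.

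Next I would verify the one-layer identity for each primitive operation. Since a layer is a tensor product of the single-qubit primitives together with the two-qubit controlled-NOT, and the codeword factorises accordingly, it suffices to check three cases. For a Clifford gate $U$, Proposition~\ref{prop:gate} already gives $U\sigma_{t-1}U^\dag = \nu_t\sigma_t$, so $\mathrm{Tr}\sigma_{t-1}\rho_{t-1} = \nu_t\,\mathrm{Tr}\sigma_t (U\rho_{t-1} U^\dag)$ directly and no outcomes arise. For an initialisation on qubit $q$, the check $\mathbf{A}_{ini}$ forces $x_q^{(t)}=0$, so $\sigma_t|_q\in\{I,Z\}$ and $\langle 0|\sigma_t|_q|0\rangle=1$; the input bits $\hat{x}_q^{(t-1)},\hat{z}_q^{(t-1)}$ have been deleted as isolated, so they count as zero in $\mathbf{b}_{t-1}$, making $\sigma_{t-1}|_q = I$, and evaluating both sides against the reset channel $\rho\mapsto\ketbra{0}{0}_q\otimes\mathrm{Tr}_q\rho$ gives equality with $\nu_t=1$. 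For a measurement on qubit $q$, $\mathbf{A}_{mea}$ forces $x_q^{(t-1)}=0$, so $\sigma_{t-1}|_q\in\{I,Z\}$; the bit $\hat{z}_q^{(t-1)}\in V_M$ is precisely the one deciding relevance, and a direct computation against the Kraus operators $\ketbra{m}{m}_q$ with $\mu=(-1)^m$ shows that the $Z$-case picks up the sign $\mu$ demanded by $\mu_R$, while the $I$-case sums to the trace-preserving identity.

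The main obstacle will be the careful bookkeeping around the removed (isolated) bits and the subset $V_M$. In particular, one must confirm that the layer-wise factorisation of codewords is compatible with the convention that deleted bits take value zero in $\mathbf{P}_t\mathbf{c}^\mathrm{T}$, that the product $\mu_R(\mathbf{c},\boldsymbol{\mu})$ picks up precisely the outcomes $\mu_v$ for those $v\in V_M$ with $\mathbf{c}_v=1$ without double-counting across layers, and that $\nu(\mathbf{c})$ consistently absorbs the phases $\eta([U]\sigma)$ generated by every Clifford layer. Once these conventions are pinned down, the argument reduces to routine composition of the three one-layer identities established above.
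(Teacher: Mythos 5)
Your proposal is correct in outline, but it takes a genuinely different route from the paper. The paper does not argue layer by layer: it first builds an \emph{extended circuit} in which every measurement--initialisation pair is replaced by a swap onto a fresh ancilla, so that all initialisations sit in the first layer, all measurements in the last layer, and everything in between is one big Clifford $\overline{U}$; it shows the extended Tanner graph is related to the original by bit splitting (Lemma \ref{lem:bit_splitting}), translates $\sigma_{in}(\mathbf{c})$ and $\sigma_{out}(\mathbf{c})$ into Pauli operators at layers $1$ and $T-1$ of the extended circuit (Lemma \ref{lem:operators}), applies Proposition \ref{prop:gate} once to $\overline{U}$ to get the sign $\nu(\mathbf{c})=\eta([\overline{U}]\sigma(\overline{\mathbf{P}}_1\overline{\mathbf{c}}^\mathrm{T}))$, and finally collapses the sum over $\boldsymbol{\mu}$ by commuting the output Pauli operator through the terminal projectors $E_{\boldsymbol{\mu}}$. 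Your approach instead verifies a local identity for each primitive channel (unitary conjugation, reset, and $Z$-measurement Kraus operators) and telescopes; the three local checks you sketch are all correct (in particular, $\sigma$ factorises over tensor components because $i^{\vert\mathbf{x}\odot\mathbf{z}\vert}=\prod_q i^{x_q z_q}$, the isolated-bit convention makes $\sigma_{t-1}\vert_q=I$ at an initialisation and $\sigma_t\vert_q=I$ at a measurement, and $\eta$ is multiplicative under composition of Clifford layers so your $\prod_t\nu_t$ agrees with the paper's $\nu(\mathbf{c})$). What your route buys is a self-contained, elementary argument that avoids ancilla bookkeeping and the bit-splitting equivalence; what the paper's route buys is reusability --- the extended circuit and Lemmas \ref{lem:operators}--\ref{lem:evolution} are the workhorses for Theorems \ref{the:propagator} and \ref{the:generalised} as well, where a purely local induction would be harder to run. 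The only point you should make explicit when writing this up is that the one-layer identity must be stated for arbitrary (unnormalised) positive operators $\rho_{t-1}$, since the intermediate states $\mathcal{L}_1^{\boldsymbol{\mu}_1}\rho_0$ are not normalised; this is immediate by linearity but is needed for the telescoping to be legitimate.
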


Here, $C = \mathrm{ker}\mathbf{A}$ is the space of codewords. The proof of the theorem and expression of $\nu(\mathbf{c})$ are given in Appendix \ref{app:errorfree}. Notice that given the stabiliser circuit, $\nu(\mathbf{c})$ can be evaluated in polynomial time on a classical computer \cite{Gottesman1998}. 

We call Eq. (\ref{eq:errorfree}) the error-free codeword equation because an error-free circuit is assumed. In Sec. \ref{sec:errors}, a generalised equation for circuits with errors will be introduced. 

\subsection{Checkers, detectors, emitters and propagators}

\begin{figure}[tbp]
\centering
\includegraphics[width=\linewidth]{\figpath/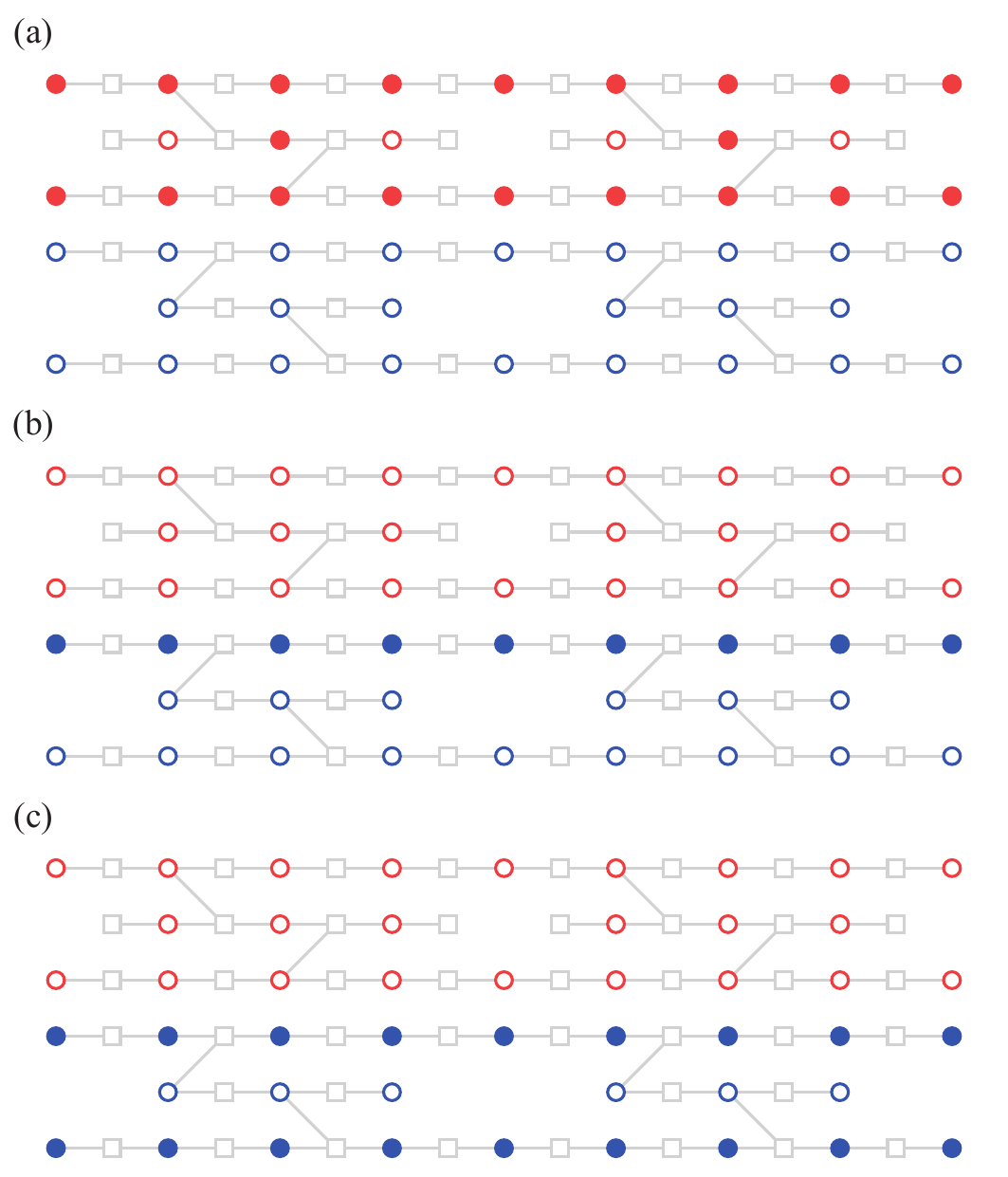}
\caption{
Propagator codewords of the circuit in Fig. 3(a). Open and closed circles denote bits taking the values of zero and one, respectively. 
}
\label{fig:codewordsP}
\end{figure}

\begin{figure}[tbp]
\centering
\includegraphics[width=\linewidth]{\figpath/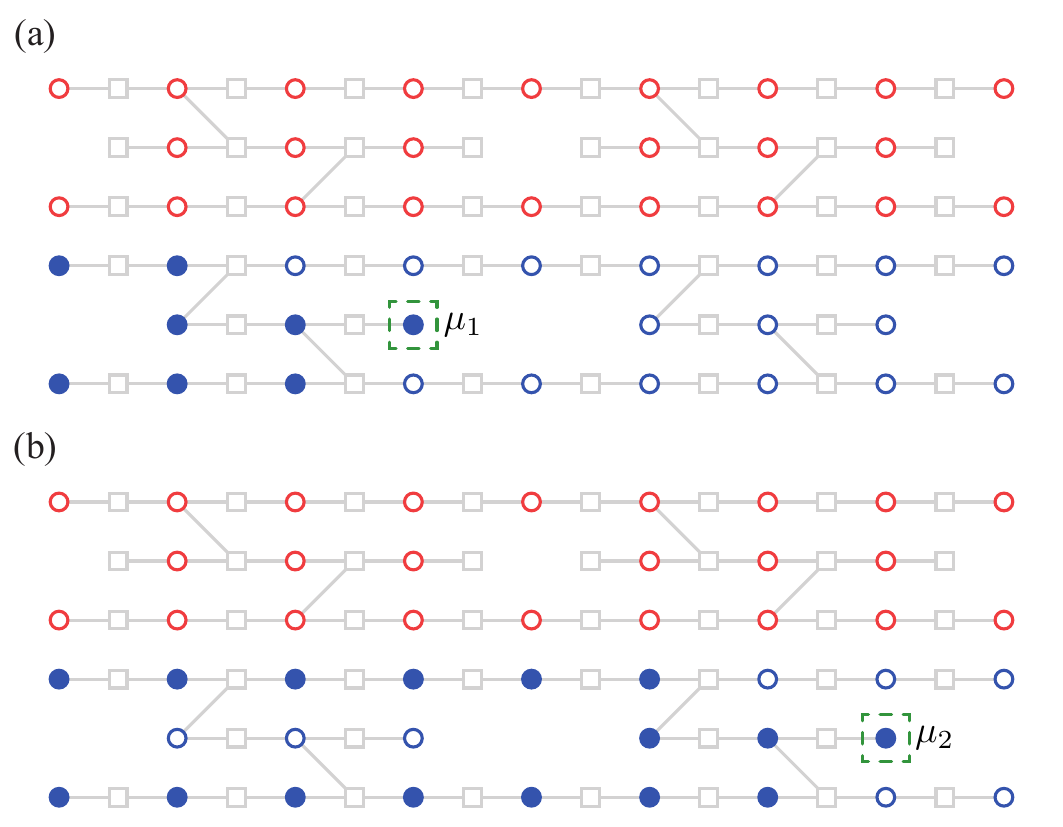}
\caption{
Detector codewords of the circuit in Fig. 3(a). Open and closed circles denote bits taking the values of zero and one, respectively. 
}
\label{fig:codewordsD}
\end{figure}

\begin{figure}[tbp]
\centering
\includegraphics[width=\linewidth]{\figpath/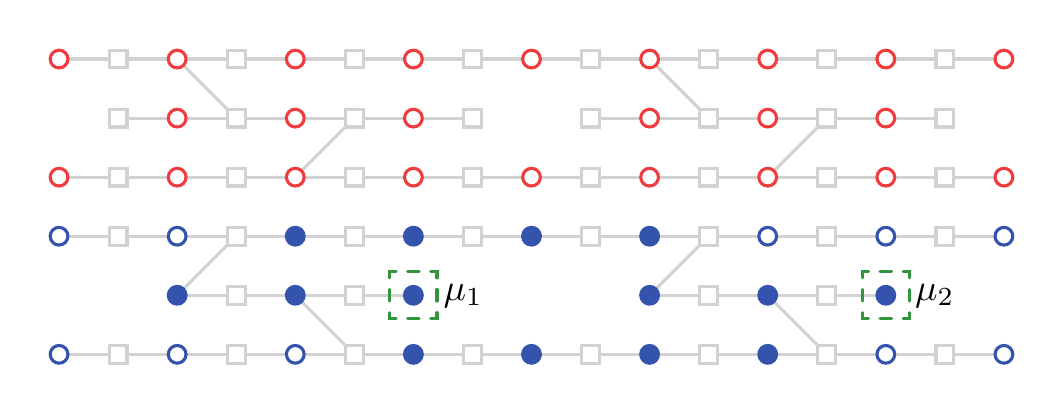}
\caption{
The checker codeword of the circuit in Fig. 3(a). Open and closed circles denote bits taking the values of zero and one, respectively. 
}
\label{fig:codewordsC}
\end{figure}

\begin{figure}[tbp]
\centering
\includegraphics[width=\linewidth]{\figpath/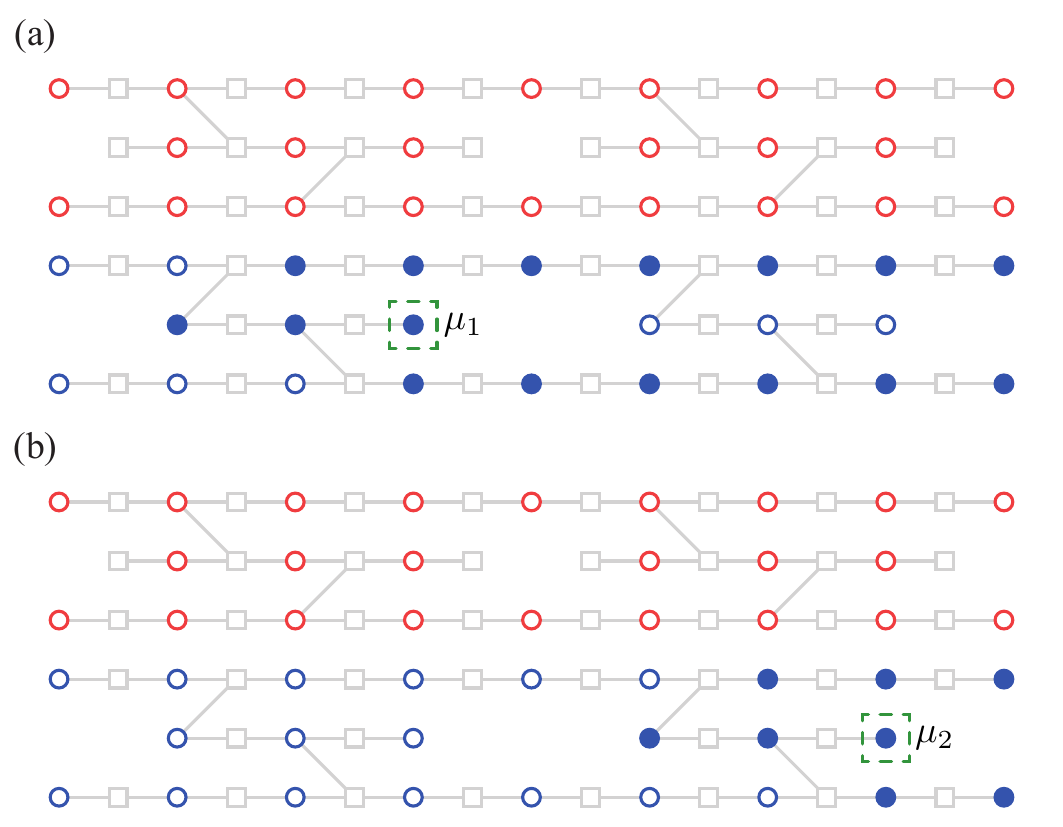}
\caption{
Emitter codewords of the circuit in Fig. 3(a). Open and closed circles denote bits taking the values of zero and one, respectively. 
}
\label{fig:codewordsE}
\end{figure}

According to their physical meanings, we can classify codewords of a stabiliser circuit into four categories: checkers, detectors, emitters and propagators. Propagators are further classified into pseudo propagators and genuine propagators. In the following, we take the parity check circuit of the [[2,1,1]] code as an example. See Figs. \ref{fig:codewordsP}, \ref{fig:codewordsD}, \ref{fig:codewordsC} and \ref{fig:codewordsE} for its codewords. 

{\it Propagators} describe the transformation of Pauli operators from the beginning to the end of the circuit. On the Tanner graph, each layer of bits represents the Pauli operator at the corresponding time. For example, in the $t = 0$ layer in Fig. \ref{fig:codewordsP}(a) [also see Fig. \ref{fig:ZZ}(b)], there are two $x$ bits taking the value of one. This means that $\sigma_{in} = X_1X_2$ is the input Pauli operator to the circuit. In the same codeword, the $t = T$ layer means that $\sigma_{out} = X_1X_2$ is the output of the circuit. This codeword describes the transformation of the logical operator $X_1X_2$: The logical operator is preserved in the circuit. We call such a codeword with nonzero-valued bits in both $t = 0$ and $t = T$ layers a propagator. 

Similarly, codewords in Figs. \ref{fig:codewordsP}(b) and (c) are also propagators, which describe the transformation of the logical operator $Z_1$ and stabiliser generator $Z_1Z_2$, respectively. 

{\it Detectors} describe measurements on Pauli operators. The two-qubit measurement $Z_1Z_2$ is realised through a single-qubit measurement $Z_\mathrm{a}$ (Here, $Z_\mathrm{a}$ is the $Z$ operator of the ancilla qubit). The codeword in Fig. \ref{fig:codewordsD}(a) describes the first $Z_1Z_2$ measurement. In the codeword, $\sigma_{in} = Z_1Z_2$ is the input at $t = 0$. Value-one bits in the codeword vanish at $t = 3$, and $Z_\mathrm{a}$ is the output at $t = 3$. This means that $Z_1Z_2$ at $t = 0$ is transformed to $Z_\mathrm{a}$ at $t = 3$; then $Z_\mathrm{a}$ is measured. We call such a codeword with nonzero-valued bits in the $t = 0$ layer but only zero-valued bits in the $t = T$ layer a detector. 

Similarly, the codeword in Fig. \ref{fig:codewordsD}(b) is also a detector, which describes the second measurement of the stabiliser generator $Z_1Z_2$: $Z_1Z_2$ at $t = 0$ is transformed to $Z_\mathrm{a}$ at $t = 7$. 

{\it Checkers} describe the parity check performed on measurement outcomes. The two outcomes $\mu_1$ and $\mu_2$ take the same value when the circuit is error-free. The parity check on them is represented by the codeword in Fig. \ref{fig:codewordsC}, which is the sum of the two codewords in Figs. \ref{fig:codewordsD}(a) and (b). We call such a codeword with only zero-valued bits in both $t = 0$ and $t = T$ layers a checker. 

{\it Emitters} describe preparing an eigenstate of Pauli operators. Consider the sum of the two codewords in Figs. \ref{fig:codewordsP}(c) and \ref{fig:codewordsD}(a), which is shown in Fig. \ref{fig:codewordsE}(a). In this codeword, value-one bits start from a bit representing qubit initialisation, and $\sigma_{out} = Z_1Z_2$ is the output operator at $t = T$. In this case, the final state is an eigenstate of $Z_1Z_2$, and the eigenvalue depends on the measurement outcome $\mu_1$. We call such a codeword with only zero-valued bits in the $t = 0$ layer but nonzero-valued bits in the $t = T$ layer an emitter. 

Similarly, the codeword in Fig. \ref{fig:codewordsE}(b) is also an emitter, which is the sum of two codewords in Figs. \ref{fig:codewordsP}(c) and \ref{fig:codewordsD}(b). 

Now, we can formally define checkers, detectors, emitters and propagators. The addition of two checkers is also a checker. Therefore, checkers form a subspace, which is $C_c = \mathrm{ker}\mathbf{A} \cap \mathrm{ker}\mathbf{P}_0 \cap \mathrm{ker}\mathbf{P}_T$. Similarly, checkers and detectors form the subspace $C_{c,d} = \mathrm{ker}\mathbf{A} \cap \mathrm{ker}\mathbf{P}_T$. Codewords in $C_{c,d} - C_c$ are detectors. Checkers and emitters also form a subspace $C_{c,e} = \mathrm{ker}\mathbf{A} \cap \mathrm{ker}\mathbf{P}_0$. Codewords in $C_{c,e} - C_c$ are emitters. Codewords in $C - C_{c,d}\cup C_{c,e}$ are propagators. 

For a checker codeword $\mathbf{c}\in C_c$, we have $\sigma_{in}(\mathbf{c}) = \sigma_{out}(\mathbf{c}) = \openone$. The corresponding codeword equation is $1 = \nu(\mathbf{c})\sum_{\boldsymbol{\mu}} \mu_R(\mathbf{c},\boldsymbol{\mu})\mathrm{Tr}\rho_T(\boldsymbol{\mu})$. Here, $\mathrm{Tr}\rho_T(\boldsymbol{\mu})$ is the probability of $\boldsymbol{\mu}$. This equation means that $\nu(\mathbf{c})\mu_R(\mathbf{c},\boldsymbol{\mu}) = 1$ holds with the probability of one. Therefore, it can be used in the parity check for detecting errors. 

For a detector codeword $\mathbf{c}\in C_{c,d}-C_c$, we have $\sigma_{out}(\mathbf{c}) = \openone$. The corresponding codeword equation is $\mathrm{Tr}\sigma_{in}(\mathbf{c})\rho_0 = \nu(\mathbf{c})\sum_{\boldsymbol{\mu}} \mu_R(\mathbf{c},\boldsymbol{\mu})\mathrm{Tr}\rho_T(\boldsymbol{\mu})$, which means that $\nu(\mathbf{c})\mu_R(\mathbf{c},\boldsymbol{\mu})$ is the measurement outcome of $\sigma_{in}(\mathbf{c})$. Notice that $\mathrm{Tr}\rho_T(\boldsymbol{\mu})$ is the probability of $\boldsymbol{\mu}$.  

For an emitter codeword $\mathbf{c}\in C_{c,e}-C_c$, we have $\sigma_{in}(\mathbf{c}) = \openone$. The corresponding codeword equation is $1 = \nu(\mathbf{c})\sum_{\boldsymbol{\mu}} \mu_R(\mathbf{c},\boldsymbol{\mu})\mathrm{Tr}\sigma_{out}(\mathbf{c})\rho_T(\boldsymbol{\mu})$, i.e. $\nu(\mathbf{c})\mu_R(\mathbf{c},\boldsymbol{\mu})\frac{\mathrm{Tr}\sigma_{out}(\mathbf{c})\rho_T(\boldsymbol{\mu})}{\mathrm{Tr}\rho_T(\boldsymbol{\mu})} = 1$ holds with the probability of one. Here, $\frac{\mathrm{Tr}\sigma_{out}(\mathbf{c})\rho_T(\boldsymbol{\mu})}{\mathrm{Tr}\rho_T(\boldsymbol{\mu})}$ is the mean value of $\sigma_{out}(\mathbf{c})$ in the normalised final state. Therefore, the equation means that the normalised final state $\frac{\rho_T(\boldsymbol{\mu})}{\mathrm{Tr}\rho_T(\boldsymbol{\mu})}$ is an eigenstate of $\sigma_{out}(\mathbf{c})$ with the eigenvalue $\nu(\mathbf{c})\mu_R(\mathbf{c},\boldsymbol{\mu})$. 

\subsection{Pseudo propagators and genuine propagators}

{\it Genuine propagators} represent coherent correlations, and {\it pseudo propagators} represent incoherent correlations. Consider codewords in Figs. \ref{fig:codewordsP}(b) and (c). They represent transformations from $Z_1$ and $Z_1Z_2$ at $t = 0$ to $Z_1$ and $Z_1Z_2$ at $t = T$, respectively. Their difference is that the stabiliser generator $Z_1Z_2$ is measured in the circuit. Because of the measurement, the superposition between two eigenstates of $Z_1Z_2$ is unpreserved. In comparison, the superposition between two eigenstates of the logical operator $Z_1$ is preserved. We call codewords in Figs. \ref{fig:codewordsP}(b) and (c) genuine and pseudo propagators, respectively. A pseudo propagator is a linear combination of checkers, detectors and emitters. For example, the codeword in Fig. \ref{fig:codewordsP}(c) is the sum of two codewords in Figs. \ref{fig:codewordsD}(a) and \ref{fig:codewordsE}(a). A genuine propagator cannot be decomposed into checkers, detectors and emitters. 

For the formal definition, the subspace spanned by checkers, detectors and emitters is $C_{c,d,e} = \mathrm{Span}(C_{c,d}\cup C_{c,e})$. Codewords in $C_{c,d,e} - C_{c,d}\cup C_{c,e}$ are pseudo propagators. Codewords in $C - C_{c,d,e}$ are genuine propagators. The difference between these two types of propagators are given by the following theorem and corollary. 

\begin{theorem}
Genuine propagators represent coherent correlations, and all other codewords represent incoherent correlations. Let $\mathbf{c}\in C$ be a codeword of the stabiliser circuit. Then, the following two statements hold: 
\begin{itemize}
\item[i)] If and only if $\mathbf{c}\in C_{c,d,e}$, its input operator commutes with input operators of all codewords, i.e. $[\sigma_{in}(\mathbf{c}), \sigma_{in}(\mathbf{c}')] = 0$ for all $\mathbf{c}'\in C$; 
\item[ii)] If and only if $\mathbf{c}\in C_{c,d,e}$, its output operator commutes with output operators of all codewords, i.e. $[\sigma_{out}(\mathbf{c}), \sigma_{out}(\mathbf{c}')] = 0$ for all $\mathbf{c}'\in C$. 
\end{itemize}
\label{the:propagator}
\end{theorem}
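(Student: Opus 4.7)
I would recast the Pauli-commutation condition as a symplectic condition on the Tanner-graph vector $\mathbf{P}_0\mathbf{c}$. Let $\omega_0$ be the standard symplectic form on $\mathbb{F}_2^{2n}$ governing Pauli commutation, so that $[\sigma(\mathbf{v}),\sigma(\mathbf{w})]=0$ if and only if $\omega_0(\mathbf{v},\mathbf{w})=0$, and let $\omega_T$ be the analogous form at the output layer. With $V_0 := \mathbf{P}_0 C$ and $V_T := \mathbf{P}_T C$, statement~(i) becomes the claim that $\mathbf{P}_0\mathbf{c}$ lies in the radical $\mathrm{rad}(V_0)$ of $\omega_0\vert_{V_0}$ if and only if $\mathbf{c}\in C_{c,d,e}$; statement~(ii) is the symmetric claim for $V_T$. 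The pivotal technical ingredient will be a layer-to-layer symplectic conservation law.

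\textbf{Conservation law.} The first step is to establish
\[\omega_0(\mathbf{P}_0\mathbf{c},\mathbf{P}_0\mathbf{c}') = \omega_T(\mathbf{P}_T\mathbf{c},\mathbf{P}_T\mathbf{c}') \quad\text{for all }\mathbf{c},\mathbf{c}'\in C,\]
by induction over the primitive operations of Sec.~\ref{sec:representation}. A Clifford-gate layer preserves the pairing because $\mathbf{M}_U$ is symplectic in the Heisenberg sense, which is precisely the conjugation content of Proposition~\ref{prop:gate}. For an initialization primitive on qubit~$q$ at time $t$, the check forces $x_q^{(t)}=0$, while under the standing convention (every measurement is followed by an initialization, and ``input-to-initialization'' bits with no preceding operation are isolated and removed) the bits $\hat{x}_q^{(t-1)},\hat{z}_q^{(t-1)}$ contribute nothing; the qubit-$q$ contributions to both $\omega_{t-1}$ and $\omega_t$ collapse to zero, so the pairing survives the layer. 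Measurement primitives are handled symmetrically.

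\textbf{Forward direction.} Granted the conservation law, if $\mathbf{c}\in C_{c,d,e}$ I decompose $\mathbf{c}=\mathbf{c}_d+\mathbf{c}_e$ with $\mathbf{P}_T\mathbf{c}_d=0$ and $\mathbf{P}_0\mathbf{c}_e=0$. Then for every $\mathbf{c}'\in C$,
\[\omega_0(\mathbf{P}_0\mathbf{c},\mathbf{P}_0\mathbf{c}') = \omega_0(\mathbf{P}_0\mathbf{c}_d,\mathbf{P}_0\mathbf{c}') = \omega_T(\mathbf{P}_T\mathbf{c}_d,\mathbf{P}_T\mathbf{c}') = 0,\]
which proves (i). Statement~(ii) follows from the same calculation after interchanging the roles of layer~$0$ and layer~$T$.

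\textbf{Reverse direction and main obstacle.} The converse reduces to the identity $\mathrm{rad}(V_0)=\mathbf{P}_0 C_{c,d}$. The inclusion ``$\supseteq$'' is exactly what the forward direction establishes. For the hard inclusion ``$\subseteq$'', given $\mathbf{v}\in\mathrm{rad}(V_0)$ I would lift it to $(\mathbf{v},\mathbf{u})\in W := \{(\mathbf{P}_0\mathbf{c},\mathbf{P}_T\mathbf{c}) : \mathbf{c}\in C\}$; the conservation law then forces $\mathbf{u}\in\mathrm{rad}(V_T)$, and the symmetric identity $\mathrm{rad}(V_T)=\mathbf{P}_T C_{c,e}$ would let me subtract an emitter so that the residue is a detector with layer-$0$ projection $\mathbf{v}$, placing $\mathbf{c}$ in $C_{c,d,e}$. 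The substantive step is proving the symmetric radical identity, or equivalently that $W$, viewed as an isotropic subspace of $(\mathbb{F}_2^{2n}\oplus\mathbb{F}_2^{2n},\,\omega_0\oplus\omega_T)$, is \emph{maximal} isotropic modulo the detector and emitter kernels $\mathbf{P}_0 C_{c,d}$ and $\mathbf{P}_T C_{c,e}$. I expect this to follow from a dimension count: the evenness of the symplectic rank of $V_0/\mathrm{rad}(V_0)$ together with the inclusion-exclusion $\dim C_{c,d,e}=\dim C_{c,d}+\dim C_{c,e}-\dim C_c$ should pin $\dim\mathrm{rad}(V_0)$ to $\dim \mathbf{P}_0 C_{c,d}$, forcing equality given the one-sided inclusion already in hand. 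Carrying out this bookkeeping cleanly — in particular ensuring that the removed-bit conventions for initialization and measurement eliminate exactly the right number of degrees of freedom so that genuine propagators contribute in conjugate $X$/$Z$-type pairs — is where I expect the real work to lie.
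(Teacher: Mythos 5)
Your forward (``if'') direction is sound: the layer-by-layer conservation law $\omega_0(\mathbf{P}_0\mathbf{c},\mathbf{P}_0\mathbf{c}')=\omega_T(\mathbf{P}_T\mathbf{c},\mathbf{P}_T\mathbf{c}')$ does hold (Clifford layers are symplectic, and the initialisation/measurement checks force the $x$-bit of the affected qubit to vanish in every codeword, so that qubit contributes nothing to the pairing on either side), and combined with the decomposition $\mathbf{c}=\mathbf{c}_d+\mathbf{c}_e$ it gives a cleaner route to this half than the paper's, which instead passes through an extended circuit in which all initialisations and measurements are pushed to the first and last layers.

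The reverse (``only if'') direction, however, has a genuine gap: the dimension count you propose is circular and cannot close the argument. By rank--nullity, $\dim(V_0/\mathbf{P}_0C_{c,d})=\dim V_0-\dim C_{c,d}+\dim C_c=\dim C-\dim C_{c,e}-\dim C_{c,d}+\dim C_c=\dim(C/C_{c,d,e})$, so the inclusion-exclusion identity you invoke merely restates that $\mathbf{c}\in C_{c,d,e}$ iff $\mathbf{P}_0\mathbf{c}\in\mathbf{P}_0C_{c,d}$; together with the one-sided inclusion $\mathbf{P}_0C_{c,d}\subseteq\mathrm{rad}(V_0)$ it yields only $\dim(C/C_{c,d,e})\geq\dim\bigl(V_0/\mathrm{rad}(V_0)\bigr)$, which is the inequality in the useless direction, and evenness of the right-hand side adds nothing. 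Indeed, all of your structural ingredients (conservation law, isotropy of $W$, the dimension identities) are satisfied by the abstract ``code'' $C=\{(\mathbf{v},\mathbf{v})\,:\,v_x=0\}$ on a single qubit, for which $\mathrm{rad}(V_0)=V_0\neq 0=\mathbf{P}_0C_{c,d}$ and the theorem's conclusion is false; so the radical identity cannot follow from this bookkeeping alone. What is missing is a \emph{completeness} statement for the codeword space: given $\mathbf{c}\notin C_{c,d,e}$ you must actually exhibit a codeword whose input operator anticommutes with $\sigma_{in}(\mathbf{c})$, and the candidate Pauli operator must simultaneously respect the initialisation constraints at $t=0$ \emph{and} evolve so as to respect the measurement constraints at every intermediate time. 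This is precisely the content of the paper's Lemma \ref{lem:commutative2} (every Pauli operator commuting with $Z_q$ on the initialised qubits, and whose image under the circuit's Clifford part commutes with $Z_q$ on the measured qubits, is realised by a codeword) together with the subsequent $\mathbf{u}\in\mathrm{rowsp}(\mathbf{V})$ argument, which shows the candidate can always be corrected by $Z$'s on initialised qubits so that the measurement constraints are met. Your proposal contains no analogue of either step, and the ``real work'' you defer is exactly this construction, not a parity/dimension argument.
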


\begin{corollary}
Let $\mathbf{c}\in C - C_{c,d,e}$ be a genuine propagator of the stabiliser circuit. There always exists a genuine propagator whose input and output operators anti-commute with the input and output operators of $\mathbf{c}$, respectively, i.e. $\exists \mathbf{c}'\in C - C_{c,d,e}$ such that $\{\sigma_{in}(\mathbf{c}), \sigma_{in}(\mathbf{c}')\} = \{\sigma_{out}(\mathbf{c}), \sigma_{out}(\mathbf{c}')\} = 0$. 
\label{coro:propagator}
\end{corollary}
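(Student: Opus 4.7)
The plan is to use Theorem~\ref{the:propagator} as a source of ``partner'' codewords that each witness anti-commutation at one of the two boundaries, and then to combine them into a single codeword witnessing anti-commutation at both. Since $\mathbf{c}\notin C_{c,d,e}$, part (i) of Theorem~\ref{the:propagator} supplies some $\mathbf{c}_1\in C$ with $\{\sigma_{in}(\mathbf{c}),\sigma_{in}(\mathbf{c}_1)\}=0$, and part (ii) supplies some $\mathbf{c}_2\in C$ with $\{\sigma_{out}(\mathbf{c}),\sigma_{out}(\mathbf{c}_2)\}=0$. A priori, $\mathbf{c}_1$ may commute with $\mathbf{c}$ at the output and $\mathbf{c}_2$ with $\mathbf{c}$ at the input, so the two witnesses must be stitched together.

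The key algebraic observation I would invoke is that whether two Pauli operators commute or anti-commute is determined, in the binary-vector representation, by a bilinear symplectic form over $\mathbb{F}_2$ that is insensitive to the sign factors in $\sigma(\cdot)$. Composed with the linear layer projections $\mathbf{P}_0$ and $\mathbf{P}_T$, this pairing defines bilinear forms $\omega_0,\omega_T : C\times C \to \mathbb{F}_2$ such that $\omega_0(\mathbf{a},\mathbf{b})=1$ iff $\sigma_{in}(\mathbf{a})$ and $\sigma_{in}(\mathbf{b})$ anti-commute, and analogously for $\omega_T$. The hypotheses read $\omega_0(\mathbf{c},\mathbf{c}_1)=1$ and $\omega_T(\mathbf{c},\mathbf{c}_2)=1$, and what I want is $\mathbf{c}'\in C$ with $\omega_0(\mathbf{c},\mathbf{c}')=\omega_T(\mathbf{c},\mathbf{c}')=1$.

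A short case split, using bilinearity together with the closure of $C$ under $\mathbb{F}_2$ addition, finishes the construction. If $\omega_T(\mathbf{c},\mathbf{c}_1)=1$ I would take $\mathbf{c}'=\mathbf{c}_1$; else if $\omega_0(\mathbf{c},\mathbf{c}_2)=1$ I would take $\mathbf{c}'=\mathbf{c}_2$; otherwise $\omega_T(\mathbf{c},\mathbf{c}_1)=\omega_0(\mathbf{c},\mathbf{c}_2)=0$ and I would set $\mathbf{c}'=\mathbf{c}_1+\mathbf{c}_2$, giving $\omega_0(\mathbf{c},\mathbf{c}')=1+0=1$ and $\omega_T(\mathbf{c},\mathbf{c}')=0+1=1$. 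In every case $\mathbf{c}'$ anti-commutes with $\mathbf{c}$ at both boundaries. Finally, to conclude $\mathbf{c}'\in C-C_{c,d,e}$, I would apply the contrapositive of Theorem~\ref{the:propagator}(i): since $\sigma_{in}(\mathbf{c}')$ does not commute with $\sigma_{in}(\mathbf{c})\in\sigma_{in}(C)$, $\mathbf{c}'$ cannot lie in $C_{c,d,e}$ and is therefore a genuine propagator. There is no serious obstacle; the whole argument reduces to $\mathbb{F}_2$-linear algebra on the codeword space, and the only care needed is in tracking which bilinear form applies at each boundary through the three-way case split.
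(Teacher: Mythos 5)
Your proof is correct, but it takes a genuinely different route from the paper. The paper does not stitch together two one-sided witnesses: its proof of the ``only if'' direction of Theorem~\ref{the:propagator} explicitly constructs, via the extended circuit and Lemma~\ref{lem:commutative2}, a single codeword $\mathbf{c}'$ whose Pauli operator at $t=1$ anti-commutes with that of $\mathbf{c}$ and is then pushed through $[\overline{U}]$, so the anti-commutation at the output comes for free from unitarity; the corollary's stated proof is then just the one-line observation that this $\mathbf{c}'$ is itself genuine by the ``if'' part. Your argument instead treats Theorem~\ref{the:propagator} as a black box: you extract $\mathbf{c}_1$ and $\mathbf{c}_2$ from the contrapositives of the two ``only if'' clauses, note that the symplectic pairing composed with $\mathbf{P}_0$ and $\mathbf{P}_T$ gives $\mathbb{F}_2$-bilinear forms $\omega_0,\omega_T$ on $C$ (phases in $\sigma$ are irrelevant to commutation, so this is sound), and run the three-way case split on $\omega_T(\mathbf{c},\mathbf{c}_1)$ and $\omega_0(\mathbf{c},\mathbf{c}_2)$, with $\mathbf{c}_1+\mathbf{c}_2\in C$ handling the residual case; the concluding step that $\mathbf{c}'\notin C_{c,d,e}$ coincides with the paper's. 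What your approach buys is modularity and robustness: it needs only the statement of Theorem~\ref{the:propagator} plus linearity of $C$ and of the layer projections, not the internal extended-circuit machinery. What the paper's approach buys is that the two-sided witness is already in hand with no extra combination step, at the cost of the corollary's proof being legible only to a reader who has tracked the construction inside the theorem's proof.
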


See Appendix \ref{app:propagator} for the proofs. For genuine propagators, the two anti-commutative operators $\sigma_{in}(\mathbf{c})$ and $\sigma_{in}(\mathbf{c}')$ defines a logical qubit. The superposition in such a logical qubit is preserved in the circuit and transferred to the logical qubit defined by $\sigma_{out}(\mathbf{c})$ and $\sigma_{out}(\mathbf{c}')$. 

\section{Error correction, logical operations and code distance}
\label{sec:distance}

In this section, we discuss the error correction in the LDPC representation. Usually, we start from a quantum error correction code and then compose a circuit to realise the code. In this section, we show that we can think in an alternative way: We can start from a circuit and ask what quantum error correction codes and logical operations can be realised by the circuit. This section ends with a definition of the code distance for a stabiliser circuit, which quantifies its fault tolerance. 

\subsection{Errors in spacetime}
\label{sec:errors}

Before discussing the quantum error correction, we consider the impact of errors on codeword equations. We can represent Pauli errors that occurred in the circuit with a vector $\mathbf{e}\in \mathbb{F}_2^{\vert V_B\vert}$. For bits $v=\hat{x}_q^{(t)}$ ($v=\hat{z}_q^{(t)}$), $\mathbf{e}_v = 1$ means that there is a $Z$ ($X$) error on qubit-$q$ following operations in layer-$t$, i.e. the error flips the sign of the Pauli operator corresponding to $v$. To avoid ambiguity, we call a $Z$ or $X$ error that occurred on a certain qubit at a certain time a single-bit error, and we call $\mathbf{e}$ a spacetime error. A spacetime error may include many single-bit errors. 

For a codeword $\mathbf{c}$ of the circuit, $\sigma(\mathbf{P}_t\mathbf{c}^\mathrm{T})$ is the Pauli operator at the time $t$. Similarly, $\mathbf{P}_t\mathbf{e}^\mathrm{T}$ represents Pauli errors that occurred at the time $t$. When $(\mathbf{P}_t\mathbf{c}^\mathrm{T})_v = 1$ and $(\mathbf{P}_t\mathbf{e}^\mathrm{T})_v = 1$ hold at the same time for a certain $v$, the sign of $\sigma(\mathbf{P}_t\mathbf{c}^\mathrm{T})$ is flipped for once. Therefore, the sign is flipped by errors at the time $t$ if and only if $(\mathbf{P}_t\mathbf{c}^\mathrm{T})^\mathrm{T}(\mathbf{P}_t\mathbf{e}^\mathrm{T}) = 1$, and the sign is flipped by all the errors in spacetime if and only if $\mathbf{c}\mathbf{e}^\mathrm{T} = 1$. 

To express the impact of errors, we introduce a generalised codeword equation that holds when the circuit has errors. With the spacetime error $\mathbf{e}$, the output state becomes $\rho_T(\mathbf{e},\boldsymbol{\mu}) = \mathcal{M}_{\mathbf{e},\boldsymbol{\mu}}\rho_0$, where $\mathcal{M}_{\mathbf{e},\boldsymbol{\mu}}$ are completely positive maps, and $\sum_{\boldsymbol{\mu}} \mathcal{M}_{\mathbf{e},\boldsymbol{\mu}}$ is completely positive and trace-preserving. 

\begin{theorem}
Let $\mathbf{c}\in C$ be a codeword of the stabiliser circuit, and let $\mathbf{e}\in \mathbb{F}_2^{\vert V_B\vert}$ be a spacetime error. The error flips the sign of the output Pauli operator if and only if $\mathbf{c}\mathbf{e}^\mathrm{T} = 1$. For all initial states $\rho_0$, the following equation holds: 
\begin{eqnarray}
&&\mathrm{Tr}\sigma_{in}(\mathbf{c})\rho_0 \notag \\
&=& (-1)^{\mathbf{c}\mathbf{e}^\mathrm{T}}\nu(\mathbf{c})\sum_{\boldsymbol{\mu}} \mu_R(\mathbf{c},\boldsymbol{\mu})\mathrm{Tr}\sigma_{out}(\mathbf{c})\rho_T(\mathbf{e},\boldsymbol{\mu}).
\label{eq:generalised}
\end{eqnarray}
\label{the:generalised}
\end{theorem}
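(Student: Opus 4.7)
The plan is to build on Theorem~\ref{the:errorfree} by viewing the noisy circuit as the ideal circuit with Pauli error operators $E_0,E_1,\ldots,E_T$ inserted at the corresponding layers, where $E_t$ is the Pauli operator encoded by the layer-$t$ restriction of $\mathbf{e}$, using the convention stated in Sec.~\ref{sec:errors} that $Z$ errors sit on $\hat{x}$-bits and $X$ errors on $\hat{z}$-bits.

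I would work in the Heisenberg picture and track how $\sigma_{in}(\mathbf{c})$ evolves through the noisy circuit. In the ideal circuit the operator traces out the trajectory $\sigma(\mathbf{P}_t\mathbf{c}^\T)$ at each layer, finishing at $\sigma_{out}(\mathbf{c})$ up to the signs $\nu(\mathbf{c})$ and $\mu_R(\mathbf{c},\boldsymbol{\mu})$ identified in Theorem~\ref{the:errorfree}. In the noisy circuit each layer is accompanied by an error insertion $E_t$, and commuting $E_t$ past the instantaneous Pauli $\sigma(\mathbf{P}_t\mathbf{c}^\T)$ produces an additional sign $(-1)^{s_t}$. By the symplectic criterion, two Paulis $\sigma(\mathbf{x},\mathbf{z})$ and $\sigma(\mathbf{x}',\mathbf{z}')$ anticommute iff $\mathbf{x}\cdot\mathbf{z}'+\mathbf{z}\cdot\mathbf{x}'=1\pmod 2$; because the error convention swaps the roles of $\hat{x}$- and $\hat{z}$-bits relative to the codeword, this symplectic form collapses to the componentwise product, giving $s_t=(\mathbf{P}_t\mathbf{c}^\T)^\T(\mathbf{P}_t\mathbf{e}^\T)\pmod 2$.

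Summing over $t$ yields the total sign $(-1)^{\mathbf{c}\mathbf{e}^\T}$, since the layer projections partition the bit set $V_B$. This immediately establishes the first assertion of the theorem: the output Pauli's sign flips iff $\mathbf{c}\mathbf{e}^\T=1$. Since the factor $(-1)^{\mathbf{c}\mathbf{e}^\T}$ is independent of both $\rho_0$ and the observed outcome string $\boldsymbol{\mu}$, it pulls outside the summation in Eq.~(\ref{eq:errorfree}), and replacing $\rho_T(\boldsymbol{\mu})$ with $\rho_T(\mathbf{e},\boldsymbol{\mu})$ directly produces Eq.~(\ref{eq:generalised}).

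The main obstacle I anticipate is careful bookkeeping at the measurement layers. An $X$ error immediately preceding a $Z$ measurement flips the observed outcome $\mu_v$, so $\mu_R(\mathbf{c},\boldsymbol{\mu})$ depends on $\mathbf{e}$ implicitly through the statistics of $\boldsymbol{\mu}$. One must verify that this outcome-flip effect is captured \emph{exactly} by the symplectic contribution from the corresponding measurement bit — neither double-counted nor missed — and that removed bits genuinely contribute nothing, since they are forced to zero in every codeword. Once this consistency is checked primitive-by-primitive using the check matrices in Figs.~\ref{fig:gates} and~\ref{fig:IandM}, the general identity follows by concatenating layers.
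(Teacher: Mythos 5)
Your core computation is the same as the paper's: insert the layer-$t$ Pauli error $\sigma(\mathbf{X}\mathbf{P}_t\mathbf{e}^\mathrm{T})$ into the circuit, observe that it anticommutes with the instantaneous operator $\sigma(\mathbf{P}_t\mathbf{c}^\mathrm{T})$ exactly when $(\mathbf{P}_t\mathbf{c}^\mathrm{T})^\mathrm{T}(\mathbf{P}_t\mathbf{e}^\mathrm{T})=1$ (the symplectic form collapsing to the componentwise product because of the $x\leftrightarrow z$ convention for errors), and sum over layers to get $(-1)^{\mathbf{c}\mathbf{e}^\mathrm{T}}$. Where you differ is precisely at the point you flag as the main obstacle. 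The paper does not verify the measurement layers primitive-by-primitive on the original circuit; instead it reuses the extended circuit of Appendix \ref{app:errorfree}, in which all initialisations are pushed to the first layer and all measurements to the last, so the interior is a single Clifford unitary $\overline{U}$. All error insertions are then conjugated forward to a single equivalent error $\tilde{\tau}_{tot}(\mathbf{e})$ acting at time $T-1$, and the outcome-flip effect of an $X$ error before a $Z$ measurement is automatically encoded as an anticommutation sign against the $Z$ factors of $\sigma(\overline{\mathbf{P}}_{T-1}\overline{\mathbf{c}}^\mathrm{T})$ on the measured ancillas — there is no separate case analysis. This also repairs the one overly glib step in your write-up: you cannot literally ``pull the factor outside the summation in Eq.~(\ref{eq:errorfree})'' and substitute $\rho_T(\mathbf{e},\boldsymbol{\mu})$ for $\rho_T(\boldsymbol{\mu})$, because the distribution of $\boldsymbol{\mu}$ itself changes with $\mathbf{e}$; the identity that actually gets used is $\sum_{\boldsymbol{\mu}}\mu_R(\mathbf{c},\boldsymbol{\mu})\mathrm{Tr}\,\sigma_{out}(\mathbf{c})\rho_T(\mathbf{e},\boldsymbol{\mu}) = \mathrm{Tr}\,\sigma(\overline{\mathbf{P}}_{T-1}\overline{\mathbf{c}}^\mathrm{T})[\overline{U}'(\mathbf{e})]\overline{\rho}_0$, rederived for the erroneous circuit, after which the sign factors out by conjugation. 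Your primitive-by-primitive check at the measurement bits would close the same gap by hand — the contribution of a measurement bit $v\in V_M$ with $\mathbf{c}_v=\mathbf{e}_v=1$ to $\mathbf{c}\mathbf{e}^\mathrm{T}$ does match the flip of $\mu_v$ in $\mu_R$ exactly once — so your route is workable, just more laborious at the crux; the extended-circuit reduction is what buys the paper a uniform treatment.
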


We leave the proof to Appendix \ref{app:generalised}. 

\subsection{Error correction of a stabiliser circuit}
\label{sec:error_correction}

With the generalised codeword equation, we can find that a checker $\mathbf{c}\in C_c$ can detect errors. When the circuit is error-free, measurement outcomes always satisfy $\nu(\mathbf{c})\mu_R(\mathbf{c},\boldsymbol{\mu}) = 1$. With the spacetime error $\mathbf{e}$, the codeword equation of a checker becomes $\nu(\mathbf{c})\mu_R(\mathbf{c},\boldsymbol{\mu}) = (-1)^{\mathbf{c}\mathbf{e}^\mathrm{T}}$. Therefore, the checker can detect errors $\mathbf{e}$ that satisfy $\mathbf{c}\mathbf{e}^\mathrm{T} = 1$, and errors are detected when one observes $\nu(\mathbf{c})\mu_R(\mathbf{c},\boldsymbol{\mu}) = -1$. 

Detectors, emitters and propagators can also detect errors under certain conditions. In general, the initial and final states may correspond to different quantum error correction codes. Let $S_{in}$ and $S_{out}$ be the corresponding stabiliser groups, respectively. Then, the initial state is prepared in a stabiliser state of $S_{in}$, and generators of $S_{out}$ are measured on the final state by operations subsequent to the circuit. A detector $\mathbf{c}\in C_{c,d} - C_c$ is a measurement of $\sigma_{in}(\mathbf{c})$. It can detect errors if $\sigma_{in}(\mathbf{c})\in S_{in}\times\{\pm 1\}$ because one can compare the measurement outcome with the initial stabiliser state. An emitter $\mathbf{c}\in C_{c,e} - C_c$ prepares the eigenstate of $\sigma_{out}(\mathbf{c})$. It can detect errors if $\sigma_{out}(\mathbf{c})\in S_{out}\times\{\pm 1\}$ because one can compare the eigenvalue with the subsequent measurement outcome. Similarly, a propagator $\mathbf{c}\in C - C_{c,d}\cup C_{c,e}$ can detect errors if $\sigma_{in}(\mathbf{c})\in S_{in}\times\{\pm 1\}$ and $\sigma_{out}(\mathbf{c})\in S_{out}\times\{\pm 1\}$. 

In summary, codewords $\mathbf{c}$ can detect errors if and only if $\sigma_{in}(\mathbf{c})\in S_{in}\times\{\pm 1\}$ and $\sigma_{out}(\mathbf{c})\in S_{out}\times\{\pm 1\}$ (notice that $\openone\in S_{in},S_{out}$). Because $S_{in}\times\{\pm 1\}$ and $S_{out}\times\{\pm 1\}$ are groups, these error-detecting codewords form a subspace. Let $\mathbf{c}_1,\mathbf{c}_2,\ldots\in C$ be a basis of this subspace. Then, we have the check matrix of the error correction realised by the circuit, 
\begin{eqnarray}
\mathbf{B} = \left(\begin{matrix}
\mathbf{c}_1 \\
\mathbf{c}_2 \\
\vdots
\end{matrix}\right).
\end{eqnarray}
We call it the {\it error-correction check matrix} to distinguish it from the check matrix $\mathbf{A}$ representing the circuit. 

A valid error-correction check matrix $\mathbf{B}$ must satisfy the following two conditions: i) $\mathbf{A}\mathbf{B}^\mathrm{T} = 0$; and ii) for all $\mathbf{c},\mathbf{c}'\in \mathrm{rowsp}(\mathbf{B})$, $[\sigma_{in}(\mathbf{c}),\sigma_{in}(\mathbf{c}')] = [\sigma_{out}(\mathbf{c}),\sigma_{out}(\mathbf{c}')] = 0$. Here, $\mathrm{rowsp}(\bullet)$ denotes the row space. 

We have constructed the error-correction check matrix assuming two stabiliser groups $S_{in}$ and $S_{out}$. We can also first construct a valid error-correction check matrix and then work out the two groups. Let $\{\mathbf{c}_i\}$ be a basis of $\mathrm{rowsp}(\mathbf{B})$. Then, $\langle \{\sigma_{in}(\mathbf{c}_i)\}\rangle$ is a subgroup of $S_{in}\times\{\pm 1\}$. To remove the ambiguity due to the sign, let $\{\mathbf{b}_j\}$ be a basis of $\mathrm{Span}\left(\{(\mathbf{P}_0\mathbf{c}_i^\mathrm{T})^\mathrm{T}\}\right)$. Then, $S_{in} = \langle \{\sigma(\mathbf{b}_j)\}\rangle$. We can work out $S_{out}$ in a similar way, in which $\mathbf{P}_0$ is replaced by $\mathbf{P}_T$. 

\subsection{Logical operations in a stabiliser circuit}
\label{sec:logical_operation}

In quantum error correction, a stabiliser circuit realises certain operations on logical qubits. Let $L_{in}$ and $L_{out}$ be logical Pauli groups of the initial and final states, respectively. Then, $(S_{in}, L_{in})$ and $(S_{out}, L_{out})$ define the initial and final quantum error correction codes, respectively. A detector $\mathbf{c}\in C_{c,d} - C_c$ represents the measurement on a logical Pauli operator if $\sigma_{in}(\mathbf{c})\in S_{in}L_{in}$. An emitter $\mathbf{c}\in C_{c,e} - C_c$ represents preparing logical qubits in the eigenstate of a logical Pauli operator if $\sigma_{out}(\mathbf{c})\in S_{out}L_{out}$. A propagator $\mathbf{c}\in C - C_{c,d}\cup C_{c,e}$ represents the transformation of logical Pauli operators if $\sigma_{in}(\mathbf{c})\in S_{in}L_{in}$ and $\sigma_{out}(\mathbf{c})\in S_{out}L_{out}$. The transformation is coherent if $\mathbf{c}\in C - C_{c,d,e}$ is a genuine propagator. 

Similar to the error-correction check matrix $\mathbf{B}$, all codewords satisfying $\sigma_{in}(\mathbf{c})\in S_{in}L_{in}$ and $\sigma_{out}(\mathbf{c})\in S_{out}L_{out}$ form a subspace. We can find that all error-detecting codewords are in this subspace. Let $\mathbf{c}_1,\mathbf{c}_2,\ldots,\mathbf{c}'_1,\mathbf{c}'_2,\ldots\in C$ be a basis of the subspace, where $\mathbf{c}_1,\mathbf{c}_2,\ldots$ is the basis of $\mathrm{rowsp}(\mathbf{B})$. Then, we have the {\it logical generator matrix} 
\begin{eqnarray}
\mathbf{L} = \left(\begin{matrix}
\mathbf{c}'_1 \\
\mathbf{c}'_2 \\
\vdots
\end{matrix}\right).
\end{eqnarray}

\begin{definition}
Let $\mathbf{A}$ be the check matrix representing a stabiliser circuit. The error correction check matrix $\mathbf{B}$ and logical generator matrix $\mathbf{L}$ are said to be compatible with $\mathbf{A}$ if and only if the following two conditions are satisfied: i) $\mathbf{A}\mathbf{B}^\mathrm{T} = \mathbf{A}\mathbf{L}^\mathrm{T} = 0$; and ii) $\mathrm{rowsp}(\mathbf{B})$ and $\mathrm{rowsp}(\mathbf{L})$ are linearly independent. 
\end{definition}

Similar to the error-correction check matrix $\mathbf{B}$, we can work out logical Pauli operators from $\mathbf{L}$: $\sigma_{in}(\mathbf{c}'_i)$ and $\sigma_{out}(\mathbf{c}'_i)$ are logical Pauli operators in $L_{in}$ and $L_{out}$, respectively. 

\subsection{Categories of errors}

A spacetime error $\mathbf{e}$ can be detected by certain error-detecting codewords if and only if $\mathbf{B}\mathbf{e}\neq 0$, i.e. the subspace of undetectable errors is $\mathrm{ker}\mathbf{B}$. 

With the spacetime error $\mathbf{e}$, the logical information may be incorrect in the final state, and logical measurement outcomes may also be incorrect. Let $\mathbf{c}\in \mathrm{rowsp}(\mathbf{L})$ be a codeword representing a logical operation. When $\mathbf{c}\mathbf{e}^\mathrm{T} = 1$, the codeword equation of $\mathbf{c}$ has an extra sign. In this situation, there is an error in the logical measurement outcome $\nu(\mathbf{c})\mu_R(\mathbf{c},\boldsymbol{\mu})$ if $\mathbf{c}$ is a detector, or on the output logical Pauli operator $\sigma_{out}(\mathbf{c})$ if $\mathbf{c}$ is an emitter or propagator. For an undetectable error $\mathbf{e}\in \mathrm{ker}\mathbf{B}$, it does not cause any logical error if and only if $\mathbf{L}\mathbf{e}^\mathrm{T} = 0$. Therefore, the set of trivial errors is $\mathrm{ker}\mathbf{B}\cap \mathrm{ker}\mathbf{L}$. The set of (undetectable) logical errors is $\mathrm{ker}\mathbf{B} - \mathrm{ker}\mathbf{B}\cap \mathrm{ker}\mathbf{L}$. 

Some errors always have exactly the same effect. We call them equivalent errors. 

\begin{definition}
Two spacetime errors $\mathbf{e}_1$ and $\mathbf{e}_2$ are said to be equivalent if and only if $\mathbf{c}\mathbf{e}_1^\mathrm{T} = \mathbf{c}\mathbf{e}_2^\mathrm{T}$ for all codewords $\mathbf{c}\in C$ of the stabiliser circuit. 
\end{definition}

\begin{proposition}
Two spacetime errors $\mathbf{e}_1$ and $\mathbf{e}_2$ are equivalent if and only if $\mathbf{e}_1+\mathbf{e}_2\in \mathrm{rowsp}(\mathbf{A})$. 
\label{prop:equivalence}
\end{proposition}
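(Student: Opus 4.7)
The plan is to reduce the proposition to the standard duality $(\ker\mathbf{A})^{\perp}=\mathrm{rowsp}(\mathbf{A})$ for the ordinary bilinear pairing $\langle\mathbf{u},\mathbf{v}\rangle=\mathbf{u}\mathbf{v}^{\mathrm{T}}$ on $\mathbb{F}_{2}^{\vert V_{B}\vert}$. Setting $\mathbf{d}=\mathbf{e}_{1}+\mathbf{e}_{2}$ and using bilinearity (over $\mathbb{F}_{2}$ subtraction equals addition), the equivalence condition $\mathbf{c}\mathbf{e}_{1}^{\mathrm{T}}=\mathbf{c}\mathbf{e}_{2}^{\mathrm{T}}$ for all $\mathbf{c}\in C$ is literally $\mathbf{c}\mathbf{d}^{\mathrm{T}}=0$ for all $\mathbf{c}\in\ker\mathbf{A}$, i.e.\ $\mathbf{d}\in(\ker\mathbf{A})^{\perp}$. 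Thus the proposition is equivalent to the claim $(\ker\mathbf{A})^{\perp}=\mathrm{rowsp}(\mathbf{A})$, which I verify below.

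First I would handle the easy inclusion $\mathrm{rowsp}(\mathbf{A})\subseteq(\ker\mathbf{A})^{\perp}$. For any row $\mathbf{a}_{i}$ of $\mathbf{A}$ and any $\mathbf{c}\in\ker\mathbf{A}$, the pairing $\mathbf{a}_{i}\mathbf{c}^{\mathrm{T}}=(\mathbf{A}\mathbf{c}^{\mathrm{T}})_{i}=0$, and linearity extends this to arbitrary row combinations. For the reverse inclusion I would use a dimension count. Writing $n=\vert V_{B}\vert$, rank-nullity gives $\dim\ker\mathbf{A}+\mathrm{rank}(\mathbf{A})=n$. The standard bilinear form on $\mathbb{F}_{2}^{n}$ is nondegenerate, so for every subspace $V\subseteq\mathbb{F}_{2}^{n}$ one has $\dim V+\dim V^{\perp}=n$. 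Applied to $V=\ker\mathbf{A}$ this yields
\begin{equation}
\dim(\ker\mathbf{A})^{\perp}=\mathrm{rank}(\mathbf{A})=\dim\mathrm{rowsp}(\mathbf{A}).
\end{equation}
Combined with the containment already shown, equality of dimensions forces $(\ker\mathbf{A})^{\perp}=\mathrm{rowsp}(\mathbf{A})$, and the proposition follows.

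The only potentially subtle point — and what I would flag as the main obstacle if one is to write the argument carefully — is that over $\mathbb{F}_{2}$ the dot product has isotropic vectors (for instance $(1,1)$ is self-orthogonal), so one must avoid the false intuition that $V\cap V^{\perp}=\{0\}$ or that $V^{\perp}$ is a complement of $V$. What is needed is only the dimension identity $\dim V+\dim V^{\perp}=n$, which holds because the form itself is nondegenerate even though individual subspaces may fail to be. Beyond this remark the proof is a short exercise in linear algebra and carries essentially no quantum-mechanical content, since the circuit structure enters only through the fixed matrix $\mathbf{A}$.
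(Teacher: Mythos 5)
Your proof is correct. The paper states Proposition \ref{prop:equivalence} without proof, treating it as the standard duality $(\mathrm{ker}\mathbf{A})^{\perp}=\mathrm{rowsp}(\mathbf{A})$ over $\mathbb{F}_2$, and your argument --- the easy inclusion plus the dimension count $\dim V+\dim V^{\perp}=\vert V_B\vert$ from nondegeneracy of the bilinear form --- is exactly the justification one would supply; your remark that one must not assume $V\cap V^{\perp}=\{0\}$ over $\mathbb{F}_2$ is the right subtlety to flag.
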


\subsection{Code distance of a stabiliser circuit}
\label{sec:distance}

In a stabiliser circuit with the spacetime error $\mathbf{e}$, the total number of $X$ and $Z$ errors is $\vert\mathbf{e}\vert$. Then, it is straightforward to define the code distance of a stabiliser circuit. 

\begin{definition}
Let $\mathbf{A}$ be the check matrix representing the stabiliser circuit. Suppose the error correction check matrix $\mathbf{B}$ and logical generator matrix $\mathbf{L}$ are compatible with $\mathbf{A}$. The code distance of a stabiliser circuit, termed circuit code distance, is 
\begin{eqnarray}
d(\mathbf{A},\mathbf{B},\mathbf{L}) = \min_{\mathbf{e}\in\mathrm{ker}\mathbf{B} - \mathrm{ker}\mathbf{B}\cap \mathrm{ker}\mathbf{L}} \vert\mathbf{e}\vert.
\end{eqnarray}
\label{def:distance}
\end{definition}

An interesting situation is that rows of $\mathbf{B}$ and $\mathbf{L}$ are complete, i.e. span the code of $\mathbf{A}$. In this situation, 
\begin{eqnarray}
d(\mathbf{A},\mathbf{B},\mathbf{L}) = \min_{\mathbf{e}\in\mathrm{ker}\mathbf{B} - \mathrm{rowsp}(\mathbf{A})} \vert\mathbf{e}\vert,
\end{eqnarray}
where $\mathrm{rowsp}(\mathbf{A})$ is the dual code of $C$. This is similar to the code distance of CSS codes: Think of that $\mathbf{A}$ and $\mathbf{B}$ are the $X$-operator and $Z$-operator check matrices of a CSS code, respectively; then, the circuit code distance mirrors the CSS code distance of $X$ errors. 

In the conventional definition of the quantum code distance, the distance is the minimum number of single-qubit errors resulting in a logical error. In comparison, $d(\mathbf{A},\mathbf{B},\mathbf{L})$ is the minimum number of single-bit errors resulting in a logical error. The minimum number of single-qubit errors may be smaller than $d(\mathbf{A},\mathbf{B},\mathbf{L})$, because two errors $X$ and $Z$ may occur on the same qubit. Instead, $d(\mathbf{A},\mathbf{B},\mathbf{L})/2$ is a lower bound of the minimum number of single-qubit errors in the circuit constituting a logical error. 

\section{Bit-check symmetry}
\label{sec:symmetry}

This section introduces the bit-check symmetry. We start with primitive operations and show that their Tanner graphs have the symmetry. Then, we proceed to general stabiliser circuits. For a general circuit, its plain Tanner graph may not have the symmetry; however, it is always equivalent to a Tanner graph with the symmetry. The plain Tanner graph and corresponding symmetric Tanner graph are related by a simple graph operation, which is called bit splitting in this paper. The two graphs are said to be equivalent because the two codes are isomorphic and have the same circuit code distance. Therefore, the symmetric Tanner graph can represent the stabiliser circuit equally well as the plain Tanner graph. 

\subsection{Bit-check symmetry of primitive operations}

\begin{figure}[tbp]
\centering
\includegraphics[width=\linewidth]{\figpath/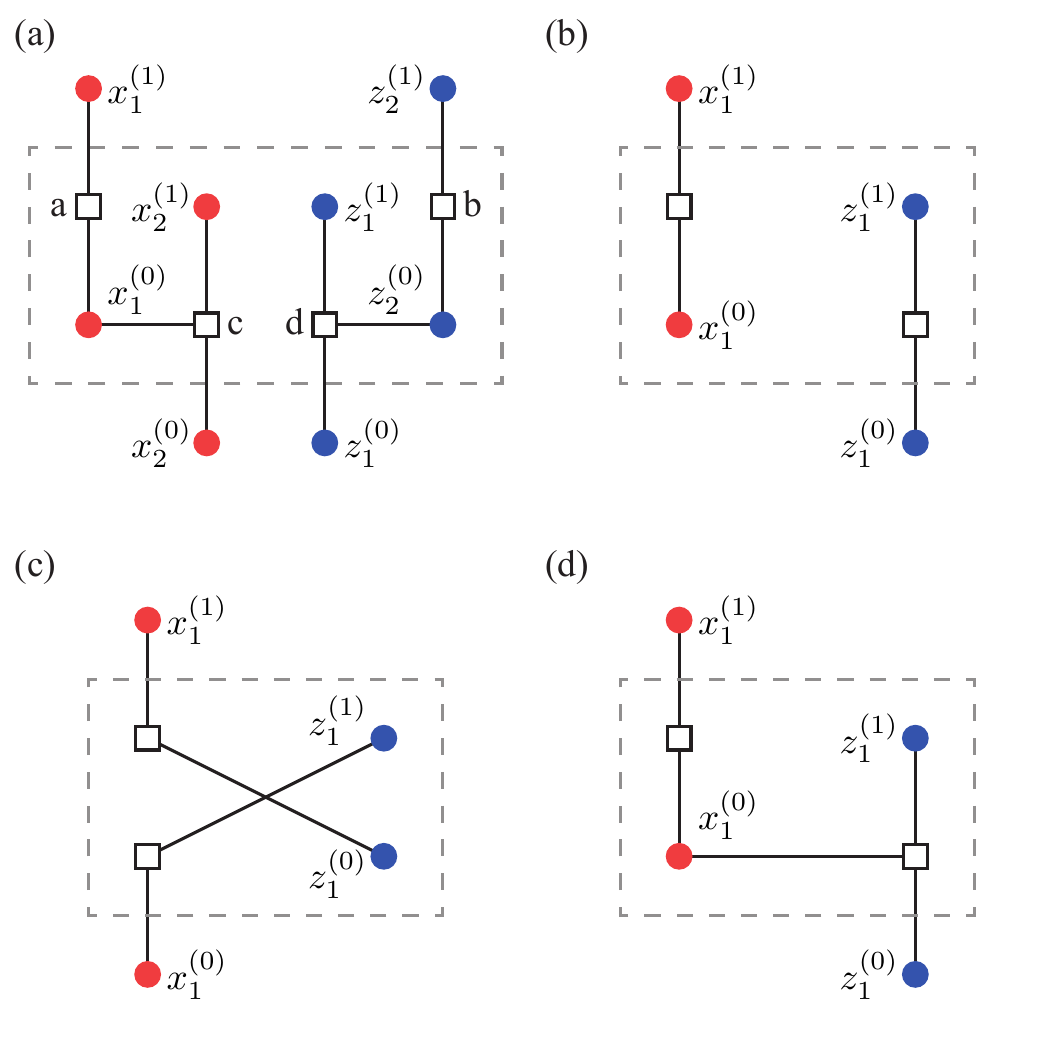}
\caption{
Redrawn Tanner graphs of the (a) controlled-NOT gate, (b) identity gate, (c) Hadamard gate and (d) phase gate. 
}
\label{fig:gates_symmetry}
\end{figure}

\begin{figure}[tbp]
\centering
\includegraphics[width=\linewidth]{\figpath/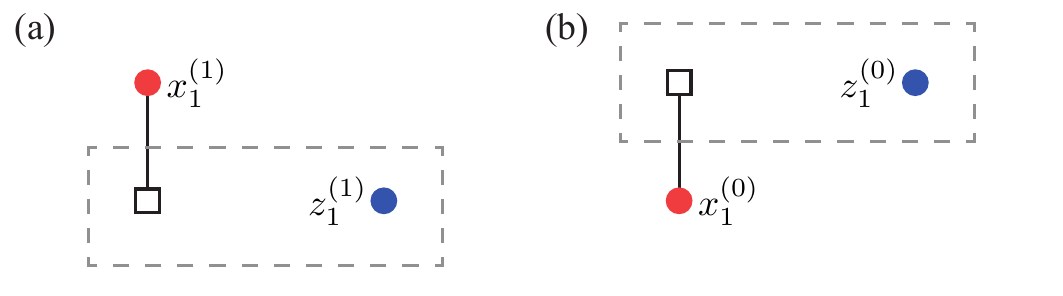}
\caption{
Redrawn Tanner graphs of the (a) initialisation and (b) measurement. 
}
\label{fig:IandM_symmetry}
\end{figure}

To illustrate the symmetry, we redraw Tanner graphs of primitive operations as shown in Figs. \ref{fig:gates_symmetry} and \ref{fig:IandM_symmetry}. Graphs are the same, and only the locations of vertices are moved. We can find that subgraphs in grey dashed boxes are symmetric. For each bit, there is a check on the same level corresponding to the same qubit; we call them the {\it dual vertices} of each other. For example, in the controlled-NOT gate, $\{\hat{z}_1^{(1)},a\}$, $\{\hat{x}_2^{(1)},b\}$, $\{\hat{z}_2^{(0)},c\}$ and $\{\hat{x}_1^{(0)},d\}$ are dual vertices of each other. If we exchange all pairs of dual vertices, subgraphs stay the same. 

We can find that some bits are not on the symmetric subgraph. We call them {\it long terminals}. For example, in the controlled-NOT gate, $\hat{x}_1^{(1)}$, $\hat{z}_2^{(1)}$, $\hat{x}_2^{(0)}$ and $\hat{z}_1^{(0)}$ are long terminals. Other bits are called {\it short terminals}. When $\hat{x}_q^{(t)}$ ($\hat{z}_q^{(t)}$) is a long terminal, $\hat{z}_q^{(t)}$ ($\hat{x}_q^{(t)}$) is always a short terminal, and vice versa. All long terminals have a degree of one, and they are all connected with different checks. 

After deleting long terminals from the Tanner graph, we obtain the symmetric subgraph. We can describe deleting long terminals with a matrix $\mathbf{D}\in \mathbb{F}_2^{\vert V_B\vert\times\vert V_C\vert}$. A deleting matrix $\mathbf{D}$ is a full rank matrix, and only one entry takes the value of one in each column, i.e. $\vert \mathbf{D}_{\bullet,a}\vert = 1$ for all $a\in V_C$ (Therefore, a deleting matrix has the property $\mathbf{D}^\mathrm{T}\mathbf{D} = \openone_{\vert V_C\vert}$). Each row of the deleting matrix corresponds to a bit. Rows corresponding to long terminals are zeros, i.e. a bit $v\in V_B$ is a long terminal if and only if $\mathbf{D}_{v,\bullet} = 0$. We use $D$ to denote the set of deleting matrices. 

Let $\mathbf{A}$ be the check matrix of a primitive operation. Then, $\mathbf{A}\mathbf{D}$ is the check matrix of the symmetric subgraph. In a Tanner graph, each bit (check) vertex corresponds to a column (row) of the check matrix. If we arrange the rows properly (which can be realised by choosing a proper $\mathbf{D}$), $\mathbf{A}\mathbf{D}$ is a symmetric matrix due to the symmetry of the subgraph. 

\begin{definition}
Let $\mathbf{A}$ be the check matrix of a Tanner graph. It has bit-check symmetry if and only if there exists a deleting matrix $\mathbf{D}\in D$ such that: 
\begin{itemize}
\item[i)] Upon the deletion of long terminals, the resulting matrix is symmetric, i.e. $(\mathbf{A}\mathbf{D})^\mathrm{T} = \mathbf{A}\mathbf{D}$; 
\item[ii)] All long terminals have the vertex degree of one, i.e. $\vert \mathbf{A}_{\bullet,v}\vert = 1$ for all bits $v$ in $V_L = \{v\in V_B\vert \mathbf{D}_{v,\bullet} = 0\}$; 
\item[iii)] All long terminals are connected with different checks, i.e. for all bits $u,v\in V_L$, $\mathbf{A}_{\bullet,u}\neq \mathbf{A}_{\bullet,v}$ if $u\neq v$. 
\end{itemize}
\label{def:symmetry}
\end{definition}

\subsection{Bit-check symmetry of stabiliser circuits}
\label{sec:CircuitSymmetry}

\begin{figure}[tbp]
\centering
\includegraphics[width=\linewidth]{\figpath/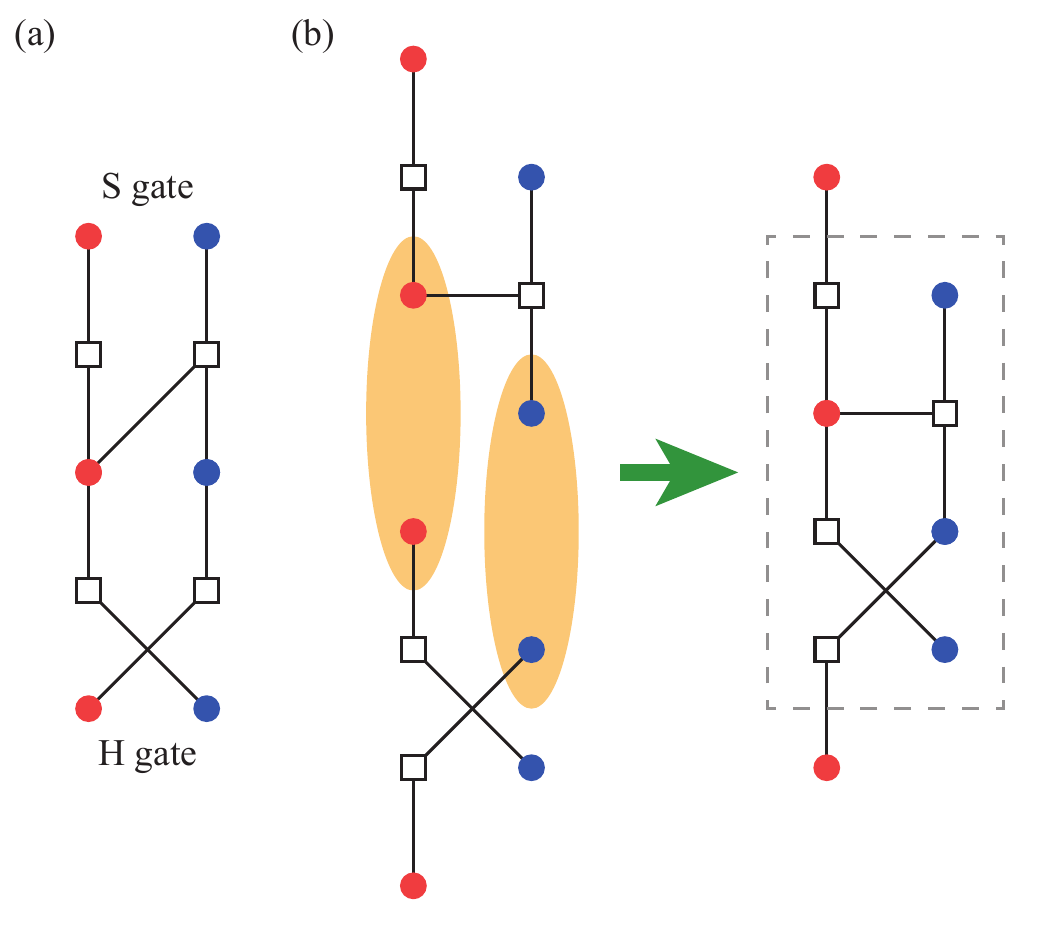}
\caption{
(a) Tanner graph of the circuit $SH$. (b) Redrawn Tanner graph of the circuit $SH$. The two graphs are the same. 
}
\label{fig:circuit_symmetry1}
\end{figure}

\begin{figure}[tbp]
\centering
\includegraphics[width=\linewidth]{\figpath/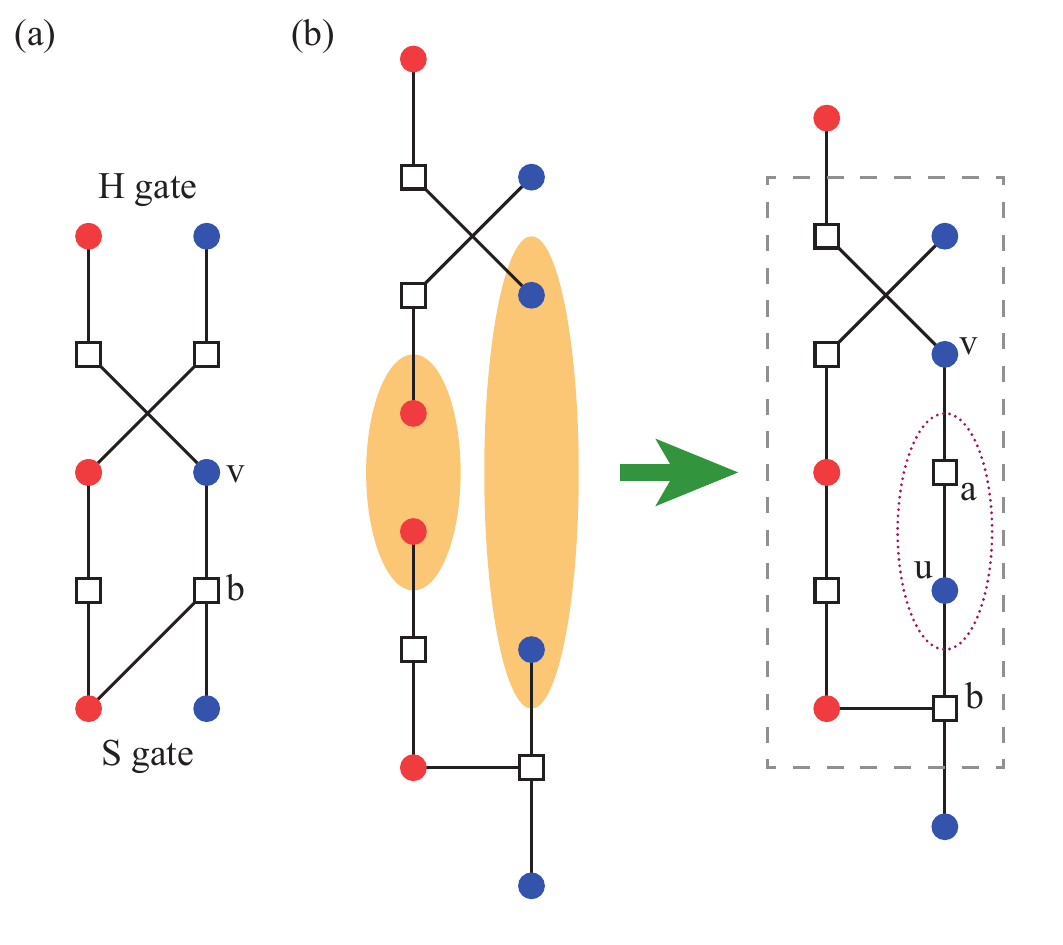}
\caption{
(a) Tanner graph of the circuit $HS$. (b) Redrawn Tanner graph of the circuit $HS$ with bit splitting. Before bit splitting (without vertices $u$ and $a$), the two graphs are the same. Because bits $v$ and $u$ are connected through $a$, which is a degree-2 check, they always take the same value in a codeword. According to Proposition \ref{prop:equivalence}, single-bit errors on $v$ and $u$ are equivalent. 
}
\label{fig:circuit_symmetry2}
\end{figure}

Now, let’s consider a circuit. We take circuits consisting of a Hadamard gate and a phase gate as examples. If the Hadamard gate is applied first then the phase gate, Fig. \ref{fig:circuit_symmetry1}(a) illustrates the plain Tanner graph of the circuit. The plain Tanner graph is obtained by merging two Tanner graphs of primitive gates [see Fig. \ref{fig:circuit_symmetry1}(b)]: The two bits in each orange oval are merged into one bit. Because each long terminal is merged with a short terminal, the resulting plain Tanner graph has bit-check symmetry, i.e. the subgraph in the grey dashed box is symmetric. The subgraph is symmetric because each bit has a dual vertex, and the dual vertex of each merged bit is the dual vertex of the short terminal. 

If the phase gate is applied first then the Hadamard gate, Fig. \ref{fig:circuit_symmetry2}(a) illustrates the plain Tanner graph of the circuit. In this case, two long terminals are merged, and two short terminals are merged, as shown in Fig. \ref{fig:circuit_symmetry2}(b). Then, the resulting plain Tanner graph does not have bit-check symmetry. To recover the symmetry, we can add one bit $u$ and one check $a$ to the edge connecting $v$ and $b$. We call such an operation bit splitting; see Definition \ref{def:bit_splitting}. Bit splitting preserves the code and circuit code distance; see Lemma \ref{lem:bit_splitting} and Appendix \ref{app:bit_splitting} for its proof. After the bit splitting, the subgraph in the grey dashed box is symmetric. 

\begin{figure}[tbp]
\centering
\includegraphics[width=\linewidth]{\figpath/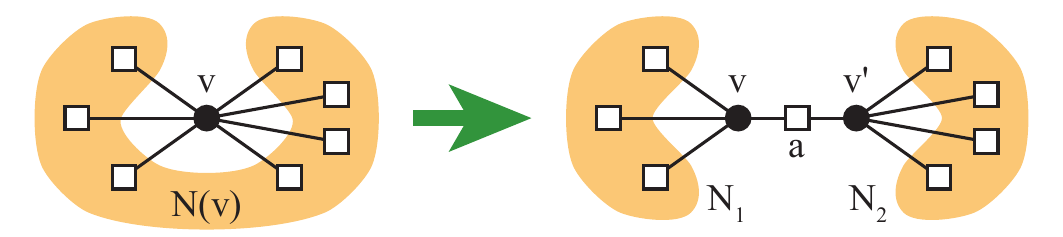}
\caption{
Bit splitting on $v$. In general, one of $N_1$ or $N_2$ could be empty. 
}
\label{fig:bit_splitting}
\end{figure}

\begin{definition}
Let $N(v)$ be the neighbourhood of a bit $v$. Let $N_1$ and $N_2$ be two disjoint subsets covering $N(v)$. A bit splitting on $v$ with subsets $\{N_1,N_2\}$ is the following operation on the Tanner graph (see Fig. \ref{fig:bit_splitting}): Disconnect $v$ from checks in $N_2$, add a bit $v'$, connect $v'$ to checks in $N_2$, add a check $a$, and connect $a$ to $v$ and $v'$. 
\label{def:bit_splitting}
\end{definition}

\begin{lemma}
Let $\mathbf{A}$ and $\mathbf{A}'$ be check matrices, and the Tanner graph of $\mathbf{A}'$ is generated by applying bit splitting on the Tanner graph of $\mathbf{A}$. Suppose the error correction check matrix $\mathbf{B}$ and logical generator matrix $\mathbf{L}$ are compatible with $\mathbf{A}$. Then, the following statements hold: 
\begin{itemize}
\item[i)] Two codes are isomorphic, i.e. there exists a linear bijection $\phi:\mathrm{ker}\mathbf{A}\rightarrow\mathrm{ker}\mathbf{A}'$; 
\item[ii)] The error correction check matrix $\mathbf{B}' = \phi(\mathbf{B})$ and logical generator matrix $\mathbf{L}' = \phi(\mathbf{L})$ are compatible with $\mathbf{A}'$; 
\item[iii)] Their circuit code distances are the same, i.e. 
\begin{eqnarray}
d(\mathbf{A},\mathbf{B},\mathbf{L}) = d(\mathbf{A}',\mathbf{B}',\mathbf{L}').
\end{eqnarray}
\end{itemize}
\label{lem:bit_splitting}
\end{lemma}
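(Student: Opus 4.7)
The plan is to construct two explicit linear maps that intertwine every relevant structure: a code isomorphism $\phi:\mathrm{ker}\mathbf{A}\to\mathrm{ker}\mathbf{A}'$ that duplicates the value at $v$, and an error contraction $\psi$ in the opposite direction on the ambient error spaces. All three claims will then follow from these maps together with a single pairing identity, $\phi(\mathbf{c})\cdot\mathbf{e}' = \mathbf{c}\cdot\psi(\mathbf{e}')$.

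For (i), I would define $\phi(\mathbf{c})$ as the vector obtained from $\mathbf{c}$ by setting both the $v$-entry and the new $v'$-entry equal to $c_v$ and leaving every other coordinate unchanged. Membership in $\mathrm{ker}\mathbf{A}'$ is immediate: the new check $a$ reads $c'_v + c'_{v'} = 0$, which is satisfied by construction; checks in $N_1$ are unchanged; and checks in $N_2$, which in $\mathbf{A}'$ neighbour $v'$ instead of $v$, give the same parity equation thanks to $c'_v = c'_{v'}$. Conversely, the inverse sends any codeword of $\mathbf{A}'$ to its restriction away from $v'$, which recovers a codeword of $\mathbf{A}$ because check $a$ forces $c'_v = c'_{v'}$. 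Linearity and bijectivity follow at once. Claim (ii) is then a short corollary: $\mathbf{B}' = \phi(\mathbf{B})$ and $\mathbf{L}' = \phi(\mathbf{L})$ lie in $\mathrm{ker}\mathbf{A}'$, and the linear independence of their row spaces is inherited from the injectivity of $\phi$. The input/output Pauli structure is preserved because $\phi$ leaves the layer-$0$ and layer-$T$ coordinates untouched, the duplicated bit $v'$ being interior and attached only to the auxiliary check $a$ and the checks originally in $N_2$.

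The heart of the argument is (iii). I would set $\psi(\mathbf{e}')_v = e'_v + e'_{v'}$ and $\psi(\mathbf{e}')_u = e'_u$ for $u\neq v,v'$, and verify the pairing identity by direct expansion using $\phi(\mathbf{c})_{v'} = c_v$. Applying it to the rows of $\mathbf{B}$ and $\mathbf{L}$ shows that $\mathbf{e}'$ is undetectable under $\mathbf{B}'$ and acts non-trivially on $\mathbf{L}'$ if and only if the same holds for $\psi(\mathbf{e}')$ under $\mathbf{B}$ and $\mathbf{L}$. For weights, $\vert\psi(\mathbf{e}')\vert = \vert\mathbf{e}'\vert - 2(e'_v\wedge e'_{v'}) \leq \vert\mathbf{e}'\vert$, giving one inequality on circuit code distances, while lifting any $\mathbf{e}$ on the original graph to $\mathbf{e}'$ with $e'_{v'} = 0$ yields $\psi(\mathbf{e}') = \mathbf{e}$ at equal weight and supplies the reverse inequality. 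Equality of the circuit code distances then follows immediately.

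The main obstacle I anticipate is the case $e'_v = e'_{v'} = 1$, where $\psi$ strictly contracts the weight by two and one must rule out the possibility that a minimum-weight logical error of $\mathbf{A}'$ sits on both copies simultaneously. Conceptually this reflects that the row of the new check $a$ in $\mathbf{A}'$ is supported precisely on $\{v,v'\}$, so by Proposition \ref{prop:equivalence} such a simultaneous pair of errors is equivalent to no error at all, and the minimum-weight representative may be taken with at most one of $e'_v,e'_{v'}$ equal to one. Once this redundancy is recognised, every remaining step is routine bookkeeping through the definitions of $\mathrm{ker}\mathbf{B}$ and $\mathrm{ker}\mathbf{L}$.
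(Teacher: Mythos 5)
Your proposal is correct and follows essentially the same route as the paper: the same duplicating isomorphism $\phi$, compatibility as an immediate corollary of bijectivity, and the same weight argument for the distance, in which your contraction $\psi$ together with the zero-padded lift is just the paper's injection $\phi_{err}$ and its equivalence reduction (adding the row of check $a$, supported on $\{v,v'\}$, to remove the case $e'_v=e'_{v'}=1$) written in the dual direction via the pairing identity. The ``obstacle'' you flag is handled exactly as you suggest and as the paper does, so there is no gap.
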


The above approach of recovering bit-check symmetry is universal. Tanner graphs of primitive operations have the following properties: For each qubit, there are two input (output) bits; one of them is a long terminal, and the other is a short terminal. When two primitive Tanner graphs are merged, a pair of input bits are merged with a pair of output bits. Therefore, there are only two cases. The first case is the symmetric merging exemplified by Fig. \ref{fig:circuit_symmetry1}, in which each long terminal is merged with a short terminal. The second case is the asymmetric merging exemplified by Fig. \ref{fig:circuit_symmetry2}, in which two long (short) terminals are merged. We can always recover the symmetry by applying bit splitting in the asymmetric merging. 

Here is an alternative way of understanding bit-check symmetry. In the case of asymmetric merging, we can apply bit splitting on one of the two short terminals before merging. The bit splitting is applied on a primitive Tanner graph, which generates a new primitive Tanner graph. Because of the properties of bit splitting, the new one and the original one are equivalent. On the new primitive Tanner graph, the original short (long) terminal becomes the long terminal. Accordingly, the asymmetric merging becomes symmetric mergying. 

\begin{proposition}
For all stabiliser circuits, their plain Tanner graphs can be transformed into Tanner graphs with bit-check symmetry through bit splitting operations. 
\label{prop:symmetry}
\end{proposition}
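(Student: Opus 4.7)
The plan is to prove the proposition by induction on the number of primitive operations forming the circuit, turning the informal argument of Sec.~\ref{sec:CircuitSymmetry} into an explicit construction. The base case---a single primitive operation---is settled by direct inspection of Figs.~\ref{fig:gates_symmetry} and \ref{fig:IandM_symmetry}, which supply an explicit deleting matrix witnessing bit-check symmetry. Two structural features of primitives will be used throughout: the \emph{terminal parity property} (for every qubit $q$ on which the primitive acts, exactly one of $\{\hat{x}_q^{(t-1)},\hat{z}_q^{(t-1)}\}$ and exactly one of $\{\hat{x}_q^{(t)},\hat{z}_q^{(t)}\}$ is a long terminal), and the \emph{flexibility property} stated in the paragraph preceding the proposition: applying a bit splitting to a short terminal of a primitive produces an equivalent primitive in which that terminal has been converted into a long terminal.

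For the inductive step, assume that the first $k$ primitives have been assembled into a Tanner graph that, possibly after bit splittings, carries a global deleting matrix satisfying Definition~\ref{def:symmetry}. Append the $(k{+}1)$-st primitive. At every qubit $q$ it acts on, the plain-Tanner-graph construction identifies its input pair with the output pair of the most recent primitive of the subcircuit acting on $q$. By the terminal parity property, each such interface is in one of two configurations. In the \emph{symmetric-merging} case, each long terminal on one side is glued to a short terminal on the other, so the interface inherits bit-check symmetry immediately: the short side contributes the dual check, and the long side contributes the required degree-one terminal. In the \emph{asymmetric-merging} case, the two sides agree on which of the two shared bits is long, producing either a degree-two bit with no dual or a bit with two distinct dual checks; either way a condition in Definition~\ref{def:symmetry} is violated. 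Here we invoke the flexibility property and apply a bit splitting to the offending short terminal of the new primitive, choosing $N_1$ and $N_2$ so as to separate the checks inherited from the primitive's interior from those identified at the interface. This swaps the long/short roles across the interface and reduces the situation to the symmetric-merging case; Lemma~\ref{lem:bit_splitting} guarantees that the code and the circuit code distance are preserved, so the modified graph still represents the same stabiliser circuit.

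Iterating through all primitives produces a Tanner graph equivalent to the plain one in which every qubit-time interface is symmetric. Concatenating the per-primitive (possibly post-splitting) deleting matrices yields a global $\mathbf{D}$; condition~(i) of Definition~\ref{def:symmetry} follows from symmetry inside each primitive block together with the symmetric matching at each interface, while conditions~(ii) and (iii) follow because the surviving long terminals are either intra-primitive long terminals left untouched by the splittings or the uncontracted boundary bits of $V_{B,0}$ and $V_{B,T}$, and because the degree-two checks introduced by bit splitting are incident only to the split pair $\{v,v'\}$ and thus never to a long terminal. I expect the main technical obstacle to be global verification of condition~(iii)---that no two long terminals are ever connected to the same check---once splittings at different interfaces accumulate. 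Controlling this requires a careful choice of $\{N_1,N_2\}$ at each asymmetric interface so that the split cleanly separates the two primitives meeting there, and it is essentially this bookkeeping, rather than any conceptual difficulty, that dictates the inductive, layer-by-layer formulation of the construction.
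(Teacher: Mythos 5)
Your proposal is correct and follows essentially the same route as the paper, which justifies Proposition \ref{prop:symmetry} by the informal argument immediately preceding it in Sec. \ref{sec:CircuitSymmetry}: the terminal parity property of primitive Tanner graphs, the dichotomy between symmetric and asymmetric merging at each qubit-time interface, and the use of bit splitting (equivalently, converting a short terminal of a primitive into a long terminal) to repair the asymmetric case. Your contribution is to recast this as an explicit induction over primitives and to flag the bookkeeping needed for conditions (ii) and (iii) of Definition \ref{def:symmetry}, which the paper leaves implicit; this is a reasonable tightening rather than a different proof.
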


\begin{figure}[tbp]
\centering
\includegraphics[width=\linewidth]{\figpath/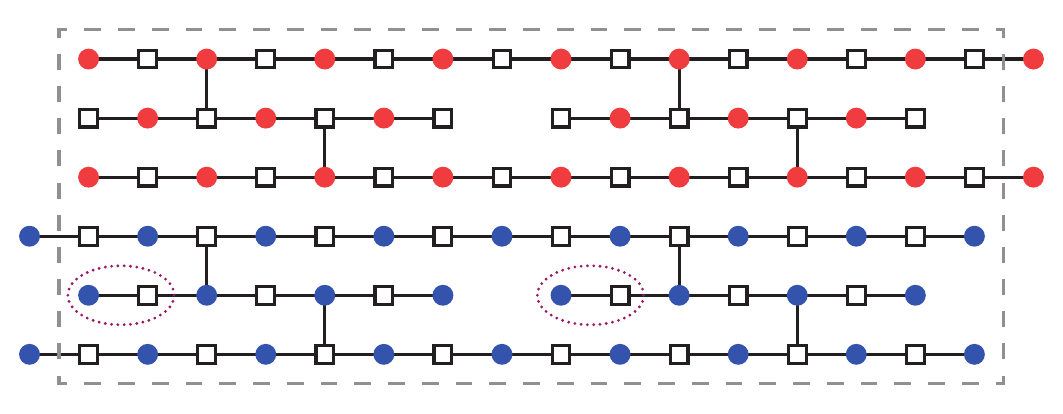}
\caption{
Symmetric Tanner graph of the circuit in Fig. \ref{fig:ZZ}(a). Before bit splitting (without vertices in purple dotted ovals), it is the same as the graph in Fig. \ref{fig:ZZ}(b) (see Step 3). After bit splitting, the subgraph in the grey dashed box is symmetric. 
}
\label{fig:ZZ_symmetry}
\end{figure}

In what follows, symmetric Tanner graph of a stabiliser circuit refers to the Tanner graph with bit-check symmetry generated from the plain Tanner graph through bit splitting, and {\it symmetric subgraph} refers to the subgraph generated by deleting long terminals from the symmetric Tanner graph. In Fig. \ref{fig:ZZ_symmetry}, we draw the symmetric Tanner graph of the repeated $Z_1Z_2$ measurement as an example. 

\section{Symmetric splitting}
\label{sec:splitting}

In this section and the next section, we develop tools for constructing fault-tolerant circuits from LDPC codes. We focus on stabiliser circuits consisting of the controlled-NOT gate and single-qubit operations. For such circuits, Tanner graphs have a maximum degree of three. However, a general LDPC code may have larger vertex degrees. Therefore, we may need to reduce vertex degrees to convert it into a stabiliser circuit. 

In this section, we develop a method that can transform an input symmetric Tanner graph to an output symmetric Tanner graph with reduced vertex degrees. The transformation must have the following properties: 
\begin{itemize}
\item[i)] The bit-check symmetry is preserved; 
\item[ii)] Codewords are preserved, i.e. two codes are isomorphic; 
\item[iii)] The circuit code distance may be reduced by a factor, but the factor is bounded. 
\end{itemize}
Then, if the input Tanner graph has a good circuit code distance, we can obtain an output Tanner graph with a good circuit code distance. Symmetric splitting is the transformation satisfying all the requirements. We will show that in symmetric splitting, the circuit code distance may be reduced by a factor, and the factor is bounded by the maximum vertex degree of the input Tanner graph. 

\subsection{Operations in symmetric splitting}

\RED{\begin{figure}[tbp]
\centering
\includegraphics[width=\linewidth]{\figpath/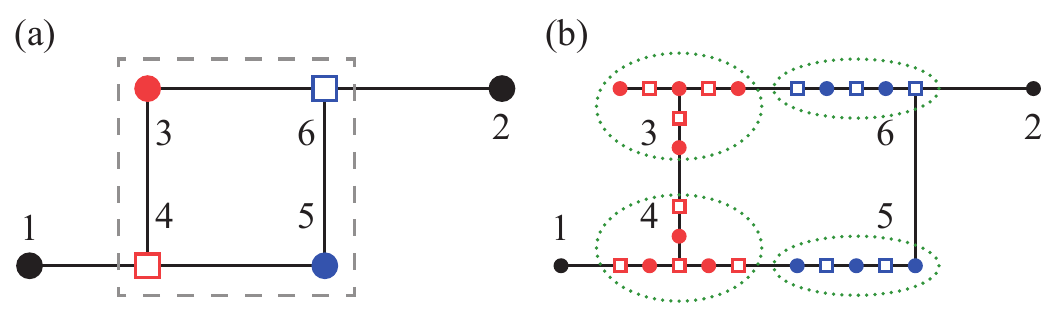}
\caption{
(a) The input Tanner graph. The two red (blue) vertices are dual vertices of each other. (b) The output Tanner graph. Each vertex on the subgraph in the dashed grey box is replaced with a tree (in the green dotted circles). 
}
\label{fig:symmetric_splitting0}
\end{figure}

We consider an input Tanner graph with bit-check symmetry; see Fig. \ref{fig:symmetric_splitting0}(a) for example. Long terminals of the graph are vertices $1$ and $2$. The symmetric subgraph is in the grey dashed box. On the subgraph, each vertex has a dual vertex: vertices $3$ and $4$ ($5$ and $6$) are dual vertices of each other. If we exchange dual vertices, the subgraph is unchanged. 

\begin{figure}[tbp]
\centering
\includegraphics[width=\linewidth]{\figpath/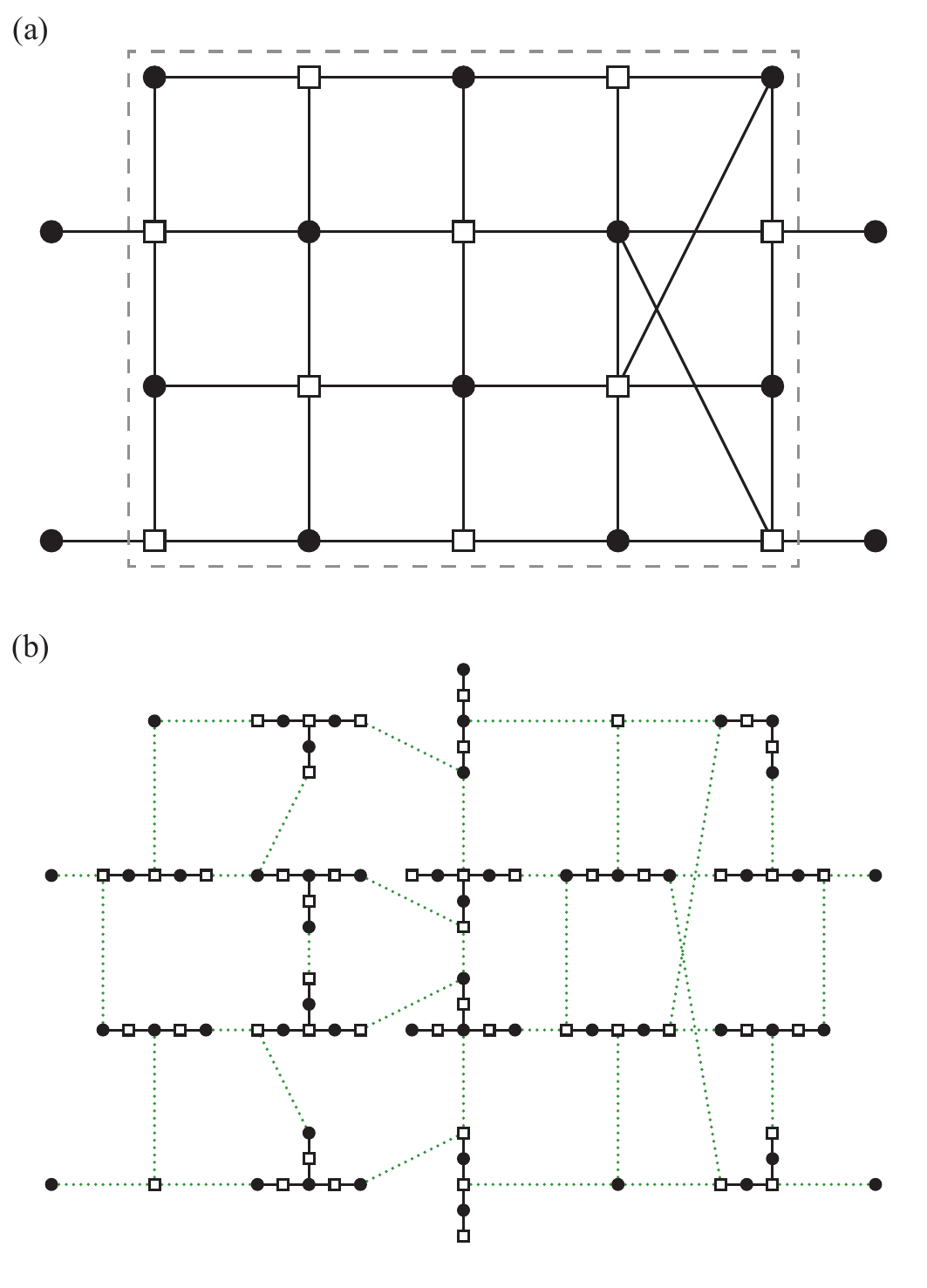}
\caption{
(a) The input Tanner graph. On the subgraph in the grey dashed box, each bit has a dual check. (b) The output Tanner graph. Each vertex on the subgraph is replaced with a tree. Intra-tree edges are denoted by black solid lines. Inter-tree edges and edges on long terminals are denoted by green dotted lines. 
}
\label{fig:symmetric_splitting1}
\end{figure}

In the symmetric splitting, we replace each vertex on the symmetric subgraph with a tree, as shown in Fig. \ref{fig:symmetric_splitting0}(b). Then, the trees are connected according to the original connections between vertices. We choose the trees and the way of connecting them such that the output graph is still symmetric. Specifically, for each pair of dual vertices, their trees are the same, but the roles of bit and check vertices are exchanged. On a {\it bit tree}, which replaces a bit vertex ($3$ or $5$), all check vertices have a degree of two; on a {\it check tree}, which replaces a check vertex ($4$ or $6$), all bit vertices have a degree of two. When connecting trees, we connect bit vertices on bit trees with check vertices on check trees, and check (bit) vertices on bit (check) trees do not participate connecting trees. It is similar when connecting a check tree with a long terminal. An example of a larger graph is shown in Fig. \ref{fig:symmetric_splitting1}. }

\begin{figure}[tbp]
\centering
\includegraphics[width=\linewidth]{\figpath/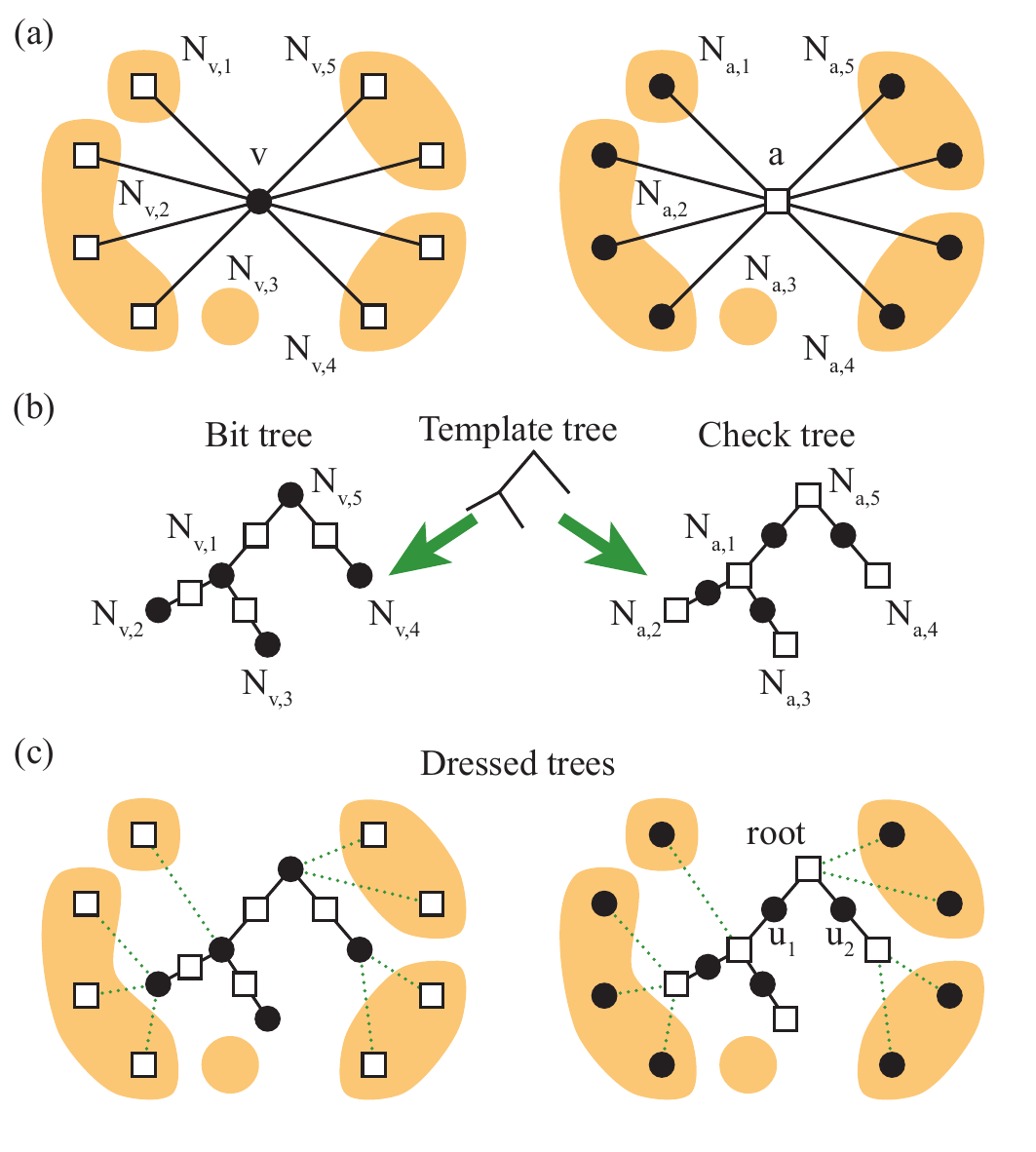}
\caption{
(a) Neighbourhood subsets for a pair of dual vertices. (b) Bit and check trees generated from the same template tree. (c) Dressed bit and check trees with connections (denoted by green dotted lines) to adjacent trees and long terminals. 
}
\label{fig:symmetric_splitting2}
\end{figure}

Symmetric splitting consists of the following operations: 
\begin{itemize}
\item[1.] On the input graph, divide the neighbourhood of each vertex into subsets in a symmetric way. Let $v$ and $a$ be dual vertices of each other. Neighbourhoods $N(v)$ and $N(a)$ are divided into $r_v$ disjoint subsets $N_{v,1},N_{v,2},\ldots,N_{v,r_v}$ and $N_{a,1},N_{a,2},\ldots,N_{a,r_v}$, respectively. Here, $N_{v,i}$ and $N_{a,i}$ contain dual vertices of each other. See Fig. \ref{fig:symmetric_splitting2}(a). 
\item[2.] For each pair of dual vertices on the input graph, compose a template tree with $r_v$ vertices. Generate two trees from the template tree; see Fig. \ref{fig:symmetric_splitting2}(b). To generate the bit (check) tree, a bit (check) is placed on each vertex of the template tree, and a check (bit) is placed on each edge of the template tree. Then, we have a forest, as shown in Fig. \ref{fig:symmetric_splitting1}(b). 
\item[3.] On the bit (check) tree, associate each bit (check) to a neighbourhood subset of $v$ ($a$), as shown in Fig. \ref{fig:symmetric_splitting2}(b). Let’s use $\hat{N}_{v,i}$ ($\hat{N}_{a,i}$) as labels of bits (checks) on the tree. To preserve the symmetry, $\hat{N}_{v,i}$ and $\hat{N}_{a,i}$ correspond to the same vertex on the template tree. 
\item[4.] For each edge $\{v,b\}$ on the symmetric subgraph, add an edge to the forest that connects $\hat{N}_{v,i}$ with $\hat{N}_{b,j}$. Here, $i$ and $j$ are chosen such that $b\in N_{v,i}$ and $v\in N_{b,j}$. See Figs. \ref{fig:symmetric_splitting1}(b) and \ref{fig:symmetric_splitting2}(c). 
\item[5.] For each edge $\{w,b\}$ incident on a long terminal $w$, add a bit $w$ to the forest, and add an edge that connects $w$ with $\hat{N}_{b,j}$. Here, $j$ is chosen such that $w\in N_{b,j}$. See Figs. \ref{fig:symmetric_splitting1}(b) and \ref{fig:symmetric_splitting2}(c). 
\end{itemize}

\subsection{Preservation of the bit-check symmetry and codewords}
\label{sec:preservation}

Symmetric splitting preserves bit-check symmetry. The forest generated in step-3 is symmetric. In step-4, edges are added to the forest in a symmetric way: For each edge $\{\hat{N}_{v,i},\hat{N}_{b,j}\}$ added to the forest, there is a dual edge $\{\hat{N}_{a,i},\hat{N}_{u,j}\}$ added to the forest. Here, $a$ and $u$ are dual vertices of $v$ and $b$, respectively. 

Let $\mathbf{A}$ and $\mathbf{A}'$ be check matrices of the input and output Tanner graphs. There is a one-to-one correspondence between codewords of $\mathbf{A}$ and $\mathbf{A}'$. 

On a bit tree, bits are connected through degree-2 checks, and they all take the same value in a codeword. Therefore, a bit tree plays the role of a single bit. 

Similarly, a check tree plays the role of a single check. When trees are connected with each other as shown in Fig. \ref{fig:symmetric_splitting1}(b), a check tree is connected with some bits, which are either on adjacent bit trees or long terminals. With these bits added to the check tree, we call it the {\it dressed check tree}. The added bits are leaves of the dressed tree; see Fig. \ref{fig:symmetric_splitting2}(c). Now, let’s consider the values of bits in a codeword. Take any check as the root of the dressed tree. We can find that the value of a bit is the sum of all its descendant bit leaves. For example, $u_1$ ($u_2$) takes the sum of all leaves connected with $\hat{N}_{a,1}$, $\hat{N}_{a,2}$ and $\hat{N}_{a,3}$ ($\hat{N}_{a,4}$ and $\hat{N}_{a,5}$). To satisfy the root check, the sum of all bit leaves must be zero. 

According to the above analysis, there is a linear bijection $\psi:\mathrm{ker}\mathbf{A}\rightarrow\mathrm{ker}\mathbf{A}'$ between two codes. Let $\mathbf{c}\in \mathrm{ker}\mathbf{A}$. The map is $\psi(\mathbf{c})_w = \mathbf{c}_w$ if $w$ is a long terminal, $\psi(\mathbf{c})_u = \mathbf{c}_v$ if $u$ is on the bit tree of $v$, and $\psi(\mathbf{c})_u = \sum_{v\in D_{a,u}}\psi(\mathbf{c})_v$ if $u$ is on the check tree of $a$. Here, $D_{a,u}$ denotes descendant bit leaves of $u$ on the dressed tree of $a$, which are either long terminals or on bit trees. 

\subsection{Impact on the circuit code distance}

To analyse the impact on the circuit code distance, we consider mapping errors on the input Tanner graph to errors on the output Tanner graph. For the input and output Tanner graphs, the error spaces are $\mathbb{F}_2^{\vert V_B\vert}$ and $\mathbb{F}_2^{\vert V_B'\vert}$, respectively. Here, $V_B$ and $V_B'$ are bit sets of the input and output graphs, respectively. There exists a linear injection $\psi_{err}:\mathbb{F}_2^{\vert V_B\vert}\rightarrow \mathbb{F}_2^{\vert V_B'\vert}$ with the following properties: i) For all codewords $\mathbf{c} = \mathrm{ker}\mathbf{A}$ and errors $\mathbf{e}\in \mathbb{F}_2^{\vert V_B\vert}$, $\mathbf{c}\mathbf{e}^\mathrm{T} = \psi(\mathbf{c})\psi_{err}(\mathbf{e})^\mathrm{T}$; and ii) $\vert \mathbf{e}\vert = \vert \psi_{err}(\mathbf{e})\vert$. The map is $\psi_{err}(\mathbf{e})_v = \mathbf{e}_v$ if $v$ is a long terminal, $\psi_{err}(\mathbf{e})_{\hat{N}_{v,1}} = \mathbf{e}_v$ if $v\in V_B$ is not a long terminal, and $\psi_{err}(\mathbf{e})_u = 0$ for all other bits $u\in V_B'$ on the output graph. In this map, errors on the output graph are on either long terminals or $\hat{N}_{v,1}$ vertices on bit trees, and check trees are clean. 

Let’s consider a logical error $\mathbf{e}\in \mathbb{F}_2^{\vert V_B\vert}$ with the minimum weight $\vert \mathbf{e}\vert = d(\mathbf{A},\mathbf{B},\mathbf{C})$. Then, $\psi_{err}(\mathbf{e})$ is also a logical error and has the same weight. Therefore, symmetric splitting never increases the circuit code distance. 

The circuit code distance is reduced if we can find an error in $\mathbb{F}_2^{\vert V_B'\vert}$ that is equivalent to $\psi_{err}(\mathbf{e})$ but has a smaller weight. The weight is not reduced by bit trees. On a bit tree, bits are connected through degree-2 checks. Therefore, on the same bit three, all single-bit errors are equivalent (according to Proposition \ref{prop:equivalence}), and they are all equivalent to the single-bit error on $\hat{N}_{v,1}$. In the equivalence, the weight is preserved. 

The weight may be reduced by check trees. Take the dressed check tree in Fig. \ref{fig:symmetric_splitting2}(c) as an example. Suppose there is a single-bit error on each leaf connected with $\hat{N}_{a,4}$ and on each leaf connected with $\hat{N}_{a,5}$. This multi-bit error on leaves is equivalent to a single-bit error on $u_2$. Therefore, the weight is reduced from four to one. 

In the case related to the code distance, there are at most $\lfloor g_a/2\rfloor$ single-bit errors on leaves. Here, $g_a$ is the vertex degree of $a$ on the input graph. If there are more than $\lfloor g_a/2\rfloor$ single-bit errors, this multi-bit error is equivalent to a multi-bit error on the complementary leaf set, whose weight is smaller than $\lfloor g_a/2\rfloor$. These $\lfloor g_a/2\rfloor$ single-bit errors may be equivalent to one single-bit error due to the previous analysis, i.e. the weight is reduced by a factor of $1/\lfloor g_a/2\rfloor$. Let $g_{max}$ be the maximum vertex degree of the input graph, symmetric splitting may reduce the circuit code distance by a factor of $1/\lfloor g_{max}/2\rfloor$. 

\begin{theorem}
Let $\mathbf{A}$ and $\mathbf{A}'$ be check matrices with bit-check symmetry, and the Tanner graph of $\mathbf{A}'$ is generated by applying symmetric splitting on the Tanner graph of $\mathbf{A}$. Suppose the error correction check matrix $\mathbf{B}$ and logical generator matrix $\mathbf{L}$ are compatible with $\mathbf{A}$. Then, the following statements hold: 
\begin{itemize}
\item[i)] Two codes are isomorphic, i.e. there exists a linear bijection $\psi:\mathrm{ker}\mathbf{A}\rightarrow\mathrm{ker}\mathbf{A}'$; 
\item[ii)] The error correction check matrix $\mathbf{B}' = \psi(\mathbf{B})$ and logical generator matrix $\mathbf{L}' = \psi(\mathbf{L})$ are compatible with $\mathbf{A}'$; 
\item[iii)] The circuit code distance may be reduced by a bounded factor, i.e. 
\begin{eqnarray}
d(\mathbf{A}',\mathbf{B}',\mathbf{L}') \geq \frac{d(\mathbf{A},\mathbf{B},\mathbf{L})}{\lfloor g_{max}/2\rfloor}.
\end{eqnarray}
\end{itemize}
\label{the:symmetric_splitting}
\end{theorem}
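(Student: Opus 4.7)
The proof essentially formalises the arguments already sketched in Sec.~\ref{sec:preservation} and the subsequent discussion of the impact on the circuit code distance; I split it into three parts matching the three claims.

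For part~(i), I would make $\psi$ explicit as in Sec.~\ref{sec:preservation}: on long terminals $\psi$ is the identity; on each bit tree of an input bit $v$ every bit is assigned the value $\mathbf{c}_v$; on each check tree of an input check $a$, every internal bit is assigned the sum of its descendant external leaves in the dressed tree, with respect to an arbitrary choice of root. Well-definedness is a small check: switching the root toggles the descendant set of a fixed internal bit by the entire set of external leaves, whose sum is $\sum_{u\in N(a)}\mathbf{c}_u=0$ because $\mathbf{c}\in\mathrm{ker}\mathbf{A}$ already satisfies the input check at $a$. Linearity and injectivity are immediate. To verify $\psi(\mathbf{c})\in\mathrm{ker}\mathbf{A}'$, I would check the three new types of check separately: the degree-2 checks inside each bit tree (trivial, since all its bits share the value $\mathbf{c}_v$); the checks inside each check tree of $a$ (at template vertex $k$, reorganising the sum of adjacent bits using the descendant-sum definition rooted at $k$ yields the total of all external leaves, which vanishes by the input check at $a$); and the incidences to long terminals (which just reproduce the relevant input constraints). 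Surjectivity follows by exhibiting the inverse $\psi^{-1}(\mathbf{c}')_v=\mathbf{c}'_u$ for any $u$ on the bit tree of $v$, and checking that the resulting input vector lies in $\mathrm{ker}\mathbf{A}$.

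Part~(ii) is then automatic: $\mathbf{A}'\mathbf{B}'^{\mathrm{T}}=\mathbf{A}'\mathbf{L}'^{\mathrm{T}}=0$ because $\psi$ sends codewords to codewords, and the linear independence of $\mathrm{rowsp}(\mathbf{B}')$ and $\mathrm{rowsp}(\mathbf{L}')$ transfers from the input side via the bijection. For part~(iii), I would first introduce $\psi_{err}$ as in Sec.~\ref{sec:preservation} and verify the pairing identity $\mathbf{c}\mathbf{e}^{\mathrm{T}}=\psi(\mathbf{c})\psi_{err}(\mathbf{e})^{\mathrm{T}}$ by a short case analysis on the support of $\mathbf{e}$. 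This identity alone gives the easy direction $d'\le d$ by pushing a minimum-weight input logical error through $\psi_{err}$; the substantive direction is to show that any logical error $\mathbf{e}'$ on the output is equivalent to some $\tilde{\mathbf{e}}'=\psi_{err}(\mathbf{e})$ satisfying $|\tilde{\mathbf{e}}'|\le\lfloor g_{max}/2\rfloor\,|\mathbf{e}'|$. I would construct $\tilde{\mathbf{e}}'$ in two steps. \emph{Step 1 (clearing check-tree interiors):} for each error at an internal bit $u$ of the check tree of $a$, sitting on template edge $(i,j)$, remove that edge from the template tree to obtain two subtrees; let $S$ be the vertex set of whichever subtree minimises $\sum_{k\in S}|N_{a,k}|$, and add the rows $\{\hat{N}_{a,k}\}_{k\in S}$ of $\mathbf{A}'$ to the error. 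The resulting row-sum flips $u$ alone among internal bits and flips exactly the external leaves in $\bigcup_{k\in S}N_{a,k}$, at most $\lfloor g_a/2\rfloor$ of them. \emph{Step 2 (consolidating bit trees):} the remaining errors sit on long terminals and bit-tree bits, and the degree-2 checks of each bit tree reduce the bit-tree part to a single error at $\hat{N}_{v,1}$ without increasing weight. The result is in the image of $\psi_{err}$, so equals $\psi_{err}(\mathbf{e})$ for a unique input error $\mathbf{e}$; combining with $|\mathbf{e}|\ge d(\mathbf{A},\mathbf{B},\mathbf{L})$ gives the claimed distance inequality.

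The hard part is the combinatorial lemma powering Step 1: that removing edge $(i,j)$ from the template tree partitions its vertices into exactly two sets $S,S^c$ for which the row-sum $\sum_{k\in S}\hat{N}_{a,k}$ cancels the internal bit $u$ while touching no other internal bit, and that one of these sides satisfies $\sum_{k\in S}|N_{a,k}|\le\lfloor g_a/2\rfloor$. The first statement follows from the $\mathbb{F}_2$ kernel of a tree's vertex-edge incidence being one-dimensional and spanned by the all-ones vector; the second is immediate from $\sum_k|N_{a,k}|=g_a$. The remaining ingredients -- that Step 1 on different internal errors accumulates weight additively, that Step 2 is weight-non-increasing, and that canonical-form errors correspond bijectively to input errors through $\psi_{err}$ -- are routine. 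A minor caveat: if $g_{max}\le 1$ there is no non-trivial symmetric splitting and the statement is vacuous, so the interesting regime is $g_{max}\ge 2$, in which $\lfloor g_{max}/2\rfloor\ge 1$ is well-defined.
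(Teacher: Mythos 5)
Your proposal is correct and follows essentially the same route as the paper: the same explicit map $\psi$ from Sec.~\ref{sec:preservation}, the same $\psi_{err}$ with the pairing identity, and the same key combinatorial step of replacing a single-bit error on a check tree of $a$ by an equivalent error on at most $\lfloor g_a/2\rfloor$ external leaves (your edge-cut/row-sum formulation is just the paper's descendant-leaf argument on the dressed tree viewed from the other direction), followed by consolidation on bit trees and pull-back through $\psi_{err}$. The extra details you supply (root-independence of the descendant-sum definition, the one-dimensional $\mathbb{F}_2$ kernel of the tree incidence matrix) are consistent with, and slightly more explicit than, the paper's treatment.
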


A formal proof can be found in Append \ref{app:symmetric_splitting}. 

\begin{figure}[tbp]
\centering
\includegraphics[width=\linewidth]{\figpath/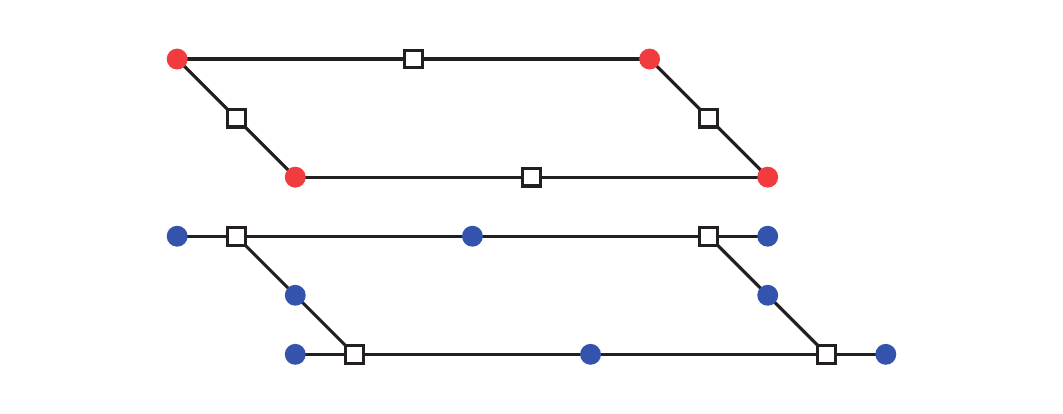}
\caption{
A simplified symmetric Tanner graph. The graph can be transformed to the plain Tanner graph in Fig. \ref{fig:ZZ}(b) through symmetric splitting, bit splitting and the inverse operation of bit splitting. 
}
\label{fig:ZZsimpllification}
\end{figure}

\RED{We remark that, in addition to reducing the vertex degree, we can also simplify the Tanner graph representation of a quantum circuit through bit splitting and symmetric splitting. For example, the Tanner graph in Fig. \ref{fig:ZZ}(b) can be simplified to that shown in Fig. \ref{fig:ZZsimpllification}. }

\section{Construction of fault-tolerant circuits from LDPC codes}
\label{sec:construction}

Given an arbitrary Tanner graph with bit-check symmetry, we can convert it into a stabiliser circuit. In this section, we show that the above statement holds in general by presenting a universal approach for converting Tanner graphs into circuits. Notice that there are usually many different circuits corresponding to the same Tanner graph. Therefore, there is a large room for optimisation, in which symmetric splitting would be a useful tool. In this work, we focus on the existence of such stabiliser circuits and leave the optimisation for future work. 

\subsection{Path partition}

\begin{figure}[tbp]
\centering
\includegraphics[width=\linewidth]{\figpath/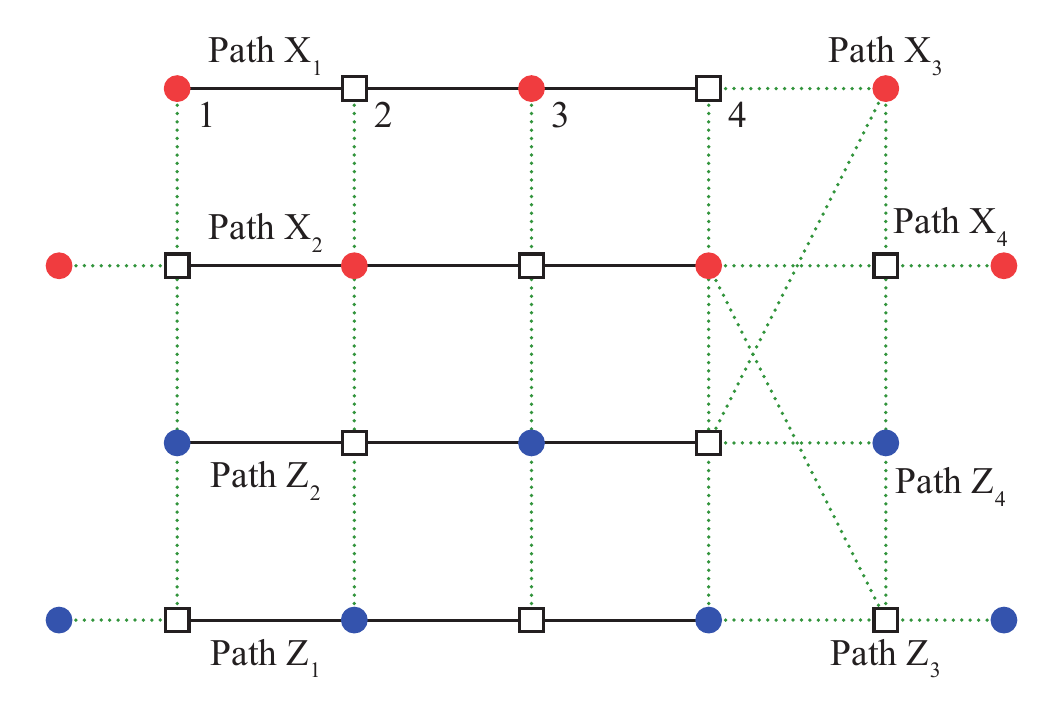}
\caption{
A path partition of the Tanner graph in Fig. \ref{fig:symmetric_splitting1}(a). Intra-path edges are denoted by black solid lines. Inter-path edges and edges on long terminals are denoted by green dotted lines. The numbers are time labels on the path $X_1$. Except for long terminals, vertices connected through green dotted lines have the same time label. 
}
\label{fig:circuit_construction1}
\end{figure}

In the universal approach, the first step is identifying qubits. This is achieved by finding a path partition of its symmetric subgraph. There are three conditions on the path partition: i) The partition must be symmetric, i.e. each path has a dual path, and the two paths are exchanged when exchanging each pair of dual vertices; ii) long terminals are connected with the ends of paths; and the third condition will be introduced later. See Fig. \ref{fig:circuit_construction1} for an example. 

Each pair of dual paths represents a qubit, and each path represents a Pauli operator $X_q$ or $Z_q$. Here, $q = 1,2,\ldots$ is the label of qubits. If a path represents $X_q$, its dual path represents $Z_q$, and vice versa. 

The second step is identifying time layers. Given the path partition, assign a time label $\tau = 1,2,\ldots$ to each vertex on the path in the following way: First, dual vertices have the same time label; and second, the time label increases strictly along a direction of the path. Each time label corresponds to $\Delta t$ layers in the stabiliser circuit. 

Now, we have the third condition on the path partition. iii) For each edge connecting two paths, it is always incident on two vertices with the same time label: Let $N_S(v)$ and $N_P(v)$ be neighbourhoods of $v$ on the symmetric subgraph and the corresponding path, respectively; then, $v$ and all vertices in $N_S(v)-N_P(v)$ have the same time label $\tau$. There always exists a path partition satisfying the three conditions: Each path has only one vertex (let’s call it a path), and $\tau = 1$ for all vertices on the symmetric subgraph. 

\subsection{Circuit construction}

\begin{figure}[tbp]
\centering
\includegraphics[width=\linewidth]{\figpath/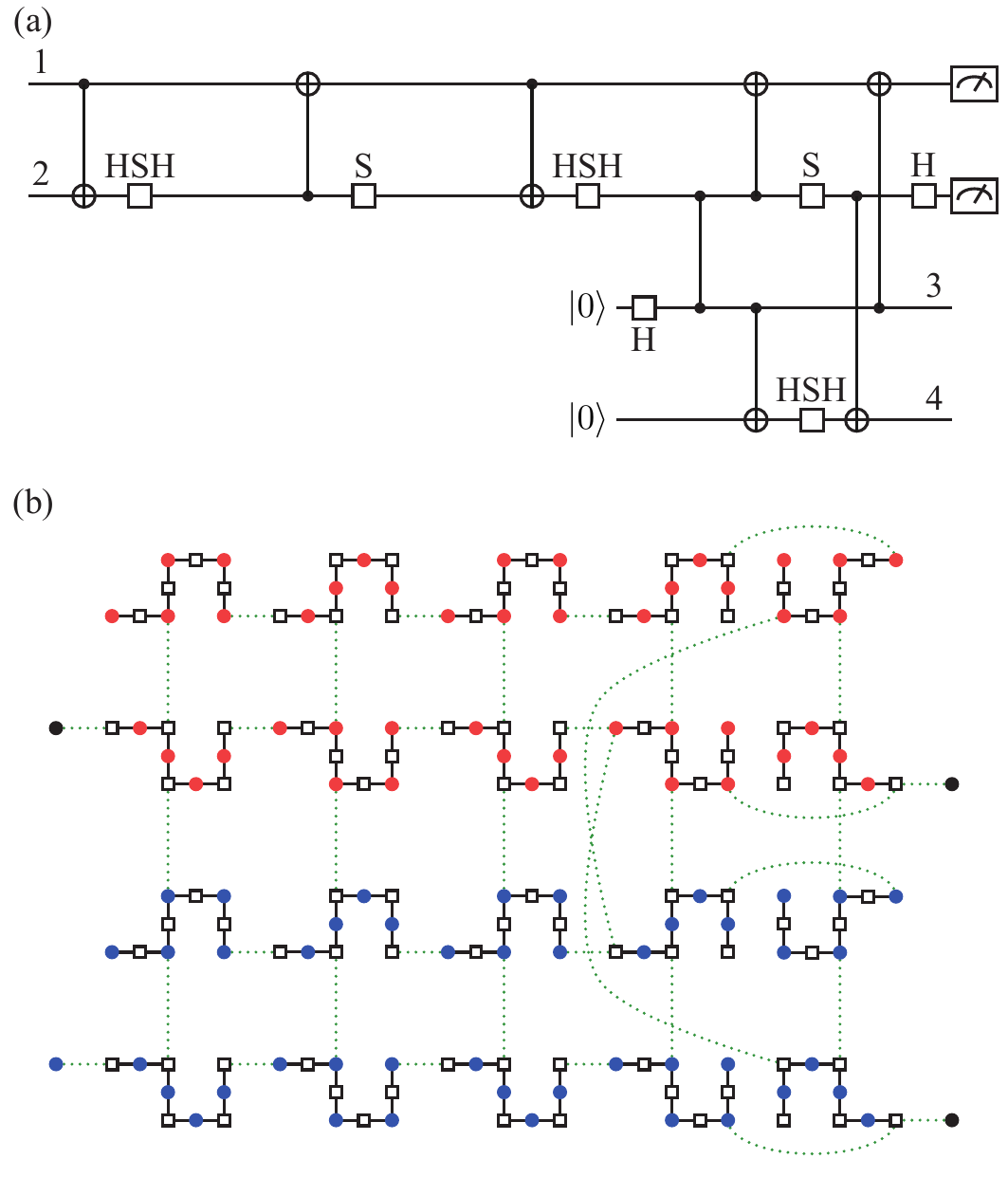}
\caption{
(a) Circuit constructed according to the path partition in Fig. \ref{fig:circuit_construction1}. (b) Symmetric Tanner graph of the circuit in (a). We can obtain this Tanner graph via symmetric splitting on the graph in Fig. \ref{fig:circuit_construction1}. In the symmetric splitting, bit and check trees are microscopic paths. Intra-tree edges are denoted by black solid lines. Inter-tree edges and edges on long terminals are denoted by green dotted lines. 
}
\label{fig:circuit_construction2}
\end{figure}

Now, we can draw the stabiliser circuit. See Fig. \ref{fig:circuit_construction2}(a) for an example. First, add initialisation and measurement operations on qubits. For each qubit $q$, we need to look at the four ends of paths $X_q$ and $Z_q$. Let $\tau_{min}$ and $\tau_{max}$ be the minimum and maximum time labels on the two paths. In the four ends, two of them have the time label $\tau_{min}$, and two of them have the time label $\tau_{max}$. In the two $\tau_{min}$ ($\tau_{max}$) ends, one of them is a bit, and the other is a check. If the $\tau_{min}$ check is not connected with a long terminal, the qubit is initialised at $t = (\tau_{min}-1)\Delta t$. The initialisation basis depends on the $\tau_{min}$ bit. If the bit is on the $Z_q$ ($X_q$) path, the basis is $Z$ ($X$). It is similar for the two $\tau_{max}$ ends. If the $\tau_{max}$ check is not connected with a long terminal, the qubit is measured at $t = \tau_{max}\Delta t + 1$. If the $\tau_{max}$ bit is on the $Z_q$ ($X_q$) path, the measurement basis is $Z$ ($X$). Gates on the qubit are applied from $t = (\tau_{min}-1)\Delta t + 1$ to $t = \tau_{max}\Delta t$. 

\begin{figure}
\begin{minipage}{\linewidth}
\begin{table}[H]
\begin{tabular}{|c|c|c|}
\hline
Path of $v$ & Path of $a$ & Gate \\ \hline\hline
$X_q$ & $Z_q$ & $S_q$ \\ \hline
$Z_q$ & $X_q$ & $H_q S_q H_q$ \\ \hline
$X_q$ & $X_{q'}$ & \multirow{2}{*}{$\Lambda_{q,q'}$} \\ \cline{1-2}
$Z_{q'}$ & $Z_q$ & \\ \hline
$X_q$ & $Z_{q'}$ & \multirow{2}{*}{$H_{q'}\Lambda_{q,q'}H_{q'}$} \\ \cline{1-2}
$X_{q'}$ & $Z_q$ & \\ \hline
$Z_q$ & $X_{q'}$ & \multirow{2}{*}{$H_q\Lambda_{q,q'}H_q$} \\ \cline{1-2}
$Z_{q'}$ & $X_q$ & \\ \hline
\end{tabular}
\caption{
Gates corresponding to inter-path edges. Each edge is incident on a bit $v$ and a check $a$. If the edge connects two paths of different qubits $q$ and $q'$, there is another edge (on dual vertices of $v$ and $a$) connecting the other two paths of qubits $q$ and $q'$ because of the symmetric. The two edges correspond to the same gate in the circuit. In the table, $\Lambda_{q,q'}$ denotes the controlled-NOT gate on the control qubit $q$ and target qubit $q'$, and $H_q$ and $S_q$ denote the Hadamard and phase gate on qubit-$q$, respectively. 
}
\label{table:gates}
\end{table}
\end{minipage}
\end{figure}

Next, add a gate to the circuit for each inter-path edge. If the edge connects paths $X_q$ and $Z_q$ of the same qubit, it corresponds to a single-qubit gate. If the edge connects paths of different qubits, it corresponds to a two-qubit gate. The gates are listed in Table \ref{table:gates}. When two vertices of the edge have the time label $\tau$, the corresponding gate is applied in the time window from $t = (\tau-1)\Delta t + 1$ to $t = \tau\Delta t$. 

Each time label corresponds to a time window consisting of $\Delta t$ time steps. Each time window may include multiple gates (Gates in the same time window are always commutative). Some of them are applied on the same qubit, so they must be arranged properly to avoid conflict. Therefore, $\Delta t$ must be sufficiently large to implement all the gates in the same time window. Notice that $\Delta t = 1 + n(n-1)/2$ is sufficiently large because we can implement single-qubit gates at $t = (\tau-1)\Delta t + 1$ and then two-qubit gates without conflict. 

\begin{theorem}
All Tanner graphs with bit-check symmetry can be converted into stabiliser circuits. Given an arbitrary Tanner graph with bit-check symmetry, there exists a stabiliser circuit such that its symmetric Tanner graph is related to the given Tanner graph through symmetric splitting. 
\label{the:circuit_construction}
\end{theorem}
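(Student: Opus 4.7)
The plan is to turn the explicit construction already sketched before the theorem statement into a verified procedure and then check that the resulting circuit has the claimed symmetric Tanner graph. Since an existence statement is being proved, I only need to exhibit one valid output of the construction for each input and then check the three things that could go wrong: (i) a path partition satisfying all three conditions always exists, (ii) the circuit construction actually produces a well-defined stabiliser circuit, and (iii) the symmetric Tanner graph of that circuit is related to the input Tanner graph by symmetric splittings.

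First I would settle existence of a valid path partition. The trivial partition, in which every vertex of the symmetric subgraph is its own length-one path and every vertex receives time label $\tau=1$, is symmetric by construction (a single vertex is paired with its dual), places long terminals at the unique ``end'' of each path, and vacuously satisfies the third condition because $N_S(v)-N_P(v)=N_S(v)$ and all vertices on the symmetric subgraph share the same time label. Hence one always has at least one admissible partition to feed into the construction; any less trivial partition only improves circuit depth but is not required for existence.

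Next I would verify the circuit construction. Using the trivial partition, each pair of dual vertices $(v,a)$ becomes a qubit $q$, with $v$ labelled as either $X_q$ or $Z_q$ and $a$ carrying the complementary label. Every edge of the symmetric subgraph is an inter-path edge, and Table \ref{table:gates} prescribes the corresponding Clifford generator; the symmetry of the subgraph guarantees that each prescribed gate is demanded by exactly one pair of dual edges, so there is no inconsistency. Long terminals dictate initialisations and measurements exactly as described before the theorem. For $\Delta t=1+n(n-1)/2$ there is room within each time window to serialise all single- and two-qubit gates sharing a qubit without conflict, since there are at most $n(n-1)/2$ two-qubit gates and each qubit participates in at most one gate per serial slot. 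Thus the output is a bona fide stabiliser circuit generated by $\Lambda$, $H$, $S$, initialisation in $Z$ (or $X$ via a Hadamard) and $Z$-basis measurement.

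The main obstacle, and the step deserving most care, is the third: showing that the symmetric Tanner graph of the constructed circuit is related to the input Tanner graph by symmetric splittings. The strategy is to compare local neighbourhoods. Inspect the primitive symmetric Tanner graphs of Figs.~\ref{fig:gates_symmetry} and \ref{fig:IandM_symmetry}: each primitive operation contributes, for every qubit it touches, exactly one bit/check pair on the symmetric subgraph, with edges matching precisely the rows of Table \ref{table:gates}. Stacking these primitives in the order dictated by the time labels merges consecutive dual bit/check pairs belonging to the same qubit, as in Sec.~\ref{sec:CircuitSymmetry}. Along each qubit worldline this merging threads together a chain of bit/check pairs; collapsing the intermediate degree-two checks and their paired bits (which is exactly what bit splitting does in reverse) reduces the chain to a single bit/check pair per vertex of the input graph, and the inter-qubit edges reproduce the inter-path edges of the input. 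Hence the input graph is recovered up to bit splittings, which by Lemma \ref{lem:bit_splitting} is precisely the symmetric-splitting equivalence invoked in the theorem. Symmetry is preserved throughout because dual vertices, dual edges, and hence dual splittings are handled in parallel, so the final Tanner graph has bit-check symmetry and matches the input under a symmetric splitting, completing the argument.
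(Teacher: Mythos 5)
Your overall architecture matches the paper's: take the trivial single-vertex path partition with all time labels equal to one, check that the circuit construction is well defined, and then compare the circuit's symmetric Tanner graph with the input graph locally along each qubit worldline. The first two steps are fine and essentially verbatim what the paper observes. The gap is in the third step, which is where all the actual work lies.

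You claim that collapsing the chains along each worldline is ``exactly what bit splitting does in reverse'' and that this, via Lemma \ref{lem:bit_splitting}, ``is precisely the symmetric-splitting equivalence invoked in the theorem.'' That identification is wrong. Bit splitting (Definition \ref{def:bit_splitting}) only ever expands a single \emph{bit} into two bits joined by a degree-two check; it never expands a \emph{check}. In the symmetric Tanner graph of the constructed circuit, each macroscopic \emph{check} of the input graph is also expanded into a chain (a check tree: checks joined through degree-two bits), and undoing that is not an inverse bit splitting. The theorem asserts a relation through \emph{symmetric splitting} (Sec.~\ref{sec:splitting}), a different operation with a different invariance statement: Lemma \ref{lem:bit_splitting} preserves the circuit code distance exactly, whereas Theorem \ref{the:symmetric_splitting} allows it to shrink by $\lfloor g_{max}/2\rfloor$, and Corollary \ref{coro:circuit_construction} relies on the latter. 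What the proof actually requires — and what the paper's Appendix \ref{app:circuit_construction1} does — is to exhibit, for each pair of dual macroscopic vertices, a pair of dual \emph{microscopic paths} in the circuit's graph and verify they are valid bit and check trees in the sense of symmetric splitting: every check on a bit tree has degree two, every inter-tree edge lands on a bit of a bit tree and a check of a check tree, and long terminals attach correctly. You also skip the two technical points where this could fail: (i) when subgraphs of adjacent macroscopic vertices are merged across time windows, two long terminals (or two short terminals) are merged, so the merging is asymmetric and genuine bit splittings must be inserted to restore bit-check symmetry before the microscopic paths can even be delineated; and (ii) the segments carrying only identity gates (before $\tau_{min}$, after $\tau_{max}$, or between non-adjacent time labels) contribute path segments with no inter-path edges that must be absorbed into neighbouring microscopic paths. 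Without these verifications the claim that the two graphs are related by symmetric splitting is asserted rather than proved.
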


We already have a universal approach for constructing the stabiliser circuit according to a given Tanner graph. To prove the theorem, we only need to examine the symmetric Tanner graph of the constructed circuit. See Appendix \ref{app:circuit_construction1} for the proof. An example of the symmetric Tanner graph is given in Fig. \ref{fig:circuit_construction2}(b). This graph can be generated through symmetric splitting, in which bit and check trees are paths (we call them {\it microscopic paths} to be distinguished from paths in the path partition). 

\begin{corollary}
If a Tanner graph with bit-check symmetry has a favourable circuit code distance, a fault-tolerant circuit can be constructed according to the Tanner graph. Let $d$ be the circuit code distance and $g_{max}$ be the maximum vertex degree of the Tanner graph, and let $d'$ be the circuit code distance of the constructed circuit. Then, 
\begin{eqnarray}
d' \geq \frac{d}{\lfloor g_{max}/2\rfloor}.
\end{eqnarray}
\label{coro:circuit_construction}
\end{corollary}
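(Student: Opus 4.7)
The plan is a direct two-step composition of the two preceding results. First, invoke Theorem \ref{the:circuit_construction}: given the Tanner graph (with check matrix $\mathbf{A}$) and bit-check symmetry, there exists a stabiliser circuit whose symmetric Tanner graph has a check matrix $\mathbf{A}'$ obtained from $\mathbf{A}$ by symmetric splitting. This places the constructed circuit in exactly the situation to which Theorem \ref{the:symmetric_splitting} applies.

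Next, let $\mathbf{B}$ and $\mathbf{L}$ be the error-correction check matrix and logical generator matrix compatible with $\mathbf{A}$ that realise $d = d(\mathbf{A},\mathbf{B},\mathbf{L})$. Part (ii) of Theorem \ref{the:symmetric_splitting} supplies $\mathbf{B}' = \psi(\mathbf{B})$ and $\mathbf{L}' = \psi(\mathbf{L})$ that are compatible with $\mathbf{A}'$; these define the error-correction structure and logical operations implemented by the constructed circuit, so the circuit code distance of the output is $d' = d(\mathbf{A}',\mathbf{B}',\mathbf{L}')$. Part (iii) then yields
\begin{eqnarray}
d' \;=\; d(\mathbf{A}',\mathbf{B}',\mathbf{L}') \;\geq\; \frac{d(\mathbf{A},\mathbf{B},\mathbf{L})}{\lfloor g_{max}/2\rfloor} \;=\; \frac{d}{\lfloor g_{max}/2\rfloor},
\end{eqnarray}
which is exactly the claimed bound.

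The only point that requires care, and arguably the one small obstacle, is aligning the meaning of $g_{max}$ in the two statements. In Theorem \ref{the:symmetric_splitting}, $g_{max}$ denotes the maximum vertex degree of the \emph{input} graph to symmetric splitting. In the corollary, $g_{max}$ denotes the maximum vertex degree of the \emph{given} Tanner graph. Theorem \ref{the:circuit_construction} asserts that the constructed circuit's symmetric Tanner graph is obtained from the given Tanner graph through symmetric splitting, so the given Tanner graph is precisely the input in the application of Theorem \ref{the:symmetric_splitting}, and the two quantities coincide. With this alignment noted, the proof reduces to invoking the two earlier results in sequence with no further computation.
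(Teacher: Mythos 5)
Your proof is correct and follows the same route as the paper, which establishes the corollary by composing Theorem \ref{the:circuit_construction} with Theorem \ref{the:symmetric_splitting}, with your remark on aligning $g_{max}$ being the right point of care. The paper additionally cites Proposition \ref{prop:symmetry} and Lemma \ref{lem:bit_splitting} to justify identifying the circuit code distance $d'$ of the constructed circuit (defined via its plain Tanner graph) with $d(\mathbf{A}',\mathbf{B}',\mathbf{L}')$ computed on its symmetric Tanner graph --- a distance-preserving bit-splitting step that you leave implicit but that does not affect the validity of your argument.
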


The above corollary holds as a consequence of Proposition \ref{prop:symmetry}, Lemma \ref{lem:bit_splitting}, Theorem \ref{the:symmetric_splitting} and Theorem \ref{the:circuit_construction}. Although it holds for all Tanner graphs with bit-check symmetry, one looks for LDPC codes when constructing fault-tolerant circuits because of the distance reduction due to the maximum vertex degree. 

\RED{\subsection{Verification of logical operations}
\label{sec:verification}

In addition to fault tolerance, another goal when selecting linear codes to construct quantum circuits is to achieve certain operations on logical qubits. Here, we will explain how to select linear codes to meet this requirement, or rather, how to determine if a linear code can generate a proper circuit to implement the desired logical operations. Note that to construct a fault-tolerant quantum circuit, we need a linear code and a decomposition of the code, which is represented by a 3-tuple $(\mathbf{A},\mathbf{B},\mathbf{L})$. 

Given a logical operation, or more generally, a logical circuit, we can represent it using a linear code, just like a physical circuit. Let $\mathbf{a}$ be the check matrix of the logical circuit. The correlations of logical qubits in the logical circuit are described by codewords $\bar{\mathbf{c}}\in \mathrm{ker}\,\mathbf{a}$. To implement the same correlations in the physical circuit, we need $\mathrm{rowsp}(\mathbf{L})$ and $\mathrm{ker}\,\mathbf{a}$ to be isomorphic. More specifically, the physical circuit should implement the same transformations on logical Pauli operators as the logical circuit. Below, we explain this in detail. 

For a codeword $\bar{\mathbf{c}}\in \mathrm{ker}\,\mathbf{a}$, projections similar to $\mathbf{P}_0$ and $\mathbf{P}_T$ will yield vectors $\bar{\mathbf{c}}_{in}$ and $\bar{\mathbf{c}}_{out}$, which represent the input and output logical operators, respectively. The input and output logical operators are denoted as $\bar{\sigma}(\bar{\mathbf{c}}_{in})$ and $\bar{\sigma}(\bar{\mathbf{c}}_{out})$, where $\bar{\sigma}$ indicates that the operators are generated by logical operators. If the physical circuit implements the same transformations on logical Pauli operators, there should be a codeword $\mathbf{c}\in \mathrm{rowsp}(\mathbf{L})$ such that $\sigma(\mathbf{c}_{in}) = \bar{\sigma}(\bar{\mathbf{c}}_{in})$ and $\sigma(\mathbf{c}_{out}) = \bar{\sigma}(\bar{\mathbf{c}}_{out})$, where $\mathbf{c}_{in}$ and $\mathbf{c}_{out}$ are the vectors obtained from $\mathbf{c}$ through projections similar to $\mathbf{P}_0$ and $\mathbf{P}_T$ (we will discuss these two projections in detail later). In fact, we do not need to verify the above for each codeword $\bar{\mathbf{c}}\in \mathrm{ker}\,\mathbf{a}$ but only need to consider the generator matrix $\mathbf{g}$ of the code. After projection, we can obtain two matrices $\mathbf{g}_{in}$ and $\mathbf{g}_{out}$. Similarly, after projection, the matrix $\mathbf{L}$ yields two matrices $\mathbf{L}_{in}$ and $\mathbf{L}_{out}$. If the physical circuit implements the same transformations on logical Pauli operators, there should be full rank matrices $\mathbf{J}_{in}$ and $\mathbf{J}_{out}$ such that $\mathbf{L}_{in} = \mathbf{g}_{in}\mathbf{J}_{in}$ and $\mathbf{L}_{out} = \mathbf{g}_{out}\mathbf{J}_{out}$. Here, $\sigma(\mathbf{J}_{in})$ and $\sigma(\mathbf{J}_{out})$ are generators of the input and output logical Pauli groups, respectively. 

Finally, we discuss how to project the codewords $\mathbf{c}\in \mathrm{rowsp}(\mathbf{L})$. Unlike the plain Tanner graph, for a general symmetric Tanner graph, we need to assign time coordinates to bit vertices before projection. In the process of constructing quantum circuits from linear codes, we need to perform path partition first. The two ends of each path correspond to the smallest and largest time labels. According to the requirements of path partition, all long terminals can only appear at the ends of paths, so their time coordinates can be $t = 0,T$. Additionally, suppose a check vertex $a$ is connected to a long terminal, and a bit vertex $u$ is the dual vertex of check vertex $a$, then the time coordinate of $u$ can also be $t = 0,T$. For all other vertices at the ends of paths, which are not connected to long terminals, they correspond to initialisation or measurement, so their time coordinates are not $t = 0,T$. Based on the above discussion, we can find all vertices with time coordinates $t = 0,T$ by following these steps (note that we do not need to determine the path partition in advance): first, find all long terminals $v$, then find all check vertices $a$ connected to long terminals $v$, and finally find all bit vertices $u$ dual to check vertices $a$; the set of long terminals $v$ and bit vertices $u$ constitutes the set of vertices with time coordinates $t = 0,T$. We divide this set into two disjoint subsets, in which a long terminal $v$ and the corresponding bit vertex $u$ are always in the same subset. Then, we can construct the two projections $\mathbf{P}_0$ and $\mathbf{P}_T$ according to the two subsets. 

To verify whether $(\mathbf{A},\mathbf{B},\mathbf{L})$ can implement the logical circuit $\mathbf{a}$, we need to construct appropriate projections $\mathbf{P}_0$ and $\mathbf{P}_T$, and then check if $(\mathbf{L}_{in},\mathbf{L}_{out})$ matches $(\mathbf{g}_{in},\mathbf{g}_{out})$. If they match, we need to construct the circuit in a way consistent with the projections $\mathbf{P}_0$ and $\mathbf{P}_T$, so that we can obtain a physical circuit that implements the logical circuit $\mathbf{a}$. The remaining degrees of freedom in the process of constructing the circuit will only affect details such as time and qubit costs, while all the circuits will implement the same logical circuit. If we detect errors according to $\mathbf{B}$, all the circuits are fault-tolerant. 
}

\section{Example: Transversal stabiliser circuits of Calderbank-Shor-Steane codes}
\label{sec:example}

In this section, we illustrate how to represent fault-tolerant quantum circuits using LDPC codes with an example. First, we show that transversal stabiliser circuits of CSS codes can be represented in a uniform and simple form. Then, we analyse the fault tolerance using algebraic methods. 

For a CSS code, the stabiliser generator matrix is in the form 
\begin{eqnarray}
\left(\begin{matrix}
\mathbf{G}_X \\
0
\end{matrix} \left\vert~ \begin{matrix}
0 \\
\mathbf{G}_Z
\end{matrix}\right.\right), \notag
\end{eqnarray}
where $\mathbf{G}_X\in \mathbb{F}_2^{r_X\times n}$ and $\mathbf{G}_Z\in \mathbb{F}_2^{r_Z\times n}$ are two check matrices satisfying $\mathbf{G}_X\mathbf{G}_Z^\mathrm{T} = 0$, $n$ is the number of qubits, and $r_X$ ($r_Z$) is the number of $X$-operator ($Z$-operator) checks. Let $\mathbf{J}_Z\in \mathbb{F}_2^{k\times n}$ ($\mathbf{J}_X\in \mathbb{F}_2^{k\times n}$) be a generator matrix that completes the kernel of $\mathbf{G}_X$ ($\mathbf{G}_Z$): $\mathbf{G}_X\mathbf{J}_Z^\mathrm{T} = 0$ and $\mathrm{rowsp}(\mathbf{G}_Z)\oplus\mathrm{rowsp}(\mathbf{J}_Z) = \mathrm{ker}\mathbf{G}_X$ ($\mathbf{G}_Z\mathbf{J}_X^\mathrm{T} = 0$ and $\mathrm{rowsp}(\mathbf{G}_X)\oplus\mathrm{rowsp}(\mathbf{J}_X) = \mathrm{ker}\mathbf{G}_Z$). Here, $k = n - \mathrm{rank}\mathbf{G}_X - \mathrm{rank}\mathbf{G}_Z$ is the number of logical qubits. By properly choosing $\mathbf{J}_Z,\mathbf{J}_X$, one has $\mathbf{J}_X\mathbf{J}_Z^\mathrm{T} = \openone_k$. Similar to the stabiliser generator matrix, we can represent logical Pauli operators with the matrix 
\begin{eqnarray}
\left(\begin{matrix}
\mathbf{J}_X \\
0
\end{matrix} \left\vert~ \begin{matrix}
0 \\
\mathbf{J}_Z
\end{matrix}\right.\right), \notag
\end{eqnarray}
Each row of $\mathbf{J}_X$ corresponds to a logical $X$ operator, and the same row of $\mathbf{J}_Z$ corresponds to the logical $Z$ operator of the same logical qubit. 

\subsection{CSS-type logical-qubit circuit}

For general CSS codes, some logical operations are transversal, including the controlled-NOT gate, Pauli gates, initialisation and measurement in the $X$ and $Z$ bases. In a circuit consisting of these operations (Let's call it a {\it CSS-type circuit}), $X$ and $Z$ operators are decoupled. Accordingly, its Tanner graph has two disjoint subgraphs, and the check matrix of the symmetric Tanner graph is in the form 
\begin{eqnarray}
\mathbf{a} = \left(\begin{matrix}
0 & \mathbf{a}_Z \\
\mathbf{a}_X & 0
\end{matrix}\right),
\label{eq:a}
\end{eqnarray}
where $\mathbf{a}_\alpha\in \mathbb{F}_2^{m_\alpha^C\times m_\alpha^B}$ corresponds to the $\alpha$-operator subgraph, $\alpha = X,Z$, and $m_\alpha^B$ ($m_\alpha^C$) is the number of bits (checks) on the $\alpha$-operator subgraph. For example, for the logical controlled-NOT gate, the check matrix takes 
\begin{eqnarray}
\mathbf{a}_X = \left(\begin{matrix}
1 & 0 & 1 & 0 \\
1 & 1 & 0 & 1
\end{matrix}\right)
\end{eqnarray}
and 
\begin{eqnarray}
\mathbf{a}_Z = \left(\begin{matrix}
1 & 1 & 1 & 0 \\
0 & 1 & 0 & 1
\end{matrix}\right).
\end{eqnarray}

When the check matrix $\mathbf{a}$ has bit-check symmetry, there exits a deleting matrix in the form 
\begin{eqnarray}
\mathbf{d} = \left(\begin{matrix}
\mathbf{d}_X & 0 \\
0 & \mathbf{d}_Z
\end{matrix}\right),
\end{eqnarray}
where $\mathbf{d}_X\in \mathbb{F}_2^{m_X^B\times m_Z^C}$ and $\mathbf{d}_Z\in \mathbb{F}_2^{m_Z^B\times m_X^C}$ satisfy the condition $\mathbf{a}_X\mathbf{d}_X = (\mathbf{a}_Z\mathbf{d}_Z)^\mathrm{T}$. Then, the condition i) in Definition \ref{def:symmetry} is satisfied. Notice that conditions ii) and iii) also need to be satisfied. Taking the logical controlled-NOT gate as an example, we have 
\begin{eqnarray}
\mathbf{d}_X = \left(\begin{matrix}
1 & 0 \\
0 & 0 \\
0 & 0 \\
0 & 1
\end{matrix}\right)
\end{eqnarray}
and 
\begin{eqnarray}
\mathbf{d}_Z = \left(\begin{matrix}
0 & 0 \\
1 & 0 \\
0 & 1 \\
0 & 0
\end{matrix}\right).
\end{eqnarray}

Codewords of $\mathbf{a}$ are generated by a matrix in the form 
\begin{eqnarray}
\mathbf{g} = \left(\begin{matrix}
\mathbf{g}_X & 0 \\
0 & \mathbf{g}_Z
\end{matrix}\right),
\label{eq:g}
\end{eqnarray}
where $\mathbf{g}_\alpha$ is the code generator matrix of $\mathbf{a}_\alpha$, i.e. $\mathbf{a}_\alpha\mathbf{g}_\alpha^\mathrm{T} = 0$ and $\mathrm{rank}(\mathbf{g}_\alpha) = m_\alpha^B - \mathrm{rank}(\mathbf{a}_\alpha)$. For the logical controlled-NOT gate, 
\begin{eqnarray}
\mathbf{g}_X = \left(\begin{matrix}
1 & 0 & 1 & 1 \\
0 & 1 & 0 & 1
\end{matrix}\right)
\end{eqnarray}
and 
\begin{eqnarray}
\mathbf{g}_Z = \left(\begin{matrix}
1 & 0 & 1 & 0 \\
0 & 1 & 1 & 1
\end{matrix}\right).
\end{eqnarray}

\subsection{Physical-qubit circuit}

In this section, we directly give the LDPC code representing a physical circuit that realises the logical circuit $\mathbf{a}$. The explanation will be given in the next section. Here, we show that the LDPC code of the physical circuit is in a simple form depending on the LDPC code of the logical circuit. 

Given the check matrix $\mathbf{a}$ of the logical circuit, the check matrix of the physical circuit reads 
\begin{eqnarray}
\mathbf{A} = \left(\begin{matrix}
0 & \mathbf{A}_Z \\
\mathbf{A}_X & 0
\end{matrix}\right),
\label{eq:A}
\end{eqnarray}
where 
\begin{eqnarray}
\mathbf{A}_X = \left(\begin{matrix}
\mathbf{a}_X\otimes\openone_n & \openone_{m_X^C}\otimes\mathbf{G}_X^\mathrm{T} \\
\mathbf{d}_X^\mathrm{T}\otimes\mathbf{G}_Z & 0
\end{matrix}\right),
\label{eq:AX}
\end{eqnarray} 
and 
\begin{eqnarray}
\mathbf{A}_Z = \left(\begin{matrix}
\mathbf{a}_Z\otimes\openone_n & \openone_{m_Z^C}\otimes\mathbf{G}_Z^\mathrm{T} \\
\mathbf{d}_Z^\mathrm{T}\otimes\mathbf{G}_X & 0
\end{matrix}\right).
\label{eq:AZ}
\end{eqnarray}
When $\mathbf{a}$ has the bit-check symmetry, $\mathbf{A}$ also has the symmetry, and the corresponding deleting matrix is 
\begin{eqnarray}
\mathbf{D} = \left(\begin{matrix}
\mathbf{D}_X & 0 \\
0 & \mathbf{D}_Z
\end{matrix}\right),
\end{eqnarray}
where 
\begin{eqnarray}
\mathbf{D}_\alpha = \left(\begin{matrix}
\mathbf{d}_\alpha\otimes\openone_n & 0 \\
0 & \openone_{m_\alpha^C}\otimes\openone_{r_\alpha}
\end{matrix}\right).
\end{eqnarray} 
We can find that $\mathbf{A}_X\mathbf{D}_X = (\mathbf{A}_Z\mathbf{D}_Z)^\mathrm{T}$. 

We have already given the check matrix of the physical circuit. Next, we give the error correction check matrix $\mathbf{B}$ and logical generator matrix $\mathbf{L}$. The error correction check matrix is 
\begin{eqnarray}
\mathbf{B} = \left(\begin{matrix}
\mathbf{B}_X & 0 \\
0 & \mathbf{B}_Z
\end{matrix}\right),
\end{eqnarray}
where 
\begin{eqnarray}
\mathbf{B}_\alpha = \left(\begin{matrix}
\openone_{m_\alpha^B}\otimes\mathbf{G}_\alpha & \mathbf{a}_\alpha^\mathrm{T}\otimes\openone_{r_\alpha}
\end{matrix}\right).
\label{eq:Balpha}
\end{eqnarray}
The logical generator matrix is 
\begin{eqnarray}
\mathbf{L} = \left(\begin{matrix}
\mathbf{L}_X & 0 \\
0 & \mathbf{L}_Z
\end{matrix}\right),
\end{eqnarray}
where 
\begin{eqnarray}
\mathbf{L}_\alpha = \left(\begin{matrix}
\mathbf{g}_\alpha\otimes\mathbf{J}_\alpha & 0
\end{matrix}\right).
\end{eqnarray}
We have $\mathbf{A}_X\mathbf{B}_X^\mathrm{T} = \mathbf{A}_Z\mathbf{B}_Z^\mathrm{T} = \mathbf{A}_X\mathbf{L}_X^\mathrm{T} = \mathbf{A}_Z\mathbf{L}_Z^\mathrm{T} = 0$. We can find that $\mathbf{B}$ and $\mathbf{L}$ are compatible with $\mathbf{A}$, and $\mathbf{B}$ is a valid error-correction check matrix. 

\subsection{Explanation}

In quantum error correction, an elementary operation is measuring generators of the stabiliser group, and the measurement is usually repeated. Let's take the repeated parity-check measurement as an example. Suppose the measurement is repeated for $m$ times. In such a circuit, logical qubits go through the identity gate. The check matrix $\mathbf{a}$ of the logical circuit, in which the identity gate is repeated for $m$ times, takes 
\begin{eqnarray}
\mathbf{a}_X = \mathbf{a}_Z = \mathbf{R}_{m+1}.
\label{eq:aXaZ}
\end{eqnarray}
Here, $m_X^B = m_Z^B = m+1$, $m_X^C = m_Z^C = m$, and 
\begin{eqnarray}
\mathbf{R}_{m+1} = \left(\begin{matrix}
1 & 1 & 0 & 0 & 0 & \cdots \\
0 & 1 & 1 & 0 & 0 & \cdots \\
\vdots & \vdots & \vdots & \vdots & \vdots & \ddots
\end{matrix}\right) \in \mathrm{F}_2^{m\times (m+1)}
\end{eqnarray}
is the check matrix of the repetition code with the distance $m+1$. The corresponding deleting matrix $\mathbf{d}$ takes 
\begin{eqnarray}
\mathbf{d}_X = \left(\begin{matrix}
\openone_m \\
0_{1\times m}
\end{matrix}\right),
\label{eq:ddX}
\end{eqnarray}
and 
\begin{eqnarray}
\mathbf{d}_Z = \left(\begin{matrix}
0_{1\times m} \\
\openone_m
\end{matrix}\right).
\label{eq:ddZ}
\end{eqnarray}
Here, $0_{1\times m}$ denotes a $1\times m$ zero matrix. The code generator matrix of $\mathbf{a}_\alpha$ is 
\begin{eqnarray}
\mathbf{g}_\alpha = \left(\begin{matrix}
1 & 1 & 1 & 1 & 1 & \cdots
\end{matrix}\right).
\end{eqnarray}

\begin{figure}[tbp]
\centering
\includegraphics[width=\linewidth]{\figpath/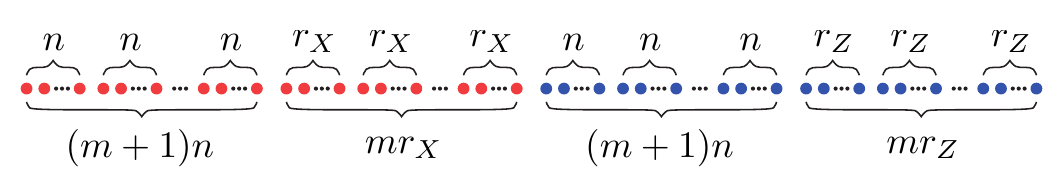}
\caption{
Bits on the Tanner graph of the repeated parity-check measurement. 
}
\label{fig:CSS}
\end{figure}

Next, we explain why the check matrix $\mathbf{A}$ represents the repeated parity-check measurement if one takes $\mathbf{a}$ according to Eq. (\ref{eq:aXaZ}). First, we need to explain the physical meaning of each column (each bit on the Tanner graph). There are two blocks of columns in $\mathbf{A}$. The block of $\mathbf{A}_X$ ($\mathbf{A}_Z$) represents $X$ ($Z$) Pauli operators. Let's look at the $X$ block. The $X$ block has two sub-blocks; see Eq. (\ref{eq:AX}) and Fig. \ref{fig:CSS}. The first sub-block has $m+1$ mini-blocks (notice that $m_X^B = m_Z^B = m+1$ and $m_X^C = m_Z^C = m$), and each mini-block has $n$ columns. In the first sub-block, the first mini-block represents the input $X$ operator of the $n$ physical qubits, and the $(l+1)$th mini-block represents the output $X$ operator after $l$ cycles of the parity-check measurement. The second sub-block has $m$ mini-blocks, and each mini-block has $r_X$ columns. In the second sub-block, the $l$th mini-block represents the $l$th cycle of the parity-check measurement; and in a codeword, their values determine whether corresponding measurement outcomes appear in the codeword equation. The meaning of the $Z$ block is similar. 

\begin{figure}[tbp]
\centering
\includegraphics[width=\linewidth]{\figpath/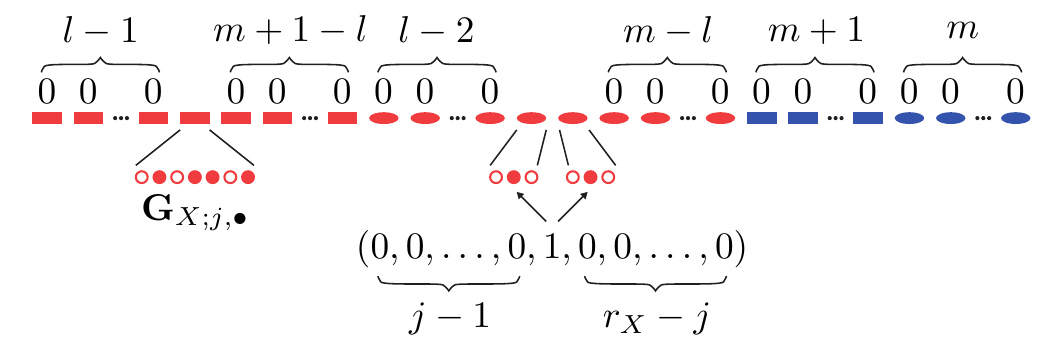}
\caption{
A codeword of the repeated parity-check measurement. The codeword is the $[(l-1)r_X+j]$th row of $\mathbf{B}$. Each rectangle represents a mini-block with the size of $n$, and each oval represents are mini-block with the size of $r_X$ or $r_Z$. Only three mini-blocks are nonzero. The second and the third nonzero mini-blocks take the same value. Take the Steane code as an example and suppose the second $X$ stabiliser generator is $X_2X_4X_5X_7$. Then, in the first non-zero mini-block, the second, fourth, fifth and seventh bits take the value of one, and others take the value of zero. In the second and third nonzero mini-blocks, the second bit takes the value of one, and others take the value of zero. 
}
\label{fig:CSS_B}
\end{figure}

With the meaning of columns clarified, we can understand the error correction check matrix $\mathbf{B}$ as follows. Rows of $\mathbf{B}$ are codewords of $\mathbf{A}$. $\mathbf{B}_X$ contributes to the first $(m+1)r_X$ rows. Let's consider the $[(l-1)r_X+j]$th row of $\mathbf{B}$ ($1<l<m+1$ and $1\leq j\leq r_X$). Look at Eq. (\ref{eq:Balpha}) and take $\alpha = X$: In the $l$th row of $\openone_{m_X^B}$, only the $l$th entry is one, and others are zero; and in the $l$th row of $\mathbf{a}_X^\mathrm{T}$, only the $(l-1)$th and $l$th entries are ones, and others are zeros. Therefore, the $[(l-1)r_X+j]$th row of $\mathbf{B}$ takes the following value; see Fig. \ref{fig:CSS_B}. In the first sub-block of the $X$ block, the $l$th mini-block takes the value of $\mathbf{G}_{X;j,\bullet}$ (corresponding to the $j$th $X$ stabiliser generator), and other mini-blocks are zero. In the second sub-block of the $X$ block, only the $(l-1)$th and $l$th mini-blocks are nonzero, and only the $j$th entry is one in these two mini-blocks. The $Z$ block is zero. According to the meaning of columns, this codeword is a checker. The checker includes two measurements, the measurement of the $j$th $X$ stabiliser generator in the $(l-1)$th and $l$th cycles. The product of two measurement outcomes should be $+1$ if the circuit is error-free. Some errors can flip the sign and be detected by the checker, including measurement errors on the two measurements, and $Z$ errors between the two cycles on qubits in the support of the $j$th $X$ stabiliser generator (qubits $q$ such that $\mathbf{G}_{X;j,q} = 1$). Similarly, rows with $l = 1$ are detectors, and rows with $l = m+1$ are emitters. Rows contributed by $\mathbf{B}_Z$ are similar. Therefore, the matrix $\mathbf{B}$ describes the parity checks in the repeated parity-check measurement. 

We can understand the logical generator matrix $\mathbf{L}$ in a similar way. Rows of $\mathbf{L}$ are also codewords of $\mathbf{A}$. $\mathbf{L}_X$ contributes to the first $k$ rows of $\mathbf{L}$. The $j$th row of $\mathbf{L}$ ($1\leq j\leq k$) takes the following value. In the first sub-block of the $X$ block, all mini-block takes the value of $\mathbf{J}_{X;j,\bullet}$ (corresponding to the $j$th logical $X$ operator). All other mini-blocks take the value of zero. Such a codeword is a genuine propagator, which transforms the $j$th $X$ logical operator from the beginning to the end of the circuit. Rows contributed by $\mathbf{L}_Z$ are similar. Therefore, the matrix $\mathbf{L}$ describes that logical operators are preserved in the repeated parity-check measurement. 

{\bf General CSS-type logical-qubit circuits.} The above explanation can be generalised to all CSS-type logical circuits. For a general circuit, the physical circuit $\mathbf{A}$ realises the logical circuit $\mathbf{a}$, which could be non-identity. To understand this, notice that codewords of $\mathbf{a}$ are generated by $\mathbf{g}$. These codewords represent correlations established by the logical circuit $\mathbf{a}$. For the physical circuit, codewords generated by $\mathbf{L}$ represent exactly the same correlations as $\mathbf{g}$ but on logical qubits. 

\subsection{Analysing the circuit code distance}

To complete the discussion on the example, we analyse the circuit code distance using algebraic methods. We will show that the circuit code distance $d(\mathbf{A},\mathbf{B},\mathbf{C})$ is the same as the code distance $d_{CSS}$ of the CSS code. For clarity, we call a logical error in the CSS code a code logical error, and a logical error in the circuit a circuit logical error. The minimum weight of code logical errors is $d_{CSS}$, and the minimum weight of circuit logical errors is $d(\mathbf{A},\mathbf{B},\mathbf{C})$. 

Let $\mathbf{e} = (\mathbf{e}_X,\mathbf{e}_Z)$ be a spacetime error, where $\mathbf{e}_\alpha\in \mathbb{F}_2^{m_\alpha^Bn+m_\alpha^Cr_\alpha}$ denotes errors on $\alpha = X,Z$ bits. The error $\mathbf{e}$ is a circuit logical error if and only if $\mathbf{B}\mathbf{e}^\mathrm{T} = 0$ (i.e. $\mathbf{B}_X\mathbf{e}_X^\mathrm{T} = \mathbf{B}_Z\mathbf{e}_Z^\mathrm{T} = 0$) and $\mathbf{L}\mathbf{e}^\mathrm{T} \neq 0$ ($\mathbf{L}_X\mathbf{e}_X^\mathrm{T} \neq 0$ or $\mathbf{L}_Z\mathbf{e}_Z^\mathrm{T} \neq 0$). 

Using row vectorisation, we can rewrite $\mathbf{e}_X$ in the form $\mathbf{e}_X = \left(\mathrm{vec}(\mathbf{u}^\mathrm{T})^\mathrm{T},\mathrm{vec}(\mathbf{v}^\mathrm{T})^\mathrm{T}\right)$, where $\mathbf{u}\in \mathbb{F}_2^{m_X^B\times n}$, $\mathbf{v}\in \mathbb{F}_2^{m_\alpha^C\times r_\alpha}$, $\mathrm{vec}(\bullet)$ denotes column vectorisation, and $\mathrm{vec}(\bullet^\mathrm{T})^\mathrm{T}$ is row vectorisation. An observation is that $\vert \mathbf{e}_X\vert \geq \vert \mathbf{w}\mathbf{u}\vert$ for all $\mathbf{w}\in \mathbb{F}_2^{m_X^B}$. As a consequence of the observation, $\vert \mathbf{e}_X\vert \geq \max_{p}\vert \mathbf{g}_{X;p,\bullet}\mathbf{u}\vert$. 

Suppose $\mathbf{B}_X\mathbf{e}_X^\mathrm{T} = 0$ and $\mathbf{L}_X\mathbf{e}_X^\mathrm{T} \neq 0$. Then, we have $\mathbf{u}$ and $\mathbf{v}$ satisfy the following conditions: $\mathbf{u}\mathbf{G}_X^\mathrm{T}+\mathbf{a}_X^\mathrm{T}\mathbf{v} = 0$ and $\mathbf{g}_X\mathbf{u}\mathbf{J}_X^\mathrm{T} \neq 0$. Multiple $\mathbf{g}_X$ to both sides of the first equation, it becomes $\mathbf{g}_X\mathbf{u}\mathbf{G}_X^\mathrm{T} = 0$. From the two equations, we can conclude that at least one row of $\mathbf{g}_X\mathbf{u}$ is a code logical error, i.e. $\max_{p}\vert \mathbf{g}_{X;p,\bullet}\mathbf{u}\vert \geq d_{CSS}$. Therefore, $\vert \mathbf{e}_X\vert \geq d_{CSS}$. 

Similarly, if $\mathbf{B}_Z\mathbf{e}_Z^\mathrm{T} = 0$ and $\mathbf{L}_Z\mathbf{e}_Z^\mathrm{T} \neq 0$, $\vert \mathbf{e}_Z\vert \geq d_{CSS}$. Overall, if $\mathbf{e}$ is a circuit logical error, $\vert \mathbf{e}\vert \geq d_{CSS}$, i.e. $d(\mathbf{A},\mathbf{B},\mathbf{C}) \geq d_{CSS}$. 

Consider a spacetime error, in which one of $n$-size mini-blocks takes a code logical error with a weight of $d_{CSS}$, and other mini-blocks are zero. Such a spacetime error represents that there is a code logical error occurring on qubits at a certain time. It is a circuit logical error with a weight of $d_{CSS}$. Therefore, $d(\mathbf{A},\mathbf{B},\mathbf{C}) = d_{CSS}$. 

Although the implementation of transversal operations on CSS codes is well established, the example demonstrates the possibility of devising fault-tolerant quantum circuits for executing certain logical operations by constructing an LDPC code. 

\RED{\section{A resource-efficient protocol for hypergraph product codes}
\label{sec:new_circuit}

In this section, we present an example of designing fault-tolerant quantum circuits using the LDPC representation. Inspired by the structure in the LDPC representation of transversal circuits, we propose a protocol of universal fault-tolerant quantum computing on hypergraph product codes. The protocol is based on encoding CSS-type logical circuits in a spatial dimension, which is constructed through algebraic analysis and takes advantage of the LDPC representation. 

A recent work demonstrates that hypergraph product and quasi-cyclic lifted product codes are promising candidates for fault-tolerant quantum computing because of their practical noise thresholds \cite{Xu2024}. In the same work, a protocol of fault-tolerant quantum computing is proposed. In comparison, our protocol can apply operations on more logical qubits in parallel and use fewer qubits to implement the operations. Specifically, logical qubits in a hypergraph product code form a two-dimensional array. In our protocol, we can operate logical qubits in a line-wise way, resulting in parallel operations and a reduction of the qubit cost by a factor of $\Omega(k_0d_0/n_0)$, where $[n_0,k_0,d_0]$ are parameters of the linear code for constructing the hypergraph product code. Notice that the primary reason for choosing a general hypergraph product code is the high encoding rate compared to the surface code, and the high encoding rate implies $k_0d_0/n_0\gg 1$, i.e. our protocol can reduce the qubit cost by a large factor. 

\subsection{Structure in transversal circuits}
\label{sec:structure}

A hypergraph product code is generated by two linear codes \cite{Tillich2014}. The two check matrices are 
\begin{eqnarray}
\mathbf{G}_X = \left(\begin{matrix}
\openone_{n_1}\otimes\mathbf{H}_2 & \mathbf{H}_1^\mathrm{T}\otimes\openone_{r_2}
\end{matrix}\right)
\end{eqnarray}
and 
\begin{eqnarray}
\mathbf{G}_Z = \left(\begin{matrix}
\mathbf{H}_1\otimes\openone_{n_2} & \openone_{r_1}\otimes\mathbf{H}_2^\mathrm{T}
\end{matrix}\right),
\end{eqnarray}
where $\mathbf{H}_i\in\mathbb{F}_2^{r_i\times n_i}$ is the check matrix of a linear code; we denote the quantum code by $\mathrm{HGP}(\mathbf{H}_1,\mathbf{H}_2)$. For simplicity, we suppose that $\mathbf{H}_i$ are full rank. Then, each $\mathbf{H}_i$ represents a $[n_i,k_i,d_i]$ code, where $k_i = n_i - r_i$. For the quantum code, parameters are $[[n,k,d]]$, where $n = n_1n_2+r_1r_2$, $k = k_1k_2$ and $d\geq \min\{d_1,d_2\}$. Let $\mathbf{K}_i\in\mathbb{F}_2^{k_i\times n_i}$ be the generator matrix for the code of $\mathbf{H}_i$, then the generator matrices for the quantum code are 
\begin{eqnarray}
\mathbf{J}_X = \left(\begin{matrix}
\mathbf{K}_1\otimes\mathbf{K}_2^\mathrm{rT} & 0_{k_1k_2\times r_1r_2}
\end{matrix}\right)
\end{eqnarray}
and 
\begin{eqnarray}
\mathbf{J}_Z = \left(\begin{matrix}
\mathbf{K}_1^\mathrm{rT}\otimes\mathbf{K}_2 & 0_{k_1k_2\times r_1r_2}
\end{matrix}\right),
\end{eqnarray}
where $\bullet^\mathrm{r}$ denotes the right inverse of the matrix $\bullet$. From now on, we neglect the dimensions of identity and zero matrices unless they cannot be derived from other elements or relevant matrices. 

Now, we compare check matrices of transversal circuits to the hypergraph product code. We can find that the first line in $\mathbf{A}_X$ [see Eq. (\ref{eq:AX})] and $\mathbf{B}_X$ [see Eq. (\ref{eq:Balpha})] resemble $\mathbf{G}_Z$ and $\mathbf{G}_X$ of a hypergraph product code, respectively. Motivated by this similarity, we consider the case that the quantum code taken in transversal circuits is a hypergraph product code. Substituting the hypergraph product code into $\mathbf{A}_X$, $\mathbf{B}_X$ and $\mathbf{L}_X$, we have 
\begin{eqnarray}
\mathbf{A}_X = \left(\begin{matrix}
\mathbf{a}_X\otimes\openone\otimes\openone & 0 & \openone\otimes\openone\otimes\mathbf{H}_2^\mathrm{T} \\
0 & \mathbf{a}_X\otimes\openone\otimes\openone & \openone\otimes\mathbf{H}_1\otimes\openone \\
\mathbf{d}_X^\mathrm{T}\otimes\mathbf{H}_1\otimes\openone & \mathbf{d}_X^\mathrm{T}\otimes\openone\otimes\mathbf{H}_2^\mathrm{T} & 0
\end{matrix}\right), \notag \\
\end{eqnarray}
\begin{eqnarray}
\mathbf{B}_X = \left(\begin{matrix}
\openone\otimes\openone\otimes\mathbf{H}_2 & \openone\otimes\mathbf{H}_1^\mathrm{T}\otimes\openone & \mathbf{a}_X^\mathrm{T}\otimes\openone\otimes\openone
\end{matrix}\right)
\end{eqnarray}
and 
\begin{eqnarray}
\mathbf{L}_X = \left(\begin{matrix}
\mathbf{g}_X\otimes\mathbf{K}_1\otimes\mathbf{K}_2^\mathrm{rT} & 0 & 0
\end{matrix}\right).
\label{eq:LX}
\end{eqnarray}
If we replace $\mathbf{d}_X^\mathrm{T}$ with an identity matrix, $\mathbf{A}_X$ and $\mathbf{B}_X$ resemble check matrices of a three-dimensional homological product code \cite{Bravyi2013}. 

In the three-dimensional code defined by $\mathbf{A}_X$ and $\mathbf{B}_X$, the two dimensions of $\mathbf{H}_1$ and $\mathbf{H}_2$ are spatial dimensions, and the logical circuit $\mathbf{a}_X$ is encoded in the temporal dimension. In the following, we develop new circuits by moving $\mathbf{a}_X$ to a spatial dimension. 

A straightforward way of encoding $\mathbf{a}_X$ into a spatial dimension is replacing $\mathbf{H}_1$ with 
\begin{eqnarray}
\widetilde{\mathbf{H}}_1 = \left(\begin{matrix}
\openone\otimes\mathbf{H}_1 \\
\mathbf{a}_X\otimes\mathbf{K}_1^\mathrm{rT}
\end{matrix}\right).
\end{eqnarray}
The corresponding generator matrix is $\widetilde{\mathbf{K}}_1 = \mathbf{g}_X\otimes\mathbf{K}_1$. We remark that $\mathbf{g}_X$, which satisfies $\mathbf{a}_X\mathbf{g}_X^\mathrm{T} = 0$, describes correlations between $X$ Pauli operators established in the circuit of $\mathbf{a}_X$; see Sec. \ref{sec:example}. Accordingly, the quantum code has the generator matrix 
\begin{eqnarray}
\widetilde{\mathbf{J}}_X = \left(\begin{matrix}
\mathbf{g}_X\otimes\mathbf{K}_1\otimes\mathbf{K}_2^\mathrm{rT} & 0
\end{matrix}\right),
\end{eqnarray}
which is exactly the same as $\mathbf{L}_X$ in Eq. (\ref{eq:LX}) up to some zero-valued elements. 

Taking this code of $\widetilde{\mathbf{H}}_1$, we could realise a logical circuit as the same as the one realised by a transversal circuit, which leads to a problem when multiple logical qubits are encoded in each block of the code $\mathrm{HGP}(\mathbf{H}_1,\mathbf{H}_2)$. When a transversal circuit is applied on physical qubits, the circuit on logical qubits is also transversal. For example, the transversal controlled-NOT gate on two blocks of $\mathrm{HGP}(\mathbf{H}_1,\mathbf{H}_2)$ results in logical controlled-NOT gates on each pair of logical qubits in the two blocks. From the point view of the LDPC representation, this transversal constraint is due to the tensor product structure. Next, we modify the code to individually address logical qubits in a block, i.e. we can choose to apply different operations on different logical qubits without the transversal constraint. 

\subsection{Spatially encoded CSS-type logical circuits}
\label{sec:encoding}

In this section, we partially solve the transversal constraint, and the problem is completely solved with the protocol in Sec. \ref{sec:universal}. For simplicity, we focus on logical circuits that are one-layer circuits consisting of the initialisation in the $X$ basis, measurement in the $X$ basis and Clifford gates. With the same method, we can also realise the logical initialisation and measurement in the $Z$ basis. Based on these one-layer logical circuits, we develop a protocol of universal fault-tolerant quantum computing as shown in Sec. \ref{sec:universal}. 

In the one-layer logical circuit, we suppose that $m_I$ logical qubits are initialised in the $X$ basis, $m_M$ logical qubits are measured in the $X$ basis, and CSS-type Clifford gates are applied on $m_G$ logical qubits. Therefore, the logical circuit has $m_M+m_G$ input logical qubits and $m_G+m_I$ output logical qubits. Without loss of generality, we suppose that the first $m_M$ input logical qubits are measured, the last $m_I$ output logical qubits are initialised, and gates are applied on other logical qubits. For such a logical circuit, the check matrix is the form of Eq. (\ref{eq:a}), where $\mathbf{a}_\alpha = (\mathbf{a}_{\alpha,in}, \mathbf{a}_{\alpha,out})$, 
\begin{eqnarray}
\mathbf{a}_{X,in} = \left(\begin{matrix}
0_{m_G\times m_M} & \mathbf{M}_X
\end{matrix}\right)
\end{eqnarray}
and 
\begin{eqnarray}
\mathbf{a}_{Z,in} = \left(\begin{matrix}
\openone_{m_M} & 0_{m_M\times m_G} \\
0_{m_G\times m_M} & \mathbf{M}_Z \\
0_{m_I\times m_M} & 0_{m_I\times m_G}
\end{matrix}\right)
\end{eqnarray}
act on input bit vertices, and 
\begin{eqnarray}
\mathbf{a}_{X,out} = \left(\begin{matrix}
\openone_{m_G} & 0_{m_G\times m_I}
\end{matrix}\right)
\end{eqnarray}
and 
\begin{eqnarray}
\mathbf{a}_{Z,out} = \left(\begin{matrix}
0_{m_M\times m_G} & 0_{m_M\times m_I} \\
\openone_{m_G} & 0_{m_G\times m_I} \\
0_{m_I\times m_G} & \openone_{m_I}
\end{matrix}\right)
\end{eqnarray}
act on output bit vertices. Here, $\mathbf{M}_X$ and $\mathbf{M}_Z$ are the two blocks of $\mathbf{M}_U$, where $U$ is the product of all Clifford gates in the circuit: They represent the action on $X$ and $Z$ operators, respectively, i.e. $\mathbf{M}_U(\mathbf{x}_0,\mathbf{z}_0)^\mathrm{T} = (\mathbf{x}_1,\mathbf{z}_1)^\mathrm{T}$, where $\mathbf{x}_1^\mathrm{T} = \mathbf{M}_X\mathbf{x}_1^\mathrm{T}$ and $\mathbf{z}_1^\mathrm{T} = \mathbf{M}_Z\mathbf{z}_1^\mathrm{T}$. Because Clifford gates preserve the commutative and anti-commutative relations between Pauli operators, $\mathbf{M}_Z^\mathrm{T}\mathbf{M}_X = \openone_{m_G}$. 

Similarly, we can express $\mathbf{g}_\alpha$ matrices in the input-output form, i.e. $\mathbf{g}_\alpha = (\mathbf{g}_{\alpha,in}, \mathbf{g}_{\alpha,out})$ [see Eq. (\ref{eq:g})]. Here, 
\begin{eqnarray}
\mathbf{g}_{X,in} = \left(\begin{matrix}
\openone_{m_M} & 0_{m_M\times m_G} \\
0_{m_G\times m_M} & \openone_{m_G} \\
0_{m_I\times m_M} & 0_{m_I\times m_G}
\end{matrix}\right),
\end{eqnarray}
\begin{eqnarray}
\mathbf{g}_{Z,in} = \left(\begin{matrix}
0_{m_G\times m_M} & \openone_{m_G}
\end{matrix}\right),
\end{eqnarray}
\begin{eqnarray}
\mathbf{g}_{X,out} = \left(\begin{matrix}
0_{m_M\times m_G} & 0_{m_M\times m_I} \\
\mathbf{M}_X^\mathrm{T} & 0_{m_G\times m_I} \\
0_{m_I\times m_G} & \openone_{m_I}
\end{matrix}\right),
\end{eqnarray}
and 
\begin{eqnarray}
\mathbf{g}_{Z,out} = \left(\begin{matrix}
\mathbf{M}_Z^\mathrm{T} & 0_{m_G\times m_I}
\end{matrix}\right).
\end{eqnarray}

We consider two linear codes. Their check matrices are $\mathbf{H}_{in}\in\mathbb{F}_2^{r_{in}\times n_{in}}$ and $\mathbf{H}_{out}\in\mathbb{F}_2^{r_{out}\times n_{out}}$, and code parameters are $[n_{in/out},k_{in/out},d_{in/out}]$. For these two codes, generator matrices are denoted by $\mathbf{K}_{in}$ and $\mathbf{K}_{out}$, respectively. By taking $\mathbf{H}_1 = \mathbf{H}_{in},\mathbf{H}_{out}$, we can generate two quantum codes $\mathrm{HGP}(\mathbf{H}_{in},\mathbf{H}_1)$ and $\mathrm{HGP}(\mathbf{H}_{out},\mathbf{H}_2)$, which are depicted by matrices $(\mathbf{G}_{X,in/out},\mathbf{G}_{Z,in/out},\mathbf{J}_{X,in/out},\mathbf{J}_{Z,in/out})$. The two quantum codes encode the input and output logical qubits, respectively, i.e. we take $k_{in} = m_M+m_G$ and $k_{out} = m_G+m_I$. 

Now, we introduce a new code for encoding the logical circuit. Its check matrix reads 
\begin{eqnarray}
\overline{\mathbf{H}} = \left(\begin{matrix}
\mathbf{H}_{in} & 0 \\
0 & \mathbf{H}_{out} \\
\mathbf{a}_{X,in}\mathbf{K}_{in}^\mathrm{rT} & \mathbf{a}_{X,out}\mathbf{K}_{out}^\mathrm{rT}
\end{matrix}\right).
\label{eq:Hbar}
\end{eqnarray}
Here, $\mathbf{K}_{in/out}^\mathrm{rT} = (\openone_{k_{in/out}},0)$ is sparse for a generator matrix in the standard form $\mathbf{K}_{in/out} = (\openone_{k_{in/out}},\cdots)$. For this new code, parameters are $[\overline{n},\overline{k},\overline{d}]$, where $\overline{n} = n_{in}+n_{out}$, $\overline{k} = m_M+m_G+m_I$, and $\overline{d}\geq \min\{d_{in},d_{out}\}$. Its generator matrix is 
\begin{eqnarray}
\overline{\mathbf{K}} = \left(\begin{matrix}
\mathbf{g}_{X,in}\mathbf{K}_{in} & \mathbf{g}_{X,out}\mathbf{K}_{out}
\end{matrix}\right).
\end{eqnarray}
By taking $\mathbf{H}_1 = \overline{\mathbf{H}}$, we generate a quantum code $\mathrm{HGP}(\overline{\mathbf{H}},\mathbf{H}_2)$ depicted by matrices $(\overline{\mathbf{G}}_X,\overline{\mathbf{G}}_Z,\overline{\mathbf{J}}_X,\overline{\mathbf{J}}_Z)$, where 
\begin{eqnarray}
\overline{\mathbf{J}}_X = \left(\begin{matrix}
\mathbf{g}_{X,in}\mathbf{K}_{in}\otimes\mathbf{K}_2^\mathrm{rT} & \mathbf{g}_{X,out}\mathbf{K}_{out}\otimes\mathbf{K}_2^\mathrm{rT} & 0
\end{matrix}\right).
\end{eqnarray}
Although $\overline{\mathbf{J}}_X$ is different from $\mathbf{L}_X$ in Eq. (\ref{eq:LX}), it establishes the same correlation as $\mathbf{L}_X$. We explain it next. 

\begin{figure}[tbp]
\centering
\includegraphics[width=\linewidth]{\figpath/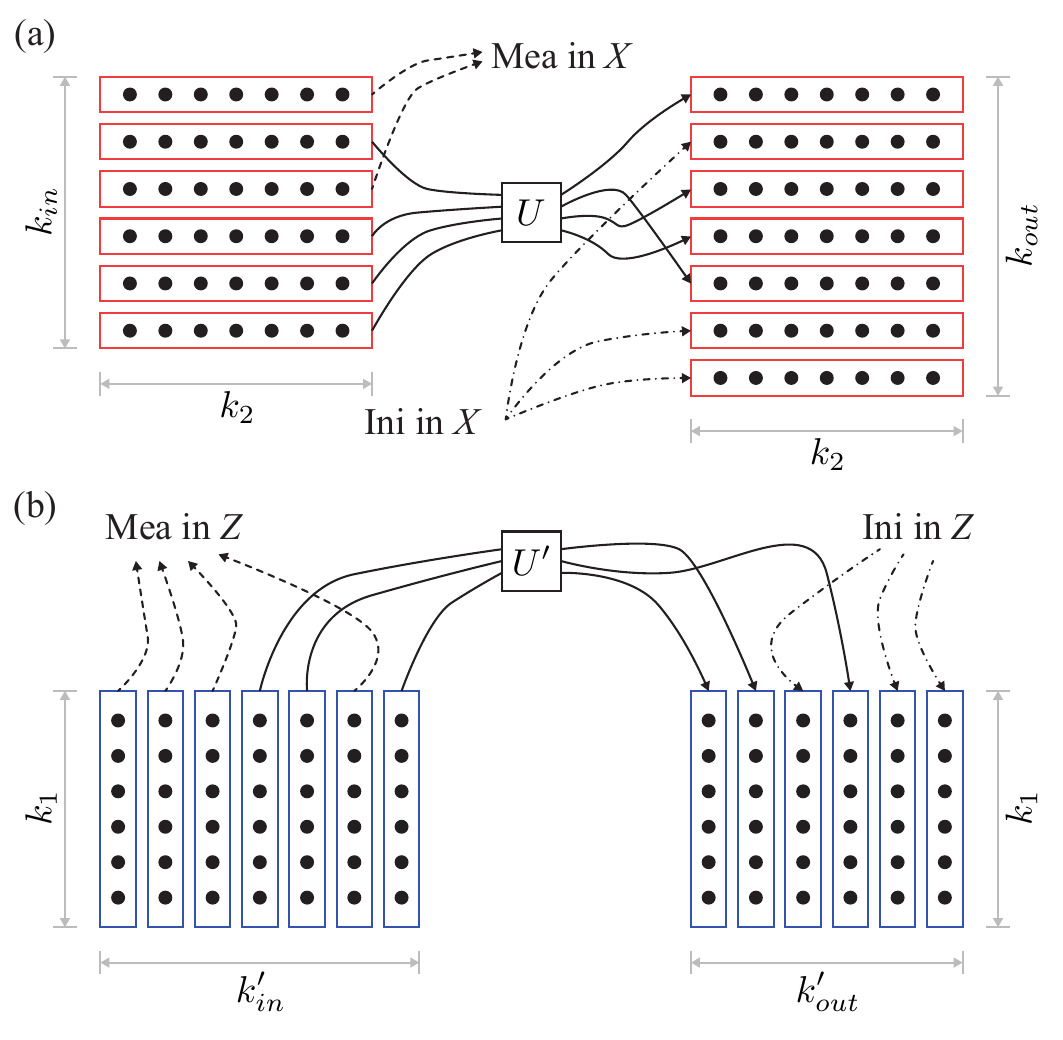}
\caption{
(a) Row operation. Logical qubits in the code $\mathrm{HGP}(\mathbf{H}_{in/out},\mathbf{H}_2)$ form a $k_{in/out}\times k_2$ array. Each black circle represents a logical qubit. Through the code $\mathrm{HGP}(\overline{\mathbf{H}},\mathbf{H}_2)$, we can implement the row operation, in which the circuit $\mathbf{a}$ is applied on logical qubits in a column, and the same circuit is applied on all columns simultaneously. This figure illustrates an example: In each column of the input block, the first and third logical qubits are measured (Mea) in the $X$ basis, and other logical qubits go through the Clifford gate $U$; In each column of the output block, the second, sixth and seventh logical qubits are initialised (Ini) in the $X$ basis, and other logical qubits store the output state of the Clifford gate. (b) Column operation. The roles of rows and columns are exchanged compared with the row operation. The basis of the initialisation and measurement is $Z$ in the column operation. 
}
\label{fig:primitive_operations}
\end{figure}

In each block of a hypergraph product code, logical qubits form a two-dimensional array; see Fig. \ref{fig:primitive_operations} (a): There are $k_{in/out}\times k_2$ logical qubits in a block of $\mathrm{HGP}(\mathbf{H}_{in/out},\mathbf{H}_2)$. We can label each logical qubit with coordinates $(l_1,l_2)$, where $l_1 = 1,2,\ldots,k_{in/out}$ and $l_2 = 1,2,\ldots,k_2$. The logical $X$ and $Z$ operators of the logical qubit $(l_1,l_2)$ are represented by $(\mathbf{K}_{in/out})_{l_1,\bullet}\otimes(\mathbf{K}_2^\mathrm{rT})_{l_2,\bullet}$ and $(\mathbf{K}_{in/out}^\mathrm{rT})_{l_1,\bullet}\otimes(\mathbf{K}_2)_{l_2,\bullet}$, respectively. For simplicity, we consider the case that only gates are applied on logical qubits, i.e. $m_M = m_I = 0$ and $k_{in} = k_{out}$. We can analyse the general case in a similar way. With the simplification, the correlations established between two blocks of logical qubits become 
\begin{eqnarray}
\overline{\mathbf{J}}_X = \left(\begin{matrix}
\mathbf{K}_{in}\otimes\mathbf{K}_2^\mathrm{rT} & \mathbf{M}_X^\mathrm{T}\mathbf{K}_{out}\otimes\mathbf{K}_2^\mathrm{rT} & 0
\end{matrix}\right).
\end{eqnarray}
Accordingly, the logical operator $(\mathbf{K}_{in})_{l_1,\bullet}\otimes(\mathbf{K}_2^\mathrm{rT})_{l_2,\bullet}$ in the input block is mapped to the operator $(\mathbf{M}_X^\mathrm{T}\mathbf{K}_{out})_{l_1,\bullet}\otimes(\mathbf{K}_2^\mathrm{rT})_{l_2,\bullet}$ in the output block. From this, we have two conclusions. First, focusing on logical qubits in the column-$l_2$, we can find that $X$ operators of the input block are mapped to $X$ operators of the output block, and the map is $\mathbf{M}_X$. Second, the map is the same for all $l_2$. Then, there are two problems to be addressed. First, although the map on logical $X$ operators is the same as the desired circuit, we still need to verify the map on logical $Z$ operators, which is given in Sec. \ref{sec:CSS}. Second, the logical operations are still constrained in one dimension, and we solve this problem in Sec. \ref{sec:universal}. 

\subsection{Protocol for CSS-type logical circuits}
\label{sec:CSS}

\begin{figure}[tbp]
\centering
\includegraphics[width=\linewidth]{\figpath/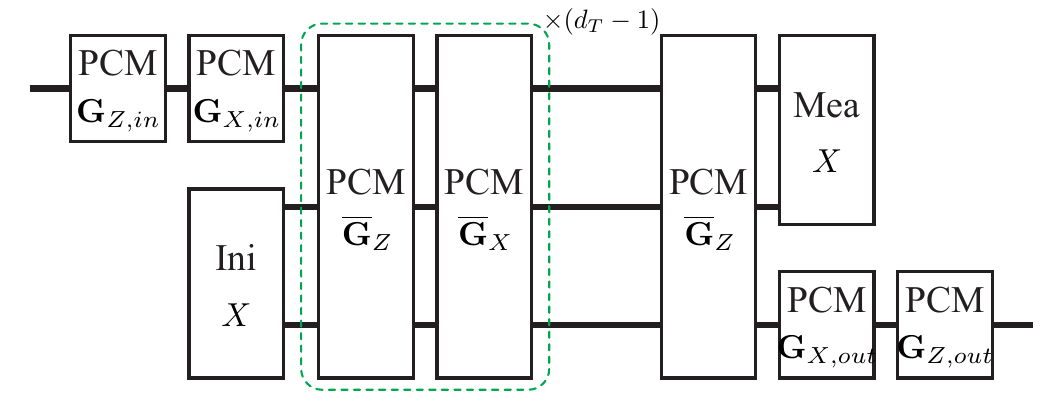}
\caption{
Circuit of the row operation. The input logical state is encoded in the code $(\mathbf{G}_{X,in},\mathbf{G}_{Z,in})$, and other qubits are initialised (Ini) in the $X$ basis. After a few rounds of parity-check measurements (PCMs) of the code $(\overline{\mathbf{G}}_X,\overline{\mathbf{G}}_Z)$, the output logical state is encoded in the code $(\mathbf{G}_{X,out},\mathbf{G}_{Z,out})$, and other qubits are measured (Mea) in the $X$ basis. To implement the column operation, we need to construct a code $(\overline{\mathbf{G}}_X,\overline{\mathbf{G}}_Z)$ according to the desired logical circuit and replace $X$ and $Z$ with each other in the circuit. 
}
\label{fig:circuit_new}
\end{figure}

We have encoded the logical circuit in a spatial dimension and constructed a check matrix $\overline{\mathbf{H}}$. In this section, we propose a protocol for implementing CSS-type logical circuits based on $\overline{\mathbf{H}}$ and analyse it in the LDPC representation. The protocol is a generalisation of lattice surgery \cite{Horsman2012,Litinski2019,Vuillot2019,Cohen2022}. 

We utilise three quantum codes, the input code $(\mathbf{G}_{X,in},\mathbf{G}_{Z,in})$, output code $(\mathbf{G}_{X,out},\mathbf{G}_{Z,out})$ and bridge code $(\overline{\mathbf{G}}_X,\overline{\mathbf{G}}_Z)$. If we remove the third row from $\overline{\mathbf{H}}$ in Eq. (\ref{eq:Hbar}), the bridge code becomes $(\mathbf{G}_{X,in}\oplus\mathbf{G}_{X,out},\mathbf{G}_{Z,in}\oplus\mathbf{G}_{Z,out})$, implying that the supports of input and output codes are in the support of the bridge code. The protocol includes three steps as shown in Fig. \ref{fig:circuit_new}. In the input state, the $m_M+m_G$ input logical qubits are encoded in the input code. First, we apply the $X$-basis initialisation on all qubits except for those in the support of the input code. Second, we implement parity check operations according to the bridge code for $d_T$ rounds. Finally, we apply the $X$-basis measurement on all qubits except for those in the support of the output code. In the output state, the $m_G+m_I$ output logical qubits are encoded in the output code. The logical operations realised by this protocol are depicted in Fig. \ref{fig:primitive_operations}(a). 

In the LDPC representation, we represent the stabiliser circuit described in the above protocol with check matrices 
\begin{eqnarray}
\mathbf{A}_X &=& \boldsymbol{\beta}_{in}\oplus (d_T-1)\openone_{\overline{n}}\oplus\boldsymbol{\beta}_{out} \notag \\
&&\oplus\boldsymbol{\gamma}_{Z,in}\oplus d_T\openone_{\overline{r}_Z}\oplus\boldsymbol{\gamma}_{Z,out} \notag \\
&&\times\left(\begin{matrix}
\mathbf{R}_{d_T+2}\otimes\openone_{\overline{n}} & \openone_{d_T+1}\otimes\overline{\mathbf{G}}_X^\mathrm{T} \\
\openone_{d_T+2}\otimes\overline{\mathbf{G}}_Z & 0
\end{matrix}\right) \notag \\
&&\times\boldsymbol{\beta}_{in}^\mathrm{T}\oplus d_T\openone_{\overline{n}}\oplus\boldsymbol{\beta}_{out}^\mathrm{T} \notag \\
&&\oplus\boldsymbol{\gamma}_{X,in}^\mathrm{T}\oplus (d_T-1)\openone_{\overline{r}_X}\oplus\boldsymbol{\gamma}_{X,out}^\mathrm{T}
\end{eqnarray}
and 
\begin{eqnarray}
\mathbf{A}_Z &=& \boldsymbol{\beta}_{in}\oplus d_T\openone_{\overline{n}}\oplus\boldsymbol{\beta}_{out} \notag \\
&&\oplus\boldsymbol{\gamma}_{X,in}\oplus (d_T-1)\openone_{\overline{r}_X}\oplus\boldsymbol{\gamma}_{X,out} \notag \\
&&\times\left(\begin{matrix}
\mathbf{R}_{d_T+3}\otimes\openone_{\overline{n}} & \openone_{d_T+2}\otimes\overline{\mathbf{G}}_Z^\mathrm{T} \\
\mathbf{d}\otimes\overline{\mathbf{G}}_X & 0
\end{matrix}\right) \notag \\
&&\times 2\boldsymbol{\beta}_{in}^\mathrm{T}\oplus (d_T-1)\openone_{\overline{n}}\oplus 2\boldsymbol{\beta}_{out}^\mathrm{T} \notag \\
&&\oplus\boldsymbol{\gamma}_{Z,in}^\mathrm{T}\oplus d_T\openone_{\overline{r}_Z}\oplus\boldsymbol{\gamma}_{Z,out}^\mathrm{T},
\end{eqnarray}
where $\mathbf{d} = (0_{(d_T+1)\times 1},\openone_{d_T+1},0_{(d_T+1)\times 1})$, and $i\mathbf{u}\oplus j\mathbf{v} = \mathrm{diag}(\mathbf{u},\mathbf{u},\ldots,\mathbf{v},\mathbf{v},\ldots)$ denotes the direct sum of matrices $\mathbf{u}$ and $\mathbf{v}$ with multiple numbers $i$ and $j$, respectively. Matrices $\boldsymbol{\beta}_{in/out}$ and $\boldsymbol{\gamma}_{\alpha,in/out}$ ($\alpha = X,Z$) are given in Appendix \ref{app:matrices}, and they are sparse matrices satisfying the following equations, 
\begin{eqnarray}
\boldsymbol{\beta}_{in/out}\boldsymbol{\beta}_{in/out}^\mathrm{T} &=& \openone_{n_{in/out}n_2+r_{in/out}r_2}, \\
\boldsymbol{\gamma}_{X,in/out}\boldsymbol{\gamma}_{X,in/out}^\mathrm{T} &=& \openone_{n_{in/out}r_2}, \\
\boldsymbol{\gamma}_{Z,in/out}\boldsymbol{\gamma}_{Z,in/out}^\mathrm{T} &=& \openone_{r_{in/out}n_2}, \\
\overline{\mathbf{G}}_X\boldsymbol{\beta}_{in/out}^\mathrm{T} &=& \boldsymbol{\gamma}_{X,in/out}^\mathrm{T}\mathbf{G}_{X,in/out}, \\
\mathbf{G}_{Z,in/out}\boldsymbol{\beta}_{in/out} &=& \boldsymbol{\gamma}_{Z,in/out}\overline{\mathbf{G}}_Z, \\
(\mathbf{g}_{X,in/out}\otimes\openone)\mathbf{J}_{X,in/out} &=& \overline{\mathbf{J}}_X\boldsymbol{\beta}_{in/out}^\mathrm{T}, \\
\overline{\mathbf{G}}_X(\mathbf{J}_{Z,in/out}\boldsymbol{\beta}_{in/out})^\mathrm{T} &=& 0.
\end{eqnarray}
Matrices $\boldsymbol{\beta}_{in/out}$ map qubits from the support of $\mathrm{HGP}(\overline{\mathbf{H}},\mathbf{H}_2)$ to the support of $\mathrm{HGP}(\mathbf{H}_{in/out},\mathbf{H}_2)$,  matrices $\boldsymbol{\gamma}_{\alpha,in/out}$ map $\alpha$ checks of $\mathrm{HGP}(\overline{\mathbf{H}},\mathbf{H}_2)$ to checks of $\mathrm{HGP}(\mathbf{H}_{in/out},\mathbf{H}_2)$, and they connect the three quantum codes. The error correction check matrices are 
\begin{eqnarray}
\mathbf{B}_X &=& \boldsymbol{\gamma}_{X,in}\oplus d_T\openone_{\overline{r}_X}\oplus\boldsymbol{\gamma}_{X,out} \notag \\
&&\times\left(\begin{matrix}
\openone_{d_T+2}\otimes\overline{\mathbf{G}}_X & \mathbf{R}_{d_T+2}^\mathrm{T}\otimes\openone_{\overline{r}_X}
\end{matrix}\right) \notag \\
&&\times\boldsymbol{\beta}_{in}^\mathrm{T}\oplus d_T\openone_{\overline{n}}\oplus\boldsymbol{\beta}_{out}^\mathrm{T} \notag \\
&&\oplus\boldsymbol{\gamma}_{X,in}^\mathrm{T}\oplus (d_T-1)\openone_{\overline{r}_X}\oplus\boldsymbol{\gamma}_{X,out}^\mathrm{T}
\end{eqnarray}
and 
\begin{eqnarray}
\mathbf{B}_Z &=& 2\boldsymbol{\gamma}_{Z,in}\oplus (d_T-1)\openone_{\overline{r}_Z}\oplus 2\boldsymbol{\gamma}_{Z,out} \notag \\
&&\times\left(\begin{matrix}
\openone_{d_T+3}\otimes\overline{\mathbf{G}}_Z & \mathbf{R}_{d_T+3}^\mathrm{T}\otimes\openone_{\overline{r}_Z}
\end{matrix}\right) \notag \\
&&\times 2\boldsymbol{\beta}_{in}^\mathrm{T}\oplus (d_T-1)\openone_{\overline{n}}\oplus 2\boldsymbol{\beta}_{out}^\mathrm{T} \notag \\
&&\oplus\boldsymbol{\gamma}_{Z,in}^\mathrm{T}\oplus d_T\openone_{\overline{r}_Z}\oplus\boldsymbol{\gamma}_{Z,out}^\mathrm{T}.
\end{eqnarray}
These check matrices satisfy the requirement $\mathbf{A}_X\mathbf{B}_X^\mathrm{T} = \mathbf{A}_Z\mathbf{B}_Z^\mathrm{T}$. In Appendix \ref{app:matrices}, we illustrate matrices $\mathbf{A}_\alpha$ and $\mathbf{B}_\alpha$ in the block matrix form by taking $d_T = 3$ as an example. 

Now, we verify the propagation of logical operators using logical generator matrices 
\begin{widetext}
\begin{eqnarray}
\mathbf{L}_X = \left(\begin{matrix}
(\mathbf{g}_{X,in}\otimes\openone)\mathbf{J}_{X,in} &d_T\times\overline{\mathbf{J}}_X & (\mathbf{g}_{X,out}\otimes\openone)\mathbf{J}_{X,out} & (d_T+1)\times 0
\end{matrix}\right)
\end{eqnarray}
and 
\begin{eqnarray}
\mathbf{L}_Z^{(l)} &=& \Biggl(\begin{matrix}
2\times(\mathbf{g}_{Z,in}\otimes\openone)\mathbf{J}_{Z,in} & (l-1)\times(\mathbf{g}_{Z,in}\otimes\openone)\mathbf{J}_{Z,in}\boldsymbol{\beta}_{in} & (d_T-l)\times(\mathbf{g}_{Z,out}\otimes\openone)\mathbf{J}_{Z,out}\boldsymbol{\beta}_{out} & 2\times(\mathbf{g}_{Z,out}\otimes\openone)\mathbf{J}_{Z,out}
\end{matrix}\Biggr. \notag \\ &&\Biggl.\begin{matrix}
l\times 0 & \boldsymbol{\delta} & (d_T-l+1)\times 0
\end{matrix}\Biggr),
\end{eqnarray}
\end{widetext}
where we have introduced a notation $i\times\mathbf{u} = (\mathbf{u},\mathbf{u},\ldots)$ to denote repeating the matrix $\mathbf{u}$ in a row with the multiple number $i$. The role of the matrix 
\begin{eqnarray}
\boldsymbol{\delta} = (0_{m_G\times r_{in}n_2},0_{m_G\times r_{out}n_2},\mathbf{M}_Z\otimes \mathbf{K}_2)
\end{eqnarray}
is to couple $\mathbf{J}_{Z,in}\boldsymbol{\beta}_{in}$ and $\mathbf{J}_{Z,out}\boldsymbol{\beta}_{out}$: It satisfies the relation $(\mathbf{g}_{Z,in}\otimes\openone)\mathbf{J}_{Z,in}\boldsymbol{\beta}_{in}+(\mathbf{g}_{Z,out}\otimes\openone)\mathbf{J}_{Z,out}\boldsymbol{\beta}_{out} = \boldsymbol{\delta}\overline{\mathbf{G}}_Z$. Here, $l = 1,2,\ldots,d_T$, and all the generator matrices $\mathbf{L}_Z^{(l)}$ are equivalent in the sense that $\mathrm{rowsp}(\mathbf{L}_Z^{(l)} + \mathbf{L}_Z^{(l')}) \in \mathrm{rowsp}(\mathbf{B}_Z)$, i.e. for all undetectable spacetime errors $\mathbf{e}_Z$, $\mathbf{L}_Z^{(l)}\mathbf{e}_Z^\mathrm{T} = \mathbf{L}_Z^{(l')}\mathbf{e}_Z^\mathrm{T}$. The generator matrices satisfy the requirement $\mathbf{A}_X\mathbf{L}_X^\mathrm{T} = \mathbf{A}_Z\mathbf{L}_Z^{(l)\mathrm{T}} = 0$ for all $l$. In Appendix \ref{app:matrices}, we also illustrate matrices $\mathbf{L}_X$ and $\mathbf{L}_Z^{(l)}$ in the block matrix form by taking $d_T = 3$ and $l = 2$ as an example. We can find that the stabiliser circuit maps input logical operators from $(\mathbf{g}_{\alpha,in}\otimes\openone)\mathbf{J}_{\alpha,in}$ to output logical operators $(\mathbf{g}_{\alpha,out}\otimes\openone)\mathbf{J}_{\alpha,out}$ ($\alpha = X,Z$). Therefore, the stabiliser circuit realises the desired logical circuit $\mathbf{a}$ (for both $X$ and $Z$ operators). 

{\bf Fault tolerance.} Now, we can justify the fault tolerance through the circuit code distance. For the three quantum codes utilised in the protocol, their code distances have the lower bound $\min\{d_{in},d_{out},d_2\}$. The circuit code distance has the lower bound $d(\mathbf{A},\mathbf{B},\mathbf{L}) \geq \min\{d_{in},d_{out},d_2,d_T\}$. We prove it in what follows. 

We modify the distance proof for transversal circuits. Let $\mathbf{e}_X = (\mathbf{u}_0,\ldots,\mathbf{u}_{d_T+1},\mathbf{v}_1,\ldots,\mathbf{v}_{d_T+1})$ be the spacetime error on $X$ operators. Here, $\mathbf{u}_i$ represent data qubit errors, and $\mathbf{v}_j$ represent measurement errors. Let $\mathbf{u}_{sum} = \mathbf{u}_0\boldsymbol{\beta}_{in}+\mathbf{u}_1+\cdots+\mathbf{u}_{d_T}+\mathbf{u}_{d_T+1}\boldsymbol{\beta}_{out}$. Suppose $\mathbf{e}_X$ is a logical error, i.e. $\mathbf{B}_X\mathbf{e}_X^\mathrm{T} = 0$ and $\mathbf{L}_X\mathbf{e}_X^\mathrm{T} \neq 0$. Then, $\overline{\mathbf{G}}_X\mathbf{u}_{sum}^\mathrm{T} = 0$ and $\overline{\mathbf{J}}_X\mathbf{u}_{sum}^\mathrm{T} \neq 0$, i.e. $\mathbf{u}_{sum}$ is a logical error for the quantum code $\mathrm{HGP}(\overline{\mathbf{H}},\mathbf{H}_2)$. Therefore, $\vert \mathbf{e}_X \vert \geq \vert \mathbf{u}_{sum} \vert \geq \min\{d_{in},d_{out},d_2\}$. 

The situation for errors on $Z$ operators is slightly more complicated. Let $\mathbf{e}_Z = (\mathbf{u}_0,\ldots,\mathbf{u}_{d_T+2},\mathbf{v}_1,\ldots,\mathbf{v}_{d_T+2})$ be the spacetime error on $Z$ operators. We prove by contradiction: suppose $\mathbf{e}_Z$ is a logical error but $\vert \mathbf{e}_Z \vert < \min\{d_{in},d_{out},d_2,d_T\}$. Since $\vert \mathbf{e}_Z \vert < d_T$, there exists $1\leq j\leq d_T$ such that $\mathbf{v}_{j+1} = 0$. Then, we can decompose the spacetime error into two errors $\mathbf{e}_Z = \mathbf{e}_{Z,in}+\mathbf{e}_{Z,out}$ , where $\mathbf{e}_{Z,in} = (\mathbf{u}_0,\ldots,\mathbf{u}_j,0,\ldots,0,\mathbf{v}_1,\ldots,\mathbf{v}_j,0,\ldots,0)$  and $\mathbf{e}_{Z,out} = (0,\ldots,0,\mathbf{u}_{j+1},\ldots,\mathbf{u}_{d_T+2},0,\ldots,0,\mathbf{v}_{j+2},\ldots,\mathbf{v}_{d_T+2})$. Let $\mathbf{u}_{in} = \mathbf{u}_0+\mathbf{u}_1+\mathbf{u}_2\boldsymbol{\beta}_{in}+\cdots+\mathbf{u}_j\boldsymbol{\beta}_{in}$. Since $\mathbf{e}_Z$ is an undetectable error and $\mathbf{v}_{j+1} = 0$, $\mathbf{B}_Z\mathbf{e}_{Z,in}^\mathrm{T} = 0$ and $\mathbf{G}_{Z,in}\mathbf{u}_{in}^\mathrm{T} = 0$. Because $\vert\mathbf{u}_{in}\vert \leq \vert \mathbf{e}_Z \vert < \min\{d_{in},d_2\}$, $\mathbf{u}_{in}$ is a trivial error for the quantum code $\mathrm{HGP}(\mathbf{H}_{in},\mathbf{H}_2)$, i.e. $\mathbf{J}_{Z,in}\mathbf{u}_{in}^\mathrm{T} = 0$. Then, $\mathbf{L}_Z^{(j)}\mathbf{e}_{in}^\mathrm{T} = (\mathbf{g}_{Z,in}\otimes\openone)\mathbf{J}_{Z,in}\mathbf{u}_{in}^\mathrm{T} = 0$. Similarly, $\mathbf{L}_Z^{(j)}\mathbf{e}_{out}^\mathrm{T} = 0$. Therefore, $\mathbf{e}_Z$ is a trivial error (i.e. $\mathbf{L}_Z^{(j)}\mathbf{e}_Z^\mathrm{T} = 0$), which is a contradiction. 

{\bf Row and column operations.} We have given a protocol that can implement a CSS-type circuit on logical qubits, in which each row is operated as a whole; see Fig. \ref{fig:primitive_operations}(a). We call such an operation the row operation. In a similar way, we can realise the column operation, which is illustrated in Fig. \ref{fig:primitive_operations}(b): Compared with the row operation, the roles of the following items are exchanged, including two check matrices $\mathbf{H}_1$ and $\mathbf{H}_2$ in hypergraph product codes, $X$ and $Z$ Pauli operators, and rows and columns in the logical qubit array. 

In the row operation, we use three codes $\mathrm{HGP}(\mathbf{H}_{in},\mathbf{H}_2)$, $\mathrm{HGP}(\mathbf{H}_{out},\mathbf{H}_2)$ and $\mathrm{HGP}(\overline{\mathbf{H}},\mathbf{H}_2)$ to implement a logical circuit. In the column operation, we also use three codes $\mathrm{HGP}(\mathbf{H}_1,\mathbf{H}_{in}')$, $\mathrm{HGP}(\mathbf{H}_1,\mathbf{H}_{out}')$ and $\mathrm{HGP}(\mathbf{H}_1,\overline{\mathbf{H}}')$. 
The input and output blocks are encoded in $\mathrm{HGP}(\mathbf{H}_1,\mathbf{H}_{in}')$ and $\mathrm{HGP}(\mathbf{H}_1,\mathbf{H}_{out}')$, respectively. The bridge code $\mathrm{HGP}(\mathbf{H}_1,\overline{\mathbf{H}}')$ is constructed according to the desired logical circuit. The circuit of the column operation is similar to Fig. \ref{fig:circuit_new}. 

The column operation realises logical operations different from the row operation. First, instead of the initialisation and measurement in the $X$ basis, the column operation realises a one-layer logical circuit consisting of the initialisation and measurement in the $Z$ basis on a number of logical qubits and Clifford gates on some other logical qubits. Given such a logical circuit, we can work out the corresponding $\mathbf{a}_{Z,in}$ and $\mathbf{a}_{Z,out}$. Then, the bridge check matrix in the column operation reads 
\begin{eqnarray}
\overline{\mathbf{H}}' = \left(\begin{matrix}
\mathbf{H}_{in}' & 0 \\
0 & \mathbf{H}_{out}' \\
\mathbf{a}_{Z,in}\mathbf{K}_{in}^\mathrm{\prime rT} & \mathbf{a}_{Z,out}\mathbf{K}_{out}^\mathrm{\prime rT}
\end{matrix}\right).
\end{eqnarray}
Second, the column operation implements the desired circuit on logical qubits in a row (instead of a column), and the same circuit is implemented on all rows, as shown in Fig. \ref{fig:primitive_operations}(b). 

In the previous discussions, we have assumed that the first $m_M$ input logical qubits are measured, the last $m_I$ output logical qubits are initialised, and gates are applied on other logical qubits. This assumption is unnecessary, and we can choose any subset of logical qubits to be measured and initialised, as shown in Fig. \ref{fig:primitive_operations}. Such an choice can be achieved by permuting rows in $\mathbf{K}_{in}$ and $\mathbf{K}_{out}$ (or $\mathbf{K}_{in}'$ and $\mathbf{K}_{out}'$) in the bridge code. 

\subsection{Protocol for universal fault-tolerant quantum computing}
\label{sec:universal}

\begin{figure}[tbp]
\centering
\includegraphics[width=\linewidth]{\figpath/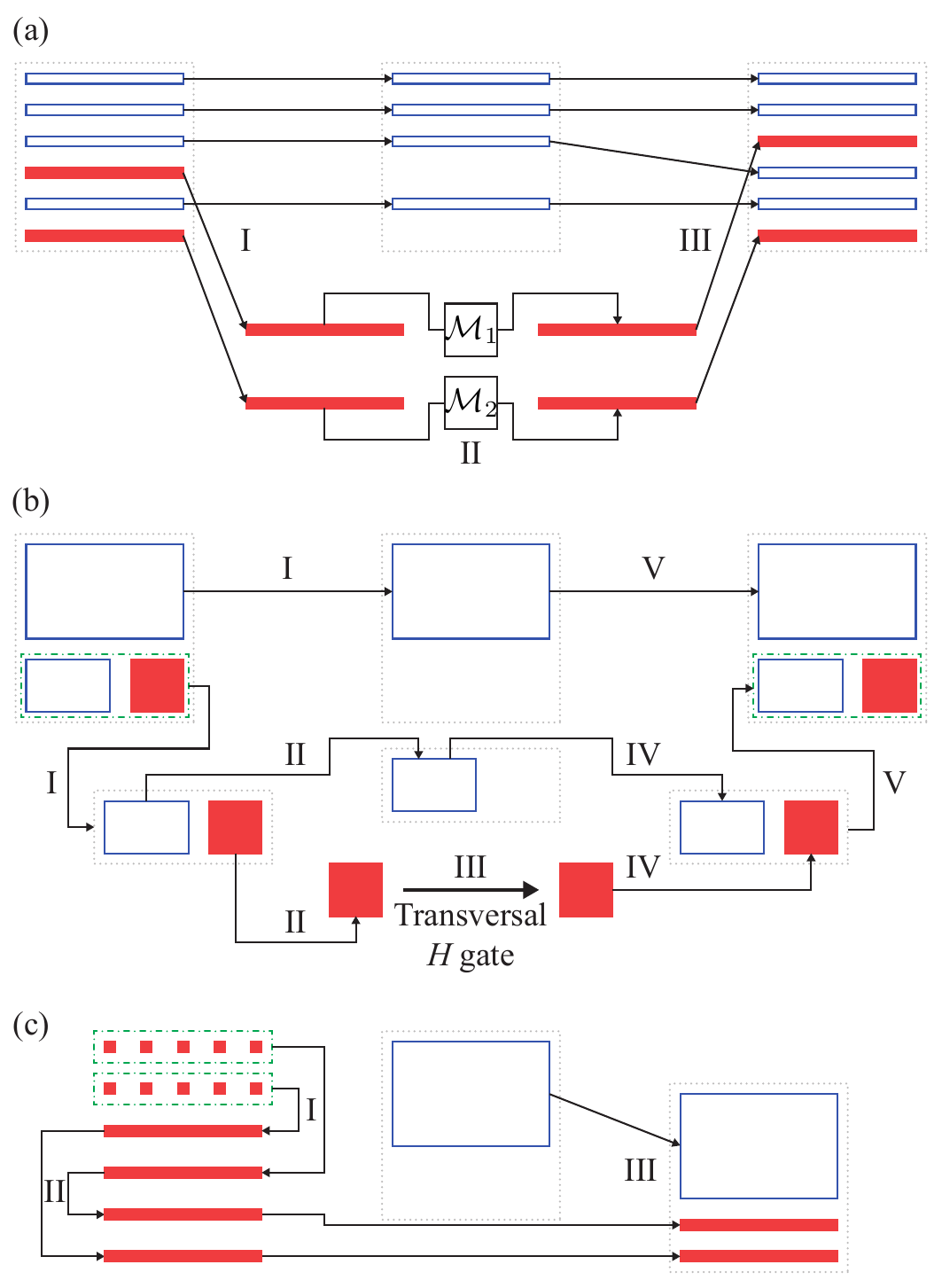}
\caption{
The protocol of universal fault-tolerant quantum computing. Each red filled rectangle represents an array of logical qubits in the active mode, and logical operations are applied on these active qubits. Each blue empty rectangle represents an array of logical qubits in the idle mode. Rectangles in the same dotted box are encoded in the same block. 
}
\label{fig:universal_operations}
\end{figure}

Our protocol includes four operations: The row and column operations illustrated in Fig. \ref{fig:primitive_operations}, logical Hadamard gate and magic state preparation. This operation set is sufficient for universal quantum computing \cite{Fowler2012}. In our protocol, the logical Hadamard gate and magic state preparation are also realised through the row and column operations. 

In our protocol, we suppose that logical qubits are stored in a block encoded in the code $\mathrm{HGP}(\mathbf{H}_0,\mathbf{H}_0)$, which serves as the memory. In addition to the memory, ancilla blocks are used for implementing logical operations on the memory. The ancilla blocks are also encoded in hypergraph product codes, constructed from check matrices $\mathbf{H}_0$, $\mathbf{H}_3$ and $\mathbf{R}_d$. Here, $\mathbf{H}_3$ is used for implementing the logical Hadamard gate, and $\mathbf{R}_d$ is the check matrix of the repetition code with the distance $d = d_R,d_R'$. The code of $\mathbf{H}_i$ has parameters $[n_i,k_i,d_i]$. The distances $d_0$, $d_3$ and $d_R$ are comparable, but $d_R'$ could be small. The protocol is illustrated in Fig. \ref{fig:universal_operations}. 

{\bf Individual-addressing CSS-type logical operations.} The protocol is shown in Fig. \ref{fig:universal_operations}(a). We use the row and column operations to transfer logical qubits between blocks. To address logical qubits individually, we transfer logical qubits in some rows from the memory to $\mathrm{HGP}(\mathbf{R}_{d_R},\mathbf{H}_0)$ blocks in a row operation. Notice that each $\mathrm{HGP}(\mathbf{R}_{d_R},\mathbf{H}_0)$ block stores only one row of logical qubits. Then, we implement CSS-type circuits on the $\mathrm{HGP}(\mathbf{R}_{d_R},\mathbf{H}_0)$ blocks using column operations. Because the $\mathrm{HGP}(\mathbf{R}_{d_R},\mathbf{H}_0)$ blocks are independent, we can choose to apply different circuits to them. Finally, we transfer the logical qubits in $\mathrm{HGP}(\mathbf{R}_{d_R},\mathbf{H}_0)$ blocks back to the memory in a row operation. In a similar way, we can operate logical qubits in columns. 

Suppose the number of $\mathrm{HGP}(\mathbf{R}_{d_R},\mathbf{H}_0)$ blocks is $2L$. Then, we can operate $L$ rows in the memory simultaneously. Let $l_1,l_2,\ldots,l_L$ be the label of the rows. On the row-$l_i$, we want to apply a CSS-type circuit $\mathcal{M}_i$. In step-I, we take $\mathbf{H}_{in} = \mathbf{H}_2 = \mathbf{H}_0$ and $\mathbf{H}_{out} = \mathbf{H}_0\oplus L\mathbf{R}_{d_R}$ in the row operation. Then, the entire output block $\mathrm{HGP}(\mathbf{H}_{out},\mathbf{H}_2)$ consists of $L+1$ independent blocks, including one $\mathrm{HGP}(\mathbf{H}_0,\mathbf{H}_0)$ block and $L$ blocks of $\mathrm{HGP}(\mathbf{R}_{d_R},\mathbf{H}_0)$. We construct the bridge code such that the Clifford gate is the identity gate applied on all input logical qubits (i.e. $m_M = 0$ and $\mathbf{M}_X = \openone_{k_0}$), and $L$ rows in the output block $\mathrm{HGP}(\mathbf{H}_{out},\mathbf{H}_2)$ are initialised. By permuting rows in $\mathbf{K}_{out}$, we can choose to transfer each row in the input $\mathrm{HGP}(\mathbf{H}_0,\mathbf{H}_0)$ block with one of labels $l_1,l_2,\ldots,l_L$ to an output $\mathrm{HGP}(\mathbf{R}_{d_R},\mathbf{H}_0)$ block, and we can choose to initialise rows with labels $l_1,l_2,\ldots,l_L$ in the output $\mathrm{HGP}(\mathbf{H}_0,\mathbf{H}_0)$  block. At this stage, we have used $L$ blocks of $\mathrm{HGP}(\mathbf{R}_{d_R},\mathbf{H}_0)$. In step-II, we use the other $L$ blocks of $\mathrm{HGP}(\mathbf{R}_{d_R},\mathbf{H}_0)$ to realise logical operations. In the column operations, we allocate a fresh $\mathrm{HGP}(\mathbf{R}_{d_R},\mathbf{H}_0)$ block (output block) to each $\mathrm{HGP}(\mathbf{R}_{d_R},\mathbf{H}_0)$ block carrying a row from the memory (input block). For each pair of $\mathrm{HGP}(\mathbf{R}_{d_R},\mathbf{H}_0)$ blocks, we apply the column operation taking $\mathbf{H}_1 = \mathbf{H}_0$ and $\mathbf{H}_{in}' = \mathbf{H}_{out}' = \mathbf{R}_{d_R}$. In the column operation applied on row-$l_i$, we construct the bridge code according to the desired circuit $\mathcal{M}_i$. In step-III, we take $\mathbf{H}_{in} = \mathbf{H}_0\oplus L\mathbf{R}_{d_R}$ and $\mathbf{H}_{out} = \mathbf{H}_2 = \mathbf{H}_0$ in the row operation, i.e. the entire input block $\mathrm{HGP}(\mathbf{H}_{in},\mathbf{H}_2)$ includes one $\mathrm{HGP}(\mathbf{H}_0,\mathbf{H}_0)$ block and $L$ blocks of $\mathrm{HGP}(\mathbf{R}_{d_R},\mathbf{H}_0)$. We construct the bridge code such that rows with labels $l_1,l_2,\ldots,l_L$ in the input $\mathrm{HGP}(\mathbf{H}_0,\mathbf{H}_0)$ block are measured (which are just initialised in step-I), and all other input rows are transferred to the output $\mathrm{HGP}(\mathbf{H}_0,\mathbf{H}_0)$ block. In this way, we realise independent logical operations on rows $l_1,l_2,\ldots,l_L$. By exchanging the roles of rows and columns, we can also realise independent logical operations on $L$ columns. 

Now, we consider the topology of the logical qubit network. Using the above protocol, we can implement logical operations on each row or column. In each row or column, logical qubits are all-to-all connected, e.g. we can apply the controlled-NOT gate on any pair of logical qubits. On such a network, we can exchange any pair of logical qubits with three swap gates. 

{\bf Logical Hadamard gate.} The protocol is shown in Fig. \ref{fig:universal_operations}(b). To implement the Hadamard gate on a set of logical qubits, we transfer the logical qubits from the memory to a $\mathrm{HGP}(\mathbf{H}_3,\mathbf{H}_3)$ block using row and column operations. Then, we apply the transversal Hadamard gate on the $\mathrm{HGP}(\mathbf{H}_3,\mathbf{H}_3)$ block. Finally, we transfer the logical qubits from the $\mathrm{HGP}(\mathbf{H}_3,\mathbf{H}_3)$ block back to the memory. 

Before the logical Hadamard gate, we need to transfer the target logical qubits to a sub-array of the memory through swap gates. Notice that we can exchange any pair of logical qubits with three swap gates. The sub-array has the size of $q\times q$, where $q\leq k_3$. In step-I, we transfer the sub-array to an $\mathrm{HGP}(\mathbf{H}_3,\mathbf{H}_0)$ block in a row operation. In step-II, we transfer the sub-array to the $\mathrm{HGP}(\mathbf{H}_3,\mathbf{H}_3)$ block in a column operation. In step-III, the transversal Hadamard gate is applied. In step-IV and step-V, the sub-array is transferred back to the memory. To deal with the case that the number of target logical qubits is not a square number, we may reserve some logical qubits only for filling the square. The maximum number of such reserved logical qubits is $k_3^2-(k_3-1)^2 = 2k_3-1$. 

{\bf Magic-state preparation.} The protocol is shown in Fig. \ref{fig:universal_operations}(c). First, we prepare magic states on the surface code $\mathrm{HGP}(\mathbf{R}_{d_R'},\mathbf{R}_{d_R'})$ according to Ref. \cite{Li2015}. We only need to encode low-fidelity magic states at this stage because we can increase the fidelity in magic state distillation. Therefore, we can take a small code distance, e.g. $d_R' = 3$. Then, we transfer magic states to the memory. 

We use $L'k_0$ blocks of $\mathrm{HGP}(\mathbf{R}_{d_R'},\mathbf{R}_{d_R'})$ and $L'$ blocks for each of $\mathrm{HGP}(\mathbf{R}_{d_R'},\mathbf{H}_0)$ and $\mathrm{HGP}(\mathbf{R}_{d_R},\mathbf{H}_0)$. In step-I, we transfer the prepared magic states from $\mathrm{HGP}(\mathbf{R}_{d_R'},\mathbf{R}_{d_R'})$ blocks to $\mathrm{HGP}(\mathbf{R}_{d_R'},\mathbf{H}_0)$ blocks in a column operation. In step-II, we transfer magic states from $\mathrm{HGP}(\mathbf{R}_{d_R'},\mathbf{H}_0)$ blocks to $\mathrm{HGP}(\mathbf{R}_{d_R},\mathbf{H}_0)$ blocks in a row operation. Then, the magic states are protected by a large code distance $\min\{d_R,d_0\}$. In step-III, we transfer magic states from $\mathrm{HGP}(\mathbf{R}_{d_R},\mathbf{H}_0)$ blocks to the memory. Given CSS-type logical operations and the logical Hamdarad gate, we can distil the magic states to high fidelity and use them to implement $S$ and $T$ gates \cite{Fowler2012}. 

{\bf Resource efficiency.} We compare our protocol to the protocol reported in Ref. \cite{Xu2024}, in which logical gates are realised by coupling logical qubits in the memory to surface-code logical qubits; we call it the surface-ancilla protocol. Both protocols utilise lattice surgery. In the surface-ancilla protocol, only logical qubits satisfying certain conditions can be operated simultaneously, such as logical qubits in different lines in a hypergraph product code. In our protocol, we can operate all logical qubits in parallel, given sufficient physical qubits. Next, we compare the qubit cost. Even if there is a way of operating an arbitrary number of logical qubits simultaneously in the surface-ancilla protocol, our protocol uses fewer qubits. 

Each block of the code $\mathrm{HGP}(\mathbf{H}_1,\mathbf{H}_2)$ uses $n = n_1n_2+r_1r_2$ physical qubits. For simplicity, we only count the $n_1n_2$ physical qubits. When $\mathbf{H}_1$ and $\mathbf{H}_2$ are full rank, $r_1r_2<n_1n_2$, i.e. the simplification causes a difference in the qubit count up to a factor of two. We also assume $d_0 = d_3 = d_R$ in the qubit cost estimation. In the surface-ancilla protocol, if we assume that the surface code has the same code distance as the memory, the qubit cost for operating one logical qubit is at least $2d_0^2$ ($d_0^2$ physical qubits for the surface code and at least $d_0^2$ physical qubits for a code connecting the memory with the surface code). 

In the CSS-type logical operations, we can operate $Lk_0$ logical qubits in parallel with $2n_0^2+2Ld_0n_0$ physical qubits. We can find that the qubit cost depends on the logical qubit number in the parallel operation. For a balance between the qubit and time costs, we take $L \approx n_0/d_0$, i.e. the total qubit number is about $4n_0^2$, four times the qubit number in the memory. In the surface-ancilla protocol, to operate the same number of logical qubits, the cost is at least $n_0^2 + 2Lk_0d_0^2 \approx n_0^2 + 2n_0k_0d_0$. The cost ratio is $1/4 + k_0d_0/(2n_0)$. If we take a proper linear code $\mathbf{H}_0$, the encoding rate $k_0/n_0$ is finite. When the code distance $d_0$ is large such that $3/(2d_0)$ is smaller than the encoding rate, our protocol uses fewer qubits. 

In the logical Hadamard gate, we can operate up to $k_3^2$ logical qubits in parallel with $2n_0^2 + 2n_0n_3 + n_3^2$ physical qubits. Suppose we take $n_3 \approx (\sqrt{3}-1)n_0$, the total qubit number is also about $4n_0^2$. In the surface-ancilla protocol, to operate the same number of logical qubits, the cost is at least $n_0^2 + 2k_3^2d_0^2$. The cost ratio is $1/4 + (2-\sqrt{3})k_3^2d_0^2/n_3^2$. If two codes $\mathrm{HGP}(\mathbf{H}_0,\mathbf{H}_0)$ and $\mathrm{HGP}(\mathbf{H}_3,\mathbf{H}_3)$ have similar encoding rates, the ratio is $\Omega(k_0^2d_0^2/n_0^2)$. 

In the magic state preparation, we can prepare $L'k_0$ logical qubits in magic states in parallel with $2n_0^2 + L'd_0n_0 + L'd_R'n_0 + L'k_0d_R'^2$ physical qubits. Suppose $d_0\gg d_R'$ and neglect $L'd_R'n_0 + L'k_0d_R'^2$, we can take $L' = 2n_0/d_0$ such that the total qubit number is about $4n_0^2$. In the surface-ancilla protocol, to prepare the same number of logical qubits in magic states, the cost is at least $n_0^2 + 2L'k_0d_0^2 \approx n_0^2 + 4n_0k_0d_0$. The cost ratio is $1/4 + k_0d_0/n_0$. }

\section{Conclusions}

In this work, we have devised a method to represent stabiliser circuits using classical LDPC codes. For each circuit, we can generate a Tanner graph following a straightforward approach, with the corresponding code encapsulating correlations and fault tolerance within the circuit. Through the LDPC representation, we can systematically determine the logical operations executed by the circuit and identify all correlations useful for error correction. This allows us to enhance error correction capabilities by utilising the full correlation set and optimising the error correction check matrix. With the error-correction check matrix, we can assess the fault tolerance of the circuit through the circuit code distance. However, the distance defined in the current work does not consider correlated errors. For instance, in a controlled-NOT gate, the occurrence of an error on two qubits with a probability comparable to that of a single-qubit error should be treated as a single error contribution rather than two. Currently, it remains unclear how to adapt the distance definition to accommodate this aspect, and we leave this question for future work. 

In addition to circuit analysis, the LDPC representation serves as a tool for optimising fault-tolerant circuits. As demonstrated in Sec. \ref{sec:construction}, multiple circuits can be generated from the same LDPC code. These circuits implement the same correlations on logical qubits, thereby implying the same logical operations while sharing a common lower bound for their circuit code distances. This affords opportunities for circuit optimisation. For instance, selecting single-vertex paths in the path partition minimises time costs, whereas reducing the number of paths minimises qubit costs. The optimisation objectives extend beyond time and qubit costs, encompassing considerations such as logical error rates and qubit-network topology. Given circuits realising the same logical operations with the same circuit code distance, preference is naturally given to those employing fewer error-prone operations and causing fewer correlated errors. In Sec. \ref{sec:construction}, we have exclusively explored the scenario where template trees are paths in symmetric splitting. Utilising general template trees offers even greater flexibility for circuit optimisation, particularly concerning qubit-network topology. 

An intriguing application of the LDPC representation lies in the creation of novel fault-tolerant circuits through the construction of classical LDPC codes. In this context, a prospective avenue for future exploration involves systematically deriving LDPC codes corresponding to existing fault-tolerant circuits. This endeavour may provide insights into the structural elements responsible for fault tolerance and logical correlations in quantum circuits, as we do in Secs. \ref{sec:example} and \ref{sec:new_circuit}. Remarkably, given the resemblance in distance definitions between stabiliser circuits and CSS codes, as elaborated in Sec. \ref{sec:distance}, methodologies devised for constructing CSS codes, particularly those tailored for quantum LDPC codes, hold promise for the development of fault-tolerant quantum circuits. 

\begin{acknowledgments}
This work is supported by the National Natural Science Foundation of China (Grant Nos. 12225507, 12088101) and NSAF (Grant No. U1930403).
\end{acknowledgments}

\appendix

\RED{\section{Tanner graphs of general Pauli measurements}
\label{app:measurements}

\begin{figure}[tbp]
\centering
\includegraphics[width=\linewidth]{\figpath/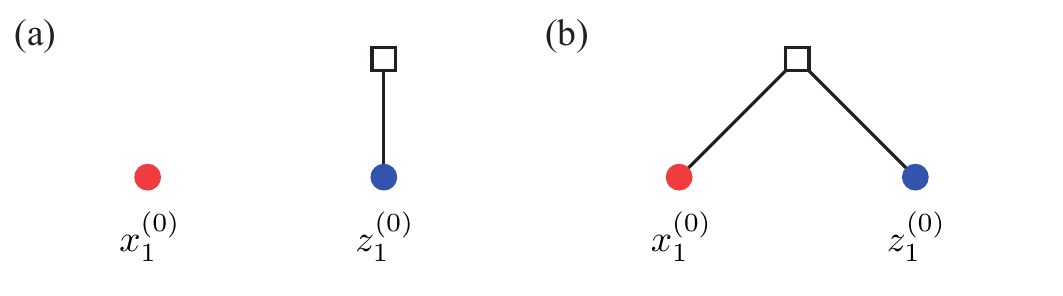}
\caption{
(a) Tanner graph for the measurement of the $X$ operator. (b) Tanner graph for the measurement of the $Y$ operator. 
}
\label{fig:MXY}
\end{figure}

The Tanner graphs representing $X$ and $Y$ measurements are shown in Fig. \ref{fig:MXY}. The principle behind constructing these Tanner graphs is that their codewords are preserved quantities. For example, in the case of the $X$ measurement, the $X$ operator is preserved, but $Y$ and $Z$ are not: regardless of the input state, the mean values of $Y$ and $Z$ are always zero after the measurement, while the mean value of $X$ remains unchanged. It can be verified that the Tanner graph of the $X$ ($Y$) measurement has only one codeword, corresponding to the $X$ ($Y$) operator. Based on this principle, we can also construct Tanner graphs that represent multi-qubit measurements. An example is shown in Fig. \ref{fig:Mgeneral}(a). By combining Tanner graphs of measurements, we can generate Tanner graphs representing a sequence of non-commuting measurements, as shown in Fig. \ref{fig:Mgeneral}(b). 

\begin{figure}[tbp]
\centering
\includegraphics[width=\linewidth]{\figpath/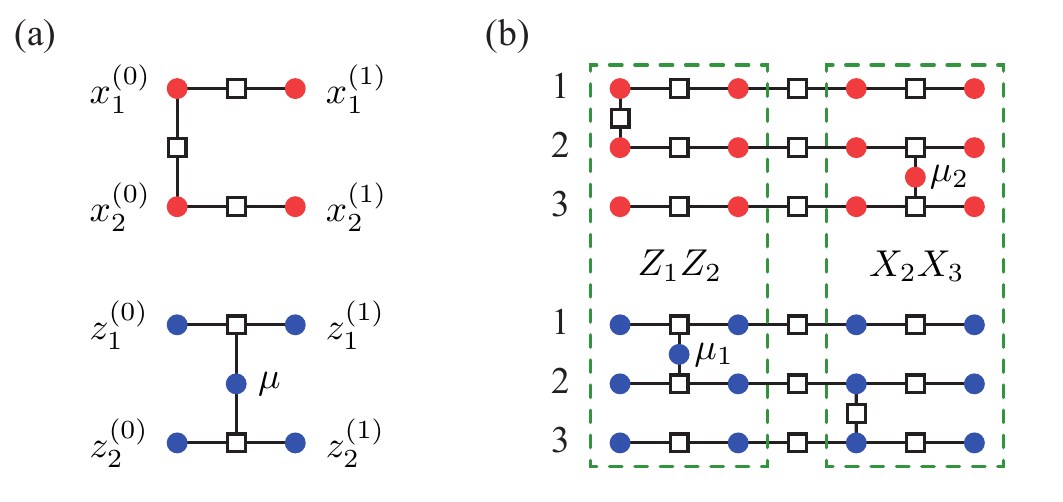}
\caption{
(a) Tanner graph for the measurement of $Z_1Z_2$ operator. The additional bit vertex corresponds to the measurement outcome $\mu$. (b) Tanner graph for a sequence of two anti-commuting measurements $Z_1Z_2$ and $X_2X_3$. The two additional bit vertices correspond to outcomes $\mu_1$ and $\mu_2$ of the two measurements, respectively. 
}
\label{fig:Mgeneral}
\end{figure}

An alternative way of constructing Tanner graphs representing general Pauli measurements is through combining primitive Tanner graphs. The $X$ measurement is equivalent to an $H$ gate followed by a $Z$ measurement, with its Tanner graph shown in Fig. \ref{fig:MXY_combined}(a). The two Tanner graphs of the $X$ measurement are equivalent. We can transform the graph in Fig. \ref{fig:MXY_combined}(a) to the graph in Fig. \ref{fig:MXY}(a) by deleting vertices in purple dotted ovals. Such an operation is the inverse operation of the bit splitting (see Definition \ref{def:bit_splitting}), which preserves properties of the Tanner graph in representing stabiliser circuits (see Lemma \ref{lem:bit_splitting}). 

\begin{figure}[tbp]
\centering
\includegraphics[width=\linewidth]{\figpath/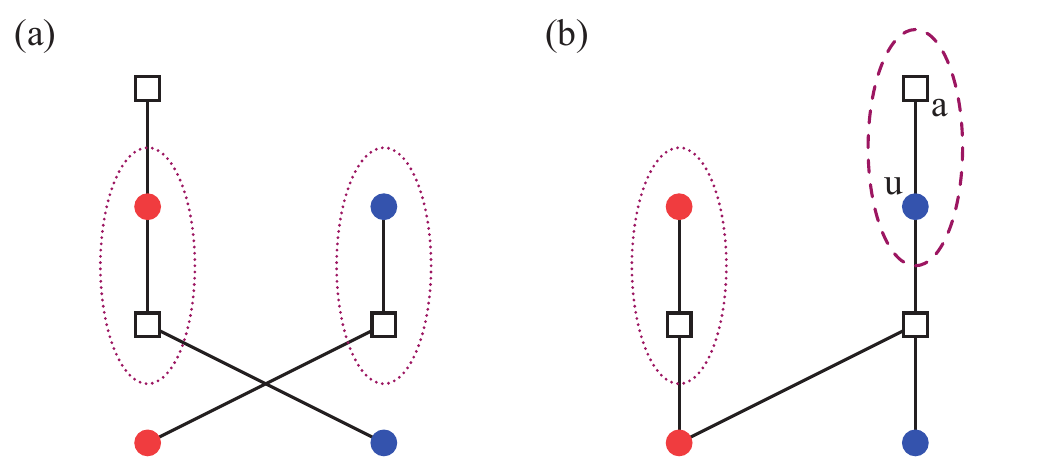}
\caption{
(a) Tanner graph for the measurement of the $X$ operator constructed by combining an $H$ gate and a $Z$ measurement. (b) Tanner graph for the measurement of the $Y$ operator constructed by combining an $S$ gate and an $X$ measurement. 
}
\label{fig:MXY_combined}
\end{figure}

Similarly, the $Y$ measurement is equivalent to an $S$ gate followed by an $X$ measurement, with its Tanner graph shown in Fig. \ref{fig:MXY_combined}(b). The two Tanner graphs of the $Y$ measurement are equivalent. We can transform the graph in Fig. \ref{fig:MXY_combined}(b) to the graph in Fig. \ref{fig:MXY}(b) by deleting vertices in purple dotted and dashed ovals. The dotted oval corresponds to the inverse bit splitting. Vertices in the dashed oval are redundant: because of the check vertex $a$, the bit vertex $u$ always takes the value of zero; therefore, removing the two vertices changes nothing. }

\section{Proof of Theorem \ref{the:errorfree}}
\label{app:errorfree}

According to Proposition \ref{prop:gate}, the error-free codeword equation holds when the circuit consists of only Clifford gates. For a general Clifford circuit, we prove the equation by constructing an equivalent circuit, called extended circuit. In the extended circuit, all initialisation operations are in the first layer, all measurement operations are in the last layer, and only Clifford gates are applied in other layers. The equivalence is justified in two aspects: i) by tracing out ancilla qubits, the two circuits realise the same transformation, and ii) their Tanner graphs are equivalent in the sense that they are related by bit splitting (see Definition \ref{def:bit_splitting} and Lemma \ref{lem:bit_splitting}). 

We introduce some notations first and then construct the extended circuit to complete the proof. 

\subsection{Lists of initialisation and measurement operations}

Similar to the bit set of measurements $V_M$, we use a bit set $V_I\subseteq V_B$ to denote initialisations (all in the $Z$ basis): $\hat{z}_q^{(t)}\in V_I$ if an initialisation is performed on qubit-$q$ at the time $t$. The number of initialisations is $n_I = \vert V_I\vert$. 

We can group initialisations into two subsets $V_{I,S}$ and $V_{I,P}$. Initialisations in $V_{I,S}$ are the first operations, i.e. the first operation on qubit-$q$ is an initialisation at the time $t$ if $\hat{z}_q^{(t)}\in V_{I,S}$. All other initialisations are in $V_{I,P}$. 

Similarly, we can group measurements into two subsets $V_{M,S}$ and $V_{M,P}$. Measurements in $V_{M,S}$ are the last operations, i.e. the last operation on qubit-$q$ is a measurement at the time $t$ if $\hat{z}_q^{(t)}\in V_{I,S}$. All other measurements are in $V_{M,P}$. 

Because all measured qubits are reinitialised before reusing, initialisations in $V_{I,P}$ and measurements in $V_{M,P}$ occur pairwisely in the circuit. If $\hat{z}_q^{(t-1)}\in V_{M,P}$, there exists $\hat{z}_q^{(s)}\in V_{I,P}$ such that after the measurement $\hat{z}_q^{(t-1)}$, the initialisation $\hat{z}_q^{(s)}$ is the subsequent operation on qubit-$q$ ($s>t$); and vice versa. The number of measurement-initialisation pairs is $n_P = \vert V_{I,P}\vert = \vert V_{M,P}\vert$. We give each pair a label $l$, then we use $v^M_l=\hat{z}_q^{(t-1)}\in V_{M,P}$ and $v^I_l=\hat{z}_q^{(s)}\in V_{I,P}$ to denote the measurement and initialisation of pair-$l$. 

\subsection{Extended circuit}

\begin{figure}[tbp]
\centering
\includegraphics[width=\linewidth]{\figpath/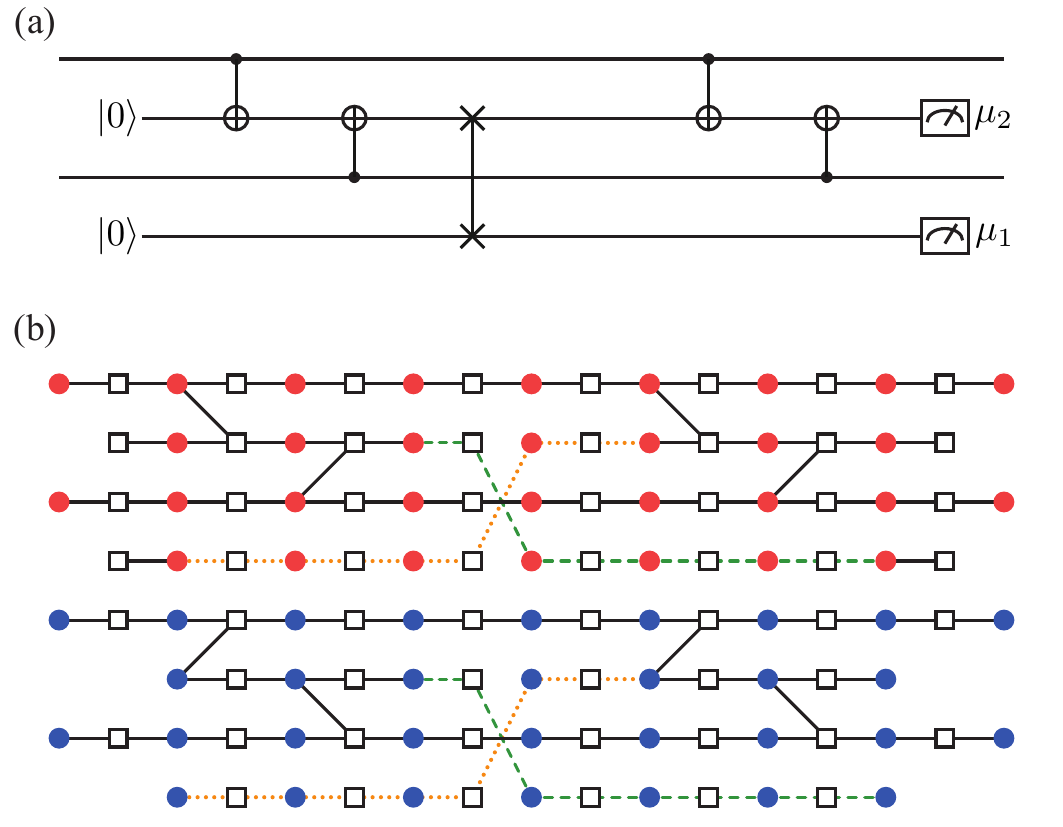}
\caption{
(a) Extended circuit of the circuit in Fig. \ref{fig:ZZ}(a). (b) Tanner graph of the extended circuit. This graph is generated by applying bit splitting operations to the original Tanner graph in Fig. \ref{fig:ZZ}(b). The bit-splitting paths are denoted by green dashed lines and orange dotted lines. 
}
\label{fig:ZZextended}
\end{figure}

In the extended circuit, we move all initialisations to the first layer and all measurements to the last layer. This is achieved by introducing a fresh ancilla qubit for each measurement-initialisation pair. Given a stabiliser circuit, its {\it extended circuit} is constructed in the following way [see Fig. \ref{fig:ZZextended}(a)]: 
\begin{itemize}
\item[1.] For all Clifford gates in the original circuit, apply the same gates on the same qubits in the same layers in the extended circuit; 
\item[2.] For all initialisations in $V_{I,S}$, apply initialisations on the same qubits in the same bases in the extended circuit, however, they are all moved to the first layer; 
\item[3.] For all measurements in $V_{M,S}$, apply measurements on the same qubits in the same bases in the extended circuit, however, they are all moved to the last layer in the extended circuit; 
\item[4.] Each measurement-initialisation pair in the original circuit is mapped to three operations in the extended circuit. Consider the pair-$l$ consisting of the measurement $v^M_l = \hat{z}_q^{(t-1)}$ and initialisation $v^I_l = \hat{z}_q^{(s)}$. The first operation is an initialisation on qubit-$(n+l)$ in the first layer. The second operation is a swap gate between qubit-$q$ and qubit-$(n+l)$ in layer-$t$. The third operation is a measurement on qubit-$(n+l)$ in the last layer. 
\end{itemize}

Without loss of generality, we suppose that in the original circuit, all non-identity Clifford gates are applied in layers $t = 2,3,\ldots,T-1$. Then, we have such an extended circuit: All initialisations are applied in the first layer, Clifford gates are applied in layers $t = 2,3,\ldots,T-1$, which form an $(n+n_P)$-qubit Clifford gate $\overline{U}$, and all measurements are applied in the last layer. From now on, notations with the overline are relevant to the extended circuit. 

Now, let's consider the Tanner graph of the extended circuit. Let $Q_I$ and $Q_M$ be two subsets of qubits: In the extended circuit, qubits in $Q_I$ are initialised in the first layer, and qubits in $Q_M$ are measured in the last layer. The bit set of the extended Tanner graph is 
\begin{eqnarray}
\overline{V}_B = \overline{V}_{B,0}'\cup \overline{V}_{B,1}\cup\cdots\cup\overline{V}_{B,T-1}\cup \overline{V}_{B,T}',
\end{eqnarray}
where $\overline{V}_{B,0}' = \{\hat{x}_q^{(t)},\hat{z}_q^{(t)}\in \overline{V}_{B,0}\vert q\notin Q_I\}$, $\overline{V}_{B,T}' = \{\hat{x}_q^{(t)},\hat{z}_q^{(t)}\in \overline{V}_{B,T}\vert q\notin Q_M\}$ and $\overline{V}_{B,t} = \{\hat{x}_q^{(t)},\hat{z}_q^{(t)}\vert q = 1,2,\ldots,n+n_P\}$. On the graph, the bit number is $2(n+n_P-n_I)$ in the first layer, the bit number is $2(n+n_P-n_M)$ in the last layer, and the bit number is $2(n+n_P)$ in each layer with $t = 1,2,\ldots,T-1$. See Fig. \ref{fig:ZZextended}(b). 

We can find that the extended Tanner graph can be generated from the original Tanner graph through a sequence of bit splitting operations. These bit splitting operations are applied along some paths. For each initialisation $\hat{z}_q^{(t)}\in V_{I,S}$, there are two paths connecting $\hat{x}_q^{(t)}$ and $\hat{z}_q^{(t)}$ with $\hat{x}_q^{(1)}$ and $\hat{z}_q^{(1)}$, respectively. For each measurement $\hat{z}_q^{(t-1)}\in V_{M,S}$, there are two paths connecting $\hat{x}_q^{(t-1)}$ and $\hat{z}_q^{(t-1)}$ with $\hat{x}_q^{(T-1)}$ and $\hat{z}_q^{(T-1)}$, respectively. There are four paths for each measurement-initialisation pair $v^M_l = \hat{z}_q^{(t-1)}$ and $v^I_l = \hat{z}_q^{(s)}$, which connect $\hat{x}_q^{(t-1)}$, $\hat{z}_q^{(t-1)}$, $\hat{x}_q^{(s)}$ and $\hat{z}_q^{(s)}$ with $\hat{x}_{n+l}^{(T-1)}$, $\hat{z}_{n+l}^{(T-1)}$, $\hat{x}_{n+l}^{(1)}$ and $\hat{z}_{n+l}^{(1)}$, respectively. These paths correspond to identity gates and swap gates in the extended circuit. 

Due to the bit splitting operations, two codes are isomorphic according to Lemma \ref{lem:bit_splitting}. Let $\mathbf{c}$ be a codeword of the original Tanner graph. Then, $\overline{\mathbf{c}} = \varphi(\mathbf{c})$ is a codeword of the extended Tanner graph. Here, $\varphi(\bullet)$ is a linear bijection yielded by composing $\phi$ functions: $\varphi(\mathbf{c})_v = \mathbf{c}_v$ if $v\in V_B$ is a bit on the original Tanner graph, and $\varphi(\mathbf{c})_u = \mathbf{c}_v$ if $u\notin V_B$ and $v$ are on the same bit-splitting path. Notice that bits on the same bit-splitting path are coupled by degree-2 checks; therefore, they always take the same value in a codeword. 

\subsection{Proof of the error-free codeword equation}

\begin{lemma}
Let $\overline{\mathbf{P}}_t$ be layer projections acting on the extended Tanner graph, and let $\sigma(\overline{\mathbf{b}})$ be Pauli operators on the extended qubit set when $\overline{\mathbf{b}}\in \mathbb{F}_2^{2(n+n_P)}$. For all codewords $\mathbf{c}$ of the original circuit and their corresponding codewords $\overline{\mathbf{c}} = \varphi(\mathbf{c})$ of the extended circuit, the following two equations hold: 
\begin{eqnarray}
\sigma(\overline{\mathbf{P}}_1\overline{\mathbf{c}}^\mathrm{T}) &=& \left[\sigma(\mathbf{P}_0\mathbf{c}^\mathrm{T})\otimes I^{\otimes n_P}\right] \notag \\
&&\times \prod_{u=\hat{z}_q^{(t)}\in V_{I,S}}Z_q^{\mathbf{c}_u} \prod_{v=v^I_l\in V_{I,P}}Z_{n+l}^{\mathbf{c}_v}, \\
\sigma(\overline{\mathbf{P}}_{T-1}\overline{\mathbf{c}}^\mathrm{T}) &=& \left[\sigma(\mathbf{P}_T\mathbf{c}^\mathrm{T})\otimes I^{\otimes n_P}\right] \notag \\
&&\times \prod_{u=\hat{z}_q^{(t-1)}\in V_{M,S}}Z_q^{\mathbf{c}_u} \prod_{v=v^M_l\in V_{M,P}}Z_{n+l}^{\mathbf{c}_v}.~~~~~
\end{eqnarray}
\label{lem:operators}
\end{lemma}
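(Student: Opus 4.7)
The plan is to compute $\sigma(\overline{\mathbf{P}}_1\overline{\mathbf{c}}^{\mathrm{T}})$ and $\sigma(\overline{\mathbf{P}}_{T-1}\overline{\mathbf{c}}^{\mathrm{T}})$ qubit by qubit on the extended register $\{1,\ldots,n+n_P\}$, identifying each $\overline{\mathbf{c}}_{\overline{x}_q^{(1)}}$, $\overline{\mathbf{c}}_{\overline{z}_q^{(1)}}$ (respectively $\overline{\mathbf{c}}_{\overline{x}_q^{(T-1)}}$, $\overline{\mathbf{c}}_{\overline{z}_q^{(T-1)}}$) with an entry of the original codeword $\mathbf{c}$ through the bit-splitting map $\varphi$. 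Because the operations in extended-circuit layer $1$ (respectively layer $T$) act on disjoint qubits---each qubit is either traversed by an identity gate or touched by exactly one initialisation (respectively measurement)---the associated checks decouple across qubits, so both sides factorise over extended qubits and it suffices to match single-qubit contributions.

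For the first equation, partition the extended qubits into three classes. Class (a): original qubits $q\leq n$ with $q\notin Q_I$ are taken through layer $1$ by an identity gate, whose check matrix $\mathbf{M}_I$ enforces $\overline{\mathbf{c}}_{\overline{x}_q^{(1)}}=\overline{\mathbf{c}}_{\overline{x}_q^{(0)}}$ and $\overline{\mathbf{c}}_{\overline{z}_q^{(1)}}=\overline{\mathbf{c}}_{\overline{z}_q^{(0)}}$, and by definition of $\varphi$ these coincide with $\mathbf{c}_{\hat{x}_q^{(0)}}$ and $\mathbf{c}_{\hat{z}_q^{(0)}}$, so the single-qubit Pauli is precisely the restriction of $\sigma(\mathbf{P}_0\mathbf{c}^{\mathrm{T}})$ to qubit $q$. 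Class (b): qubits $q\in Q_I\cap\{1,\ldots,n\}$ carry some $u=\hat{z}_q^{(t)}\in V_{I,S}$; the initialisation check $\mathbf{A}_{ini}$ in extended layer $1$ forces $\overline{\mathbf{c}}_{\overline{x}_q^{(1)}}=0$, while the bit-splitting path introduced to relocate this initialisation connects $\overline{z}_q^{(1)}$ to $u$, giving $\overline{\mathbf{c}}_{\overline{z}_q^{(1)}}=\mathbf{c}_u$ and therefore the factor $Z_q^{\mathbf{c}_u}$. Class (c): the ancilla qubits $n+l$ are freshly initialised, so $\overline{\mathbf{c}}_{\overline{x}_{n+l}^{(1)}}=0$, and the bit-splitting path associated to pair-$l$ connects $\overline{z}_{n+l}^{(1)}$ to $v=v^I_l\in V_{I,P}$, yielding $\overline{\mathbf{c}}_{\overline{z}_{n+l}^{(1)}}=\mathbf{c}_v$ and a factor $Z_{n+l}^{\mathbf{c}_v}$. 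Taking the tensor product over all three classes reproduces the right-hand side.

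The second equation follows by the symmetric argument at extended-circuit layer $T-1$: output qubits $q\notin Q_M$ are traversed by an identity gate in layer $T$ and contribute the restriction of $\sigma(\mathbf{P}_T\mathbf{c}^{\mathrm{T}})$; each $u=\hat{z}_q^{(t-1)}\in V_{M,S}$ has its measurement relocated to layer $T$, so the measurement check $\mathbf{A}_{mea}$ forces $\overline{\mathbf{c}}_{\overline{x}_q^{(T-1)}}=0$ while the associated bit-splitting path gives $\overline{\mathbf{c}}_{\overline{z}_q^{(T-1)}}=\mathbf{c}_u$, producing the factor $Z_q^{\mathbf{c}_u}$; analogously, each ancilla $n+l$ yields the factor $Z_{n+l}^{\mathbf{c}_v}$ with $v=v^M_l\in V_{M,P}$.

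The main obstacle is not a conceptually hard step but the bookkeeping of bit-splitting paths: for each of the path types introduced by the extended-circuit construction (two paths per $V_{I,S}$ initialisation, two per $V_{M,S}$ measurement, and four per measurement-initialisation pair) one must verify that the path endpoints land in the claimed extended layers, and that the $\hat{x}$- and $\hat{z}$-bits are routed consistently so that $\varphi$ indeed transports bit values as stated. Once this correspondence is laid out, Lemma \ref{lem:bit_splitting} guarantees that $\varphi$ preserves values along each path, the primitive check matrices $\mathbf{M}_I$, $\mathbf{A}_{ini}$ and $\mathbf{A}_{mea}$ determine the per-qubit behaviour, and assembling the tensor factors gives both identities.
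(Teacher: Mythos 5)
Your proposal is correct and follows essentially the same route as the paper: a per-extended-qubit case analysis at layers $1$ and $T-1$, splitting qubits into those outside $Q_I$ (respectively $Q_M$), original qubits relocated by $V_{I,S}$/$V_{M,S}$ operations, and ancillas for measurement--initialisation pairs, with $\varphi$ transporting values along the bit-splitting paths and the primitive checks forcing the relevant $x$-bits to zero. The paper records the same facts as a list of five observations rather than an explicit three-class factorisation, but the substance is identical.
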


\begin{proof}
The equation of $\sigma(\overline{\mathbf{P}}_1\overline{\mathbf{c}}^\mathrm{T})$ is justified by the following observations: 
\begin{itemize}
\item[i)] If $q\notin Q_I$, then $\hat{x}_q^{(1)},\hat{z}_q^{(1)}\in V_B$ are bits on the original Tanner graph; 
\item[ii)] If $q\in Q_I$, then $\hat{x}_q^{(1)},\hat{z}_q^{(1)}\notin V_B$, $\hat{z}_q^{(1)}$ is on the bit-splitting path connecting to $u=\hat{z}_q^{(t)}\in V_{I,S}$ or $v=v^I_l\in V_{I,P}$, and $\hat{x}_q^{(1)}$ is on the bit-splitting path connecting to the corresponding $x$ bit; 
\item[iii)] If $q\notin Q_I$, then $x_q^{(1)} = x_q^{(0)}$ and $x_z^{(1)} = x_z^{(0)}$ in $\mathbf{c}$, because there is not any non-identity Clifford gate applied in the first layer; 
\item[iv)] If $q\in Q_I$ and $q\leq n$, then $x_q^{(0)} = z_q^{(0)} = 0$ in $\mathbf{P}_0\mathbf{c}^\mathrm{T} = (\mathbf{x}^{(0)},\mathbf{z}^{(0)})^\mathrm{T}$ (because $\hat{x}_q^{(0)},\hat{z}_q^{(0)}\notin V_B$); 
\item[v)] If $q\in Q_I$, $x_q^{(1)} = 0$ in $\overline{\mathbf{c}}$. 
\end{itemize}

Because of observations i) and iii), $\mathbf{P}_0\mathbf{c}^\mathrm{T}$ determines the local Pauli operators acting on qubits not in $Q_I$. Because of the observation iv), $\sigma(\mathbf{P}_0\mathbf{c}^\mathrm{T})\otimes I^{\otimes n_P}$ acts trivially on qubits in $Q_I$. Because of observations ii) and v), there are only $I$ and $Z$ operators acting on qubits in $Q_I$ determined by $\mathbf{c}_u$ and $\mathbf{c}_v$, respectively. 

For the equation of $\sigma(\overline{\mathbf{P}}_{T-1}\overline{\mathbf{c}}^\mathrm{T})$, the proof is similar. 
\end{proof}

\begin{lemma}
For all codewords $\mathbf{c}$ of the original circuit and their corresponding codewords $\overline{\mathbf{c}} = \varphi(\mathbf{c})$ of the extended circuit, $[\overline{U}]\sigma(\overline{\mathbf{P}}_1\overline{\mathbf{c}}^\mathrm{T}) = \nu\left(\mathbf{c}\right) \sigma(\overline{\mathbf{P}}_{T-1}\overline{\mathbf{c}}^\mathrm{T})$, where $\nu\left(\mathbf{c}\right) = \eta\left([\overline{U}]\sigma(\overline{\mathbf{P}}_1\overline{\mathbf{c}}^\mathrm{T})\right)$. 
\label{lem:evolution}
\end{lemma}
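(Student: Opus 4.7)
The plan is to reduce the lemma to an iterated application of Proposition \ref{prop:gate} on the Clifford portion of the extended circuit. By construction of the extended circuit, in layers $t = 2,3,\ldots,T-1$ only Clifford gates act, and all $2(n+n_P)$ input and output bits of every such layer are retained on the extended Tanner graph (no initialisations or measurements occur there, so no bits are removed). Hence the codeword $\overline{\mathbf{c}} = \varphi(\mathbf{c})$ determines, via the layer projections $\overline{\mathbf{P}}_t$, well-defined binary vectors $\mathbf{b}_t = (\overline{\mathbf{P}}_t \overline{\mathbf{c}}^\mathrm{T})^\mathrm{T}$ for $t = 1,2,\ldots,T-1$.

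First, I would observe that the portion of the check matrix of the extended Tanner graph connecting layer $t-1$ to layer $t$ (for $t = 2,\ldots,T-1$) is exactly the Clifford check matrix $\mathbf{A}_{\overline{U}_t}$ of the layer-$t$ Clifford operation $\overline{U}_t$, assembled from the primitive Tanner graphs of Fig.~\ref{fig:gates}. Since $\overline{\mathbf{c}}$ is a codeword, it satisfies $\mathbf{A}_{\overline{U}_t}(\mathbf{b}_{t-1},\mathbf{b}_t)^\mathrm{T} = 0$ for every such $t$. By Proposition \ref{prop:gate}, this is equivalent to
\begin{eqnarray}
[\overline{U}_t]\sigma(\mathbf{b}_{t-1}) = \eta_t \, \sigma(\mathbf{b}_t),
\end{eqnarray}
with $\eta_t = \eta([\overline{U}_t]\sigma(\mathbf{b}_{t-1})) = \pm 1$.

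Second, I would iterate these identities from $t = 2$ through $t = T-1$. Since $\overline{U} = \overline{U}_{T-1}\cdots \overline{U}_2$, composing the conjugations gives
\begin{eqnarray}
[\overline{U}]\sigma(\mathbf{b}_1) = \left(\prod_{t=2}^{T-1}\eta_t\right) \sigma(\mathbf{b}_{T-1}).
\end{eqnarray}
The product $\prod_t \eta_t$ is a global sign $\pm 1$, and the right-hand side is, up to this sign, the Pauli operator $\sigma(\overline{\mathbf{P}}_{T-1}\overline{\mathbf{c}}^\mathrm{T})$. Applying $\eta(\cdot)$ to both sides and using that $\eta(\pm S) = \pm 1$ for $S\in P_{n+n_P}$ identifies this global sign with $\eta\left([\overline{U}]\sigma(\overline{\mathbf{P}}_1\overline{\mathbf{c}}^\mathrm{T})\right) = \nu(\mathbf{c})$, yielding
\begin{eqnarray}
[\overline{U}]\sigma(\overline{\mathbf{P}}_1\overline{\mathbf{c}}^\mathrm{T}) = \nu(\mathbf{c})\, \sigma(\overline{\mathbf{P}}_{T-1}\overline{\mathbf{c}}^\mathrm{T}),
\end{eqnarray}
as desired.

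The only delicate step is the bookkeeping that identifies $\prod_{t=2}^{T-1}\eta_t$ with $\nu(\mathbf{c})$; conceptually this is immediate from the definition of $\eta$ and the fact that $\sigma(\mathbf{b}_{T-1}) \in P_{n+n_P}$ is an unsigned Pauli, so the phase of $[\overline{U}]\sigma(\mathbf{b}_1)$ is forced to equal the accumulated product of layer-by-layer signs. No subtlety arises from initialisation or measurement bits because the lemma concerns only layers $1$ and $T-1$, both of which lie in the purely Clifford segment of the extended circuit.
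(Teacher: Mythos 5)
Your proposal is correct and follows essentially the same route as the paper: decompose $\overline{U}$ into the layer gates $\overline{U}_t$, use the codeword constraints between consecutive layers (valid because no bits are removed in layers $1$ through $T-1$), and invoke Proposition \ref{prop:gate}. The only cosmetic difference is that the paper composes the binary linear maps $\mathbf{M}_{\overline{U}_t}$ first and applies Proposition \ref{prop:gate} once to the composite gate, whereas you apply it per layer and then accumulate the signs $\eta_t$; both yield the same identification of $\nu(\mathbf{c})$.
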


\begin{proof}
Let $\overline{U}_t$ be the $(n+n_P)$-qubit Clifford gate formed of Clifford gates in layer-$t$ in the extended circuit. Then, $\overline{U} = \overline{U}_{T-1}\cdots\overline{U}_3\overline{U}_2$ and $\mathbf{M}_{\overline{U}} = \mathbf{M}_{\overline{U}_{T-1}}\cdots\mathbf{M}_{\overline{U}_3}\mathbf{M}_{\overline{U}_2}$. 

The codeword is in the form $\overline{\mathbf{c}} = \left(\overline{\mathbf{b}}_0,\overline{\mathbf{b}}_1,\ldots,\overline{\mathbf{b}}_T\right)$, where $\overline{\mathbf{b}}_t$ represents values of bits in layer-$t$. When $t = 1,2,\ldots,T-1$, $\overline{\mathbf{P}}_t\overline{\mathbf{c}}^\mathrm{T} = \overline{\mathbf{b}}_t^\mathrm{T}$. 

Because $\overline{\mathbf{c}}$ is a codeword, $\overline{\mathbf{b}}_t^\mathrm{T} = \mathbf{M}_{\overline{U}_t}\overline{\mathbf{b}}_{t-1}^\mathrm{T}$ when $t = 2,3,\ldots,T-1$. Notice that all initialisation and measurement operations are in the first and last layers of the extended circuit. Then, $\overline{\mathbf{b}}_{T-1}^\mathrm{T} = \mathbf{M}_{\overline{U}}\overline{\mathbf{b}}_{1}^\mathrm{T}$, i.e. $\mathbf{A}_{\overline{U}}\left(\begin{matrix}
\overline{\mathbf{b}}_1 & \overline{\mathbf{b}}_{T-1}
\end{matrix}\right)^\mathrm{T} = 0$. According to Proposition \ref{prop:gate}, the lemma is proved. 
\end{proof}

Let $\rho_0$ be the initial state of the original circuit. If we replace the initial state with $\tilde{\rho}_0 = \mathrm{Tr}_{V_{I,S}}(\rho_0)\otimes \bigotimes_{\hat{z}_q^{(t)}\in V_{I,S}}\vert 0\rangle\langle 0\vert_q$, the output state is the same. Here, $\mathrm{Tr}_{V_{I,S}}$ denotes the partial trace on qubits $\{k\vert \hat{z}_q^{(t)}\in V_{I,S}\}$, and $\vert 0\rangle\langle 0\vert_q$ is the state of qubit-$q$. Similarly, without loss of generality, we can assume that the initial state of the extended circuit is $\overline{\rho}_0 = \tilde{\rho}_0\otimes \bigotimes_{l=1}^{n_P}\vert 0\rangle\langle 0\vert_{n+l}$. 

In the extended circuit, the measurement $v=\hat{z}_q^{(t-1)}\in V_{M,S}$ is described by the projection $\pi_v(\mu_v) = \frac{1+\mu_v Z_q}{2}$, and the measurement $v = v^M_l\in V_{M,P}$ is described by the projection $\pi_v(\mu_v) = \frac{1+\mu_v Z_{n+l}}{2}$. 

Given the initial state $\overline{\rho}_0$ of the extended circuit, the final state is $\overline{\rho}_T(\boldsymbol{\mu}) = [E_{\boldsymbol{\mu}}][\overline{U}]\overline{\rho}_0$. Here, $E_{\boldsymbol{\mu}} = \prod_{v\in V_M}\pi_v(\mu_v)$ and $\sum_{\boldsymbol{\mu}}E_{\boldsymbol{\mu}}^\dag E_{\boldsymbol{\mu}} = \openone$. Then, the final state of the original circuit is $\rho_T(\boldsymbol{\mu}) = \mathrm{Tr}_{n_P}\left(\overline{\rho}_T(\boldsymbol{\mu})\right)$, where $\mathrm{Tr}_{n_P}$ denotes the partial trace on qubits $n+1,n+2,\ldots,n+n_P$. 

Now, we can prove the codeword equation. According to Lemma \ref{lem:operators}, we have $\mathrm{Tr}\sigma_{in}(\mathbf{c})\rho_0
= \mathrm{Tr}\sigma(\overline{\mathbf{P}}_1\overline{\mathbf{c}}^\mathrm{T})\overline{\rho}_0$ and $\sum_{\boldsymbol{\mu}} \mu_R(\mathbf{c},\boldsymbol{\mu})\mathrm{Tr}\sigma_{out}(\mathbf{c})\rho_T(\boldsymbol{\mu}) = \sum_{\boldsymbol{\mu}} \mathrm{Tr}\sigma(\overline{\mathbf{P}}_{T-1}\overline{\mathbf{c}}^\mathrm{T})\overline{\rho}_T(\boldsymbol{\mu})$. Because $\sigma(\overline{\mathbf{P}}_{T-1}\overline{\mathbf{c}}^\mathrm{T})$ and $E_{\boldsymbol{\mu}}$ are commutative, we have $\sum_{\boldsymbol{\mu}} \mathrm{Tr}\sigma(\overline{\mathbf{P}}_{T-1}\overline{\mathbf{c}}^\mathrm{T})\overline{\rho}_T(\boldsymbol{\mu}) = \mathrm{Tr}\sigma(\overline{\mathbf{P}}_{T-1}\overline{\mathbf{c}}^\mathrm{T})[\overline{U}]\overline{\rho}_0$. According to Lemma \ref{lem:evolution}, $\mathrm{Tr}\sigma(\overline{\mathbf{P}}_1\overline{\mathbf{c}}^\mathrm{T})\overline{\rho}_0 = \nu\left(\mathbf{c}\right) \mathrm{Tr}\sigma(\overline{\mathbf{P}}_{T-1}\overline{\mathbf{c}}^\mathrm{T})[\overline{U}]\overline{\rho}_0$. The codeword equation is proved. 

\section{Proofs of Theorem \ref{the:propagator} and Corollary \ref{coro:propagator}}
\label{app:propagator}



\begin{lemma}
Let $S\in P_{n+n_P}$ be a Pauli operator on the extended qubit set. Suppose it satisfies the following conditions: i) $[S,Z_q] = 0$ holds for all $q\in Q_I$; and ii) $[[\overline{U}]S,Z_q] = 0$ holds for all $q\in Q_M$. Then, there exits a codeword $\mathbf{c}$ of the original circuit such that $\sigma(\overline{\mathbf{P}}_1\overline{\mathbf{c}}^\mathrm{T}) = S$ and $\sigma(\overline{\mathbf{P}}_{T-1}\overline{\mathbf{c}}^\mathrm{T}) = \eta\left([\overline{U}]S\right)\times[\overline{U}]S$, where $\overline{\mathbf{c}} = \varphi(\mathbf{c})$ is the corresponding codeword of the extended circuit. 
\label{lem:commutative2}
\end{lemma}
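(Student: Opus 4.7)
The plan is to leverage the extended circuit together with its bit-splitting relationship to the original circuit. I would first set $\overline{\mathbf{b}}_1 = \sigma^{-1}(S)$, so that $\sigma(\overline{\mathbf{b}}_1) = S$ by definition, and then propagate forward layer by layer using the linear maps $\mathbf{M}_{\overline{U}_t}$ to fill in $\overline{\mathbf{b}}_2, \ldots, \overline{\mathbf{b}}_{T-1}$. Because the composition of these maps equals $\mathbf{M}_{\overline{U}}$, the definition of $M_{\overline{U}}$ immediately gives $\sigma(\overline{\mathbf{b}}_{T-1}) = \eta([\overline{U}]S)\,[\overline{U}]S$, and by Proposition \ref{prop:gate} all gate-layer checks of the extended Tanner graph are satisfied along the way.

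Next I would verify compatibility with the boundary checks. In the extended Tanner graph, the first-layer initialisations impose $\hat{x}_q^{(1)} = 0$ for every $q \in Q_I$, and the last-layer measurements impose $\hat{x}_q^{(T-1)} = 0$ for every $q \in Q_M$. Hypothesis (i) is exactly the statement that $S$ contains only $I$ and $Z$ factors on qubits in $Q_I$, so the $x$-part of $\overline{\mathbf{b}}_1$ vanishes on $Q_I$ as required; hypothesis (ii) is the analogous statement for $[\overline{U}]S$, which forces the $x$-part of $\overline{\mathbf{b}}_{T-1}$ to vanish on $Q_M$. Layer-$0$ bits for $q \in Q_I$ and layer-$T$ bits for $q \in Q_M$ are absent from $\overline{V}_B$, while the remaining boundary bits (for $q \notin Q_I$ in layer $0$, for $q \notin Q_M$ in layer $T$) are forced by identity-gate checks to agree with the corresponding entries of $\overline{\mathbf{b}}_1$ and $\overline{\mathbf{b}}_{T-1}$. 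Collecting everything yields a bona fide codeword $\overline{\mathbf{c}}$ of the extended circuit.

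Finally, the extended Tanner graph is obtained from the original one by the sequence of bit-splitting operations exhibited in the construction preceding Lemma \ref{lem:operators}. Invoking Lemma \ref{lem:bit_splitting}, the map $\varphi$ is a linear bijection between the two codes, so $\mathbf{c} = \varphi^{-1}(\overline{\mathbf{c}})$ is a codeword of the original circuit satisfying $\overline{\mathbf{c}} = \varphi(\mathbf{c})$. The two claimed identities then read off directly: $\sigma(\overline{\mathbf{P}}_1 \overline{\mathbf{c}}^\mathrm{T}) = \sigma(\overline{\mathbf{b}}_1) = S$ and $\sigma(\overline{\mathbf{P}}_{T-1} \overline{\mathbf{c}}^\mathrm{T}) = \sigma(\overline{\mathbf{b}}_{T-1}) = \eta([\overline{U}]S)\,[\overline{U}]S$.

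The main obstacle is the boundary bookkeeping: one must translate the two Pauli-level commutation hypotheses into precisely the single-bit checks imposed by the initialisation and measurement vertices of the extended graph, while simultaneously tracking which boundary bits are retained in $\overline{V}_B$ and which are removed. Once this dictionary is made explicit, the argument reduces to routine Clifford propagation via Proposition \ref{prop:gate} and the code isomorphism supplied by Lemma \ref{lem:bit_splitting}.
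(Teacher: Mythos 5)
Your proposal is correct and follows essentially the same route as the paper's proof: set $\overline{\mathbf{b}}_1 = \sigma^{-1}(S)$, use condition (i) to kill the $x$-entries on $Q_I$, propagate with $\mathbf{M}_{\overline{U}_t}$ and Proposition \ref{prop:gate} to obtain $\overline{\mathbf{b}}_{T-1}$, use condition (ii) for the $Q_M$ boundary, pad the layer-$0$ and layer-$T$ vectors, and pull back through the bijection $\varphi$. The boundary bookkeeping you flag as the main obstacle is exactly what the paper's proof handles explicitly, and your treatment of it matches.
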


\begin{proof}
There exists $\overline{\mathbf{b}}_1 = (\mathbf{x}^{(1)},\mathbf{z}^{(1)}) \in \mathbb{F}_2^{2(n+n_P)}$ satisfying $\sigma(\overline{\mathbf{b}}_1) = S$. Because of the condition i), $x_q^{(1)} = 0$ for all $q\in Q_I$. Let $\overline{\mathbf{b}}_t = \overline{\mathbf{b}}_{t-1}\mathbf{M}_{\overline{U}_t}^\mathrm{T}$ for $t = 2,3,\ldots,T-1$; see proof of Lemma \ref{lem:evolution}. Then, $\sigma(\overline{\mathbf{b}}_{T-1}) = \eta\left([\overline{U}]S\right)\times[\overline{U}]S$ according to Proposition \ref{prop:gate}. Because of the condition ii), $\overline{\mathbf{b}}_{T-1}$ satisfies $x_q^{(T-1)} = 0$ for all $q\in Q_M$. 

On the Tanner graph of the extended circuit, bit numbers of layers $t = 0,T$ are smaller than other layers. In layer-$0$, the bit number is $2(n+n_P-\vert Q_I\vert)$; and in layer-$T$, the bit number is $2(n+n_P-\vert Q_M\vert)$. Let's introduce the following two vectors $\overline{\mathbf{b}}_0\in \mathbb{F}_2^{2(n+n_P-\vert Q_I\vert)}$ and $\overline{\mathbf{b}}_T\in \mathbb{F}_2^{2(n+n_P-\vert Q_M\vert)}$: Entries $x_q^{(0)},z_q^{(0)}$ in $\overline{\mathbf{b}}_0$ take values the same as entries $x_q^{(1)},z_q^{(1)}$ in $\overline{\mathbf{b}}_1$, respectively; and entries $x_q^{(T)},z_q^{(T)}$ in $\overline{\mathbf{b}}_T$ take values the same as entries $x_q^{(T-1)},z_q^{(T-1)}$ in $\overline{\mathbf{b}}_{T-1}$, respectively. Then, $\overline{\mathbf{c}} = \left(\overline{\mathbf{b}}_0,\overline{\mathbf{b}}_1,\ldots,\overline{\mathbf{b}}_T\right)$ is a codeword of the extended circuit, and $\overline{\mathbf{b}}_t^\mathrm{T}=\overline{\mathbf{P}}_t\overline{\mathbf{c}}^\mathrm{T}$ when $t = 1,2,\ldots,T-1$. 

Notice that $\varphi$ is a linear bijection between two codes. The lemma is proved. 
\end{proof}

{\bf `If' part of the theorem.} Let $\mathbf{c}\in C_{c,d,e}$. It can be decomposed as $\mathbf{c} = \mathbf{c}_c+\mathbf{c}_d+\mathbf{c}_e$, where $\mathbf{c}_c\in C_c$, $\mathbf{c}_d\in C_{c,d} - C_c$ and $\mathbf{c}_e\in C_{c,e} - C_c$ are checker, detector and emitter, respectively. Then, $\sigma_{in}(\mathbf{c}) = \sigma_{in}(\mathbf{c}_d)$ and $\sigma_{out}(\mathbf{c}) = \sigma_{out}(\mathbf{c}_e)$. 

Let $\overline{\mathbf{c}}_d = \varphi(\mathbf{c}_d)$ and $\overline{\mathbf{c}}' = \varphi(\mathbf{c}')$ be codewords of the extended circuit corresponding to $\mathbf{c}_d$ and $\mathbf{c}'$, respectively. Because $\sigma_{out}(\mathbf{c}_d) = \openone$, $\sigma(\overline{\mathbf{P}}_{T-1}\overline{\mathbf{c}}_d^\mathrm{T})$ acts trivially on all qubits not in $Q_M$. In $\sigma(\overline{\mathbf{P}}_{T-1}\overline{\mathbf{c}}_d^\mathrm{T})$ and $\sigma(\overline{\mathbf{P}}_{T-1}{\overline{\mathbf{c}}'}^\mathrm{T})$, single-qubit Pauli operators are all commutative on qubits in $Q_M$. Then, $[\sigma(\overline{\mathbf{P}}_{T-1}\overline{\mathbf{c}}_d^\mathrm{T}),\sigma(\overline{\mathbf{P}}_{T-1}{\overline{\mathbf{c}}'}^\mathrm{T})] = 0$. 

Pauli operators at $t = 1$ and $t = T-1$ are related by the superoperator $[\overline{U}]$. Then, we have $[\sigma(\overline{\mathbf{P}}_1\overline{\mathbf{c}}_d^\mathrm{T}),\sigma(\overline{\mathbf{P}}_1{\overline{\mathbf{c}}'}^\mathrm{T})] = 0$. In $\sigma(\overline{\mathbf{P}}_1\overline{\mathbf{c}}_d^\mathrm{T})$ and $\sigma(\overline{\mathbf{P}}_1{\overline{\mathbf{c}}'}^\mathrm{T})$, single-qubit Pauli operators are all commutative on qubits in $Q_I$. Therefore, $[\sigma_{in}(\mathbf{c}_d),\sigma_{in}(\mathbf{c}')] = 0$. 

Similarly, $[\sigma_{out}(\mathbf{c}_e),\sigma_{out}(\mathbf{c}')] = 0$. The `if’ part has been proved. 

{\bf `Only if' part of the theorem.} Let's define a matrix for describing the commutative and anti-commutative relations between Pauli operators. The matrix reads 
\begin{eqnarray}
\mathbf{X} = \left(\begin{matrix}
0 & \openone_n \\
\openone_n & 0
\end{matrix}\right).
\end{eqnarray}
We can find that $[\sigma(\mathbf{b}),\sigma(\mathbf{b}')] = 0$ if and only if $\mathbf{b}'\mathbf{X}\mathbf{b}^\mathrm{T} = 0$, and $\{\sigma(\mathbf{b}),\sigma(\mathbf{b}')\} = 0$ if and only if $\mathbf{b}'\mathbf{X}\mathbf{b}^\mathrm{T} = 1$, for all $\mathbf{b} = \left(\mathbf{x},\mathbf{y}\right)$ and $\mathbf{b}' = \left(\mathbf{x}',\mathbf{y}'\right)$. Here, $\mathbf{x},\mathbf{y},\mathbf{x}',\mathbf{y}'\in \mathbb{F}_2^n$. Similarly, $\overline{\mathbf{X}}$ denotes the matrix on the extended qubit set. 

Let $\mathbf{c}\in C - C_{c,d,e}$ be a genuine propagator. We can prove the `only if' part by constructing a codeword $\mathbf{c}'\in C$ satisfying $\{\sigma_{in}(\mathbf{c}), \sigma_{in}(\mathbf{c}')\} = 0$. 

First, there exists $S\in P_{n+n_P}$ such that i) $\{\sigma(\overline{\mathbf{P}}_1\overline{\mathbf{c}}^\mathrm{T}),S\} = 0$, ii) $[\sigma(\overline{\mathbf{P}}_1\overline{\mathbf{a}}^\mathrm{T}),S] = 0$ for all $\mathbf{a}\in C_{c,d,e}$, and iii) $[Z_p,S] = 0$ for all $p\in Q_I$. We can work out $S$ as follows. Because $\mathbf{c}\notin C_{c,d,e}$, $\mathbf{P}_0\mathbf{c}^\mathrm{T}\notin \{\mathbf{P}_0\mathbf{a}^\mathrm{T}\vert \mathbf{a}\in C_{c,d,e}\}$ (otherwise $\mathbf{P}_0(\mathbf{c}+\mathbf{a})^\mathrm{T} = 0$, i.e. $\mathbf{c}+\mathbf{a}\in C_{c,e}$). Because $\mathbf{X}$ is full rank, $\mathbf{X}\mathbf{P}_0\mathbf{c}^\mathrm{T}\notin \{\mathbf{X}\mathbf{P}_0\mathbf{a}^\mathrm{T}\vert \mathbf{a}\in C_{c,d,e}\}$. Then, there exists $\mathbf{b}\in \mathbb{F}_2^{2n}$ such that $\mathbf{b}\mathbf{X}\mathbf{P}_0\mathbf{c}^\mathrm{T} = 1$ and $\mathbf{b}\mathbf{X}\mathbf{P}_0\mathbf{a}^\mathrm{T} = 0$ for all $\mathbf{a}\in C_{c,d,e}$; notice that $\{\mathbf{X}\mathbf{P}_0\mathbf{a}^\mathrm{T}\vert \mathbf{a}\in C_{c,d,e}\}$ is a subspace. In vectors $\mathbf{P}_0\mathbf{c}^\mathrm{T}$ and $\mathbf{P}_0\mathbf{a}^\mathrm{T}$, entries $x_p^{(0)},z_p^{(0)}$ take the value of zero for all $p\in Q_I$. Let's define the subspace $V = \{\left(\mathbf{x},\mathbf{y}\right)\in\mathbb{F}_2^{2n}\vert x_p^{(0)},z_p^{(0)}=0\forall p\in Q_I\}$. Project $\mathbf{b}$ onto the subspace $V$, we obtain $\mathbf{b}'$. Then, $\mathbf{b}'\mathbf{X}\mathbf{P}_0\mathbf{c}^\mathrm{T} = 1$ and $\mathbf{b}'\mathbf{X}\mathbf{P}_0\mathbf{a}^\mathrm{T} = 0$ for all $\mathbf{a}\in C_{c,d,e}$. We take $S = \sigma(\mathbf{b}')\otimes I^{\otimes n_P}$, which satisfies all the three conditions.  

Let $\overline{\mathbf{b}} = \sigma^{-1}(S)$. We can construct two matrices $\mathbf{Z}_{in}\in \mathbb{F}_2^{\vert Q_I\vert\times(n+n_P)}$ and $\mathbf{Z}_{out}\in \mathbb{F}_2^{\vert Q_M\vert\times(n+n_P)}$: Rows of $\mathbf{Z}_{in}$ and $\mathbf{Z}_{out}$ are $\sigma^{-1}(Z_p)$ and $\sigma^{-1}(Z_q)$, respectively, where $p\in Q_I$ and $q\in Q_M$. Then, let's define
$\mathbf{V} = \mathbf{Z}_{in}\mathbf{M}_{\overline{U}}^\mathrm{T}\overline{\mathbf{X}}\mathbf{Z}_{out}^\mathrm{T}$ and $\mathbf{u} = \overline{\mathbf{b}}\mathbf{M}_{\overline{U}}^\mathrm{T}\overline{\mathbf{X}}\mathbf{Z}_{out}^\mathrm{T}$. Here, $\mathbf{V}_{p,q} = 0$ if and only if $[[\overline{U}]Z_p,Z_q] = 0$, $\mathbf{V}_{p,q} = 1$ if and only if $\{[\overline{U}]Z_p,Z_q\} = 0$; $\mathbf{u}_q = 0$ if and only if $[[\overline{U}]S,Z_q] = 0$, and $\mathbf{u}_q = 1$ if and only if $\{[\overline{U}]S,Z_q\} = 0$. 

Now, we can prove that $\mathbf{u}\in \mathrm{rowsp}(\mathbf{V})$. Suppose $\mathbf{u}\notin \mathrm{rowsp}(\mathbf{V})$. There exists $\mathbf{w}$ such that $\mathbf{u}\mathbf{w}^\mathrm{T} = 1$ and $\mathbf{V}\mathbf{w}^\mathrm{T} = 0$. Then, $\mathbf{Z}_{in}\mathbf{M}_{\overline{U}}^\mathrm{T}\overline{\mathbf{X}}(\mathbf{w}\mathbf{Z}_{out})^\mathrm{T} = 0$, i.e. $[Z_p,[\overline{U}^{-1}]\sigma(\mathbf{w}\mathbf{Z}_{out})] = 0$ holds for all $p\in Q_I$. Because $\sigma(\mathbf{w}\mathbf{Z}_{out})$ is a product of $Z$ operators on $Q_M$, $[Z_q,\sigma(\mathbf{w}\mathbf{Z}_{out})] = 0$ holds for all $q\in Q_M$. Then, according to Lemma \ref{lem:commutative2}, we can construct a codeword $\overline{\mathbf{a}}$ such that $\sigma(\overline{\mathbf{P}}_1\overline{\mathbf{a}}^\mathrm{T}) = \eta\left([\overline{U}^{-1}]\sigma(\mathbf{w}\mathbf{Z}_{out})\right)\times[\overline{U}^{-1}]\sigma(\mathbf{w}\mathbf{Z}_{out})$ and $\sigma(\overline{\mathbf{P}}_{T-1}\overline{\mathbf{a}}^\mathrm{T}) = \sigma(\mathbf{w}\mathbf{Z}_{out})$. Because $\sigma(\mathbf{w}\mathbf{Z}_{out})$ acts trivially on qubits not in $Q_M$, we have $\mathbf{a}=\varphi^{-1}(\overline{\mathbf{a}})\in C_{c,d}$. Then, $[\sigma(\overline{\mathbf{P}}_1\overline{\mathbf{a}}^\mathrm{T}),S] = 0$, i.e. $[[\overline{U}]S,\sigma(\mathbf{w}\mathbf{Z}_{out})] = 0$, which is in contradiction to $\mathbf{u}\mathbf{w}^\mathrm{T} = 1$ (i.e. $\overline{\mathbf{b}}\mathbf{M}_{\overline{U}}^\mathrm{T}\overline{\mathbf{X}}(\mathbf{w}\mathbf{Z}_{out})^\mathrm{T} = 1$). 

Because $\mathbf{u}\in \mathrm{rowsp}(\mathbf{V})$, there exists $\mathbf{w}$ such that $\mathbf{u} = \mathbf{w}\mathbf{V}$ (i.e. $(\overline{\mathbf{b}}+\mathbf{w}\mathbf{Z}_{in})\mathbf{M}_{\overline{U}}^\mathrm{T}\overline{\mathbf{X}}\mathbf{Z}_{out}^\mathrm{T} = 0$). We can define a new Pauli operator $S' = \sigma(\overline{\mathbf{b}}+\mathbf{w}\mathbf{Z}_{in}) = \eta\left(S\sigma(\mathbf{w}\mathbf{Z}_{in})\right)\times S\sigma(\mathbf{w}\mathbf{Z}_{in})$. It satisfies conditions: i) $[S',Z_p] = 0$ for all $p\in Q_I$; and ii) $[[\overline{U}]S',Z_q] = 0$ for all $q\in Q_M$. According to Lemma \ref{lem:commutative2}, we can construct a codeword $\overline{\mathbf{c}}'$ such that $\sigma(\overline{\mathbf{P}}_1{\overline{\mathbf{c}}'}^\mathrm{T}) = S'$ and $\sigma(\overline{\mathbf{P}}_{T-1}{\overline{\mathbf{c}}'}^\mathrm{T}) = \eta\left([\overline{U}]S'\right)\times[\overline{U}]S'$. Then, $\{\sigma(\overline{\mathbf{P}}_1\overline{\mathbf{c}}^\mathrm{T}),\sigma(\overline{\mathbf{P}}_1{\overline{\mathbf{c}}'}^\mathrm{T})\} = \{\sigma(\overline{\mathbf{P}}_{T-1}\overline{\mathbf{c}}^\mathrm{T}),\sigma(\overline{\mathbf{P}}_{T-1}{\overline{\mathbf{c}}'}^\mathrm{T})\} = 0$. According to expressions in Lemma \ref{lem:operators}, $\{\sigma_{in}(\mathbf{c}), \sigma_{in}(\mathbf{c}')\} = \{\sigma_{out}(\mathbf{c}), \sigma_{out}(\mathbf{c}')\} = 0$, where $\mathbf{c}' = \varphi^{-1}(\overline{\mathbf{c}}')$. The `only if’ part of has been proved. 

{\bf Corollary.} Because $\{\sigma_{in}(\mathbf{c}), \sigma_{in}(\mathbf{c}')\} = 0$, we have $\mathbf{c}'\notin C_{c,d,e}$ according to the `if’ part. Therefore, $\mathbf{c}'$ is a genuine propagator. The corollary has been proved. 

\section{Proofs of Theorem \ref{the:generalised}}
\label{app:generalised}

The proof of Theorem \ref{the:generalised} uses the extended circuit introduced in Appendix \ref{app:errorfree}. The spacetime error $\mathbf{e}$ is equivalent to adding Pauli gates $\sigma(\mathbf{X}\mathbf{P}_t\mathbf{e}^\mathrm{T})$ after layer-$t$ operations. With these erroneous Pauli gates, we have an erroneous Clifford circuit, which can be mapped to an erroneous extended circuit. Compared with the error-free extended circuit, $\overline{U} = \overline{U}_{T-1}\cdots\overline{U}_3\overline{U}_2$ is replaced by $\overline{U}'(\mathbf{e}) = \tau_T(\mathbf{e})\tau_{T-1}(\mathbf{e})\overline{U}_{T-1}\cdots\tau_3(\mathbf{e})\overline{U}_3\tau_2(\mathbf{e})\overline{U}_2\tau_1(\mathbf{e})\tau_0(\mathbf{e})$, where $\tau_t(\mathbf{e}) = \sigma(\mathbf{X}\mathbf{P}_t\mathbf{e}^\mathrm{T})\otimes I^{n_P}$. Accordingly, the final state reads $\rho_T(\mathbf{e},\boldsymbol{\mu}) = \mathrm{Tr}_{n_P}\left(\overline{\rho}_T(\mathbf{e},\boldsymbol{\mu})\right)$, where $\overline{\rho}_T(\mathbf{e},\boldsymbol{\mu}) = [E_{\boldsymbol{\mu}}][\overline{U}'(\mathbf{e})]\overline{\rho}_0$. 

Now, we consider the effect of errors on the Pauli operator $\sigma(\overline{\mathbf{P}}_{T-1}\overline{\mathbf{c}}^\mathrm{T})$. Let $\tilde{U}_t = U_{T+1}\cdots U_{t+2}U_{t+1}$, where $U_1 = U_T = U_{T+1} = \openone$. We define $\tau_t'(\mathbf{e}) = [\tilde{U}_t]\tau_t(\mathbf{e})$, which is the error at the time $T$ equivalent to the error $\tau_t(\mathbf{e})$ at the time $t$. The overall equivalent error is $\tilde{\tau}_{tot}(\mathbf{e}) = \tau'_T(\mathbf{e})\cdots\tau'_1(\mathbf{e})\tau'_0(\mathbf{e})$. Then, $\overline{U}'(\mathbf{e}) = \tilde{\tau}_{tot}(\mathbf{e})\overline{U}$. The effect of errors at the time $t$ is 
\begin{eqnarray}
[\tau'_t(\mathbf{e})^\dag]\sigma(\overline{\mathbf{P}}_{T-1}\overline{\mathbf{c}}^\mathrm{T}) &=& [\tilde{U}_t][\tau_t(\mathbf{e})^\dag][\tilde{U}_t^\dag]\sigma(\overline{\mathbf{P}}_{T-1}\overline{\mathbf{c}}^\mathrm{T}) \notag \\
&=& \eta\left([\tilde{U}_t^\dag]\sigma(\overline{\mathbf{P}}_{T-1}\overline{\mathbf{c}}^\mathrm{T})\right) \notag \\
&&\times [\tilde{U}_t][\tau_t(\mathbf{e})^\dag]\sigma(\overline{\mathbf{P}}_{t}\overline{\mathbf{c}}^\mathrm{T}).
\end{eqnarray}
The error $\tau_t(\mathbf{e})$ and the operator $\sigma(\overline{\mathbf{P}}_{t}\overline{\mathbf{c}}^\mathrm{T})$ are anti-commutative if and only if $(\mathbf{P}_t\mathbf{c}^\mathrm{T})^\mathrm{T}\mathbf{P}_t\mathbf{e}^\mathrm{T} = 1$. Therefore, 
\begin{eqnarray}
[\tau'_t(\mathbf{e})^\dag]\sigma(\overline{\mathbf{P}}_{T-1}\overline{\mathbf{c}}^\mathrm{T}) &=& (-1)^{(\mathbf{P}_t\mathbf{c}^\mathrm{T})^\mathrm{T}\mathbf{P}_t\mathbf{e}^\mathrm{T}} \notag \\
&&\times \eta\left([\tilde{U}_t^\dag]\sigma(\overline{\mathbf{P}}_{T-1}\overline{\mathbf{c}}^\mathrm{T})\right) [\tilde{U}_t]\sigma(\overline{\mathbf{P}}_{t}\overline{\mathbf{c}}^\mathrm{T}) \notag \\
&=& (-1)^{(\mathbf{P}_t\mathbf{c}^\mathrm{T})^\mathrm{T}\mathbf{P}_t\mathbf{e}^\mathrm{T}}\sigma(\overline{\mathbf{P}}_{T-1}\overline{\mathbf{c}}^\mathrm{T}).
\end{eqnarray}
For the overall error, 
\begin{eqnarray}
[\tilde{\tau}_{tot}(\mathbf{e})^\dag]\sigma(\overline{\mathbf{P}}_{T-1}\overline{\mathbf{c}}^\mathrm{T}) &=& (-1)^{\sum_{t=0}^T(\mathbf{P}_t\mathbf{c}^\mathrm{T})^\mathrm{T}\mathbf{P}_t\mathbf{e}^\mathrm{T}}\sigma(\overline{\mathbf{P}}_{T-1}\overline{\mathbf{c}}^\mathrm{T}) \notag \\
&=& (-1)^{\mathbf{c}\mathbf{e}^\mathrm{T}} \sigma(\overline{\mathbf{P}}_{T-1}\overline{\mathbf{c}}^\mathrm{T}).
\end{eqnarray}

Similar to the proof of Theorem \ref{the:errorfree}, we have 
\begin{eqnarray}
&& \sum_{\boldsymbol{\mu}} \mu_R(\mathbf{c},\boldsymbol{\mu})\mathrm{Tr}\sigma_{out}(\mathbf{c})\rho_T(\mathbf{e},\boldsymbol{\mu}) \notag \\
&=& \mathrm{Tr}\sigma(\overline{\mathbf{P}}_{T-1}\overline{\mathbf{c}}^\mathrm{T})[\overline{U}'(\mathbf{e})]\overline{\rho}_0.
\end{eqnarray}
We can rewrite the right-hand side as 
\begin{eqnarray}
&& \mathrm{Tr}\sigma(\overline{\mathbf{P}}_{T-1}\overline{\mathbf{c}}^\mathrm{T})[\overline{U}'(\mathbf{e})]\overline{\rho}_0 \notag \\
&=& \mathrm{Tr}\sigma(\overline{\mathbf{P}}_{T-1}\overline{\mathbf{c}}^\mathrm{T})[\tilde{\tau}(\mathbf{e})][\overline{U}]\overline{\rho}_0 \notag \\
&=& \mathrm{Tr}\left([\tilde{\tau}(\mathbf{e})^\dag]\sigma(\overline{\mathbf{P}}_{T-1}\overline{\mathbf{c}}^\mathrm{T})\right)\left([\overline{U}]\overline{\rho}_0\right) \notag \\
&=& (-1)^{\mathbf{c}\mathbf{e}^\mathrm{T}} \mathrm{Tr}\sigma(\overline{\mathbf{P}}_{T-1}\overline{\mathbf{c}}^\mathrm{T})[\overline{U}]\overline{\rho}_0.
\end{eqnarray}
As shown in Appendix \ref{app:errorfree}, $\mathrm{Tr}\sigma_{in}(\mathbf{c})\rho_0 = \nu\left(\mathbf{c}\right) \mathrm{Tr}\sigma(\overline{\mathbf{P}}_{T-1}\overline{\mathbf{c}}^\mathrm{T})[\overline{U}]\overline{\rho}_0$. Theorem \ref{the:generalised} has been proved. 

\section{Proof of Lemma \ref{lem:bit_splitting}}
\label{app:bit_splitting}

The isomorphism is obvious, and we note the details down because similar techniques will be used when discussing another Tanner-graph operation called symmetric splitting. Let $\mathbf{c}\in \mathrm{ker}\mathbf{A}$ and $\mathbf{c}'\in \mathrm{ker}\mathbf{A}'$. Because of the check $a$, $\mathbf{c}'$ always satisfies $\mathbf{c}'_v = \mathbf{c}'_{v'}$. Therefore, we take the map as follows: $\phi(\mathbf{c})_v = \phi(\mathbf{c})_{v'} = \mathbf{c}_v$ and $\phi(\mathbf{c})_u = \mathbf{c}_u$ for all other bits $u$. Then, $\phi$ is a linear bijection. 

Because $\phi$ is a linear bijection, $\mathbf{B}' = \phi(\mathbf{B})$ and $\mathbf{L}' = \phi(\mathbf{L})$ are compatible with $\mathbf{A}'$. 

Notice that $\mathbf{A}$ has $\vert V_B\vert$ columns, and $\mathbf{A}'$ has $\vert V_B\vert + 1$ columns. The error space of $\mathbf{A}$ is $\mathbb{F}_2^{\vert V_B\vert}$, and the error space of $\mathbf{A}'$ is $\mathbb{F}_2^{\vert V_B\vert + 1}$. 

We define a linear injection $\phi_{err}:\mathbb{F}_2^{\vert V_B\vert}\rightarrow \mathbb{F}_2^{\vert V_B\vert + 1}$ as follows. Let $\mathbf{e}\in \mathbb{F}_2^{\vert V_B\vert}$ be a spacetime error of $\mathbf{A}$. Then, $\phi_{err}(\mathbf{e})_v = \mathbf{e}_v$, $\phi_{err}(\mathbf{e})_{v'} = 0$, and $\phi_{err}(\mathbf{e})_u = \mathbf{e}_u$ for all other bits $u$. The map $\phi_{err}$ has the following properties: i) For all codewords $\mathbf{c} = \mathrm{ker}\mathbf{A}$, $\mathbf{c}\mathbf{e}^\mathrm{T} = \phi(\mathbf{c})\phi_{err}(\mathbf{e})^\mathrm{T}$; and ii) $\vert \mathbf{e}\vert = \vert \phi_{err}(\mathbf{e})\vert$. Therefore, 
\begin{eqnarray}
d(\mathbf{A},\mathbf{B},\mathbf{L}) = \min_{\mathbf{e}'\in \mathrm{im}\phi_{err}\cap (\mathrm{ker}\mathbf{B}' - \mathrm{ker}\mathbf{B}'\cap \mathrm{ker}\mathbf{L}')} \vert\mathbf{e}'\vert.
\end{eqnarray}

All errors in $\mathbb{F}_2^{\vert V_B\vert + 1}$ are equivalent to errors in $\mathrm{im}\phi_{err}$. Suppose $\mathbf{e}''\in \mathbb{F}_2^{\vert V_B\vert + 1}-\mathrm{im}\phi_{err}$ (i.e. $\mathbf{e}''_{v'} = 1$). Its equivalent error in $\mathrm{im}\phi_{err}$ is $\mathbf{e}'$: $\mathbf{e}'_v = \mathbf{e}''_v + 1$, $\mathbf{e}'_{v'} = 0$, and $\mathbf{e}'_u = \mathbf{e}''_u$ for all other bits $u$. We can find that $\vert \mathbf{e}'\vert\leq \vert \mathbf{e}''\vert$. Then, 
\begin{eqnarray}
d(\mathbf{A}',\mathbf{B}',\mathbf{L}') = \min_{\mathbf{e}'\in \mathrm{im}\phi_{err}\cap (\mathrm{ker}\mathbf{B}' - \mathrm{ker}\mathbf{B}'\cap \mathrm{ker}\mathbf{L}')} \vert\mathbf{e}'\vert.
\end{eqnarray}
The lemma is proved. 

\section{Proof of Theorem \ref{the:symmetric_splitting}}
\label{app:symmetric_splitting}

The linear bijection is defined in Sec. \ref{sec:preservation}. Here, we only need to prove the bound of the circuit code distance. 

The proof is similar to the proof of Lemma \ref{lem:bit_splitting}. First, 
\begin{eqnarray}
d(\mathbf{A},\mathbf{B},\mathbf{C}) = \min_{\mathbf{e}'\in \mathrm{im}\psi_{err}\cap (\mathrm{ker}\mathbf{B}' - \mathrm{ker}\mathbf{B}'\cap \mathrm{ker}\mathbf{C}')} \vert\mathbf{e}'\vert,
\end{eqnarray}
due to properties of $\psi_{err}$. 

Let $\mathbf{e}''\in \mathbb{F}_2^{\vert V_B'\vert}$ be a general spacetime error. Let $\{\mathbf{t}^{(u)}\in \mathbb{F}_2^{\vert V_B'\vert}\vert u\in V_B'\}$ be the set of single-bit errors: $\mathbf{t}^{(u)}$ is the single-bit error on $u$, i.e. $\mathbf{t}^{(u)}_v = \delta_{u,v}$. Then, we can decompose the general error in the form $\mathbf{e}'' = \sum_{u\in V_B'} \mathbf{e}''_u\mathbf{t}^{(u)}$. 

For each $\mathbf{t}^{(u)}$, there is an equivalent error $\mathbf{s}^{(u)}\in \mathrm{im}\psi_{err}$. When $u$ is a long terminal, $\mathbf{s}^{(u)} = \mathbf{t}^{(u)}$. When $u$ is on the bit tree of $v$, $\mathbf{s}^{(u)} = \mathbf{t}^{(\hat{N}_{v,1})}$. When $u$ is on the check tree of $a$, $\mathbf{s}^{(u)} = \sum_{v\in D_{a,u}}\mathbf{s}^{(v)}$ if $\vert D_{a,u}\vert \leq \lfloor g_a/2\rfloor$, and $\mathbf{s}^{(u)} = \sum_{v\in D_a-D_{a,u}}\mathbf{s}^{(v)}$ if $\vert D_{a,u}\vert > \lfloor g_a/2\rfloor$. Here, $D_a$ is the set of all bit leaves on the dressed tree of $a$. Therefore, $\vert \mathbf{s}^{(u)}\vert \leq \lfloor g_{max}/2\rfloor$ for all $u$. 

The error $\mathbf{e}' = \sum_{u\in V_B'} \mathbf{e}''_u\mathbf{s}^{(u)} \in \mathrm{im}\psi_{err}$ is equivalent to $\mathbf{e}''$, and $\mathbf{e}''$ is a logical error if and only if $\mathbf{e}'$ is a logical error. When $\mathbf{e}'$ is a logical error, $d(\mathbf{A},\mathbf{B},\mathbf{C})\leq \vert\mathbf{e}'\vert 
\leq \lfloor g_{max}/2\rfloor\vert\mathbf{e}''\vert$. The theorem has been proved. 

\section{Proof of Theorem \ref{the:circuit_construction}}
\label{app:circuit_construction1}

There are two Tanner graphs in the theorem. One is the given Tanner graph, and the other is the symmetric Tanner graph of the constructed circuit. We will prove that the second is generated from the first through symmetric splitting, in which bit and check trees are paths. For clarity, we call vertices on the given Tanner graph {\it macroscopic vertices}, and we call paths used in the symmetric splitting microscopic paths. 

\begin{figure}[tbp]
\centering
\includegraphics[width=\linewidth]{\figpath/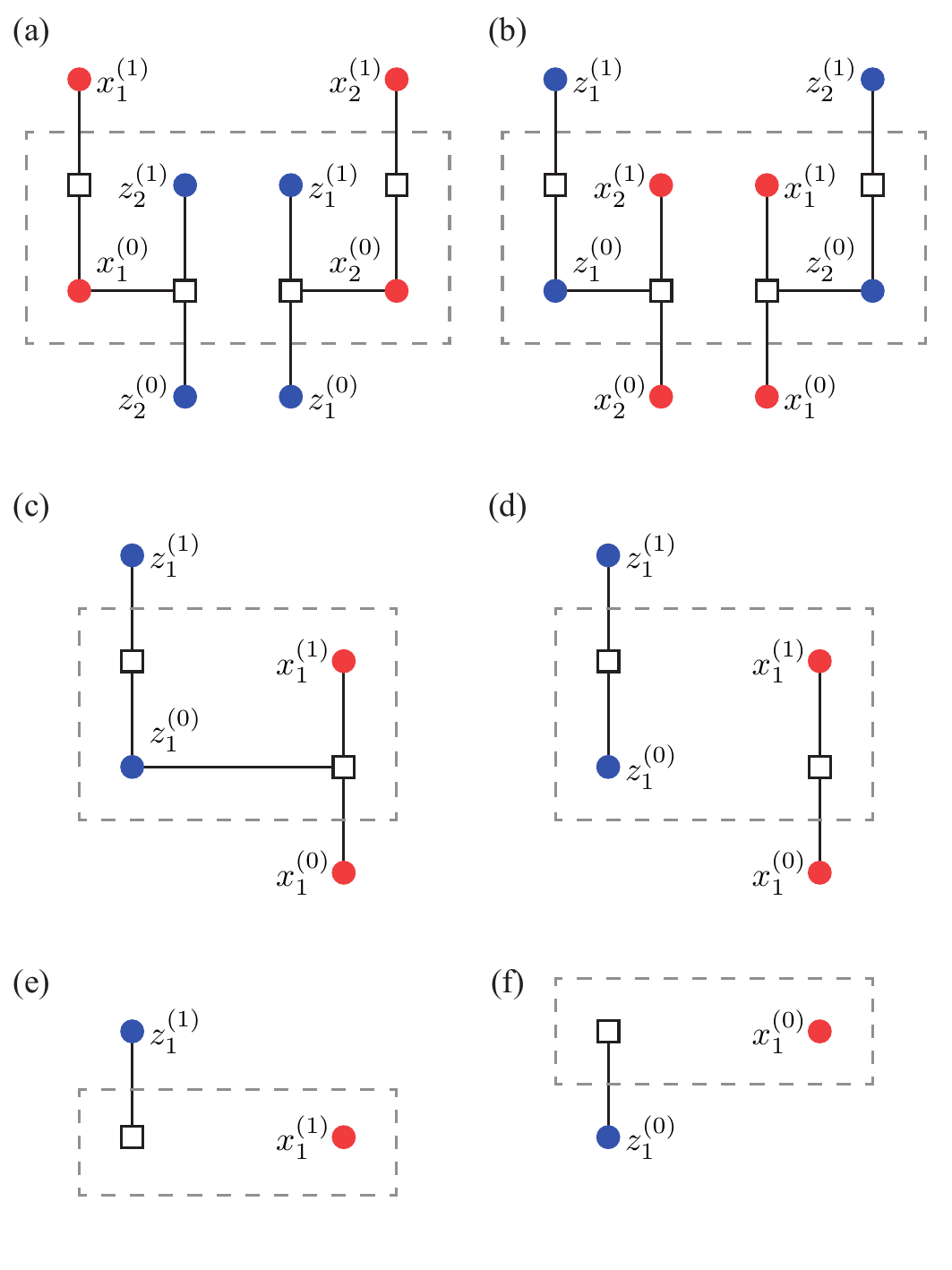}
\caption{
Tanner graphs of gates (a) $H_{2}\Lambda_{1,2}H_2$, (b) $H_1\Lambda_{1,2}H_1$, (c) $H_1S_1H_1$, (d) identity gate, (e) initialisation in the $X$ basis and (f) measurement in the $X$ basis. 
}
\label{fig:more_operations}
\end{figure}

First, we give the Tanner graph of each operation in the constructed circuit. Tanner graphs of the controlled-NOT gate, phase gate, initialisation and measurement in the $Z$ basis are in Figs. \ref{fig:gates_symmetry}(a), \ref{fig:gates_symmetry}(d), \ref{fig:IandM_symmetry}(a) and \ref{fig:IandM_symmetry}(b), respectively. Tanner graphs of other operations are in Fig. \ref{fig:more_operations}, including two-qubit gates $H_2\Lambda_{1,2}H_2$ and $H_1\Lambda_{1,2}H_1$ (taking $q = 1$ and $q' = 2$), the single-qubit gate $H_1S_1H_1$, initialisation in the $X$ basis and measurement in the $X$ basis. In addition to these non-trivial operations, Fig. \ref{fig:gates_symmetry}(b) illustrates the Tanner graph of the identity gate, and Fig. \ref{fig:more_operations}(d) illustrates an alternative way of drawing the same Tanner graph. 

With these Tanner graphs of primitive operations, we can generate the symmetric Tanner graph of the circuit. Let's focus on one qubit. Operations on qubit-$q$ are determined by the $X_q$ and $Z_q$ paths. We only needs to look at the $X_q$ path because of the symmetry. 

\begin{figure}[tbp]
\centering
\includegraphics[width=\linewidth]{\figpath/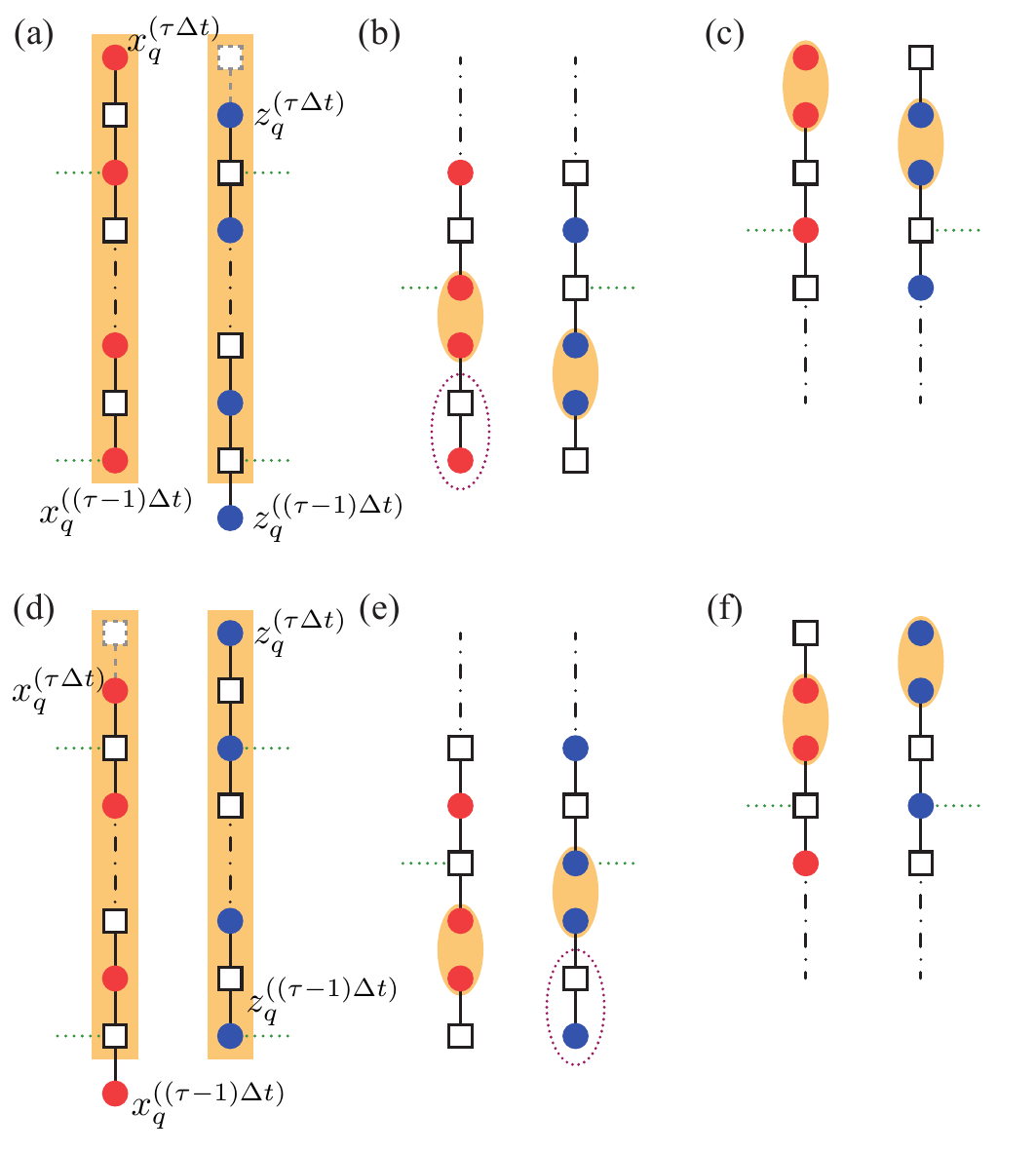}
\caption{
(a) Subgraph corresponding to a macroscopic bit on the $X_q$ path with the time label $\tau$. Two microscopic paths are marked by the orange boxes. The grey dashed square is a check provided by an operation in the next time window. Green dotted lines denote inter-path edges. (b) Merging the subgraph in (a) with the graph of the initialisation in the $X$ basis. Vertices in the purple dotted oval are introduced by bit splitting. (c) Merging the subgraph in (a) with the graph of the measurement in the $X$ basis. (d) Subgraph corresponding to a macroscopic check on the $X_q$ path with the time label $\tau$. The grey dashed square is provided by an operation in the next time window. (e) Merging the subgraph in (d) with the graph of the initialisation in the $Z$ basis. (f) Merging the subgraph in (d) with the graph of the measurement in the $Z$ basis. 
}
\label{fig:microscopic1}
\end{figure}

Consider a macroscopic bit on the $X_q$ path. Suppose its time label is $\tau$. Then, according to the universal approach of circuit construction, it is mapped to gates in the time window from $t = (\tau-1)\Delta t + 1$ to $t = \tau\Delta t$. According to Table \ref{table:gates}, the gates are $S_q$, $\Lambda_{q,q'}$, $H_{q'}\Lambda_{q,q'}H_{q'}$ and the identity gate. For all these gates, the input $x_q$ bit is always a short terminal, and the output $x_q$ bit is always a long terminal (notice that there are two ways of drawing the Tanner graph of the identity gate). Therefore, merging between their Tanner graphs is always symmetric without using bit splitting. The subgraph of qubit-$q$ in the time window $\tau$ is shown in Fig. \ref{fig:microscopic1}(a), which illustrates two microscopic paths. There are $\Delta t + 1$ bits (checks) on the $X_q$ ($Z_q$) microscopic path, and all inter-path edges are incident on bits (checks) on the $X_q$ ($Z_q$) microscopic path. Each inter-path edge corresponds to a non-identity gate on qubit-$q$ in the time window. We can find that the $X_q$ ($Z_q$) microscopic path is a valid bit (check) tree: All checks (bits) have a degree of two, and all inter-tree edges are incident on bits (checks). 

If the macroscopic bit is an end of the $X_q$ path, it may also correspond to initialisation and measurement operations. With the initialisation and measurement operations, the $X_q$ ($Z_q$) microscopic path is still a valid check (bit) tree, as shown in Figs. \ref{fig:microscopic1}(b) and (c). Similarly, for a macroscopic check on the $X_q$ path, the corresponding $X_q$ ($Z_q$) microscopic path is a valid check (bit) tree, as shown in Figs. \ref{fig:microscopic1}(d), (e) and (f). 

\begin{figure}[tbp]
\centering
\includegraphics[width=\linewidth]{\figpath/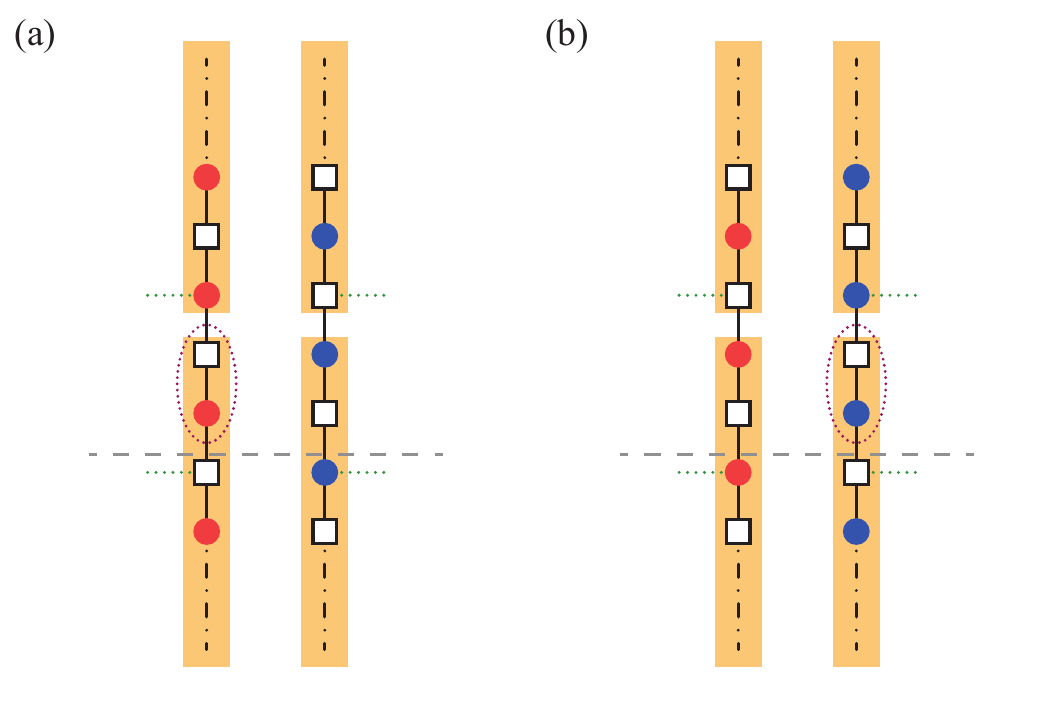}
\caption{
Merging subgraphs of two adjacent macroscopic vertices. Microscopic paths are marked by the orange boxes. Vertices in the purple dotted ovals are introduced by bit splitting. In (a) [(b)], the upper (lower) microscopic paths correspond to the macroscopic bit, and the lower (upper) microscopic paths correspond to the macroscopic check. If we cut the graph along the grey dashed line and rechoose microscopic paths accordingly, they are still valid bit and check trees. 
}
\label{fig:microscopic2}
\end{figure}

Now, let's consider two adjacent macroscopic vertices on the $X_q$ path. One of them is a macroscopic bit with the time label $\tau$, and the other is a macroscopic check with the time label $\tau'$. When $\tau = \tau'+1$, their subgraphs are merged as shown in Fig. \ref{fig:microscopic2}(a). Because two long (short) terminals are merged into one bit, merging is asymmetric, and bit splitting is used. We can find that all vertices belong to certain microscopic paths. It is similar when $\tau = \tau'-1$, as shown in Fig. \ref{fig:microscopic2}(b). 

In Fig. \ref{fig:microscopic2}(a), there is not any inter-path edge on the two vertices in the purple dotted oval. Therefore, we can choose microscopic paths in an alternative way, and they are still valid bit and check trees: Move the two vertices in the purple dotted oval and their dual vertices from lower microscopic paths to upper microscopic paths. Therefore, with such segments without any inter-path edge, there is a degree of freedom for choosing microscopic paths. 

In some time windows, only the identity gate is applied on qubit-$q$. If the first operation on qubit-$q$ is not initialisation, the identity gate is applied from the beginning to $t = (\tau_{min}-1)\Delta t$. If the last operation on qubit-$q$ is not measurement, the identity gate is applied from $t = \tau_{max}\Delta t 
+ 1$ to the end. For two adjacent macroscopic vertices on the $X_q$ path, if $\vert\tau-\tau'\vert > 1$, the identity gate is applied between the two time windows. The subgraph for a sequence of identity gates is a pair of paths without any inter-path edge. Therefore, they can be taken into account in adjacent microscopic paths. 

In summary, each pair of dual macroscopic vertices corresponds to a pair of dual microscopic paths. Microscopic paths are valid bit and check trees. Except for long terminals, all vertices belong to microscopic paths on the symmetric Tanner graph of the constructed circuit. With the observation that on the two Tanner graphs inter-path edges are consistent, the theorem is proved. 

\RED{\section{Block matrices in the resource-efficient protocol}
\label{app:matrices}

\begin{widetext}
Matrices $\boldsymbol{\beta}_{in/out}$ and $\boldsymbol{\gamma}_{\alpha,in/out}$ ($\alpha = X,Z$) are defined as 
\begin{eqnarray}
\boldsymbol{\beta}_{in} = \left(\begin{matrix}
\openone_{n_{in}n_2} & 0_{n_{in}n_2\times n_{out}n_2} & 0_{n_{in}n_2\times r_{in}r_2} & 0_{n_{in}n_2\times r_{out}r_2} & 0_{n_{in}n_2\times m_Gr_2} \\
0_{r_{in}r_2\times n_{in}n_2} & 0_{r_{in}r_2\times n_{out}n_2} & \openone_{r_{in}r_2} & 0_{r_{in}r_2\times r_{out}r_2} & 0_{r_{in}r_2\times m_Gr_2}
\end{matrix}\right),
\end{eqnarray} 
\begin{eqnarray}
\boldsymbol{\beta}_{out} = \left(\begin{matrix}
0_{n_{out}n_2\times n_{in}n_2} & \openone_{n_{out}n_2} & 0_{n_{out}n_2\times r_{in}r_2} & 0_{n_{out}n_2\times r_{out}r_2} & 0_{n_{out}n_2\times m_Gr_2} \\
0_{r_{out}r_2\times n_{in}n_2} & 0_{r_{out}r_2\times n_{out}n_2} & 0_{r_{out}r_2\times r_{in}r_2} & \openone_{r_{out}r_2} & 0_{r_{out}r_2\times m_Gr_2}
\end{matrix}\right),
\end{eqnarray}
and 
\begin{eqnarray}
\boldsymbol{\gamma}_{X,in} &=& \left(\begin{matrix}
\openone_{n_{in}r_2} & 0_{n_{in}r_2\times n_{out}r_2}
\end{matrix}\right), \\
\boldsymbol{\gamma}_{X,out} &=& \left(\begin{matrix}
0_{n_{out}r_2\times n_{in}r_2} & \openone_{n_{out}r_2}
\end{matrix}\right), \\
\boldsymbol{\gamma}_{Z,in} &=& \left(\begin{matrix}
\openone_{r_{in}n_2} & 0_{r_{in}n_2\times r_{out}n_2} & 0_{r_{in}n_2\times m_Gn_2}
\end{matrix}\right), \\
\boldsymbol{\gamma}_{Z,out} &=& \left(\begin{matrix}
0_{r_{out}n_2\times r_{in}n_2} & \openone_{r_{out}n_2} & 0_{r_{out}n_2\times m_Gn_2}
\end{matrix}\right).
\end{eqnarray}
When $d_T = 3$, matrices $\mathbf{A}_\alpha$ and $\mathbf{B}_\alpha$ are in the form 
\begin{eqnarray}
\mathbf{A}_X = \left(\begin{matrix}
\openone & \boldsymbol{\beta}_{in} & 0 & 0 & 0 & \mathbf{G}_{X,in}^\mathrm{T} & 0 & 0 & 0 \\
0 & \openone & \openone & 0 & 0 & 0 & \overline{\mathbf{G}}_X^\mathrm{T} & 0 & 0 \\
0 & 0 & \openone & \openone & 0 & 0 & 0 & \overline{\mathbf{G}}_X^\mathrm{T} & 0 \\
0 & 0 & 0 & \boldsymbol{\beta}_{out} & \openone & 0 & 0 & 0 & \mathbf{G}_{X,out}^\mathrm{T} \\
\mathbf{G}_{Z,in} & 0 & 0 & 0 & 0 & 0 & 0 & 0 & 0 \\
0 & \overline{\mathbf{G}}_Z & 0 & 0 & 0 & 0 & 0 & 0 & 0 \\
0 & 0 & \overline{\mathbf{G}}_Z & 0 & 0 & 0 & 0 & 0 & 0 \\
0 & 0 & 0 & \overline{\mathbf{G}}_Z & 0 & 0 & 0 & 0 & 0 \\
0 & 0 & 0 & 0 & \mathbf{G}_{Z,out} & 0 & 0 & 0 & 0
\end{matrix}\right),
\end{eqnarray}
\begin{eqnarray}
\mathbf{A}_Z = \left(\begin{matrix}
\openone & \openone & 0 & 0 & 0 & 0 & \mathbf{G}_{Z,in}^\mathrm{T} & 0 & 0 & 0 & 0 \\
0 & \boldsymbol{\beta}_{in}^\mathrm{T} & \openone & 0 & 0 & 0 & 0 & \overline{\mathbf{G}}_Z^\mathrm{T} & 0 & 0 & 0 \\
0 & 0 & \openone & \openone & 0 & 0 & 0 & 0 & \overline{\mathbf{G}}_Z^\mathrm{T} & 0 & 0 \\
0 & 0 & 0 & \openone & \boldsymbol{\beta}_{out}^\mathrm{T} & 0 & 0 & 0 & 0 & \overline{\mathbf{G}}_Z^\mathrm{T} & 0 \\
0 & 0 & 0 & 0 & \openone & \openone & 0 & 0 & 0 & 0 & \mathbf{G}_{Z,out}^\mathrm{T} \\
0 & \mathbf{G}_{X,in} & 0 & 0 & 0 & 0 & 0 & 0 & 0  & 0 & 0\\
0 & 0 & \overline{\mathbf{G}}_X & 0 & 0 & 0 & 0 & 0 & 0 & 0 & 0 \\
0 & 0 & 0 & \overline{\mathbf{G}}_X & 0 & 0 & 0 & 0 & 0 & 0 & 0 \\
0 & 0 & 0 & 0 & \mathbf{G}_{X,out} & 0 & 0 & 0 & 0 & 0 & 0
\end{matrix}\right),
\end{eqnarray}
\begin{eqnarray}
\mathbf{B}_X = \left(\begin{matrix}
\mathbf{G}_{X,in} & 0 & 0 & 0 & 0 & \openone & 0 & 0 & 0 \\
0 & \overline{\mathbf{G}}_X & 0 & 0 & 0 & \boldsymbol{\gamma}_{X,in}^\mathrm{T} & \openone & 0 & 0 \\
0 & 0 & \overline{\mathbf{G}}_X & 0 & 0 & 0 & \openone & \openone & 0 \\
0 & 0 & 0 & \overline{\mathbf{G}}_X & 0 & 0 & 0 & \openone & \boldsymbol{\gamma}_{X,out}^\mathrm{T} \\
0 & 0 & 0 & 0 & \mathbf{G}_{X,out} & 0 & 0 & 0 & \openone
\end{matrix}\right),
\end{eqnarray}
and 
\begin{eqnarray}
\mathbf{B}_Z = \left(\begin{matrix}
\mathbf{G}_{Z,in} & 0 & 0 & 0 & 0 & 0 & \openone & 0 & 0 & 0 & 0 \\
0 & \mathbf{G}_{Z,in} & 0 & 0 & 0 & 0 & \openone & \boldsymbol{\gamma}_{Z,in} & 0 & 0 & 0 \\
0 & 0 & \overline{\mathbf{G}}_Z & 0 & 0 & 0 & 0 & \openone & \openone & 0 & 0 \\
0 & 0 & 0 & \overline{\mathbf{G}}_Z & 0 & 0 & 0 & 0 & \openone & \openone & 0 \\
0 & 0 & 0 & 0 & \mathbf{G}_{Z,out} & 0 & 0 & 0 & 0 & \boldsymbol{\gamma}_{Z,out} & \openone \\
0 & 0 & 0 & 0 & 0 & \mathbf{G}_{Z,out} & 0 & 0 & 0 & 0 & \openone
\end{matrix}\right).
\end{eqnarray}
When $d_T = 3$ and $l = 2$, matrices $\mathbf{L}_X$ and $\mathbf{L}_Z^{(l)}$ are in the form 
\begin{eqnarray}
\mathbf{L}_X = \left(\begin{matrix}
(\mathbf{g}_{X,in}\otimes\openone)\mathbf{J}_{X,in} & \overline{\mathbf{J}}_X & \overline{\mathbf{J}}_X & \overline{\mathbf{J}}_X & (\mathbf{g}_{X,out}\otimes\openone)\mathbf{J}_{X,out} & 0 & 0 & 0 & 0
\end{matrix}\right)
\end{eqnarray}
and 
\begin{eqnarray}
\mathbf{L}_Z^{(l)} = \Biggl(\begin{matrix}
(\mathbf{g}_{Z,in}\otimes\openone)\left(\begin{matrix}\mathbf{J}_{Z,in} & \mathbf{J}_{Z,in} & \mathbf{J}_{Z,in}\boldsymbol{\beta}_{in}\end{matrix}\right) & (\mathbf{g}_{Z,out}\otimes\openone)\left(\begin{matrix}\mathbf{J}_{Z,out}\boldsymbol{\beta}_{out} & \mathbf{J}_{Z,out} & \mathbf{J}_{Z,out}\end{matrix}\right) & 0 & 0 & \boldsymbol{\delta} & 0 & 0
\end{matrix}\Biggr).
\end{eqnarray}
\end{widetext}

\bibliography{LDPC_rep}

\end{document}